\newif\ifdraft
\DeclareFontFamily{OMX}{MnSymbolE}{}
\DeclareFontShape{OMX}{MnSymbolE}{m}{n}{
    <-6>  MnSymbolE5
   <6-7>  MnSymbolE6
   <7-8>  MnSymbolE7
   <8-9>  MnSymbolE8
   <9-10> MnSymbolE9
  <10-12> MnSymbolE10
  <12->   MnSymbolE12}{}
\DeclareSymbolFont{mnlargesymbols}{OMX}{MnSymbolE}{m}{n}
\DeclareMathDelimiter{\llangle}{\mathopen}{mnlargesymbols}{'164}{mnlargesymbols}{'164}
\DeclareMathDelimiter{\rrangle}{\mathclose}{mnlargesymbols}{'171}{mnlargesymbols}{'171}
\providecommand{\poly}{\operatorname{poly}}
\providecommand{\RR}{\mathbb{R}}
\providecommand{\Sym}[1][n]{S_{#1}}
\providecommand{\pms}[1][2n]{\mathcal{M}_{#1}}
\providecommand{\adeg}{\widetilde{\deg}}
\providecommand{\bs}{\mathit{bs}}
\providecommand{\fbs}{\mathit{fbs}}
\providecommand{\bits}{\{0,1\}}
\providecommand{\EE}{\mathbb{E}}
\providecommand{\id}{\mathsf{id}}
\providecommand{\universe}{\mathcal{U}}
\providecommand{\domain}{\mathcal{X}}
\providecommand{\queries}{\mathcal{Q}}
\providecommand{\answers}{\mathcal{A}}
\providecommand{\chunk}{\mathfrak{c}}
\providecommand{\Dparam}{\Delta}
\providecommand{\Cparam}{M}
\providecommand{\Bparam}{\beta}
\providecommand{\Bcparam}{\beta_{\chunk}}
\providecommand{\chars}{X}
\providecommand{\Sparam}{\alpha}
\providecommand{\Lparam}{N}
\providecommand{\Iparam}{I}
\providecommand{\dprime}{D}
\providecommand{\Cbound}{C}
\providecommand{\mparam}{m}
\providecommand{\oracle}{\mathcal{O}}
\providecommand{\oraclec}{T}
\declaretheorem[numberwithin=section]{theorem}
\declaretheorem[sibling=theorem]{lemma}
\declaretheorem[sibling=theorem]{corollary}
\declaretheorem[sibling=theorem]{proposition}
\providecommand{\subparagraphit}[1]{\subparagraph{\emph{#1}}}
\title{Complexity Measures on the Symmetric Group and Beyond}
\author{Neta Dafni, Yuval Filmus, Noam Lifshitz, Nathan Lindzey, Marc Vinyals}
\begin{document}
\maketitle

\begin{abstract}
We extend the definitions of complexity measures of functions to domains such as the symmetric group. The complexity measures we consider include degree, approximate degree, decision tree complexity, sensitivity, block sensitivity, and a few others. We show that these complexity measures are polynomially related for the symmetric group and for many other domains.

To show that all measures but sensitivity are polynomially related, we generalize classical arguments of Nisan and others. To add sensitivity to the mix, we reduce to Huang's sensitivity theorem using ``pseudo-characters'', which witness the degree of a function.

Using similar ideas, we extend the characterization of Boolean degree~1 functions on the symmetric group due to Ellis, Friedgut and Pilpel to the perfect matching scheme. As another application of our ideas, we simplify the characterization of maximum-size $t$-intersecting families in the symmetric group and the perfect matching scheme.
\end{abstract}

\section{Introduction} \label{sec:introduction}

A classical result in complexity theory states that a Boolean function $f\colon \bits^n \to \bits$ of degree~$d$ can be computed using a decision tree of depth $\poly(d)$. Conversely, a Boolean function computed by a decision tree of depth~$d$ has degree at most~$d$. Thus degree and decision tree complexity are polynomially related. Other complexity measures which are polynomially related to the degree include approximate degree, certificate complexity, and block sensitivity. Recently, Huang~\cite{Huang} added sensitivity to the list.

Can we prove similar results for Boolean functions on other domains? 
Such domains have been introduced to complexity theory in recent years: for example, O'Donnell and Wimmer~\cite{ODW} used Boolean functions on the so-called ``slice'' to construct optimal nets for monotone functions; Barak et al.~\cite{BGHMRS} used Boolean functions on the Reed--Muller code to construct and analyze the influential ``short code''; and recently, Khot, Minzer and Safra~\cite{KMS} proved the 2-to-2 conjecture using Boolean functions on the Grassmann scheme.

Although yet to see applications to complexity theory, perhaps the most appealing domain is the symmetric group. We say that a function $f\colon \Sym \to \RR$ has degree at most~$d$ if any of the following equivalent conditions hold:
\begin{enumerate}
\item $f(\pi)$ can be written as a linear combination of \emph{$d$-juntas}, which are functions depending on $\pi(i_1),\ldots,\pi(i_d)$ for some $i_1,\ldots,i_d \in [n]$.
\item Representing the input as a permutation matrix, $f$ can be written as a degree~$d$ polynomial in the entries of the matrix.
\item $f$ has Fourier-degree~$d$, that is, it is supported on isotypic components corresponding to partitions $\lambda$ with $\lambda_1 \geq n-d$.
\end{enumerate}
(The reader who is not familiar with representation theory can ignore the last definition.)

What is the correct generalization of decision tree? We take our inspiration from the work of Ellis, Friedgut and Pilpel~\cite{EFP}, which characterized the Boolean degree~$1$ functions on $\Sym$. These are functions that depend on some $\pi(i)$ or on some $\pi^{-1}(j)$. This suggests the following definition: a decision tree for functions on $\Sym$ is a decision tree with queries of the form ``$\pi(i)=?$'' and ``$\pi^{-1}(j)=?$''. Essentially the same definition (``matching decision trees'') is used in lower bounds on the pigeonhole principle~\cite{UrquhartFu}.

We show that this is a good definition by proving that degree and decision tree complexity are polynomially related for the symmetric group. In fact, we are able to generalize many other complexity measures to the symmetric group, and show that all of them are polynomially related:

\begin{theorem} \label{thm:intro-main-sn}
The following complexity measures (appropriately defined) are all polynomially related for Boolean functions over the symmetric group: degree, approximate degree, decision tree complexity, certificate complexity, unambiguous certificate complexity, sensitivity, block sensitivity, fractional block sensitivity, quantum query complexity.
\end{theorem}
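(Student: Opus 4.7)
The plan is to mirror the classical web of polynomial relations among complexity measures on $\bits^n$ (due to Nisan--Szegedy, Beals--Buhrman--Cleve--Mosca--de Wolf, and Huang) while generalizing correctly to $\Sym$. First I would pin down the definitions: a \emph{decision tree} queries ``$\pi(i)=?$'' or ``$\pi^{-1}(j)=?$'' at each internal node and branches on the answer in $[n]$; a \emph{certificate} is a consistent partial injection from $[n]$ to $[n]$ that fixes $f$; the \emph{sensitivity} at $\pi$ counts transpositions $\tau$ with $f(\tau\pi)\neq f(\pi)$, grouped into ``coordinates'' to mirror the Boolean-cube setting; \emph{block sensitivity} allows disjoint unions of such swaps; and \emph{approximate degree} $\adeg$ is defined by uniform $1/3$-approximation using the degree notion of item~(1) of the introduction.

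With these in hand, the easy inequalities are essentially definitional. The chain $s \le \bs \le \fbs \le C$ is standard, $\deg \le D$ holds because each leaf of a depth-$D$ tree is reached by a conjunction of at most $D$ one-juntas so its indicator has degree $\le D$, and Nisan's $D \le C \cdot \bs$ and $C \le \bs^2$ carry over verbatim, since the only combinatorial input is that sensitive blocks glue into certificates and blocks still live on disjoint supports of the permutation matrix. The polynomial-method inequalities $\deg \le O(\adeg^2)$ (Nisan--Szegedy) and $D \le \poly(Q)$ (Beals et al.) I would recover by a random-restriction / symmetrization trick: restricting a Boolean function on $\Sym$ to a one-parameter family $t \mapsto \pi \cdot \sigma^t$ for a long cycle $\sigma$ yields a univariate real polynomial that is Boolean at integer times, so Markov's inequality and the standard block decomposition go through.

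The genuinely novel step is $\deg \le \poly(s)$. Since Huang's theorem lives on $\bits^n$, the strategy is to produce, for each $f\colon \Sym \to \bits$ of degree $d$, an auxiliary function $g$ on $\bits^n$ (or on a suitable subcube) whose Boolean degree is at least $d$ and whose sensitivity is at most a polynomial in that of $f$; Huang's $s(g) \ge \sqrt{\deg(g)}$ then pulls back to $s(f) \ge \Omega(d^{1/k})$ for some constant~$k$. Here the \emph{pseudo-characters} enter: they are combinatorial gadgets on $\Sym$ that witness the degree of $f$ and simultaneously interpret each coordinate flip in $\bits^n$ as a transposition on $[n]$, so that sensitivity transports with only a polynomial loss. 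The main obstacle is precisely constructing such a pseudo-character with both the degree-detection and sensitivity-transport properties at once; once this is done the diagram closes and all nine measures in Theorem~\ref{thm:intro-main-sn} become polynomially related, with the same strategy poised to extend to the perfect matching scheme $\pms$ discussed later in the paper.
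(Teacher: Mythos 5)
Your overall roadmap matches the paper's — set up decision trees/certificates/blocks as partial permutations, push the classical web of inequalities through, and reduce the sensitivity-vs-degree step to Huang via pseudo-characters. But two of the steps, as you describe them, would not go through.

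First, the symmetrization argument is broken. You propose to recover the polynomial-method inequalities by restricting $f$ to a family $t \mapsto \pi\sigma^t$ for a long cycle $\sigma$ and treating the result as a low-degree univariate polynomial. The trouble is that each indicator $x_{ij}(\pi\sigma^t)$, viewed as a function of $t\in\{0,\ldots,n-1\}$, is supported on a single residue class mod $n$, so even a degree-$1$ function of $\pi$ becomes a degree-$(n-1)$ univariate polynomial in $t$: the Fourier/spatial degree of $f$ simply does not control the degree of this one-parameter restriction, and Markov's inequality gives nothing useful. The paper instead proves $\bs(f)\le 6\,\adeg(f)^2$ by using \emph{composability}: given blocks $y_1,\ldots,y_b$ witnessing $\bs(f,x)$, it forms the points $x(s_1,\ldots,s_b)\in\Sym$ and the pulled-back function $g\colon\bits^b\to\bits$, shows $\adeg(g)\le\adeg(f)$ because the substitution is affine, and then invokes Nisan--Szegedy on the cube. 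This cube-embedding move is the right replacement for your symmetrization and is where the domain hypothesis (an affine description of $\Sym$ inside $\bits^{n\times n}$) does real work; you should adopt it. Relatedly, you assert that $D\le C\cdot\bs$ and $C\le\bs^2$ ``carry over verbatim,'' but they do not quite: the paper needs the block-sensitivity ratio $\Bparam$ (equal to $1/3$ for $\Sym$, since shortening a cycle changes two entries rather than one) to show that minimal blocks have size $O(\bs(f))$, and a conflict-bound parameter $\Cparam$ to run the $D\le\Dparam\,C_0C_1$ algorithm; these constant-factor losses are precisely what the ``parameters of the domain'' in the paper are packaging. You also omit unambiguous certificate complexity $U$, which the theorem statement includes, though that is an easy add ($C\le U\le D$).

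Second, you correctly identify the sensitivity step as the genuinely new content, and your high-level description — find a witness of degree that partitions the domain into Boolean subcubes along which both the degree and the sensitivity transport — is exactly the paper's plan, but you explicitly leave the construction as an open obstacle. For the record, the paper's construction for $\Sym$ takes the pseudo-characters $\chi_{A,B}$ (column-alternating sums of tabloid indicators spanning $V^\lambda$), builds a set $T$ of $D\ge d/2$ pairwise-disjoint transpositions from the sub-first-row cells of a shape with $\lambda_1=n-d$, and decomposes the support of $\chi_{A,B}$ into $\langle T\rangle$-cosets; on each coset $\chi_{A,B}$ pulls back to the parity character of $\bits^D$ and neighbors map to disjoint transposition-neighbors, so Huang applied to $f$ restricted to a coset where $f$ correlates with $\chi_{A,B}$ gives $s(f)\ge\sqrt{d/2}$. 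If you flesh out your sketch, this is the construction you need to supply.
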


Our results hold for many other domains, such as the perfect matching scheme (the set of all perfect matchings in $K_{2n}$) and balanced slices (the balanced slice consists of all vectors in $\bits^{2n}$ with equally many $0$s and $1$s, and is also known as the Johnson scheme $J(2n,n)$).

We prove \Cref{thm:intro-main-sn} and its generalizations in an abstract framework based on simplicial complexes. In this framework, every point in the domain is a set. For example:
\begin{enumerate}
\item Boolean cube: We identify each vector $x \in \bits^n$ with the set $\{ (i,x_i) : i \in [n] \}$.	
\item Symmetric group: We identify each permutation $\pi \in \Sym$ with the set $\{ (i,\pi(i)) : i \in [n] \}$.
\end{enumerate}
A function has degree~$d$ if it can be written as a linear combination of functions of the form ``the input set contains $S$'', where $|S| \leq d$; this generalizes the usual notion of degree in these two domains. Our decision trees allow any queries of the form ``which element of the set $Q$ does the input set contain?'', as long as there is a unique answer for every input.

With this setup in place, we are able to polynomially relate all complexity measures other than sensitivity by generalizing classical arguments, as presented by Buhrman and de~Wolf~\cite{BuhrmandeWolf}, for example. To add sensitivity to the mix, we reduce to Huang's sensitivity theorem~\cite{Huang} using basic representation theory.

Generalizing ideas of Gopalan et al.~\cite{GNSTW}, we also prove the following simple result:
\begin{theorem} \label{thm:intro-ball-sn}
If a function on the symmetric group has sensitivity~$s$, then it can be recovered from its evaluation on a ball of radius $O(s)$ around an arbitrary permutation.	
\end{theorem}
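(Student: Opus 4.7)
The plan is to adapt the inductive argument of Gopalan et al., whose Boolean-cube version walks from any point toward the reference point one coordinate-flip at a time, at each step using a non-sensitive direction that decreases the Hamming distance. On $\Sym$ the natural metric is the transposition distance $d(\pi,\sigma)=n-c(\pi^{-1}\sigma)$, where $c(\cdot)$ counts cycles; this is the graph distance in the Cayley graph generated by transpositions, which is the graph underlying sensitivity in the paper's framework. I would fix an arbitrary reference $\pi_0$ and prove by induction on $k=d(\pi,\pi_0)$ that $f(\pi)$ is determined by $f\vert_{B(\pi_0,s)}$.

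The inductive step reduces to the following key combinatorial lemma: for every $\pi$ with $d(\pi,\pi_0)=k\geq 1$, there are at least $k$ transpositions $\tau$ for which $d(\pi\tau,\pi_0)=k-1$. Such distance-decreasing $\tau=(a\,b)$ are precisely those whose support $\{a,b\}$ lies in a single cycle of $\sigma=\pi_0^{-1}\pi$, since splitting one cycle into two raises the cycle count by one and thus drops the distance by one. If the cycle lengths of $\sigma$ are $\ell_1,\ldots,\ell_c$ (so $\sum_i\ell_i=n$ and $c=n-k$), the count becomes $\sum_i\binom{\ell_i}{2}$. By convexity of $\binom{\ell}{2}$ this is minimized over the allowed partitions by taking $k$ parts of size $2$ together with $n-2k$ fixed points, giving value exactly $k$.

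With the lemma in hand the induction is immediate. If $k\leq s$ then $\pi\in B(\pi_0,s)$ and $f(\pi)$ is given. If $k>s$ the lemma produces more than $s$ distance-reducing transpositions at $\pi$, but at most $s$ of them can be sensitive; hence some $\tau$ satisfies both $f(\pi\tau)=f(\pi)$ and $d(\pi\tau,\pi_0)=k-1$, and the inductive hypothesis recovers $f(\pi)$. This yields recoverability from the ball of radius $s=O(s)$.

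The only non-routine ingredient is the key lemma, which is the step I would expect to demand the most care. The convexity minimization is clean when $k\leq n/2$, but one must separately handle $k>n/2$: there the forced absence of $k$ fixed points means no partition with $k$ two-cycles is available, and one instead verifies directly that the presence of a long cycle makes $\sum_i\binom{\ell_i}{2}$ strictly larger than $k$. Once this is settled the rest is a direct transcription of the Gopalan et al.\ template to the Cayley graph of $\Sym$.
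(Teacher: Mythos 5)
Your high-level plan matches the paper's proof of \Cref{thm:ball-property}: fix a reference point, induct on distance, and argue that at each new shell the value of $f$ is forced by its values one step closer, via a lemma producing enough distance-decreasing transpositions. But there is a genuine gap in how you invoke sensitivity. In the paper's framework the sensitivity $s(f,\pi)$ of a function on $\Sym$ is the maximum number of \emph{pairwise disjoint} transpositions $\tau$ with $f(\pi\tau)\neq f(\pi)$ (the disjointness constraint comes from the sets $x\setminus y_i$ being disjoint in the general definition, with $\chunk=2$). The quantity you are implicitly bounding by $s$ --- the total count of sensitive transpositions at $\pi$, without disjointness --- is the paper's \emph{edge sensitivity} $t(f,\pi)$, which can be as large as $\Theta(n)$ even when $s(f)=O(1)$: for the dictator $f(\pi)=\llbracket\pi(1)=1\rrbracket$ one has $s(f,\id)=1$ but $t(f,\id)=n-1$, since every transposition $(1\,j)$ is sensitive at the identity. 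So ``at most $s$ of them can be sensitive'' is false, and your key lemma, which counts all $\sum_i\binom{\ell_i}{2}$ distance-reducing transpositions with no disjointness requirement, is not the right invariant.

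The fix --- and what the paper actually does --- is to produce distance-decreasing moves whose ``blocks'' $y\setminus z_i$ are pairwise disjoint. Inside a cycle of length $\ell$ one can pack at most $\lfloor\ell/2\rfloor$ such disjoint transpositions, giving $\sum_i\lfloor\ell_i/2\rfloor\geq k/2$ in total rather than $\sum_i\binom{\ell_i}{2}$; the paper packages this as the sensitivity ratio $\Bcparam$, computed to be $1/3$ for $\Sym$ via a shortcut-over-$i$ construction. You also need the threshold $2s+1$, not $s+1$: the recovery step takes the \emph{majority} of $f(z_1),\dots,f(z_{2s+1})$, correct because at most $s$ of these disjoint moves can disagree with $f(y)$. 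Merely knowing that \emph{some} distance-decreasing $\tau$ has $f(\pi\tau)=f(\pi)$ does not identify which one and hence does not let you reconstruct $f(\pi)$ from the previous shell. Both corrections preserve the $O(s)$ radius, so your overall skeleton survives, but the disjointness requirement and the majority threshold are both essential and absent from the proposal.
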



Using this, we show that low sensitivity functions can be computed efficiently:

\begin{theorem} \label{thm:intro-circuit}
If a function on the symmetric group has sensitivity $s$, then it can be computed using a circuit of size $n^{O(s)}$.	
\end{theorem}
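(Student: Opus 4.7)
The strategy is to pass from sensitivity to degree via Theorem~\ref{thm:intro-main-sn} and then convert the resulting polynomial representation into a Boolean circuit. By Theorem~\ref{thm:intro-main-sn}, sensitivity $s$ implies degree $d = \poly(s)$, and so (by the first definition of degree in Section~\ref{sec:introduction}) $f$ admits an expansion
$$f(\pi) = \sum_{m} c_m \, \mathbf{1}[m \subseteq \pi],$$
where $m$ ranges over partial matchings $\{(i_1,j_1),\dots,(i_k,j_k)\}$ of size $k \le d$ (with distinct $i_\ell$ and distinct $j_\ell$), and ``$m \subseteq \pi$'' means $\pi(i_\ell) = j_\ell$ for every $\ell$. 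The number of such $m$ is $\sum_{k\le d} \binom{n}{k}^2 k! = n^{O(d)} = n^{O(s)}$.

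Each indicator $\mathbf{1}[m \subseteq \pi]$ is a conjunction of at most $d$ equalities between $\log n$-bit integers, computable by a Boolean subcircuit of size $O(d \log n)$ when $\pi$ is given as the tuple $(\pi(1),\dots,\pi(n))$. To assemble the circuit for $f$, I clear a common denominator for the rationals $\{c_m\}$, multiply each indicator by the resulting integer coefficient, sum the $n^{O(s)}$ contributions via a tree of adders on $\poly(n^s)$-bit integers, and threshold the final sum against a fixed integer to produce the Boolean output. The total gate count is $n^{O(s)}$.

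The coefficients themselves can be determined at circuit-design time from the values of $f$ on the ball $B = B(\id, O(s))$ of size $n^{O(s)}$: Theorem~\ref{thm:intro-ball-sn} certifies that these values uniquely determine $f$ (and hence the $c_m$), and the $c_m$ can be read off by inverting an $|B|\times|B|$ linear system. The main obstacle is controlling the bit-complexity of the $c_m$ so that the adder tree stays within the $n^{O(s)}$ budget; for Boolean-valued $f$, an appropriate basis choice (for instance, a Möbius-type basis on the poset of partial matchings) ensures that each $c_m$ is a rational with numerator and denominator of magnitude at most $n^{O(d)}$, which suffices.
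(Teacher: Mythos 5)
Your route is genuinely different from the paper's, and unfortunately it falls short of the stated bound. The paper proves this as \Cref{thm:s-circuit} using the ball property \emph{algorithmically}: it hardcodes the values of $f$ on a single ball of radius $r = O(\Bcparam^{-1}\chunk s) = O(s)$, which has $n^{O(s)}$ points, and then, given the input $w$, it constructs a chain of overlapping balls $B(x_i, r)$ of radius $O(s)$ connecting the origin to $w$ and recovers $f$ on each successive ball from the previous one by the majority-vote argument underlying \Cref{thm:ball-property}. The whole construction never mentions degree or polynomial coefficients, and the bottleneck is the size of a radius-$O(s)$ ball, which is $n^{O(s)}$.

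Your plan instead passes through degree. The flaw is in the line ``$n^{O(d)} = n^{O(s)}$.'' The sharpest degree-vs-sensitivity relation the paper (or Huang's theorem) provides is $\deg(f) \leq 2s(f)^2$ (\Cref{cor:sensitivity-sn}); the relation is quadratic, and this quadratic loss is essentially unavoidable, since degree can genuinely be $\Theta(s^2)$ already on the Boolean cube. Consequently the number of partial-matching monomials of size $\leq d$ is $n^{\Theta(d)} = n^{\Theta(s^2)}$, not $n^{O(s)}$, and your circuit's gate count is $n^{O(s^2)}$. That is a valid (and easy) circuit bound, but it is weaker than $n^{O(s)}$ claimed in \Cref{thm:intro-circuit}; the extra factor of $s$ in the exponent cannot be removed by any argument that routes through low degree. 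A secondary point: your invocation of \Cref{thm:intro-ball-sn} to ``certify'' that the coefficients are determined is not doing work for you. At circuit-design time you already have the entire truth table of $f$, so nothing needs certifying; the ball property earns its keep only when, as in the paper's construction, the circuit itself must reconstruct $f(w)$ from a hardcoded set of values of size $n^{O(s)}$ rather than from a polynomial with $n^{O(s^2)}$ coefficients.
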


This should be compared to a decision tree for the function, which corresponds to a balanced formula of size $n^{O(D)}$, where $D \geq s$ is the decision tree complexity.

\paragraph{Degree 1 functions} Our results show that in a wide variety of domains, Boolean degree~$1$ functions can be computed by constant depth decision trees. Can we say more?

Boolean degree~$1$ functions on the Boolean cube are \emph{dictators}, that is, depend on a single coordinate, and the same holds for functions on the balanced slice. Ellis, Friedgut and Pilpel~\cite{EFP} showed that the same holds for the symmetric group, with the correct interpretation of ``dictator'': a function depending only on some $\pi(i)$, or only on some $\pi^{-1}(j)$. In contrast, Filmus and Ihringer~\cite{FI1} showed that Boolean degree~$1$ functions on the Grassmann scheme ($k$-dimensional subspaces of an $n$-dimensional vector space over a finite field) could depend on two different ``data points''.

Among the domains we consider, in many cases Boolean degree~$1$ functions are trivially dictators. In some other cases, describing all Boolean degree~$1$ functions seems difficult. We identify one case in which the problem is feasible:

\begin{theorem} \label{thm:intro-deg1-pm}
A Boolean function on the perfect matching scheme has degree at most~$1$ if and only if it is one of the following: a constant function; a function depending on the match of some vertex $i$; or a function depending on whether the perfect matching intersects some triangle.
\end{theorem}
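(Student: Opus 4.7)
The easy direction follows by direct computation: constants have degree $0$; $\mathbf{1}[\text{partner}_M(v)\in S] = \sum_{s\in S}\mathbf{1}[\{v,s\}\in M]$; and since a matching contains at most one edge of any triangle, $\mathbf{1}[M\cap\binom{T}{2}\neq\emptyset] = \sum_{e\in\binom{T}{2}}\mathbf{1}[e\in M]$.

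For the converse, my plan is to reduce any degree-$1$ Boolean $f$ on $\pms$ to a canonical form $f(M) = \epsilon + \sum_{e\in E'}\mathbf{1}[e\in M]$ with $\epsilon\in\{0,1\}$ and $E'\subseteq E(K_{2n})$. Once this form is achieved, the Boolean constraint forces $|E'\cap M|\in\{-\epsilon,1-\epsilon\}$ for every $M$. If $\epsilon = 1$, then $|E'\cap M|=0$ for every $M$, and since each edge lies in some matching, $E'=\emptyset$ and $f\equiv 1$. If $\epsilon = 0$, then $E'$ contains no two disjoint edges; a pairwise-intersecting family of edges in a simple graph is either contained in a star (all edges through some common vertex $v$) or consists of exactly the three edges of a triangle $T$. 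In the first case $f$ is the partner function on $v$ for $S=\{j:\{v,j\}\in E'\}$; in the second $f$ is the triangle function on $T$; and the empty case gives $f\equiv 0$.

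The first stage of producing the canonical form relies on the $4$-vertex analysis. For any representation $f(M)=c+\sum_e a_e\mathbf{1}[e\in M]$ and any $\{a,b,c,d\}\subseteq[2n]$, the three edge-pair sums $a_{ab}+a_{cd}$, $a_{ac}+a_{bd}$, $a_{ad}+a_{bc}$ differ pairwise by differences of $f$-values on the three matchings of $\{a,b,c,d\}$, so each pairwise difference lies in $\{-1,0,1\}$. In particular these differences are integers, which is precisely the statement that the alternating sum of $a$ around every $4$-cycle of $K_{2n}$ is an integer; equivalently, $e\mapsto a_e\bmod 1$ is a cocycle for the ``even'' coboundary $\delta b(ij)=b_i+b_j$. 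Fixing a basepoint $v_0$ and setting $b_u=a_{v_0 u}$, a short propagation argument through $4$-subsets of the form $\{v_0,u,v,w\}$ shows $a\bmod 1$ is in fact a coboundary, so the gauge $a_e\mapsto a_e - b_{e_1}-b_{e_2}$, $c\mapsto c+\sum_v b_v$ makes $a\in\mathbb{Z}^E$ and consequently $c\in\mathbb{Z}$. Integer gauges and replacing $f$ by $1-f$ if necessary then let us assume $c=0$, so $\sum_{e\in M}a_e\in\{0,1\}$ for every $M$.

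The second, and hardest, stage is to show that such an integer representation can be further gauged to $a_e\in\{0,1\}$. My plan is to induct on $n$. The base case $n=2$ is immediate since all eight Boolean functions on the three matchings of $K_4$ are partner functions. For the inductive step, fix any edge $e_0$ and restrict $f$ to matchings containing $e_0$: this yields a degree-$1$ Boolean function $f_{e_0}\colon\pms[2n-2]\to\{0,1\}$ on $[2n]\setminus e_0$ which, by the inductive hypothesis, is constant, a partner function, or a triangle function. Varying $e_0$ over all edges and matching up the resulting restrictions should pin down the global structure of $a$: in particular, if every restriction is constant then $f$ itself is constant; if some restriction is a partner function centered at $w$, then I expect to show (via further restrictions) that $f$ is the partner function at $w$ or the triangle function on a triangle containing $w$; and similarly when a restriction is of triangle type. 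The main obstacle I anticipate is precisely this case analysis---ruling out ``hybrid'' configurations in which different $e_0$'s produce restrictions centered at incompatible vertices or triangles---for which the key tool will be computing explicitly the signature of each restriction of a partner or triangle function (we worked out earlier in the paper how the three types restrict under conditioning on an edge) and using these signatures as invariants that propagate consistency across all edges $e_0$.
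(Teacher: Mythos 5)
The proposal takes a genuinely different route from the paper: the paper first characterizes all \emph{nonnegative} degree-$1$ functions on $\pms$ via Edmonds' description of the perfect matching polytope (nonnegative combinations of $x_{ij}$ and the odd-cut slacks $d_S$), then uses the bound $\bs(f)\le 1$ for Boolean degree-$1$ functions (\Cref{lem:bs-deg-1}) to kill the cut terms with $|S|\ge 5$, and finishes with a short case analysis on the remaining edge and triangle terms. You instead try to avoid Edmonds' theorem entirely through an arithmetic ``gauge'' argument.

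Your easy direction and endgame (a pairwise-intersecting edge set is a star or a triangle) are fine, and Stage~1 is morally correct, although the propagation you describe does not quite work as stated: with $b_u = a_{v_0 u}$, the $4$-subset relations give only $a_{uw}\equiv a_{v_0u}+a_{v_0w}+\kappa_0 \pmod 1$ for a residual constant $\kappa_0$, which must then be absorbed by a further (fractional) gauge before you can conclude $a\in\mathbb{Z}^E$; this is repairable. The genuine gap is Stage~2: you assert, but do not prove, that an integer representation with $\sum_{e\in M}a_e\in\{0,1\}$ can be re-gauged (up to complementation) so that $a\in\{0,1\}^E$ and $c\in\{0,1\}$. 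The proposed induction via restrictions to links of edges $e_0$ runs into exactly the case explosion you anticipate. For instance, if $f$ is the triangle function on $T=\{1,2,3\}$ and you condition on $e_0$, you get a constant when $e_0\subseteq T$, a partner function when $|e_0\cap T|=1$, and a triangle function when $e_0\cap T=\emptyset$; so different edge-links give different types and you cannot cheaply read off the global type from them. Carrying out this consistency/reconstruction argument rigorously is precisely the hard part of the theorem, and the proposal leaves it as an outline (``the main obstacle I anticipate is precisely this case analysis''). Without that step, or a substitute structural input such as the paper's appeal to $\bs(f)\le 1$ applied to the Edmonds representation, the proof is incomplete.
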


Incidentally, this is another example in which there are non-dictatorial degree~$1$ functions, namely those depending on intersections with a triangle.

We prove \Cref{thm:intro-deg1-pm} using polyhedral techniques. As in the proof of the corresponding result for the symmetric group by Ellis, Friedgut and Pilpel (which we paraphrase using our methods), we first characterize all \emph{nonnegative} degree~$1$ functions, using the classical characterization of supporting hyperplanes of the perfect matching polytope. To deduce the result for Boolean functions, we use a simple result from the theory of complexity measures: a degree~$1$ function has sensitivity at most~$1$.

The reader is perhaps wondering about nonnegative functions of higher degree. Can we say anything intelligent about them? It turns out that the answer is negative already for the Boolean cube: classical results on the Sherali--Adams hierarchy~\cite{GMT} show that there exist nonnegative degree~$2$ functions which, if written as nonnegative linear combinations of monomials over literals (that is, products of factors of the form $x_i$ and $1-x_j$), require degree $\Omega(n)$.

\paragraph{Application to Erd\H{o}s--Ko--Rado theory}
The work of Ellis, Friedgut and Pilpel, which has already been mentioned several times, is about intersecting families of permutations. A subset $\mathcal{F} \subseteq \Sym$ is \emph{$t$-intersecting} if any two $\pi_1,\pi_2 \in \mathcal{F}$ agree on the image of at least $t$~points. In other words, if we think of $\pi_1,\pi_2$ as sets (as in our setup), then $|\pi_1 \cap \pi_2| \geq t$. How large can a $t$-intersecting family be? One construction is a \emph{$t$-star}:
\[
 \mathcal{F} = \{ \pi \in \Sym : \pi(i_1) = j_1, \ldots, \pi(i_t) = j_t \}.
\]
Ellis et al.\ show that for large enough $n$ (depending on $t$), these families have the maximum possible size, and moreover uniquely so: every $t$-intersecting family of the maximum size $(n-t)!$ is a $t$-star. Unfortunately, their argument for the uniqueness claim is wrong, see~\cite{EFP-comment}. Uniqueness can be recovered from the work of Ellis~\cite{Ellis2}, which proves a much stronger result, and is quite complicated.

We give a much simpler proof of uniqueness, using the connection between degree and certificate complexity:

\begin{theorem} \label{thm:intro-uniqueness-sn}
For every $t,d$, the following holds for large enough $n$. If $f$ is the characteristic vector of a $t$-intersecting family and $\deg f \leq d$, then either $f$ is contained in a $t$-star, or the corresponding family contains $O((n-t-1)!)$ permutations.
\end{theorem}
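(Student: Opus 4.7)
My plan is to exploit the polynomial relation between degree and 1-certificate complexity supplied by \Cref{thm:intro-main-sn}: there is $C = \poly(d)$ such that every $\pi \in \mathcal{F}$ admits a 1-certificate $S \subseteq \pi$ with $|S| \le C$, meaning every permutation extending $S$ also lies in $\mathcal{F}$. Write $\mathcal{S}_S := \{\pi \in \Sym : S \subseteq \pi\}$ for the star of a partial injection $S$. The crucial tool is a \emph{derangement lemma}: for any 1-certificate $S$ of $f$ with $|S| \le C$ and any $\pi \in \mathcal{F}$, we have $|\pi \cap S| \ge t$ once $n$ is large enough in terms of $C$ and $t$. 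Indeed, if $|\pi \cap S| < t$, a greedy (Hall-type) construction produces an extension $\pi' \in \mathcal{S}_S \subseteq \mathcal{F}$ introducing no new agreements with $\pi$ outside of $S$, so $|\pi \cap \pi'| = |\pi \cap S| < t$, contradicting the $t$-intersection property of $\mathcal{F}$.

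Fix $\pi_0 \in \mathcal{F}$ with a 1-certificate $S_0$. The derangement lemma applied to $S_0$ gives
\[
  \mathcal{F} \subseteq \bigcup_{T \subseteq S_0,\, |T| = t} \mathcal{S}_T,
\]
a union of at most $\binom{C}{t} = O_d(1)$ many $t$-stars. Since $\mathcal{S}_{S_0}$ is itself $t$-intersecting, $|S_0| \ge t$; if $|S_0| = t$, then $|\pi \cap S_0| \ge t = |S_0|$ forces $S_0 \subseteq \pi$ for every $\pi \in \mathcal{F}$, so $\mathcal{F} = \mathcal{S}_{S_0}$ is a $t$-star and we are done. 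Otherwise $|S_0| \ge t+1$; and we may also assume $\mathcal{F}$ is not contained in any single $\mathcal{S}_T$, else we are done.

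The core step is then to bound $|\mathcal{F} \cap \mathcal{S}_T|$ for each $T \subseteq S_0$ of size $t$. By the non-$t$-star assumption, pick $\pi^{(T)} \in \mathcal{F} \setminus \mathcal{S}_T$ and a 1-certificate $S^{(T)} \subseteq \pi^{(T)}$ of size $\le C$. Setting $r := |T \cap S^{(T)}|$, we have $r \le t-1$ since $T \not\subseteq \pi^{(T)} \supseteq S^{(T)}$. Applying the derangement lemma with $S^{(T)}$ in place of $S_0$ shows that every $\pi \in \mathcal{F}$ satisfies $|\pi \cap S^{(T)}| \ge t$. For $\pi \in \mathcal{F} \cap \mathcal{S}_T$, the inclusion $T \subseteq \pi$ accounts for only $r$ of these $t$ pairs from inside $T$, so $\pi$ must contain at least $t - r \ge 1$ further pairs from $S^{(T)} \setminus T$ (which are necessarily compatible with $T$). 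Thus $\mathcal{F} \cap \mathcal{S}_T$ is covered by at most $\binom{C}{t-r} = O_d(1)$ stars of the form $\mathcal{S}_{T \cup P}$ with $|T \cup P| = 2t - r \ge t+1$, each of size at most $(n-t-1)!$. Summing over the $O_d(1)$ choices of $T$ then yields $|\mathcal{F}| = O_d((n-t-1)!)$, as required.

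The main obstacle is the derangement lemma, since it is the only place where the ``large $n$'' hypothesis is used quantitatively: one must show that the partial injection $S$ can be completed to a full permutation avoiding the $O_d(1)$ forbidden agreements with $\pi$, which reduces to a small Hall-type matching problem that becomes routine for $n \gg C, t$. Once this lemma is in hand, the rest is clean combinatorial bookkeeping, with the twist being that we apply the lemma a \emph{second} time to the certificate $S^{(T)}$ chosen via the non-$t$-star hypothesis.
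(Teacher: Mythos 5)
Your proposal is correct and follows essentially the same route as the paper's proof: translate $\deg f \le d$ into $C(f) = \poly(d)$ via the main theorem, prove an intersection bound (your ``derangement lemma'', which is exactly the computation of $\Iparam_t$ for the symmetric group) showing that every $\pi \in \mathcal{F}$ meets any small $1$-certificate in at least $t$ positions, fix one certificate $A$, and for each $t$-subset of $A$ use the non-star hypothesis to produce a second certificate whose pairs ``escape'' the $t$-star, thereby covering $\mathcal{F}$ by $O_d(1)$ many $(t+1)$-links. The only differences from the paper are cosmetic bookkeeping: you split off the case $|S_0| = t$ explicitly (the paper absorbs it via the non-star hypothesis), and you extract $t-r$ extra pairs rather than a single one, yielding $\binom{C}{t-r}$ stars per $T$ instead of the paper's $C$ — both give the same $O_d((n-t-1)!)$ conclusion.
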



Ellis et al.\ show that a $t$-intersecting family of size $(n-t)!$ must have degree~$t$ (for large enough~$n$), and so \Cref{thm:intro-uniqueness-sn} shows that for large enough~$n$, such a family must be a $t$-star.

\Cref{thm:intro-uniqueness-sn} generalizes to other domains for which similar intersection theorems are known, such as the perfect matching scheme~\cite{lindzey2018intersecting,LindzeyThesis}, and to cross-$t$-intersecting families.

We see \Cref{thm:intro-uniqueness-sn} as a contribution of theoretical computer science to extremal combinatorics. It illustrates the usefulness of the theory developed in this work.

\paragraph{Paper organization} We describe the basic setup in \Cref{sec:setup}, using two running examples: the Boolean cube and the symmetric group. We prove our main theorem, polynomially relating all complexity measures other than sensitivity, in \Cref{sec:main}. After describing several domains to which our techniques apply in \Cref{sec:domains}, we prove several sensitivity theorems (polynomially relating sensitivity to all other complexity measures) in \Cref{sec:sensitivity}. We discuss degree~$1$ functions in \Cref{sec:degree1}, the application to intersecting families in \Cref{sec:intersecting}, and circuit constructions in \Cref{sec:circuit}. We close the paper with \Cref{sec:open-questions}, in which we discuss some open questions.

\paragraph{Guide to the reader} Our main contribution is the setup described in \Cref{sec:setup}. A reader who is short on time can concentrate only on the running example of the symmetric group, and skip \Cref{sec:domains} altogether. The remaining sections are completely independent. Of these sections, \Cref{sec:main} and \Cref{sec:circuit} adapt known arguments, while \Cref{sec:sensitivity}, \Cref{sec:degree1}, and \Cref{sec:intersecting} contain novel arguments.

\paragraph{Acknowledgements} This project has received funding from the European Union's Horizon 2020 research and innovation programme under grant agreement No~802020-ERC-HARMONIC. We thank Nitin Saurabh for many helpful discussions.

\section{Basic setup} \label{sec:setup}

In this section we describe our general setup, using two running examples: the Boolean cube and the symmetric group.

\subsection{Defining a domain} \label{sec:setup-definition}

\paragraph{Domain} A \emph{domain} $\domain$ is a collection of subsets of some universe $\universe$, all of the same size~$n$, known as the \emph{dimension}.\footnote{In the theory of simplicial complexes, $\domain$ would actually have dimension $n-1$.}

\subparagraph{Boolean cube} We think of the Boolean cube $\bits^n$ as the product set
\[
 \domain = \bigtimes_{i=1}^n \{(i,0),(i,1)\}
\]
over the universe $\universe = \{(i,b) : i \in [n], b \in \bits\}$.

A vector $x \in \bits^n$ corresponds to the set $\{(i,x_i) : i \in [n]\}$.

\subparagraph{Symmetric group} We identify a permutation $\pi \in \Sym$ with its graph $\{ (i,\pi(i)) : i \in [n] \}$.

We can think of a permutation as a perfect matching in the bipartite graph $K_{n,n}$. In these terms, $\universe$ is the set of edges of $K_{n,n}$, and $\domain$ consists of all perfect matchings.

\paragraph{Query} In order to define decision trees, we will also need to define the notion of query. A \emph{query} $Q$ is a subset of $\universe$ which intersects each set in $\domain$ in exactly one point. With each domain we will associate a set $\queries$ of allowed queries. 

To avoid trivialities, we assume that $\queries$ satisfies the following property: $\bigcup \queries = \universe$, that is, every element of $\universe$ is an element of some query in $\queries$. This will ensure that any function can be represented as a decision tree.

\subparagraph{Boolean cube} Decision trees on the Boolean cube use queries of the form ``$x_i=?$''. In our formalism, such a query corresponds to the set $\{(i,0),(i,1)\}$, which is guaranteed to intersect each set in $\domain$ at exactly one element. Therefore
\[
 \queries = \{ \{(i,0), (i,1)\} : i \in [n] \}.
\]

\subparagraph{Symmetric group} The correct notion of decision trees for the symmetric group is hinted at by the characterization of Boolean degree~$1$ functions due to Ellis, Friedgut and Pilpel~\cite{EFP}, and has also appeared in the proof complexity literature, in the context of lower bounds on the pigeonhole principle~\cite{UrquhartFu}. The allowed queries are ``$\pi(i) = ?$'' and ``$\pi^{-1}(j) = ?$'', which in our setup translate to:
\[
 \queries = \{ \{(i,j) : j \in [n] \} : i \in [n] \} \cup \{ \{(i,j) : i \in [n] \} : j \in [n] \}.
\]

\subparagraph{Note} In order to completely define a domain, we need to specify both $\domain$ and $\queries$, though in practice we will refer to a domain using $\domain$ only, for brevity.

The set $\queries$ is not canonical: its choice determines the notion of decision tree used to define decision tree complexity, as well as the values of the parameters in \Cref{sec:setup-parameters}, which in turn affect our main theorem, \Cref{thm:main}, quantitatively. 
As an example, the symmetric group can also be viewed as a multislice (see \Cref{sec:domains-ms}), in which case only queries of the form ``$\pi(i)=?$'' are allowed.




\paragraph{Chunk size} Looking ahead, the ``blocks'' in the definition of block sensitivity correspond to removing $b$ elements from some $x \in \domain$ and replacing them with $b$ other elements. For example, on the Boolean cube, moving from $000$ to $101$ corresponds to replacing $(1,0),(3,0)$ with $(1,1),(3,1)$. The \emph{size} of the block is~$b$. In the definition of sensitivity, we require all blocks to have minimum size. Accordingly, we define the \emph{chunk size} $\chunk$ to be the minimal value of $|x \setminus y|$ for $x \neq y \in \domain$.

\subparagraph{Boolean cube} For two points $x,y$ on the Boolean cube, $|x \setminus y|$ is the Hamming distance between the vector representations of $x$ and $y$. Consequently, $\chunk = 1$.

\subparagraph{Symmetric group} The minimal number of changes required to move from one permutation to the other is~$2$ (corresponding to applying a transposition). Therefore, $\chunk = 2$.

\subparagraph{Note} We could have made $\chunk$ a free parameter, but for all domains for which we can prove a sensitivity theorem, we can prove it with respect to the current definition of $\chunk$.


\subsection{Complexity measures} \label{sec:setup-measures}

With the setup in hand, we can define the various complexity measures we are interested in, for a given $n$-dimensional domain $\domain$ over a universe $\universe$ with queries $\queries$. While many of the complexity measures make sense for arbitrary functions, we will only define them for Boolean functions, that is, functions $\domain \to \bits$.
Our selection of complexity measures matches Aaronson, Ben~David and Kothari~\cite[Table 2]{ABDK}. There are many other measures encountered in the literature, for example tree sensitivity~\cite{GSW} and quantum certificate complexity~\cite{Aaronson}, which we leave for future work.

All definitions that we give below coincide with the usual definitions in the case of the Boolean cube, as the reader can easily verify.

\paragraph{Degree and approximate degree} A \emph{polynomial} is a function $P\colon \domain \to \RR$ of the form
\[
 P(x) = \sum_{S \subseteq \universe} c_S \llbracket x \supseteq S \rrbracket,
\]
where $\llbracket x \supseteq S \rrbracket$ is the Boolean function which equals~$1$ if $x \supseteq S$. The \emph{degree} of a polynomial is the maximum size of a set $S$ such that $c_S \neq 0$.

The \emph{degree} of a Boolean function $f\colon \domain \to \bits$, denoted $\deg(f)$, is the minimum degree of a polynomial $P$ such that $f(x) = P(x)$ for all $x \in \universe$.

The \emph{$\epsilon$-approximate degree} of a Boolean function $f\colon \domain \to \bits$, denoted $\adeg_\epsilon(f)$, is the minimum degree of a polynomial $P$ such that $|f(x) - P(x)| \leq \epsilon$ for all $x \in \domain$. This definition is sensible for all $\epsilon \in (0,1/2)$. The \emph{approximate degree} is $\adeg(f) = \adeg_{1/3}(f)$.
(The constant $1/3$ can be replaced by any value in $(0,1/2)$; the approximate degree will change by at most a constant factor.)
Note that $\adeg_0(f) = \deg(f)$.

\subparagraph{Symmetric group} Recall that we represent a permutation $\pi \in S_n$ as a set of pairs $\{ (i,\pi(i)) : i \in [n] \}$. Equivalently, for every pair $(i,j)$, there is a Boolean variable $x_{i,j}$ indicating whether $\pi(i) = j$. The variables $x_{i,j}$ together form the permutation matrix representation of the input permutation.

The degree of a function $f\colon S_n \to \bits$ is the minimum degree of a polynomial in the variables $x_{i,j}$ representing the function. This notion of degree coincides with a natural spectral notion, as described in \Cref{sec:sensitivity-sn}. It also coincides with the following notion of \emph{junta degree}: the degree of $f$ is the minimal $d$ such that $f$ can be written as a linear combination of \emph{$d$-juntas}, where a $d$-junta is a function depending on $d$ data items of the form $\pi(i)$ or $\pi^{-1}(j)$.

\subparagraph{Notes} A given Boolean function can have several different polynomial representations. For example, if $Q$ is any query then $P(X) = \sum_{e \in Q} \llbracket e \in x \rrbracket - 1$ is a representation of the zero function. In many cases it is possible to impose more constraints on the polynomial, thus enforcing the representation to be unique. For example, over the Boolean cube, if we only allow sets $S$ with elements of the form $(i,1)$ then the representation will be unique. For the case of the slice, see~\cite{Filmus-basis}; a similar unique representation should exist for the symmetric group.

In the definition of a polynomial, it suffices to consider sets $S$ which are subsets of some $x \in \domain$. Such sets are known as \emph{faces} in the parlance of simplicial complexes (the sets in $\domain$ are known as \emph{facets}).

In the case of the Boolean cube, the maximum degree coincides with the dimension. In contrast, the maximum degree of a function on $\Sym$ is only $n-1$.

As mentioned in the introduction, this notion of degree (``spatial degree'') coincides with algebraic notions of degree (``spectral degree'') for all domains considered in this paper for which such notions exist. We expand on this in \Cref{sec:domains-degree}.

\paragraph{Certificate complexity} Let $f\colon \domain \to \bits$ be a Boolean function. A \emph{certificate} for a point $x \in \domain$ is a subset $c \subseteq x$ such that $f(y) = f(x)$ whenever $y \supseteq c$. In other words, a certificate for $x$ is a set of elements of $x$ which suffice to guarantee that $f$ will attain the value $f(x)$.

The \emph{certificate complexity} of $f$ at $x$, denoted $C(f,x)$, is the minimum size of a certificate for $x$. For $b \in \{0,1\}$, the \emph{$b$-certificate complexity} of $f$, denoted $C_b(f)$, is the maximum value of $C(f,x)$ over all points in $f^{-1}(b)$. The \emph{certificate complexity} of $f$, denoted $C(f)$, is $\max_x C(f,x)$.

\subparagraph{Symmetric group} The certificate complexity of a function $f\colon S_n \to \bits$ at a permutation $\pi$ is the minimum size of a \emph{partial permutation} $\rho \subseteq \pi$ (that is, an injective function whose domain is a subset of $[n]$) that forces the value of $f$, in the sense that $f(\sigma) = f(\pi)$ whenever $\sigma \supseteq \rho$. 

\subparagraph{Note on negative certificates} We can think of functions on $\domain$ as partial functions on $\bits^{\universe}$ with a fixed domain. From that perspective, it is natural to allow certificates which not only guarantee that some elements belong to the input, but also that some elements don't belong to the input. In other words, given a point $x \in \domain$, we can consider pairs $(c,d)$, where $c \subseteq x$ and $d \subseteq \overline{x}$. Such a pair is a certificate for $x$ if $f(y) = f(x)$ whenever $y \supseteq c$ and $\overline{y} \supseteq d$. The size of such a certificate is $|c|+|d|$.

Such ``negative'' certificates do not reduce the certificate complexity. Indeed, consider a certificate $(c,d)$ for some $x \in \domain$, and an element $e \in d$. Our assumption that $\bigcup \queries = \universe$ guarantees that $e$ participates in some query $Q \in \queries$. Since $Q$ is a query and $e \notin x$, we have $Q \cap x = \{e'\}$ for some $e' \neq e$. Conversely, if $e' \in y$ then $e \notin y$, since $Q$ intersects $y$ at a unique element. Therefore we can replace $e \in d$ with $e' \in c$, obtaining another certificate for $x$ of the same size. Eliminating all elements of $d$ in this way, we obtain a standard certificate for $x$ of the same size.

\paragraph{Unambiguous certificate complexity} Unambiguous certificate complexity is a notion closely related to decision tree complexity. A collection $\mathcal{C}$ of subsets of $\universe$ is a \emph{set of unambiguous certificates} for a Boolean function $f\colon \domain \to \bits$ if for each $x \in \domain$ there is a unique $y \in \mathcal{C}$ such that $y \subseteq x$, and this $y$ is a certificate for $x$.

The \emph{complexity} of a set $\mathcal{C}$ of unambiguous certificates is the size of the largest set in $\mathcal{C}$.
The \emph{unambiguous certificate complexity} of $f$, denoted $U(f)$, is the minimum complexity of a set of unambiguous certificates for $f$.

As in the case of certificate complexity, negative certificates do not reduce the unambiguous certificate complexity, for similar reasons. Indeed, consider a certificate with the negative guarantee $e \notin x$. Choosing some query $Q$ containing $x$, the negative guarantee can be expanded \emph{unambiguously} to positive guarantees $e' \in x$ for every $e' \in Q \setminus x$ which is consistent with the rest of the certificate.

\paragraph{Decision tree complexity} A \emph{decision tree} is a tree whose internal nodes are labeled by elements of $\queries$, whose edges are labeled by elements of $\universe$, and whose leaves are labeled by values in $\bits$. For an internal node $v$ labeled by $Q \in \queries$, let $\ell$ be the set of edge labels appearing in the path from the root to $v$. For each $a \in Q$, say that $a$ is \emph{feasible} for $v$ if some $x \in \domain$ contains $\ell \cup \{a\}$. We require $v$ to have exactly one child per feasible $a$, with the corresponding edge labeled $a$.

A decision tree computes a function in the natural way. The \emph{decision tree complexity} of a function $f\colon \domain \to \bits$, denoted $D(f)$, is the minimum depth (measured by edges) of a decision tree computing $f$.

The \emph{$\epsilon$-error randomized decision tree complexity} of $f$, denoted $R_\epsilon(f)$, is the minimum $R$ such that there is a probability distribution $\mathcal{D}$ on decision trees of depth at most $R$ such that $\Pr_{T \sim \mathcal{D}}[T(x) = f(x)] \geq 1-\epsilon$ for all $x \in \domain$. We define $R(f) = R_{1/3}(f)$. As in the case of approximate degree, the constant $1/3$ only affects the measure up to a constant factor, as long as $\epsilon \in (0,1/2)$.

A \emph{zero-error decision tree} $T$ for $f$ is a decision tree whose leaves are labeled $\{0,1,\ast\}$ (where $\ast$ has the interpretation ``do not know'') such that $T(x) \in \{f(x),\ast\}$ for all $x \in \domain$.
The \emph{zero-error randomized decision tree complexity} of $f$, denoted $R_0(f)$, is the minimum $R$ such that there is a probability distribution $\mathcal{D}$ on zero-error decision trees for $f$ of depth at most $R$ such that $\Pr_{T \sim \mathcal{D}}[T(x) = \ast] \leq 2/3$ form all $x \in \domain$. Once again, the constant $2/3$ only affects the measure up to a constant factor, and can be replaced by any other constant in $(0,1)$.

\subparagraph{Symmetric group} A decision tree for functions of $S_n$ uses queries of the form ``$\pi(i) = ?$'' and ``$\pi^{-1}(j) = ?$''. This notion coincides with the \emph{matching decision trees} appearing in~\cite{UrquhartFu}.

\subparagraph{Notes} In the case of the Boolean cube, the number of possible answers is always the same:~$2$. In contrast, in the case of the symmetric group, a query at depth $d$ only has $n-d$ answers (assuming no query is repeated along the way). This phenomenon is captured by the notion of feasibility.

Our assumption that $\bigcup \queries = \universe$ implies that there is a query algorithm that determines the input using at most $n$~queries, and in particular, the decision tree complexity of any function is at most~$n$. The query algorithm proceeds in $n$ rounds, each of which uncovers some element $a_i$ belonging to the input. In the $i$'th round, we ask an arbitrary query not containing $a_1,\ldots,a_{i-1}$, and so the answer $a_i$ differs from $a_1,\ldots,a_{i-1}$ by construction. To see that such a query must exist, let $a$ be some element in the input other than $a_1,\ldots,a_{i-1}$. Since $\bigcup \queries = \universe$, some query $Q$ contains $a$. Since $Q$ intersects the input at exactly one element, it cannot contain any of the elements $a_1,\ldots,a_{i-1}$.

\paragraph{Sensitivity and block sensitivity} The definition of block sensitivity is less intuitive than the definitions we have seen so far.

Let us start by recalling the usual definition of block sensitivity. The block sensitivity of a function $f\colon \bits^n \to \bits$ at a point $x$ is the maximum number of disjoint ``blocks'' $B_1,\ldots,B_s \subseteq [n]$ such that $f(x \oplus B_i) \neq f(x)$ for all $i$, where $x \oplus B_i$ is the result of flipping all the bits whose indices belong to $B_i$.

The operation $x \mapsto x \oplus B_i$ corresponds, in our formalism (viewing $x$ as a set), to removing from $x$ the points $\{(j,x_j) : j \in B_i\}$ and replacing them with $\{(j,1-x_j) : j \in B_i\}$. The fact that the blocks $B_i$ are disjoint corresponds to the sets $\{(j,x_j) : j \in B_i\}$ being disjoint.

Accordingly, we define the \emph{block sensitivity} of a function $f\colon \domain \to \bits$ at a point $x \in \domain$, denoted $\bs(f,x)$, to be the maximum number of points $y_1,\ldots,y_s \in \domain$ (corresponding to $x \oplus B_i$) such that (i) $f(y_i) \neq f(x)$ and (ii) the sets $x \setminus y_i$ are disjoint. The block sensitivity of $f$ is $\bs(f) = \max_{x \in \domain} \bs(f,x)$.

We define \emph{sensitivity} at a point $s(f,x)$ and global sensitivity $s(f)$ in the same way, using one additional constraint: $|x \setminus y_i| = \chunk$.

\subparagraph{Symmetric group} The block sensitivity of a function $f\colon S_n \to \bits$ at a permutation $\pi$ is the maximal number of permutations $\tau_j$ such that $f(\tau_j \pi) \neq f(\pi)$ and the sets $D(\tau_j) = \{ i \in [n] : \tau_j(i) \neq i \}$ are disjoint; we say that the permutations $\tau_j$ are disjoint.

This definition appears asymmetric, since we multiply $\pi$ by $\tau_j$ on the left. Note, however, that $\tau_j \pi = \pi \sigma_j$, where $\sigma_j = \pi^{-1} \tau_j \pi$ is a conjugate of $\tau_j$. Using the convention $(\alpha \beta)(i) = \alpha(\beta(i))$, we see that $D(\sigma_j) = \pi^{-1}(D(\tau_j))$, and so the permutations $\sigma_j$ are also disjoint. Therefore the definition of block sensitivity is, in fact, symmetric.

If we require moreover that the $\tau_j$ be transpositions, we get the definition of sensitivity. This definition is also symmetric, since $\sigma_j$, as a conjugate of $\tau_j$, is also a transposition.

\subparagraph{Note on block sensitivity} The definition of disjointness is asymmetric: we consider $x \setminus y_i$ but not $y_i \setminus x$. It turns out that if the sets $x \setminus y_i$ are disjoint, then so are the sets $y_i \setminus x$.

\begin{lemma} \label{lem:bs-symmetric}
If $x,y_1,\ldots,y_s \in \domain$ are such that $y_i \neq x$ for all $i$, and the sets $x \setminus y_i$ are disjoint, then so are the sets $y_i \setminus x$.	
\end{lemma}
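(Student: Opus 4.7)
The plan is to argue by contradiction using the query structure, exactly analogously to how the note on negative certificates translates ``$e \notin x$'' into ``$e' \in x$'' via a query $Q$ containing $e$. Suppose for contradiction that some element $e$ lies in both $y_i \setminus x$ and $y_j \setminus x$ for distinct indices $i \neq j$. The goal is to produce a single element $e' \in \universe$ that lies in both $x \setminus y_i$ and $x \setminus y_j$, contradicting the hypothesis that the sets $x \setminus y_k$ are pairwise disjoint.

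To produce $e'$, I would invoke the assumption $\bigcup \queries = \universe$ to pick a query $Q \in \queries$ with $e \in Q$. By definition, every query intersects every set of $\domain$ in exactly one element. Since $e \in y_i \cap Q$, the unique element of $Q \cap y_i$ is $e$; likewise for $y_j$. Since $e \notin x$, the unique element of $Q \cap x$, call it $e'$, is distinct from $e$.

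Then $e' \in x$ but $e' \notin y_i$ (because the sole element of $Q \cap y_i$ is $e \neq e'$), so $e' \in x \setminus y_i$. Symmetrically $e' \in x \setminus y_j$. This contradicts the disjointness of $x \setminus y_i$ and $x \setminus y_j$, completing the proof.

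There is essentially no obstacle here: the lemma is really just a reformulation of the observation from the ``note on negative certificates'' that the query structure allows one to exchange positive and negative information freely. The only thing one must be careful about is that the uniqueness of the intersection $Q \cap y_i$ is what rules out the degenerate scenario where $Q$ hits $y_i$ at some other element besides $e$, and this is built into the very definition of a query.
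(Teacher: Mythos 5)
Your proof is correct and follows essentially the same argument as the paper: assume a common element $e$ of $y_i \setminus x$ and $y_j \setminus x$, pick a query $Q$ containing $e$ using $\bigcup \queries = \universe$, and observe that the unique element of $Q \cap x$ lands in both $x \setminus y_i$ and $x \setminus y_j$, contradicting disjointness. The observation that this mirrors the ``negative certificates'' exchange is also in the spirit of the paper.
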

\begin{proof}
Suppose that for some $i \neq j$, the sets $y_i \setminus x$ and $y_j \setminus x$ have a common element $a$. Since $\bigcup \queries = \universe$, there is a query $Q_a$ containing $a$. The query $Q_a$ intersects $x$ at a unique point $b \neq a$. Since $Q_a$ intersects $y_i$ at a unique point, necessarily $b \notin y_i$, and so $b \in x \setminus y_i$. The same argument shows that $b \in x \setminus y_j$, contradicting the assumption that these two sets are disjoint.
\end{proof}



\subparagraph{Notes on sensitivity} In the particular case of the symmetric group, another possible definition of sensitivity is as follows. Define the \emph{edge sensitivity} of $f\colon \Sym \to \bits$ at a point $x$, denoted $t(f,x)$, as the number of transpositions $\tau$ such that $f(x^\tau) \neq f(x)$. The edge sensitivity of $f$ is then $t(f) = \max_x t(f,x)$.

This definition is natural from the point of view of Boolean function analysis. In the case of the Boolean cube, it is well-known that average sensitivity equals total influence, where total influence is defined via a graph structure imposed on $\bits^n$, namely the hypercube. We can define total influence in a similar way for the symmetric group, using the transposition graph. We then get that the average edge-sensitivity is the same as the total influence~\cite{Wimmer}.

On the Boolean cube, total influence is bounded by the degree. On the symmetric group, total influence is bounded by the degree times $n$. This is reflected in the inequality $t(f) = O(n\cdot s(f))$. Indeed, it is not hard to show that $t(f,x) \leq 2n\cdot s(f,x)$. Unfortunately, we don't have a matching bound in the other direction. To see this, let $f(x)$ be the function ``the cycle decomposition of $x$ contains at least one 2-cycle of the form $(2i-1\;2i)$''. Then $s(f) \geq s(f,\mathrm{id}) = n/2$ is maximal, but simple case analysis shows that $t(f) = O(n)$. On the other hand, the sign function maximizes both the sensitivity and the edge sensitivity.

\paragraph{Fractional block sensitivity} Fractional block sensitivity, first defined by Tal~\cite{Tal} and by Gilmer, Saks and Srinivasan~\cite{GSS}, is a relaxation of block sensitivity obtained by relaxing an integer program to a linear program.

We can express the block sensitivity of a function $f\colon \domain \to \bits$ at a point $x \in \domain$ as the following integer program. The variables are $c_y \in \{0,1\}$, for each $y \in \domain$ such that $f(y) \neq f(x)$, which indicate a collection $y_1,\ldots,y_s$ of ``blocks''. We want to maximize $\sum_y c_y$ (the number of ``blocks'') under the constraints
\[
 \sum_{p \notin y} c_y \leq 1 \text{ for all } p \in x.
\]
These constraints express the condition ``for each $p \in x$, there can be at most one block $y_i$ such that $p \in x \setminus y_i$.'' This is the same as asking for the sets $x \setminus y_i$ to be disjoint. Hence the solution to this integer program is $\bs(f,x)$.

If we relax the constraint $c_y \in \{0,1\}$ to the linear constraint $0 \leq c_y \leq 1$ then we get a linear program whose solution we denote $\fbs(f,x)$, the \emph{fractional block sensitivity of $f$ at $x$}. We also denote $\fbs(f) = \max_{x \in \domain} \fbs(f,x)$.

Linear programming duality gives us another linear program for $\fbs(f,x)$. The variables are $d_p$ for each $p \in x$. The goal is to minimize $\sum_p d_p$ under the constraints $0 \leq d_p \leq 1$ and
\[
 \sum_{p \notin y} d_p \geq 1 \text{ for all } y \in \domain \text{ such that } f(y) \neq f(x).
\]

This measure is known as \emph{fractional certificate complexity}, and is similar to \emph{randomized certificate complexity} which had been defined by Aaronson~\cite{Aaronson} (the two measures are identical up to constant factors, as shown by Tal~\cite{Tal} and by Gilmer et al.~\cite{GSS}).
To understand the provenance of these terms, let us consider the corresponding integer program, in which the constraint $0 \leq d_p \leq 1$ is replaced with $d_p \in \{0,1\}$. The variables $d_p$ define a subset $z \subseteq x$. For each $y$, the constraint above expresses the condition ``if $f(y) \neq f(x)$ then $y$ is not a superset of $z$''. In other words, $z$ is a certificate for $x$. The solution of this integer program is thus $C(f,x)$.

The foregoing shows how to obtain the strange definition of block sensitivity from the natural definition of certificate complexity: start with an integer program for $C(f,x)$; relax it to a linear program; dualize; tighten it up to an integer program for $\bs(f,x)$. Hopefully this convinces the reader that our definition of block sensitivity is the correct one.

\paragraph{Quantum query complexity} We assume that the reader is familiar with quantum query complexity in the case of the Boolean cube; see for example Buhrman and de~Wolf~\cite{BuhrmandeWolf}. We introduce an alphabet $\answers$ of answers to queries. For example, in the case of the Boolean cube we can choose $\answers = \bits$, and in the case of the symmetric group we can choose $\answers = [n]$.

A quantum query algorithm operates on a triplet of quantum registers: an input register of width $\lceil \log \queries \rceil$ qubits, an output register of width $\lceil \log \answers \rceil$ qubits, and workspace of arbitrary width. The \emph{quantum query operator} $O$ is the unitary operator that maps $|q\rangle|y\rangle|z\rangle$ to $|q\rangle |y \oplus a \rangle |z\rangle$, where $a$ is the answer to query $q$. (The exact encoding of queries and answers will not make a difference.)

A \emph{quantum query algorithm} of complexity $T$ consists of $T+1$ unitary transformations $U_0,U_1,\ldots,U_T$. To apply the algorithm on an input, we initialize the registers to $|0\rangle |0\rangle |0\rangle$ and apply the operations $U_0,O,U_1,O,\allowbreak\ldots,\allowbreak U_{T-1},O,U_T$, in that order. Finally, we measure the first qubit (the choice of qubit to measure is arbitrary) and output the answer.

The \emph{exact quantum query complexity} $Q_E(f)$ is the minimum complexity of a quantum query algorithm that always computes $f$ correctly. The \emph{bounded-error quantum query complexity} $Q_\epsilon(f)$ is the minimum complexity of a quantum query algorithm that on every input, computes $f$ correctly with probability at least $1-\epsilon$. We define $Q(f) = Q_{1/3}(f)$. As in previous cases, changing $\epsilon$ only affects $Q_\epsilon(f)$ by at most a constant factor, as long as $\epsilon \in (0,1/2)$. Also, $Q_E(f) = Q_0(f)$.

\paragraph{Index of notation} We have defined quite a few complexity measures:
\begin{itemize}
\item $\deg(f),\adeg(f),\adeg_\epsilon(f)$: degree and approximate degree.
\item $C(f),U(f)$: certificate complexity (general and unambiguous).
\item $D(f),R(f),R_\epsilon(f),R_0(f)$: decision tree complexity (deterministic, randomized, and zero-error).
\item $s(f),\bs(f),\fbs(f)$: sensitivity, block sensitivity, fractional block sensitivity.
\item $Q_E(f), Q(f), Q_\epsilon(f)$: quantum query complexity (exact and bounded-error).
\end{itemize}

\paragraph{Simple relations among the measures}
Some inequalities among the measures we have considered follow directly from the definitions.

\begin{lemma} \label{lem:simple-relations}
The following hold for every function $f\colon \domain \to \bits$:
\begin{enumerate}[(a)]
\item $\adeg(f) \leq \deg(f)$. \label{item:adeg-deg}
\item $C(f) \leq U(f)$. \label{item:C-U}
\item $R_0(f) \leq D(f)$. \label{item:R0-D}
\item $s(f) \leq \bs(f)$. \label{item:s-bs}
\item $Q(f) \leq Q_E(f)$. \label{item:Q-QE}
\item $R(f) \leq R_0(f)$. \label{item:R-R0}
\item $\bs(f) \leq \fbs(f) \leq C(f)$. \label{item:bs-fbs-C}
\item $\deg(f) \leq D(f)$. \label{item:deg-D}
\item $U(f) \leq D(f)$.	 \label{item:U-D}
\item $\bs(f) \leq R_\epsilon(f)/(1-2\epsilon)$, and so $\bs(f) \leq 3R(f)$ and $\bs(f) \leq D(f)$. \label{item:bs-R}
\item $Q_E(f) \leq D(f)$. \label{item:QE-D}
\item $\adeg_\epsilon(f) \leq Q_\epsilon(f)$, and so $\adeg(f) \leq 2Q(f)$ and $\deg(f) \leq 2Q_E(f)$. \label{item:adeg-Q}
\end{enumerate}
\end{lemma}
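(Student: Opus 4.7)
The plan is to verify each of the twelve inequalities in turn. Most follow immediately from the definitions, while the rest adapt standard Boolean-cube arguments to our framework.

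Parts (a), (b), (c), (d), (e), (f) are one-line verifications: in each case the right-hand measure ranges over a broader or less restrictive class of objects than the left-hand one. An exact polynomial is a $1/3$-approximating one; an unambiguous certificate system yields a certificate at each point; a deterministic tree is a zero-error randomized one; exact and zero-error algorithms are bounded-error; and any sensitive neighbor $y$ of $x$ automatically witnesses block sensitivity since it has $|x\setminus y|=\chunk$ and the disjointness constraint for a single block is trivial.

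Parts (h), (i), (k) all reduce other measures to decision tree complexity. Fix an optimal deterministic tree of depth $D(f)$; each root-to-leaf path is labeled by edges $a_1,\ldots,a_k\in\universe$ with $k\leq D(f)$, and the indicator of reaching that leaf on input $x$ equals $\llbracket x\supseteq\{a_1,\ldots,a_k\}\rrbracket$. Summing the indicators of $1$-leaves gives a polynomial of degree $\leq D(f)$ representing $f$, proving (h); the collection of path sets forms an unambiguous certificate family of the same complexity, proving (i); and the standard reversible simulation converts the tree into an exact quantum algorithm of depth $D(f)$, proving (k). For (g), $\bs\leq\fbs$ holds because $\fbs$ is the LP relaxation of the integer program defining $\bs$, and $\fbs(f,x)\leq C(f,x)$ follows by exhibiting an optimal certificate $z\subseteq x$ as a feasible solution $d_p=\llbracket p\in z\rrbracket$ to the dual LP: if $f(y)\neq f(x)$ then $y\not\supseteq z$, so some $p\in z\setminus y$ exists, giving $\sum_{p\notin y}d_p\geq 1$.

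The two remaining parts require somewhat more work. For (j), Nisan's distinguishing argument applies: with $\bs(f,x)=s$ witnessed by disjoint blocks $x\setminus y_i$, any depth-$R_\epsilon(f)$ randomized algorithm on input $x$ touches the various $x\setminus y_i$ with expected total probability at most $R_\epsilon(f)$, so by averaging some $y_i$ is touched with probability at most $R_\epsilon(f)/s$; coupling the transcripts on $x$ and on $y_i$ then forces $R_\epsilon(f)/s\geq 1-2\epsilon$. For (l), the polynomial method of Beals et~al.\ encodes $x$ by the indicators $y_e=\llbracket e\in x\rrbracket$: for each $Q\in\queries$ the answer is the unique $a\in Q$ with $y_a=1$, the query operator's matrix entries are linear in the $y_e$, and so after $T$ queries each amplitude is a polynomial of degree $\leq T$ in the $y_e$, making the acceptance probability a bounded-degree polynomial approximating $f$; since $y_e=\llbracket x\supseteq\{e\}\rrbracket$, its monomials are already in the canonical form of \Cref{sec:setup-measures}. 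I expect the main obstacle to be verifying that parts (j), (k), and (l) genuinely rely only on the defining property $|Q\cap x|=1$ of a query, and nowhere invoke Boolean-cube-specific structure such as bits having only two values; once this is checked, each argument transfers verbatim.
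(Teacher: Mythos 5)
Your proposal follows the same route as the paper throughout: definitional verifications for (a)--(f), reduction to the decision tree for (g)--(i) and (k), Nisan's disjointness argument for (j), and the Beals--Buhrman--Cleve--Mosca--de~Wolf polynomial method for (l), and the concern you raise at the end (that everything uses only $|Q\cap x|=1$) is indeed what makes the transfer work. One small gloss: for (f) a zero-error tree is not literally a bounded-error tree, since it outputs $\ast$ with probability up to $2/3$; you must first replace each $\ast$-leaf with a fair coin toss (giving error $\leq\tfrac12\cdot\tfrac23=\tfrac13$), which is the step the paper spells out.
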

\begin{proof}
The first five relations are trivial.

We have $R(f) \leq R_0(f)$ since given a distribution of zero-error decision trees for $f$, if we convert every $\ast$-leaf into a random coin toss then we get a distribution of decision trees whose error probability is at most $\Pr[\ast] \cdot \tfrac{1}{2} \leq \tfrac{1}{3}$.

We have $\bs(f) \leq \fbs(f) \leq C(f)$ since $\fbs$ is a linear programming relaxation of a maximization integer program for $\bs(f)$, and of a minimization integer program for $C(f)$.

We have $\deg(f) \leq D(f)$ since ``reaching a leaf at depth $d$'' is a  degree~$d$ monomial, and $f$ can be written as a sum of these monomials over all $1$-leaves of a decision tree for $f$.

We have $U(f) \leq D(f)$ since the root-to-leaf paths in a decision tree form a set of unambiguous certificates.

\smallskip

Let us now show that $\bs(f) \leq R_\epsilon(f)/(1-2\epsilon)$; substituting $\epsilon = 0$ gives $\bs(f) \leq D(f)$. Let $x$ be a point such that $\bs(f,x) = \bs(f)$, say as witnessed by $y_1,\ldots,y_{\bs(f)}$. Let $\mathcal{D}$ be a distribution over decision trees of depth at most $R_\epsilon(f)$ such that $\Pr_{T \sim \mathcal{D}}[T(y) = f(y)] \geq 1-\epsilon$ for all $y \in \domain$.

Denote by $p_i$ the probability, over $T \sim \mathcal{D}$, that $T$ asks a query which has a different answer on $x$ and on $y_i$. When asked on $x$, such a query must return some element of $x \setminus y_i$. Since the sets $x \setminus y_i$ are disjoint, each query can differentiate $x$ from at most one $y_i$. This implies that $\sum_i p_i \leq R_\epsilon(f)$.

On the other hand, the probability that $T(x) \neq T(y_i)$ (over $T \sim \mathcal{D}$) is at most $p_i$. Therefore
\[
 1 - \epsilon \leq \Pr_{T \sim \mathcal{D}}[T(x) = f(x)] \leq
 \Pr_{T \sim \mathcal{D}}[T(x) \neq T(y_i)] + \Pr_{T \sim \mathcal{D}}[T(y_i) \neq f(y_i)] \leq p_i + \epsilon,
\]
since $T(x) = T(y_i) = f(y_i)$ implies $T(x) \neq f(x)$. This shows that each $p_i$ is at least $1 - 2\epsilon$. Since $\sum_i p_i \leq R_\epsilon(f)$ and there are $\bs(f)$ many $p_i$, we conclude that $(1-2\epsilon)\bs(f) \leq R_\epsilon(f)$.

\smallskip

We move on to the relations involving quantum query complexity. To see that $Q_E(f) \leq D(f)$, we show how to simulate a decision tree using a quantum query algorithm with complexity $D(f)$.  We assume without loss of generality that all leaves are at depth $D(f)$. The workspace will contain the current node. The first unitary $U_0$ fixes the initial state to $|q(r)\rangle |0\rangle |r\rangle$, where $q(v)$ is the query at node $v$, and $r$ is the root. The unitaries $U_1,\ldots,U_{D(f)-1}$ are defined to mimic the decision tree: $U_i$ maps $|q(v)\rangle |a\rangle |v\rangle$ to $|q(v_a)\rangle |0\rangle |v_a\rangle$, where $v$ is a node at depth $i-1$ and $v_a$ is its child corresponding to answer $a$ (such a unitary exists, since we can extend the constraints to a permutation of the basic states). Finally, $U_{D(f)}$ maps $|q(v) \rangle |a\rangle |v\rangle$ to $|q(v_a)\rangle |0\rangle |v_a\rangle$, where $q(v_a)$ is the label of the leaf $v_a$. By construction, just before the $i$'th query, the state of the algorithm is $|q(v_{i-1})\rangle |0\rangle |v_{i-1}\rangle$, where $v_0,\ldots,v_{D(f)}$ is the path in the decision tree corresponding to the input. In particular, at the end the state will be $|q(v_{D(f)})\rangle |0\rangle |v_{D(f)}\rangle = |f(x)\rangle |0\rangle |v_{D(f)}\rangle$, where $x$ is the input, and so the algorithm outputs $f(x)$ correctly.

Finally, we show that $\adeg_\epsilon(f) \leq 2Q_\epsilon(f)$. Given a quantum query algorithm of complexity $T$, we prove inductively that the amplitudes of the state of the algorithms after $k$ queries are polynomials of degree at most $k$. Since the initial state is constant and unitary operations do not affect the degree, it suffices to show that the quantum query operator $O$ increases the degree by at most~$1$. Indeed, denoting by $\alpha(\cdot)$ the magnitude before applying $O$ and by $\beta(\cdot)$ the magnitude after applying $O$, we have
\[
 \beta(|q\rangle |y\rangle |z\rangle) = \sum_{a \in q} \llbracket a \in x \rrbracket \alpha(|q\rangle |y \oplus a\rangle |z\rangle).
\]
The amplitudes of the final state thus have degree at most~$T$. The probability that the quantum algorithm outputs~$1$ is a sum of squares of magnitudes, and so is a polynomial $P$ of degree at most~$2T$. If the quantum query algorithm outputs the correct answer with probability $1-\epsilon$, then $|P(x) - f(x)| \leq \epsilon$ for all $x \in \domain$, and so $\adeg_\epsilon(f) \leq 2Q_\epsilon(f)$.
\end{proof}

\subsection{Admissible domains} \label{sec:setup-parameters}

Our arguments, which show that all complexity measures (except for sensitivity) are polynomially related, only work for domains satisfying an additional condition, \emph{composability}. In addition, the big O constants involved depend on four parameters of the domain, which we define below. 

\subsubsection{Composability}

Recall our definition of block sensitivity: the block sensitivity of a function $f\colon \domain \to \bits$ at a point $x$ is the maximum $s$ such that there are points $y_1,\ldots,y_s \in \domain$ satisfying $f(y_i) \neq f(x)$ and that the sets $x \setminus y_i$ are disjoint. When lower-bounding the approximate degree in terms of block sensitivity, we need the ability to compose these ``blocks''.

A domain is \emph{composable} if whenever $x,y_1,\ldots,y_s \in \domain$ are such that $y_i \neq x$ for all $i$ and the sets $x \setminus y_i$ are disjoint, then
\[
 z := x \setminus \bigcup_{i=1}^s (x \setminus y_i) \cup \bigcup_{i=1}^s (y_i \setminus x) \in \domain.
\]
(Recall that by \Cref{lem:bs-symmetric}, the sets $y_i \setminus x$ are also disjoint.)

\smallskip

Composability implies that for each $s_1,\ldots,s_b \in \bits$, the following set is in $\domain$:
\[
 x(s_1,\ldots,s_b) = x \setminus \bigcup_{i\colon s_i=1} (x \setminus y_i) \cup \bigcup_{i\colon s_i=1} (y_i \setminus x) = (x \cap y_1 \cap \cdots \cap y_b) \cup \bigcup_{i\colon s_i=0} (x \setminus y_i) \cup \bigcup_{i\colon s_i=1} (y_i \setminus x).
\]
In other words, composability allows us to identify copies of the Boolean cube inside arbitrary domains. Using this, we can use results on the Boolean cube to deduce results on other domains. In particular, we will use this idea to obtain a lower bound on the approximate degree in terms of block sensitivity.

\paragraph{Criterion for composability} For some domains, such as the Boolean cube, composability is easy to prove directly.
For other domains such as the symmetric group, proving composability is less immediate. We will use the following simple criterion.

\begin{lemma} \label{lem:composability}
If $\domain$, viewed as as a subset of $\bits^\universe$, is the intersection of $\bits^\universe$ and an affine subspace, then $\domain$ is composable.
\end{lemma}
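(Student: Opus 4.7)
The plan is to view elements of $\domain$ as characteristic vectors in $\bits^\universe$ and to exhibit $z$ as an affine combination of $x, y_1, \ldots, y_s$. Since affine subspaces are closed under affine combinations (i.e.\ combinations whose coefficients sum to $1$), and $\domain = A \cap \bits^\universe$ for some affine subspace $A$, it will suffice to check two things: that $z$, viewed as a $\{0,1\}$-vector, equals a specific affine combination of the vectors representing $x,y_1,\ldots,y_s$, and that this expression really does land in $\bits^\universe$ (i.e.\ is still a $\{0,1\}$-vector, not merely an integer vector).

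First I would identify the natural candidate: the affine combination
\[
 w \;=\; (1-s)\,x + \sum_{i=1}^s y_i \;=\; x + \sum_{i=1}^s (y_i - x).
\]
The coefficients sum to $(1-s) + s = 1$, so if $x, y_1, \ldots, y_s \in A$, then $w \in A$.

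The key combinatorial step is to check that $w$ is a $\{0,1\}$-vector and that its support (as such a vector) is exactly the set $z$. Each difference vector $y_i - x \in \{-1,0,1\}^\universe$ is supported on $(x \setminus y_i) \cup (y_i \setminus x)$. The hypothesis gives that the sets $x \setminus y_i$ are pairwise disjoint, and \Cref{lem:bs-symmetric} gives that the sets $y_i \setminus x$ are pairwise disjoint; moreover $x \setminus y_i \subseteq x$ is trivially disjoint from $y_j \setminus x \subseteq \universe \setminus x$ for any $j$. Hence the supports of the vectors $y_i - x$ are pairwise disjoint, so adding them to $x$ produces no coordinate collisions: on each coordinate in $x \setminus y_i$ the sum decreases from $1$ to $0$, on each coordinate in $y_i \setminus x$ it increases from $0$ to $1$, and all other coordinates agree with $x$. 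Thus $w \in \bits^\universe$, and its support is precisely
\[
 \bigl(x \setminus \bigcup_i (x \setminus y_i)\bigr) \cup \bigcup_i (y_i \setminus x) \;=\; z.
\]

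Combining the two observations, $w = z$ lies in both $A$ and $\bits^\universe$, so $z \in A \cap \bits^\universe = \domain$, proving composability. The only real content is the disjointness bookkeeping in the previous paragraph; once that is in hand, the affine-combination identity is immediate and the conclusion follows without any further work.
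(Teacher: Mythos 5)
Your proof is correct and follows essentially the same route as the paper: both rely on the disjointness of the sets $x \setminus y_i$ (and, via \Cref{lem:bs-symmetric}, of the sets $y_i \setminus x$) together with the affine structure of the domain. The paper phrases it by checking that every linear form constant on the affine subspace takes the same value on $z$ as on $x$, which is exactly applying such a form to your explicit affine-combination identity $1_z = (1-s)1_x + \sum_i 1_{y_i}$.
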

\begin{proof}
Given sets $x,y_1,\ldots,y_s \in \universe$ such that $y_i \neq x$ for all $i$ and the sets $x \setminus y_i$ are disjoint, we have to show that
\[
 z := x \setminus \bigcup_{i=1}^s (x \setminus y_i) \cup \bigcup_{i=1}^s (y_i \setminus x) \in \universe.
\]
\Cref{lem:bs-symmetric} shows that the sets $y_i \setminus x$ are also disjoint.

Since $\domain$ is the intersection of $\bits^\universe$ with an affine subspace, it suffices to show that for any linear form $\ell$ over $\RR^\universe$ such that $\ell(x) = \ell(y_1) = \cdots = \ell(y_s)$, we have $\ell(z) = \ell(x)$. By linearity,
\[
 \ell(x \setminus y_i) = \ell(x) - \ell(x \cap y_i) = \ell(y_i) - \ell(x \cap y_i) = \ell(y_i \setminus x).
\]
Since the sets $x \setminus y_i$ and the sets $y_i \setminus x$ are disjoint, $\ell(z) = \ell(x)$ immediately follows.
\end{proof}

\subparagraph{Boolean cube} It is easy to prove directly that the Boolean cube is composable. Let $S_i$ be the set of indices appearing in $x \setminus y_i$. Then $y_i$ results from $x$ by flipping the indices in $S_i$. The condition that the sets $x \setminus y_i$ are disjoint translates to the sets $S_i$ being disjoint. The vector $z$ results from $x$ by flipping the indices in $S_1 \cup \cdots \cup S_s$, and in particular, it lies in the Boolean cube. 

\subparagraph{Symmetric group} Using \Cref{lem:composability}, it is easy to show that the symmetric group is composable. Indeed, it is the set of solutions to the following linear system:
\begin{align*}
&\sum_{j=1}^n x_{i,j} = 1 && \text{ for all } i \in [n] \\
&\sum_{i=1}^n x_{i,j} = 1 && \text{ for all } j \in [n] \\
&x_{i,j} \in \bits && \text{ for all } i,j \in [n]
\end{align*}
This system states that the matrix formed by the elements $x_{i,j} \in \bits$ is bistochastic, and so a permutation matrix.

\subparagraph{Note} In both the Boolean cube and the symmetric group, the domain consists of all sets intersecting each query at exactly one point. But this is not the case for other domains, such as the slice.

\subsubsection{Four parameters} 

We now introduce four parameters which control our results quantitatively.

\paragraph{Maximum degree} The \emph{degree} of an element $a$ is the number of queries in $\queries$ mentioning it, that is, the number of $Q \in \queries$ such that $a \in Q$. Since $\bigcup \queries = \universe$, each element has degree at least~$1$. We denote the maximum degree of an element by $\Dparam$.

\subparagraph{Boolean cube} Each element $(i,j)$ is mentioned by exactly one query, ``$x_i = ?$''. Therefore $\Dparam = 1$.

\subparagraph{Symmetric group} Each element $(i,j)$ is mentioned by two queries: ``$\pi(i) = ?$'' and ``$\pi^{-1}(j) = ?$''. Therefore $\Dparam = 2$.

\paragraph{Conflict bound} A \emph{partial input} is a subset $c \subseteq \universe$ which is a subset of some set $x \in \domain$. Two partial inputs $c_1,c_2$ \emph{conflict} if no set in $\domain$ contains both.
The \emph{conflict bound} $\Cparam$ is the maximal value such that if $c_1,c_2$ are two conflicting partial inputs of size at most $\Cparam$, then there is a query $Q \in \queries$ which ``separates'' them, that is, intersects $c_1$ and $c_2$ at different elements.

\subparagraph{Boolean cube} Two partial inputs are conflicting if one of them contains $(i,0)$ and the other contains $(i,1)$. Any two such partial inputs can be separated by the query ``$x_i = ?$''. Therefore $\Cparam = n$.

\subparagraph{Symmetric group} Two partial inputs are conflicting if either one of them contains $(i,j_1)$ and the other $(i,j_2)$, in which case they are separated by ``$\pi(i) = ?$'', or one of them contains $(i_1,j)$ and the other $(i_2,j)$, in which case they are separated by ``$\pi^{-1}(j) = ?$''. Therefore again $\Cparam = n$.

To see this, suppose that $c_1,c_2$ are two partial inputs which are not separated by any query, and consider their union $c_1 \cup c_2$. By assumption, the union, considered as a set of edges of $K_{n,n}$, is a matching. Since every matching in $K_{n,n}$ can be completed to a perfect matching, we see that $c_1,c_2$ are not conflicting.

\subparagraph{Note} In both cases above $\Cparam = n$. However, on some domains $\Cparam$ is smaller. As an example, consider the ``slice'' $\binom{[n]}{k}$, which is the set of all vectors in $\bits^n$ with Hamming weight $k$. If $a+b > k$ then the two partial inputs $\{(1,1),\ldots,(a,1)\}$ and $\{(a+1,1),\ldots,(a+b,1)\}$ conflict but are not separated by any query. One can check that $\Cparam = \lfloor \min(k,n-k)/2 \rfloor$.

\paragraph{Sensitivity ratio} This double parameter is the most complicated to explain. When bounding the certificate complexity by the block sensitivity in the classical case, we need to show that block sensitivity is always witnessed by blocks whose size is at most the sensitivity. To do so, we show that a larger block can always be shortened by removing one of the elements. This is essentially because there are many ways of shortening a large block.

The corresponding property in our setup is a bit harder to state, and in fact we will have two different versions. The \emph{block sensitivity ratio} $\Bparam$ is the largest parameter such that for every distinct $x,y \in \domain$ there exist distinct $z_1,\ldots,z_s \in \domain$, with $s \geq \Bparam |x \setminus y|$, satisfying:
\begin{enumerate}[(a)]
\item $x \setminus z_i \subsetneq x \setminus y$	.
\item The sets $y \setminus z_i$ are disjoint.
\end{enumerate}
Here is the idea behind this definition. We start with two sets $x,y \in \domain$; one should think of $y$ as one of the sets in the definition of $\bs(f,x)$. We want to find many disjoint ways of bringing $y$ closer to $x$: these are the sets $z_1,\ldots,z_s$. The first constraint states that $z_i$ is closer to $x$ than $y$: we obtain $z_i$ from $y$ by ``fixing'' some of the disagreements with $x$. The second constraint states that the parts of $y$ that had to be fixed are disjoint for different $z_i$; this will be useful when relating $|x \setminus y|$ to  block sensitivity.

The \emph{sensitivity ratio} $\Bcparam$ is defined in the same way, but we also require $|y \setminus z_i| = \chunk$. This will allow us to relate $|x \setminus y|$ to sensitivity rather than block sensitivity.

\subparagraph{Boolean cube} Given two vectors $x = \{(i,x_i) : i \in [n]\}$ and $y = \{(i,y_i) : i \in [n]\}$, let $S$ be the set of coordinates on which they disagree; note that $|x \setminus y| = |S|$. For each $i \in S$, we define $z^i = (y \setminus \{(i,y_i)\}) \cup \{(i,x_i)\}$. That is, $z^i$ is equal to $y$ except at coordinate $i$, at which it agrees with $x$. Then $x \setminus z^i = (x \setminus y) \setminus \{(i,x_i)\}$, so the first property is satisfied, and $y \setminus z^i = \{(i,y_i)\}$, and so the second property is satisfied. This shows that $\Bparam = \Bcparam = 1$.

\subparagraph{Symmetric group} This case is more complicated. Let us assume for simplicity that $x$ is the identity permutation, and write $y$ as a product of disjoint non-trivial cycles of lengths $\ell_1,\ldots,\ell_m > 1$; note that $|x \setminus y| = \sum_i \ell_i$. Consider a specific cycle of $y$ of length $\ell$, without loss of generality $(1\;2\;\cdots\;\ell)$. The relevant part of $y$ is
\[
 (1,2),(2,3),\ldots,(\ell-1,\ell),(\ell,1).
\]
We form $z_i$ by ``shortcutting'' over $i$, that is, replacing $(i-1,i),(i,i+1)$ with $(i-1,i+1),(i,i)$. The new set $z_i$ is indeed a permutation. Moreover, $x \setminus z_i = (y \setminus z_i) \setminus \{(i,i)\}$, so the first property is satisfied. On the other hand, $y \setminus z_i = \{(i-1,i),(i,i+1)\}$, and so in order to satisfy the second property, we need to choose values of $i$ which are at least $2$ apart: $z_1,z_3$ and so on. The worst case is when $\ell = 3$, in which case we can only choose a single value of $i$. For general $\ell$, we can always choose at least $\ell/3$ many $i$'s which will satisfy the second property. Using the same construction for all cycles (the second property is automatically satisfied), we see that $\Bparam = \Bcparam = 1/3$.

\subparagraph{Note} In both examples, the constructions bounded $\Bcparam$ directly. However, in more complicated cases, similar arguments will result in  changes which will be larger than the $\chunk$ threshold.

\section{Relating all measures except sensitivity} \label{sec:main}

In this section, we prove our main theorem, relating the various complexity measures introduced in \Cref{sec:setup}.

\begin{restatable}{theorem}{main} \label{thm:main}
Let $(\domain,\universe,n)$ be a composable domain with parameters $\Dparam,\Cparam,\Bparam$.

Every function $f\colon \domain \to \bits$ satisfies:
\begin{gather}
\sqrt{\bs(f)/6} \leq \adeg(f) \leq \deg(f) \leq D(f) \notag \\
\bs(f) \leq \fbs(f) \leq C(f) \leq U(f) \leq D(f) \notag\\
\bs(f) \leq 3R(f) \leq 3R_0(f) \leq 3D(f) \notag \\
\adeg(f) \leq 2Q(f) \leq 2Q_E(f) \leq 2D(f) \notag \\
D(f) \leq \Bparam^{-2} \max\left(\Dparam,\frac{n}{\Cparam}\right) \bs(f)^4 \tag{+} \label{eq:main-relation}
\end{gather}

In particular, if $\Dparam = O(1)$, $\Cparam = \Omega(n)$, and $\Bparam = \Omega(1)$ then all complexity measures above are polynomially related.

Furthermore, we can strengthen \eqref{eq:main-relation} to
\[
 D(f) \leq \Bparam^{-2} \Dparam \bs(f)^4
\]
if any of the following conditions holds:
\begin{enumerate}[(a)]
\item $C(f) \leq \Cparam$ (implied by the same bound on $U(f),D(f)$). \label{item:bound-C}
\item $\bs(f) \leq \sqrt{\Bparam\Cparam}$ (implied by the same bound on $\fbs(f),R(f)/3,R_0(f)/3$). \label{item:bound-bs}
\item $\adeg(f) \leq \sqrt[4]{\Bparam\Cparam/36}$ (implied by the same bound on $\deg(f),Q(f)/2,Q_E(f)/2$). \label{item:bound-adeg}
\end{enumerate}

In particular, if $\Dparam = O(1)$, $\Bparam = \Omega(1)$, and one of the conditions above holds, then all complexity measures above are polynomially related.
\end{restatable}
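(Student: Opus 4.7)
The chain of inequalities already follows from \Cref{lem:simple-relations} except for $\sqrt{\bs(f)/6} \leq \adeg(f)$ and the decision-tree upper bound \eqref{eq:main-relation}. My plan is to prove these two ingredients; the strengthening under conditions (a)--(c) then follows because each implies $C(f) \leq \Cparam$ (using $C(f) \leq \bs(f)^2/\Bparam$ to reduce (b) to (a), and the bound $\bs(f) \leq 6\adeg(f)^2$ established below to reduce (c) to (b)).

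For $\adeg(f) \geq \sqrt{\bs(f)/6}$, I would exploit composability. Fix $x$ with $\bs(f,x) = \bs(f)$, witnessed by $y_1,\ldots,y_s$. Composability produces a Boolean subcube $\{x(\sigma) : \sigma \in \bits^s\} \subseteq \domain$ with $x(\vec 0) = x$ and $x(e_i) = y_i$. Each indicator $\llbracket e \in x(\sigma) \rrbracket$ is either a constant, $\sigma_i$, or $1-\sigma_i$, so substituting into any polynomial approximating $f$ on $\domain$ produces a polynomial of no greater degree approximating the restriction $\tilde f \colon \bits^s \to \bits$. Since $\tilde f$ has block sensitivity at least $s$ at $\vec 0$, the classical Nisan--Szegedy bound on the Boolean cube yields $\adeg(\tilde f) \geq \sqrt{s/6}$, whence $\adeg(f) \geq \adeg(\tilde f) \geq \sqrt{\bs(f)/6}$.

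For the decision-tree bound, I would first establish the certificate shortening $C(f) \leq \bs(f)^2/\Bparam$. Given $x$ and $y \in \domain$ with $f(x) \neq f(y)$ and $|x \setminus y| > \bs(f)/\Bparam$, the defining property of $\Bparam$ produces $s > \bs(f)$ distinct $z_i \in \domain$ with $x \setminus z_i \subsetneq x \setminus y$ and the sets $y \setminus z_i$ pairwise disjoint. If every $f(z_i)$ equalled $f(x)$, the $z_i$ would witness $\bs(f,y) > \bs(f)$, a contradiction; hence some $z_i$ satisfies $f(z_i) = f(y)$ with $|x \setminus z_i| < |x \setminus y|$. Iterating shortens any block at $x$ to size at most $\bs(f)/\Bparam$; taking a maximal disjoint collection of such shortened blocks (at most $\bs(f)$ of them) and unioning them produces a certificate for $x$ of size $\leq \bs(f)^2/\Bparam$.

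For $D(f)$ when $C(f) \leq \Cparam$, I would adapt Nisan's classical $D(f) \leq C_0(f)C_1(f)$: iteratively pick an input of some value $b$ consistent with queries so far, compute a minimum $b$-certificate $c_r$ (of size $\leq C_b(f)$), and probe each element of $c_r$ via one of its at most $\Dparam$ associated queries in $\queries$; if every probe matches, declare output $b$. To bound the number of rounds, track an arbitrary opposing-value input $x^*$ with certificate $c^*$; the conflict bound applied to $c_r$ and $c^*$ (both of size $\leq \Cparam$) guarantees that they are separated by some query in $\queries$, so each mismatched round ``consumes'' a distinct element of $c^*$, capping the number of rounds by $|c^*| \leq C_{1-b}(f)$. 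This yields $D(f) \leq \Dparam \cdot C(f)^2$. When $C(f) > \Cparam$, the trivial $D(f) \leq n \leq (n/\Cparam)\cdot C(f)^2$ takes over to deliver \eqref{eq:main-relation} in general. The main technical obstacle is the careful accounting in the Nisan-style round argument: verifying that conflict-bound separation identifies a fresh element of $c^*$ each round, while handling the multi-answer structure of queries in $\queries$ and the probing overhead $\Dparam$.
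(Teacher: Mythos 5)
Your proposal follows essentially the same route as the paper: composability plus Nisan--Szegedy to get $\bs(f) \le 6\adeg(f)^2$ (the paper's \Cref{lem:bs-adeg}), block-shortening via the block sensitivity ratio to bound $C(f) \le \Bparam^{-1}\bs(f)^2$ (\Cref{lem:minimal-block,lem:C-s-bs}; the paper phrases the shortening as an extremal/minimality argument while you iterate, but these are equivalent), a Nisan-style certificate-probing algorithm for $D(f) \le \Dparam\, C_0(f)\,C_1(f)$ (\Cref{lem:D-C}), and the chain (c)~$\Rightarrow$~(b)~$\Rightarrow$~(a) exactly as the paper does.

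There is, however, one step that would fail as literally stated. You propose to ``probe each element of $c_r$ via \emph{one of} its at most $\Dparam$ associated queries.'' A single query per element does not guarantee progress: the conflict bound yields a query $Q$ with $Q\cap c_r=\{a\}$ and $Q\cap c^*=\{a_0\}$, but if you probe $a$ via some other query $Q'\ni a$, then the learned answer $Q'\cap x$ need not lie in $c^*$, and no fresh element of $c^*$ is ``consumed.'' The round-count argument then breaks. The fix is to ask \emph{all} queries containing each element of $c_r$ (this is exactly what the paper's \Cref{alg:query} does); the per-round cost is then $\Dparam|c_r|\le\Dparam C_b(f)$, which is where the $\Dparam$ factor in the final bound actually originates, and it guarantees that the separating query $Q$ is among those asked, so that $a_0$ is learned. (One also needs $a_0\notin z$ before the round, which follows because $c_r$ was chosen not conflicting with $z$.) You already flagged this accounting as the main technical obstacle and charge the $\Dparam$ overhead in your final bound, so you are one word away from the correct algorithm — but ``one of'' needs to become ``all of.''
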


In all cases we consider, we will have $\Dparam = O(1)$ and $\Bparam = \Omega(1)$. In some cases, such as the unbalanced slice, $\Cparam$ will be much smaller than $n$, and so the polynomial relation only holds for functions of low complexity. This is not just an artifact of our proof. As an extreme example, consider the subdomain of the Boolean cube consisting of all vectors with Hamming weight~$1$. Every function $f$ has certificate complexity $C(f) = 1$, but balanced functions $f$ have decision tree complexity $D(f) = \lfloor n/2 \rfloor$.

\smallskip

We also prove the following ``ball property'', generalizing a similar result in~\cite{GNSTW}. In this theorem, a ball of radius $r$ around a point $x \in \domain$ consists of all points $y \in \domain$ such that $|x \setminus y| = |y \setminus x| \leq r$.

\begin{restatable}{theorem}{ball} \label{thm:ball-property}
Let $(\domain,\universe,n)$ be a composable domain with chunk size $\chunk$ and sensitivity ratio $\Bcparam$.

If $f\colon \domain \to \bits$ has sensitivity $s = s(f)$, then $f$ can be recovered from its evaluation on a ball of radius $r = \Bcparam^{-1} (2s+1)$ around an arbitrary point.
\end{restatable}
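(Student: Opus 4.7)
The plan is to follow the approach of Gopalan et al.~\cite{GNSTW} for the Boolean cube and lift it to our abstract setting. I will show by strong induction on $|x \setminus y|$ that $f(y)$ is determined by the restriction of $f$ to the ball of radius $r = \Bcparam^{-1}(2s+1)$ around the designated center $x$. The base case is when $|x \setminus y| \leq r$, where $f(y)$ is given directly by hypothesis.

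For the inductive step, suppose $|x \setminus y| > r$ and assume $f(y')$ is already known for every $y'$ with $|x \setminus y'| < |x \setminus y|$. I apply the definition of the sensitivity ratio $\Bcparam$ to the pair $(x,y)$ to obtain distinct points $z_1, \ldots, z_t \in \domain$ with $t \geq \Bcparam |x \setminus y| > 2s+1$, such that $x \setminus z_i \subsetneq x \setminus y$, $|y \setminus z_i| = \chunk$, and the sets $y \setminus z_i$ are pairwise disjoint. The strict inclusion $x \setminus z_i \subsetneq x \setminus y$ forces $|x \setminus z_i| < |x \setminus y|$, so each $f(z_i)$ is determined by the inductive hypothesis.

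It remains to read $f(y)$ off from $f(z_1), \ldots, f(z_t)$. The constraints $|y \setminus z_i| = \chunk$ together with the pairwise disjointness of the sets $y \setminus z_i$ guarantee that any sub-collection of $z_i$ with $f(z_i) \neq f(y)$ is a legitimate witness for the sensitivity of $f$ at $y$; hence at most $s(f,y) \leq s$ of the $z_i$ disagree with $f(y)$, and consequently at least $t - s \geq s+1$ of them agree with $f(y)$, which is a strict majority. Therefore $f(y)$ is the majority value among $f(z_1), \ldots, f(z_t)$, which is computable from the inductive data. I do not anticipate a serious obstacle: the definitions of $\Bcparam$ and $\chunk$ are engineered precisely to make this argument work, and the only point requiring real attention is to preserve the strictness of $x \setminus z_i \subsetneq x \setminus y$, which is what ensures that the induction on the integer-valued quantity $|x \setminus y|$ is well-founded. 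Note in particular that composability of $\domain$ is not used in the argument, only the $\Bcparam$ and $\chunk$ parameters.
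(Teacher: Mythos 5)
Your proof is correct and follows essentially the same approach as the paper's: apply the sensitivity ratio $\Bcparam$ to produce more than $2s$ points $z_i$ strictly closer to the center, observe that at most $s(f,y)\leq s$ of them can disagree with $f(y)$ since the sets $y\setminus z_i$ are disjoint chunks, and take a majority vote; the only cosmetic difference is that you phrase the argument as an explicit strong induction on $|x\setminus y|$. Your side remark that composability is not actually used is also accurate.
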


(Recall that $s(f) \leq \bs(f)$ by \Cref{lem:simple-relations}\eqref{item:s-bs}.)

\smallskip

Finally, let us briefly discuss one property of Boolean functions on the Boolean cube which fails on other domains: the junta property. Nisan and Szegedy~\cite{NisanSzegedy} showed that Boolean degree~$d$ functions depend on at most $d2^{d-1}$ coordinates; this was later improved to $O(2^d)$~\cite{CHS,Wellens2019ATB}. Similar results were proved for the slice~\cite{FI2} and multislice~\cite{FODW}; the latter results hold unless the slice or multislice is extremely unbalanced.

In contrast, a similar junta property fails already for the symmetric group. Consider the function
\[
 f(x) = \sum_{i=2}^n x_{1i} x_{i1},
\]
which states that $1$ participates in a $2$-cycle. This function has degree~$2$, but depends on all ``coordinates'', that is, touches all queries. Contrast this with the case of Boolean degree~$1$ functions on the symmetric group: such functions are known to depend on the answer of a single query~\cite{EFP}.

\subsection{Main theorem}

Our proof of \Cref{thm:main} closely follows the exposition in the survey paper of Buhrman and de~Wolf~\cite{BuhrmandeWolf}. In some cases, we simply reduce to known results on the Boolean cube.

In all lemmas below, $f\colon \domain \to \bits$ is a Boolean function on some composable domain. We will make use of the parameters $\Dparam,\Cparam,\Bparam,\Bcparam$ defined in \Cref{sec:setup-parameters}, which we briefly recall:
\begin{itemize}
\item $\Delta$, the \emph{maximum degree	}, is the maximum number of queries that can mention a point.
\item $\Cparam$, the \emph{conflict bound}, is the maximum value such that if $c_1,c_2$ are two conflicting partial inputs of size at most $\Cparam$ then there is a query $Q$ that separates them.
\item $\Bparam$, the \emph{block sensitivity ratio}, is the maximum value such that for every distinct $x,y \in \domain$ there exist at least $\Bcparam |x \setminus y|$ distinct $z_i \in \domain$ such that $x \setminus z_i \subsetneq x \setminus y$ for all $i$, and the sets $y \setminus z_i$ are disjoint.
\item $\Bcparam$, the \emph{sensitivity ratio}, is defined similarly to $\Bparam$, with the additional promise that $|y \setminus z_i| \leq \chunk$ for all $i$.
\end{itemize}

We start by bounding the decision tree complexity in terms of the certificate complexity.

\begin{lemma} \label{lem:D-C}
If $C(f) \leq \Cparam$ then $D(f) \leq \Dparam C_0(f) C_1(f)$.

For all $f$ we have $D(f) \leq \max(\Dparam,n/\Cparam) C_0(f) C_1(f)$.
\end{lemma}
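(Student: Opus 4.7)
The plan is to adapt Nisan's classical argument that $D(f) \leq C_0(f) C_1(f)$ on the Boolean cube to our setting, where an element of $\universe$ may belong to up to $\Dparam$ queries rather than just one. I prove the first inequality by induction on $C_1(f)$, and derive the second from the first plus the trivial bound $D(f) \leq n$.

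For the base case, $C_1(f) = 0$ forces $f$ to be constant, so $D(f) = 0$. For the inductive step, fix a $0$-input $x_0$ with a minimal $0$-certificate $T_0 \subseteq x_0$, $|T_0| \leq C_0(f)$. For each $t \in T_0$ there are at most $\Dparam$ queries containing $t$, and the query sets for distinct $t \in T_0$ are disjoint (a single query cannot contain two elements of the partial input $T_0 \subseteq x_0$, since each query meets $x_0$ at a unique element); my tree begins by asking all these queries, using at most $\Dparam \, C_0(f)$ queries in total. Let $\pi$ be the set of revealed elements. If $T_0 \subseteq \pi$ then output $0$; otherwise recurse on the restricted function $f_\pi$, meaning $f$ restricted to inputs $y \supseteq \pi$.

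The heart of the induction is the claim $C_1(f_\pi) \leq C_1(f) - 1$, while $C_0(f_\pi) \leq C_0(f)$ is immediate. Given a $1$-input $y \supseteq \pi$ with a minimal $1$-certificate $U$, $|U| \leq C_1(f)$, the partial inputs $T_0$ and $U$ are conflicting (no $z \in \domain$ can simultaneously satisfy $f(z) = 0$ and $f(z) = 1$), and both have size at most $C(f) \leq \Cparam$. The conflict bound therefore yields a query $Q^*$ separating them: $Q^* \cap T_0 = \{t^*\}$, $Q^* \cap U = \{u^*\}$, and $t^* \neq u^*$. Since $t^* \in T_0$, my algorithm asked $Q^*$; and since $u^* \in U \subseteq y$, the unique element of $Q^* \cap y$ is $u^*$, placing $u^*$ into $\pi$. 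Hence $U \setminus \pi$ has size at most $C_1(f) - 1$ and is a $1$-certificate for $y$ in $f_\pi$. Induction then gives $D(f_\pi) \leq \Dparam \, C_0(f)(C_1(f) - 1)$, so $D(f) \leq \Dparam \, C_0(f) C_1(f)$.

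For the unconditional bound, I split on whether $C(f) \leq \Cparam$. If so, the first part applies directly, and $\Dparam \leq \max(\Dparam, n/\Cparam)$. Otherwise $f$ is non-constant with $\max(C_0(f), C_1(f)) > \Cparam$, so $C_0(f) C_1(f) \geq \Cparam + 1$; combined with the trivial bound $D(f) \leq n$ (valid because $\bigcup \queries = \universe$ allows determining the input in $n$ queries), this gives $D(f) \leq n < (n/\Cparam) C_0(f) C_1(f)$. The main adaptation beyond the Boolean-cube proof is the use of all $\Dparam$ queries containing each element of $T_0$ to ensure the separating query from the conflict bound is actually asked; I anticipate this being the only real modification and do not foresee further obstacles.
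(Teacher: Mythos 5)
Your core ideas are right and mirror the paper's: query every query touching each element of a chosen certificate (at most $\Dparam$ per element), and use the conflict bound to produce a separating query that forces progress. Your argument is essentially the dual of the paper's, which queries a $1$-certificate each round and tracks a fixed $0$-certificate $c_0$ of the input, showing each round uncovers a new element of $c_0$.

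However, your inductive framing has a genuine gap in the recursive step. You apply the lemma (as the inductive hypothesis) to $f_\pi$, treated as a Boolean function on the \emph{restricted} domain $\domain_\pi = \{y \in \domain : y \supseteq \pi\}$. For the lemma to apply to $f_\pi$, you need $C(f_\pi) \leq \Cparam_\pi$, where $\Cparam_\pi$ is the conflict bound of the restricted domain $\domain_\pi$. You only observe $C(f_\pi) \leq C(f) \leq \Cparam$, but $\Cparam_\pi$ can be strictly smaller than $\Cparam$: two partial inputs of $\domain_\pi$ can conflict in $\domain_\pi$ (no extension \emph{containing $\pi$} exists) while not conflicting in $\domain$, so the original conflict bound says nothing about them. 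For multislices, for example, revealing coordinates of the scarcest colour shrinks the conflict bound. So the inductive hypothesis is invoked on a domain where its hypotheses may fail, and the induction does not close as written. Relatedly, if one instead tries to keep all certificates as certificates of the original $f$ throughout the recursion, the per-round claim becomes $C_1(f_{\pi'}) \leq C_1(f) - 1$, not $C_1(f_\pi) - 1$, so again the induction does not tighten.

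The fix is to avoid recursion on a restricted domain entirely. The paper runs a single iterative algorithm: maintain revealed elements $z$, repeatedly pick a $1$-certificate $c$ of $f$ not conflicting with $z$, query all of $c$'s queries, and check whether $z$ is now a certificate. Since $c$ and a fixed $0$-certificate $c_0$ of the input are both certificates of the \emph{original} $f$ of size $\leq \Cparam$, the conflict bound of the \emph{original} domain always applies, and the key observation that the revealed element $a_0 \in c_0$ was not already in $z$ (else $c$ would have conflicted with $z$) gives cumulative progress after at most $C_0(f)$ rounds. Your own argument can be repaired the same way: drop the formal recursion on $f_\pi$, always take $T_0$ and $U$ to be certificates of the original $f$, and show that for each $1$-input $y \supseteq \sigma$ and each fixed minimal $1$-certificate $U_y$ of $y$ in $f$, every round reveals a \emph{new} element of $U_y$ (this is exactly your observation that $u^* \notin \sigma$, pushed through to a cumulative count), so that after $C_1(f)$ rounds $\sigma$ contains $U_y$ and hence $f_\sigma$ is constant.

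Your handling of the second inequality (split on $C(f) \leq \Cparam$, and use $C_0 C_1 > \Cparam$ together with $D(f) \leq n$ otherwise) matches the paper and is fine.
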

\begin{proof}
Let us start by noting that the second statement follows immediately from the first: if $C(f) > \Cparam$	then $C_0(f) C_1(f) > \Cparam$, and so the inequality follows from $D(f) \leq n$. Also, the first statement trivially holds when $f$ is constant.
We prove the first statement when $f$ is not constant by providing a query algorithm, \Cref{alg:query}.

\begin{algorithm} 
\SetKwFor{Repeat}{repeat}{:}{}
\KwIn{$x \in \domain$}
\KwOut{$f(x)$}
\BlankLine
initialize $z \gets \emptyset$ \tcp*[r]{$z$ is a partial input representing our current knowledge of $x$}
\Repeat{$C_0(f)$ times}{
  let $c$ be a $1$-certificate of $f$ not conflicting with $z$ \; \label{line:c}
  ask all queries containing elements in $c$, and add the answers to  $z$ \; \label{line:query}
  \lIf{$z$ is a $b$-certificate}{\Return{$b$}} \label{line:retb}
}
\caption{Query algorithm proving $D(f) \leq \Dparam C_0(f) C_1(f)$ whenever $C(f) \leq \Cparam$} \label{alg:query}
\end{algorithm}

It is easy to check that the algorithm makes at most $\Dparam C_0(f) C_1(f)$ queries (since $c$ contains at most $C_1(f)$ elements, and each of them appears in at most $\Dparam$ queries), and that it outputs the correct value in \cref{line:retb}. To complete the proof, we will show that in \cref{line:c} a certificate $c$ always exists, and that a value is always returned.

We start by showing that a certificate $c$ always exists in \cref{line:c}. This is the case in the first iteration, since we assume that $f$ is not constant. In any subsequent iteration, if every $1$-certificate of $f$ conflicts with $z$ then this shows that $z$ is a $0$-certificate, and so the algorithm would have returned~$0$ in \cref{line:retb} during the preceding iteration.

It remains to show that the algorithm always returns some value. Suppose first that $f(x) = 0$. We will show that each time that \cref{line:query} is executed, at least one new element of $c_0$ is added to $z$. Therefore, after at most $C_0(f)$ iterations, $z$ will contain $c_0$, and so \cref{line:retb} will return $0$.

Consider, therefore, some iteration of the loop, and let $c$ be the $1$-certificate considered in \cref{line:c}. Since $|c|,|c_0| \leq C(f) \leq \Cparam$, there exists a query $Q \in \queries$ which separates $c$ and $c_0$, say $Q \cap c = \{a\}$ and $Q \cap c_0 = \{a_0\}$. The query $Q$ gets asked in \cref{line:query} since $a \in c$. In the same line, $a_0$ gets added to $z$. To complete the proof, it suffices to show that $a_0$ did not belong to $z$. Indeed, if $a_0$ had belonged to $z$, then the certificate $c$ couldn't have been chosen in \cref{line:c}, since $c$ and $z$ would have conflicted: every set in $\domain$ intersects $Q$ exactly once, and so no such set can contain both $a$ and $a_0$.

Finally, let us consider the case $f(x) = 1$. Above we have shown that if $f(x) = 0$ then after at most $C_0(f)$ iterations of the loop, $z$ contains a $0$-certificate. Therefore if execution reaches the last iteration and \cref{line:retb} does not return~$0$, then necessarily $f(x) = 1$. In other words, at the conclusion of the final iteration, $z$ is a $1$-certificate, and so \cref{line:retb} returns~$1$, completing the proof.
\end{proof}

In order to bound certificate complexity in terms of other measures, we first need to show that the size of every ``minimal block'' can be bounded in terms of sensitivity or block sensitivity.

\begin{lemma} \label{lem:minimal-block}
Suppose that $f(x) \neq f(y)$. There exists $z \in \domain$ satisfying the following properties:
\begin{enumerate}[(a)]
\item $f(z) = f(y)$. \label{item:mb-1}
\item $x \setminus z \subseteq x \setminus y$. \label{item:mb-2}
\item $|x \setminus z| \leq \min(\Bcparam^{-1} s(f), \Bparam^{-1} \bs(f))$. \label{item:mb-3}
\end{enumerate}
\end{lemma}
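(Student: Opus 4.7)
The plan is to let $z$ be a point in $\domain$ minimizing $|x\setminus z|$ among those satisfying $f(z)\neq f(x)$ and $x\setminus z\subseteq x\setminus y$. Such a $z$ exists because $y$ itself satisfies both constraints. Conditions (a) and (b) are then immediate: $f(z)=f(y)$ follows from $f(z)\neq f(x)$ together with $f$ being Boolean and $f(y)\neq f(x)$, and (b) is built into the constraint. Moreover $z\neq x$, since $f(z)\neq f(x)$. So the whole task reduces to bounding $|x\setminus z|$ in terms of sensitivity and block sensitivity.

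For the block-sensitivity bound, I would apply the definition of the block sensitivity ratio $\Bparam$ to the distinct pair $(x,z)$. This furnishes distinct $w_1,\dots,w_k\in\domain$ with $k\geq\Bparam|x\setminus z|$, $x\setminus w_i\subsetneq x\setminus z$, and the sets $z\setminus w_i$ pairwise disjoint. The key observation is that minimality of $z$ forces $f(w_i)=f(x)$ for every~$i$: otherwise $w_i$ would satisfy $f(w_i)\neq f(x)$ and $x\setminus w_i\subsetneq x\setminus z\subseteq x\setminus y$, strictly improving on $z$. Since $f(z)\neq f(x)=f(w_i)$ and the $z\setminus w_i$ are disjoint, the $w_i$ witness $\bs(f,z)\geq k\geq\Bparam|x\setminus z|$, yielding $|x\setminus z|\leq\Bparam^{-1}\bs(f)$.

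The sensitivity bound is obtained by the same argument with $\Bcparam$ in place of $\Bparam$: the sensitivity ratio additionally guarantees $|z\setminus w_i|=\chunk$, so the $w_i$ now witness $s(f,z)$ rather than merely $\bs(f,z)$, giving $|x\setminus z|\leq\Bcparam^{-1}s(f)$. Taking the minimum of the two bounds establishes (c). There is no real obstacle in this argument; the proof is essentially a bookkeeping exercise once one hits on the idea of choosing $z$ to be an optimal ``shrinking'' of $y$ toward $x$. The one subtlety to highlight is that the strict inclusion $x\setminus w_i\subsetneq x\setminus z$ baked into the definitions of $\Bparam$ and $\Bcparam$ is exactly what powers the minimality contradiction.
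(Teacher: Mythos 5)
Your proof is correct and matches the paper's essentially line for line: both choose $z$ minimizing $|x\setminus z|$ subject to $f(z)\neq f(x)$ and $x\setminus z\subseteq x\setminus y$, invoke the (block) sensitivity ratio at the pair $(x,z)$, and use minimality to force $f(w_i)=f(x)\neq f(z)$ so the $w_i$ witness $\bs(f,z)$ (resp.\ $s(f,z)$). The only cosmetic difference is that the paper writes out the $\Bcparam$ case and leaves the $\Bparam$ case to the reader, whereas you present $\Bparam$ first.
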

\begin{proof}
Among all $z \in \domain$ satisfying \eqref{item:mb-1} and \eqref{item:mb-2}, we choose one that minimizes $|x \setminus z|$. We will show that this $z$ satisfies \eqref{item:mb-3}.

By definition of $\Bcparam$, we can find distinct $w_1,\ldots,w_s \in \domain$, where $s \geq \Bcparam |x \setminus z|$, such that
\begin{enumerate}[(i)]
\item $x \setminus w_i \subsetneq x \setminus z$	 for all $i$.
\item The sets $z \setminus w_i$ are disjoint.
\item $|z \setminus w_i| \leq \chunk$.
\end{enumerate}
Since $x \setminus w_i \subseteq x \setminus y$ and $|x \setminus w_i| < |x \setminus z|$, the minimality of $|x \setminus z|$ guarantees that $f(w_i) \neq f(z)$. This implies that $s \leq s(f)$ by definition of sensitivity, and so $|x \setminus z| \leq \Bcparam^{-1} s \leq \Bcparam^{-1} s(f)$.

The proof that $|x \setminus z| \leq \Bparam^{-1} \bs(f)$ is very similar, and left to the reader.
\end{proof}

Armed with the preceding lemma, we can bound certificate complexity in terms of sensitivity and block sensitivity.

\begin{lemma} \label{lem:C-s-bs}
For all $f$ we have $C(f) \leq \min(\Bcparam^{-1} s(f) \bs(f),\Bparam^{-1} \bs(f)^2)$.	
\end{lemma}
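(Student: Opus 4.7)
The plan is to adapt the classical Boolean-cube argument relating certificate complexity to the product of sensitivity and block sensitivity, by constructing an explicit small certificate at each point $x \in \domain$ using Lemma~\ref{lem:minimal-block} to shrink the ``blocks''. Fix $x$ with $f$ non-constant; it suffices to produce a certificate of size at most the claimed bound.

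First I will pick a maximum collection $y_1,\ldots,y_m \in \domain$ witnessing $\bs(f,x)$, so $m = \bs(f,x) \leq \bs(f)$, each $f(y_i) \neq f(x)$, and the sets $x \setminus y_i$ are pairwise disjoint. Then for each $i$ I apply Lemma~\ref{lem:minimal-block} to the pair $(x, y_i)$ to obtain some $z_i \in \domain$ with $f(z_i) \neq f(x)$, $x \setminus z_i \subseteq x \setminus y_i$, and $|x \setminus z_i| \leq \min(\Bcparam^{-1} s(f),\, \Bparam^{-1} \bs(f))$. The inclusions $x \setminus z_i \subseteq x \setminus y_i$ preserve pairwise disjointness, so $z_1,\ldots,z_m$ is itself a maximum-size witness to $\bs(f,x)$, now with small blocks. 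Define $c = \bigcup_{i=1}^{m}(x \setminus z_i) \subseteq x$.

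The heart of the argument is to verify that $c$ is a certificate for $x$. Suppose for contradiction that some $w \in \domain$ satisfies $w \supseteq c$ and $f(w) \neq f(x)$. Then $x \setminus w$ is disjoint from $c$, hence from every $x \setminus z_i$. If $w$ coincides with some $z_j$, then $z_j \supseteq x \setminus z_j$ forces $x \setminus z_j = \emptyset$ and thus $z_j = x$ by cardinality, contradicting $f(z_j) \neq f(x)$. Otherwise, adjoining $w$ to $\{z_1,\ldots,z_m\}$ produces a strictly larger collection witnessing block sensitivity at $x$, contradicting the maximality of $m$. Combined with the size bound
\[
|c| \leq \sum_{i=1}^{m} |x \setminus z_i| \leq m \cdot \min(\Bcparam^{-1} s(f),\, \Bparam^{-1} \bs(f)) \leq \min(\Bcparam^{-1} s(f) \bs(f),\, \Bparam^{-1} \bs(f)^2),
\]
this yields the lemma after taking the maximum over $x$. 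The only subtlety is ruling out the degenerate case $w = z_j$ above, which the disjointness of the $x \setminus z_i$'s dispatches immediately; everything else is bookkeeping.
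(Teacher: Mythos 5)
Your proof is correct and follows essentially the same route as the paper's: pick a maximum block-sensitivity witness $y_1,\ldots,y_m$ at $x$, replace each $y_i$ by a ``shrunken'' block $z_i$ via \Cref{lem:minimal-block}, take $c = \bigcup_i (x \setminus z_i)$, and argue that any $w \supseteq c$ with $f(w) \neq f(x)$ would extend the witness and contradict maximality. Your explicit dispatch of the degenerate case $w = z_j$ is a small extra care not spelled out in the paper, but it does not change the argument in any substantive way.
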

\begin{proof}
We will bound $C(f,x)$ for every $x \in \domain$. Let $b = \bs(f,x) \leq \bs(f)$, say witnessed by $y_1,\ldots,y_b$. Thus $f(y_1),\ldots,f(y_b) \neq f(x)$, and the sets $x \setminus y_i$ are disjoint. \Cref{lem:minimal-block} shows that we can find $z_1,\ldots,z_b$ such that $f(z_1),\ldots,f(z_b) \neq f(x)$, the sets $x \setminus z_i$ are disjoint, and furthermore $|x\setminus z_1|,\ldots,|x\setminus z_b| \leq \min(\Bcparam^{-1} s(f), \Bparam^{-1} \bs(f))$.

We claim that $c = (x \setminus z_1) \cup \cdots \cup (x \setminus z_b)$ is a certificate for $x$. Indeed, suppose that $f(w) \neq f(x)$ although $w \supseteq c$. For each $i$, since $w \supseteq c \supseteq x \setminus z_i$, we get that $x \setminus w$ is disjoint from $x \setminus z_i$. But this implies that $z_1,\ldots,z_b,w$ also satisfy the conditions in the definition of $\bs(f,x)$, and we reach a contradiction.

Clearly $|c| \leq \bs(f) \cdot \min(\Bcparam^{-1} s(f), \Bparam^{-1} \bs(f))$, completing the proof.
\end{proof}

To complete the chain of inequalities, we bound block sensitivity in terms of approximate degree.

\begin{lemma} \label{lem:bs-adeg}
For all $f$ we have $\bs(f) \leq 6\adeg(f)^2$.	
\end{lemma}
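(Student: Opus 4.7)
The plan is to reduce to the classical Nisan--Szegedy bound $\bs(g) \leq 6\adeg(g)^2$ for Boolean functions $g$ on the Boolean cube, by using composability to embed a Boolean cube of dimension $\bs(f)$ into $\domain$. Concretely, let $x \in \domain$ attain $b := \bs(f,x) = \bs(f)$, with witnesses $y_1,\ldots,y_b \in \domain$ satisfying $f(y_i) \neq f(x)$ and pairwise disjoint sets $x \setminus y_i$. Composability (as spelled out in \Cref{sec:setup-parameters}) guarantees that for every $s \in \bits^b$ the set
\[
 x(s) = (x \cap y_1 \cap \cdots \cap y_b) \cup \bigcup_{i\colon s_i=0}(x\setminus y_i) \cup \bigcup_{i\colon s_i=1}(y_i \setminus x)
\]
lies in $\domain$, and that $x(0^b) = x$ and $x(e_i) = y_i$.

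Define $g\colon \bits^b \to \bits$ by $g(s) = f(x(s))$. Then $g(0^b) = f(x)$ and $g(e_i) = f(y_i) \neq g(0^b)$, so $g$ has block sensitivity $b$ at the origin in the usual Boolean-cube sense, and therefore $\bs(g) \geq b$.

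The key observation is that pulling back through the map $s \mapsto x(s)$ does not increase degree. Indeed, for each element $e \in \universe$, the indicator $\llbracket e \in x(s) \rrbracket$ depends on at most one coordinate of $s$: it is constant $1$ if $e \in x \cap y_1 \cap \cdots \cap y_b$, constant $0$ if $e$ is in none of $x,y_1,\ldots,y_b$, equal to $1-s_i$ if $e \in x \setminus y_i$ for the (unique) index $i$ with $e \in x \setminus y_i$ (using the pairwise disjointness of the $x \setminus y_i$, and the dual disjointness from \Cref{lem:bs-symmetric} for $y_i \setminus x$), and equal to $s_i$ if $e \in y_i \setminus x$. Hence for every $S \subseteq \universe$ the conjunction $\llbracket x(s) \supseteq S \rrbracket = \prod_{e \in S}\llbracket e \in x(s)\rrbracket$ is a (multilinear) polynomial in $s$ of degree at most $|S|$, so any polynomial $P$ on $\domain$ of degree $d$ pulls back to a polynomial $P \circ x(\cdot)$ on $\bits^b$ of degree at most $d$. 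Taking $P$ to be an optimal approximating polynomial for $f$ yields $\adeg(g) \leq \adeg(f)$.

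The final step is to invoke the classical bound of Nisan and Szegedy on the Boolean cube, $\bs(g) \leq 6\adeg(g)^2$, which combined with $\bs(g) \geq b$ and the previous paragraph gives $\bs(f) = b \leq 6\adeg(f)^2$. No genuine obstacle arises here, and no condition other than composability is needed; the only point that requires care is the pullback-degree bound, which rests entirely on the fact that each coordinate $\llbracket e \in x(s)\rrbracket$ depends on at most one $s_i$, a direct consequence of the disjointness of the blocks $x \setminus y_i$ and $y_i \setminus x$.
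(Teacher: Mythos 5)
Your proof is correct and takes essentially the same approach as the paper: embed a $\bs(f)$-dimensional Boolean cube into $\domain$ via composability, pull back an approximating polynomial through the map $s \mapsto x(s)$ (noting that each indicator $\llbracket e \in x(s)\rrbracket$ depends on at most one $s_i$, so degree is preserved), and invoke the Nisan--Szegedy bound on the cube. Your elaboration of why the substitution is degree-nonincreasing is a cleaner spelling-out of what the paper dismisses as ``an affine substitution.''
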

\begin{proof}
The proof is by reduction to the classical case of the Boolean cube.

We will show that for every set	$x \in \domain$, the bound $\bs(f,x) \leq 6\adeg(f)^2$ holds. Suppose that $b = \bs(f,x)$, and let this be witnessed by $y_1,\ldots,y_b$. Thus $f(y_i) \neq f(x)$ and the sets $x \setminus y_i$ are disjoint. \Cref{lem:bs-symmetric} shows that the sets $y_i \setminus x$ are also disjoint, and so all of the following are disjoint:
\[
 x \cap y_1 \cap \cdots \cap y_b, \quad x \setminus y_1, \ldots, x \setminus y_b, \quad y_1 \setminus x, \ldots, y_b \setminus x.
\]
Furthermore, by composability, for each $s_1,\ldots,s_b \in \bits$, the following set is in $\domain$:
\[
 x(s_1,\ldots,s_b) = (x \cap y_1 \cap \cdots \cap y_b) \cup \bigcup_{i\colon s_i=0} (x \setminus y_i) \cup \bigcup_{i\colon s_i=1} (y_i \setminus x).
\]

Here is another way to describe the sets $x(s_1,\ldots,s_b)$. Let $s_1,\ldots,s_b \in \bits$, and define an assignment to variables $x_e$, for $e \in \universe$:
\begin{equation} \tag{$\ast$} \label{eq:substitution}
 x_e =
 \begin{cases}
 	1 & \text{if } e \in x \cap y_1 \cdots \cap y_b, \\
 	\overline{s_i} & \text{if } e \in x \setminus y_i, \\
 	s_i & \text{if } e \in y_i \setminus x, \\
 	0 & \text{otherwise}.
 \end{cases}
\end{equation}
The set $x(s_1,\ldots,s_b)$ results from applying this assignment, and interpreting the result as the characteristic vector of a set.

Let us now define a function on the Boolean cube $\bits^b$:
\[
 g(s_1,\ldots,s_b) = f(x(s_1,\ldots,s_b)).
\]
By construction, $\bs(g) = b$: indeed, $g(0,\ldots,0) = f(x)$ while $g(0,\ldots,0,1,0,\ldots,0) = f(y_i)$, where the $1$ is in the $i$'th coordinate. We will soon show that $\adeg_\epsilon(g) \leq \adeg_\epsilon(f)$ for all $\epsilon$. The lemma then follows from the well-known bound $\bs(g) \leq 6\adeg(g)^2$~\cite[Lemma 3.5]{NisanSzegedy}.

Recall that $\adeg_\epsilon(f)$ is the minimum degree of a polynomial $P$ such that $|f(x) - P(x)| \leq \epsilon$ for all $x \in \domain$. If we apply substitution~\eqref{eq:substitution} to $P$, we get a polynomial $Q$ such that $|g(s) - Q(s)| = |f(x(s)) - P(x(s))| \leq \epsilon$ for all $s \in \{0,1\}^b$. Since~\eqref{eq:substitution} is an affine substitution, clearly $\deg Q \leq \deg P$. Therefore $\adeg_\epsilon(g) \leq Q$.
%
\end{proof}

We will need a strengthening of this result for the case $\deg(f) = 1$ in \Cref{sec:degree1-pm}.

\begin{lemma} \label{lem:bs-deg-1}
If $\deg(f) \leq 1$	then $\bs(f) \leq 1$.
\end{lemma}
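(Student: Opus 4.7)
The plan is to sharpen the reduction to the Boolean cube used in the proof of Lemma~\ref{lem:bs-adeg}. Instead of invoking the Nisan--Szegedy bound $\bs(g) \leq 6\adeg(g)^2$ at the end, I would exploit the fact that a degree-$1$ Boolean function on $\bits^b$ is either a constant or a dictator, and hence has block sensitivity at most~$1$.

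Concretely, assume for contradiction that $\bs(f,x) \geq 2$ for some $x \in \domain$, witnessed by $y_1,\ldots,y_b$ with $b \geq 2$. By composability and Lemma~\ref{lem:bs-symmetric}, the function $g(s) := f(x(s_1,\ldots,s_b))$ is well-defined on $\bits^b$, and the affine substitution~\eqref{eq:substitution} turns any degree-$1$ polynomial representing $f$ into a degree-$1$ polynomial representing $g$; hence $\deg g \leq 1$. A short case split on $c := g(0) \in \bits$, combined with $g(e_i) - c \in \{-1,0,1\}$ and $g(e_i + e_j) \in \bits$, shows that a Boolean-valued affine function on $\bits^b$ depends on at most one coordinate, so $\bs(g) \leq 1$. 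But $g(0) = f(x)$ while $g(e_i) = f(y_i) \neq f(x)$ for each $i = 1,\ldots,b$, forcing $g$ to depend on all $b \geq 2$ coordinates --- a contradiction.

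Equivalently, and more directly, one can bypass the cube: writing $f(w) = c + \sum_{a \in \universe} c_a \llbracket a \in w \rrbracket$ with $\deg f \leq 1$, for every $w \in \domain$ we have
\[
 f(w) - f(x) = \sum_{a \in w \setminus x} c_a - \sum_{a \in x \setminus w} c_a.
\]
Composability produces $z \in \domain$ with $x \setminus z = \bigcup_i (x \setminus y_i)$ and $z \setminus x = \bigcup_i (y_i \setminus x)$, both unions being disjoint (Lemma~\ref{lem:bs-symmetric}), so linearity yields $f(z) - f(x) = \sum_{i=1}^b \bigl(f(y_i) - f(x)\bigr)$. If $b \geq 2$ and $f(y_i) \neq f(x)$ for all $i$, the right-hand side has absolute value at least~$2$, contradicting $f(z) \in \bits$.

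There is no substantive obstacle: the argument reuses the composability machinery already built for Lemma~\ref{lem:bs-adeg}, and the only new ingredient is the elementary observation that a real affine function that takes only Boolean values on $\bits^b$ must be a constant or dictator. I would probably prefer the cube-free variant in the write-up, since it avoids the affine substitution entirely and fits in a few lines.
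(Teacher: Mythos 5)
Both of your variants are correct, and the first one is essentially the paper's own proof: the paper constructs $g\colon\bits^{b}\to\bits$ exactly as in \Cref{lem:bs-adeg}, notes $\deg g\leq\deg f\leq 1$, invokes the fact that a Boolean degree-$1$ function on the cube is a dictator (so $\bs(g)\leq 1$), and concludes. Your first paragraph reproduces this line of reasoning, including the small observation that a Boolean-valued affine map on $\bits^b$ is constant or a dictator, which the paper simply cites as a classical fact.

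Your ``cube-free'' variant is a genuinely different argument, and a nicer one. Rather than factoring through the Boolean cube at all, you expand $f$ as an affine form on $\bits^\universe$, apply composability once to build the single point $z$ with $x\setminus z=\bigcup_i(x\setminus y_i)$ and $z\setminus x=\bigcup_i(y_i\setminus x)$, and use linearity plus the disjointness from \Cref{lem:bs-symmetric} to get the telescoping identity $f(z)-f(x)=\sum_{i}\bigl(f(y_i)-f(x)\bigr)$. Since all the summands are $\pm 1$ of the same sign (forced by $f$ being Boolean), $|f(z)-f(x)|\geq b\geq 2$, which is impossible. The trade-off: the paper's proof is shorter on the page only because it outsources the construction to \Cref{lem:bs-adeg} and the dictator classification to folklore, whereas yours is self-contained, avoids the affine substitution machinery and the cube reduction entirely, and exposes more transparently why degree $1$ caps block sensitivity at $1$ (namely, linearity makes disjoint blocks additive). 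Either write-up would be acceptable; yours is arguably clearer for a reader who hasn't internalized \Cref{lem:bs-adeg}.
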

\begin{proof}
Construct the function $g\colon \bits^{\bs(f)} \to \bits$ as in the proof of \Cref{lem:bs-adeg}. By construction, $\deg(g) \leq \deg(f) \leq 1$ and $\bs(g) = \bs(f)$. A Boolean degree~$1$ function on the Boolean cube is a dictator, and in particular, its block sensitivity is at most~$1$. Therefore $\bs(f) = \bs(g) \leq 1$.
\end{proof}

We can now prove \Cref{thm:main}.

\main*

\begin{proof}
The first four lines of inequalities follow from \Cref{lem:simple-relations} and \Cref{lem:bs-adeg}:
\begin{gather*}
\sqrt{\bs(f)/6} \stackrel{\ref{lem:bs-adeg}}{\leq}
\adeg(f) \stackrel{\ref{lem:simple-relations}\eqref{item:adeg-deg}}{\leq}
\deg(f) \stackrel{\ref{lem:simple-relations}\eqref{item:deg-D}}{\leq}
D(f), \\
\bs(f) \stackrel{\ref{lem:simple-relations}\eqref{item:bs-fbs-C}}{\leq}
\fbs(f) \stackrel{\ref{lem:simple-relations}\eqref{item:bs-fbs-C}}{\leq}
C(f) \stackrel{\ref{lem:simple-relations}\eqref{item:C-U}}{\leq}
U(f) \stackrel{\ref{lem:simple-relations}\eqref{item:U-D}}{\leq}
D(f), \\
\bs(f) \stackrel{\ref{lem:simple-relations}\eqref{item:bs-R}}{\leq}
3R(f) \stackrel{\ref{lem:simple-relations}\eqref{item:R-R0}}{\leq}
3R_0(f) \stackrel{\ref{lem:simple-relations}\eqref{item:R0-D}}{\leq}
3D(f), \\
\adeg(f) \stackrel{\ref{lem:simple-relations}\eqref{item:adeg-Q}}{\leq}
2Q(f) \stackrel{\ref{lem:simple-relations}\eqref{item:Q-QE}}{\leq}
2Q_E(f) \stackrel{\ref{lem:simple-relations}\eqref{item:QE-D}}{\leq}
2D(f).
\end{gather*}
The fourth line \eqref{eq:main-relation} follows by combining \Cref{lem:D-C} and \Cref{lem:C-s-bs}:
\[
 D(f) \stackrel{\ref{lem:D-C}}\leq \max\left(\Dparam,\frac{n}{\Cparam}\right) C(f)^2 \stackrel{\ref{lem:C-s-bs}}\leq \Bparam^{-2} \max\left(\Dparam,\frac{n}{\Cparam}\right) \bs(f)^4.
\]
If $C(f) \leq \Cparam$, then according to \Cref{lem:D-C} we can replace $\max(\Dparam,n/\Cparam)$ with $\Dparam$. This explains condition~\eqref{item:bound-C}. Condition~\eqref{item:bound-bs} follows from the inequality $C(f) \leq \Bparam^{-1} \bs(f)^2$ of \Cref{lem:C-s-bs}, and condition~\eqref{item:bound-adeg} then follows from the inequality $\bs(f) \leq 6\adeg(f)^2$ of \Cref{lem:bs-adeg}.
\end{proof}

\subsection{Ball property}

Our proof of \Cref{thm:ball-property} closely follows the argument of Gopalan et al.~\cite{GNSTW}.

\ball*

\begin{proof}
Suppose that we are given the values of $f$ at all points at distance at most $r$ from some point $x \in \domain$, where the distance between two points $x,y \in \domain$ is $d(x,y) = |x \setminus y| = |y \setminus x|$. We will show that the value of $f(y)$ at a point at distance $d \geq r$ from $x$ can be determined from the value of $f$ at points at distance less than $d$ from $x$, and so the entire function $f$ can be constructed step by step from its values on the initial ball.

Let $y$ be a point at distance $d \geq r$ from $x$. By definition of $\Bcparam$, we can find distinct $z_1,\ldots,z_t \in \domain$, with $t \geq \Bcparam d(x,y) \geq 2s+1$, such that $d(x,z_i) < d$, the sets $y \setminus z_i$ are disjoint, and $|y \setminus z_i| = \chunk$. We will show that $f(y)$ is the majority value of $f(z_1),\ldots,f(z_{2s+1})$, completing the proof.

Indeed, suppose that $f(y)$ were not the majority value of $f(z_1),\ldots,f(z_{2s+1})$. This means that at least $s+1$ of the points $z_1,\ldots,z_{2s+1}$ satisfy $f(z_i) \neq f(y)$. However, since the sets $y \setminus z_i$ are disjoint and $|y \setminus z_i| \leq \chunk$, that would imply that $s(f,y) \geq s+1$, contradicting $s(f) = s$.
\end{proof}

\section{Examples of domains} \label{sec:domains}

So far we have seen two examples of domains: the Boolean cube, and the symmetric group. In this section we describe three families of domains: product domains (generalizing the Boolean cube), perfect matching domains (generalizing the symmetric group), and multislices. We close the section by briefly discussing spectral notions of degree.

\subsection{Product domains} \label{sec:domains-product}

The prototypical product domain is the Boolean cube $\bits^n$. Much of the theory of the Boolean cube extends to the so-called Hamming scheme $\{1,\ldots,m\}^n$. Here we consider a slightly more general domain, in which the number of values in each coordinate could depend on the coordinate:
\[
 H(m_1,\ldots,m_n) = \bigtimes_{i=1}^n \{1,\ldots,m_i\}.
\]

\paragraph{Formal definition and composability}

The universe is
\[
 \universe = \{ (i,j) : 1 \leq i \leq n, 1 \leq j \leq m_i \}.
\]

The sets in the domain are
\[
 \domain = \bigl\{ \{ (1,j_1), \ldots, (n,j_n) \} : 1 \leq j_i \leq m_i \bigr\}.
\]

As in the case of the Boolean cube, the chunk size is $\chunk = 1$.

\subparagraphit{Queries} Thinking of the input as a vector $x_1,\ldots,x_n$, we allow queries of the form ``$x_i = ?$''. Formally,
\[
 \queries = \bigl\{ \{(i,1),\ldots,(i,m_i)\} : 1 \leq i \leq n \bigr\}.
\]

\subparagraphit{Composability} The domain $\domain$ consists of all sets intersecting each query at exactly one point, hence is composable due to \Cref{lem:composability}.

\paragraph{Parameters} We now calculate the maximum degree, conflict bound, and sensitivity ratio.

\subparagraphit{Maximum degree} Every element $(i,j) \in \universe$ only participates in the query ``$x_i = ?$'', so $\Dparam = 1$.

\subparagraphit{Conflict bound} Two partial inputs are conflicting if they specify different values for some coordinate $i$. Such inputs can be separated by the query ``$x_i = ?$''. Therefore $\Cparam = n$.

\subparagraphit{Sensitivity ratio} Given two vectors $x$ and $y$ at distance $d = |x \setminus y|$, we can form $d$ vectors $z^i$, for each coordinate $i$ for which $x_i \neq y_i$, defined by $z^i = (y \setminus \{(i,y_i)\}) \cup \{(i,x_i)\}$. These vectors satisfy $x \setminus z^i = (x \setminus y) \setminus \{(i,x_i)\}$, and the sets $y \setminus z^i = \{(i,y_i)\}$ are disjoint. Hence $\Bparam = \Bcparam = 1$.

\paragraph{Main result} Applying \Cref{thm:main}, we deduce the following corollary:

\begin{corollary} \label{cor:main-product}
All complexity measures considered in the paper, except for sensitivity, are polynomially related for all functions on $H(m_1,\ldots,m_n)$. Furthermore, the polynomial relations do not depend on the values of $m_1,\ldots,m_n$.
\end{corollary}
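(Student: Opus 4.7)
The plan is to derive this corollary as a direct application of \Cref{thm:main} using the four parameters already computed above. First I would note that composability has been verified: $\domain$ is cut out of $\bits^\universe$ by the affine equations $\sum_{j=1}^{m_i} x_{i,j} = 1$ for each $i \in [n]$, so \Cref{lem:composability} applies. Hence $(\domain,\universe,n)$ meets the hypothesis of \Cref{thm:main}.

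Next I would plug in the parameter values $\Dparam = 1$, $\Cparam = n$, and $\Bparam = 1$ computed in the preceding paragraphs. The chain of inequalities in the first four lines of \Cref{thm:main} then gives polynomial relations among $\adeg$, $\deg$, $D$, $\bs$, $\fbs$, $C$, $U$, $R$, $R_0$, $Q$, $Q_E$ with universal absolute constants. The key quantitative bound
\[
 D(f) \leq \Bparam^{-2} \max\bigl(\Dparam, n/\Cparam\bigr) \bs(f)^4
\]
becomes $D(f) \leq \bs(f)^4$ since $\max(\Dparam, n/\Cparam) = \max(1,1) = 1$ and $\Bparam^{-2} = 1$.

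Finally, I would observe that since none of the constants $\Dparam$, $\Cparam/n$, $\Bparam$ depend on the values of $m_1, \ldots, m_n$, all polynomial relations produced by \Cref{thm:main} are uniform in the $m_i$'s (and even in $n$). There is no real obstacle here: the work is entirely absorbed in the parameter calculations done earlier in the section, so the only step that requires any thought is confirming that those calculations make no use of the specific values of $m_1, \ldots, m_n$, which is immediate from inspecting the arguments for $\Dparam$, $\Cparam$, and $\Bparam$.
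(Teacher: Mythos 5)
Your proof is correct and matches the paper's approach exactly: both rely on \Cref{lem:composability} for composability, then plug the parameters $\Dparam = 1$, $\Cparam = n$, $\Bparam = 1$ (computed in the preceding paragraphs) into \Cref{thm:main}, and note that these values are independent of $m_1,\ldots,m_n$. Nothing is missing.
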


In fact, sensitivity is also polynomially related to the other measures, as we show in \Cref{sec:sensitivity-product}.

\subsection{Perfect matching domains} \label{sec:domains-pm}

Perfect matching domains generalize the symmetric group and the perfect matching scheme (the set of all perfect matchings in $K_{2n}$) to hypergraphs.

Let $\lambda = \lambda_1,\ldots,\lambda_m$ be a sequence of positive integers. 
The domain $P(n;\lambda)$ consists of all perfect hypermatchings in a $|\lambda|$-uniform hypergraph, where $|\lambda| = \lambda_1 + \cdots + \lambda_m$. The vertex set of the hypergraph is partitioned into $m$ parts $P_1,\ldots,P_m$, where $P_i$ contains $\lambda_i n$ vertices. The hyperedges consist of a choice of $\lambda_i$ vertices from the the $i$'th part $P_i$, for each $i$. Every perfect hypermatching consists of $n$ hyperedges.

Special cases include the symmetric group $P(n;1,1)$ (perfect matchings in $K_{n,n}$) and the perfect matching scheme $P(n;2)$ (perfect matchings in $K_{2n}$).

To avoid trivialities, we assume that $n,|\lambda| \geq 2$.

\paragraph{Formal definition and composability} For $i \in \{1,\ldots,m\}$, define
\[
 P_i = \{(i,j) : 1 \leq j \leq \lambda_i n \},
\]
and let $P = P_1 \cup \cdots \cup P_m$.
The universe is
\[
 \universe = \{ S \subseteq P : |S \cap P_i| = \lambda_i \}.
\]

The sets in the domain are
\[
 \domain = \bigl\{ \{S_1,\ldots,S_n\} \subseteq \universe : S_1 \cup \cdots \cup S_n = \universe \bigr\}.
\]

The chunk size is $\chunk = 2$. Indeed, given a set $A$, if we remove $S_j$, then the only way to complete it to a set in $\domain$ is by adding $S_j$ back. In contrast, we can switch the $P_1$-parts of $S_1$ and $S_2$ in order to get a point $B$ at distance $|B \setminus A| = 2$.

\subparagraphit{Queries} The queries we allow are of the form ``which hyperedge does vertex $(i,j)$ participate in?''. This generalizes the queries we considered in the case of the symmetric group, and corresponds to the queries ``which vertex is $i$ connected to?'' in the case of the perfect matching scheme. Formally,
\[
 \queries = \bigl\{ \{ S \in \universe : S \ni v \} : v \in P \bigr\}.
\]

\subparagraphit{Composability} Although described slightly differently, $\domain$ consists of all sets intersecting each query at exactly one vertex, hence is composable due to \Cref{lem:composability}.

\paragraph{Parameters} We now bound the maximum degree, conflict bound, and block sensitivity ratio.

\subparagraphit{Maximum degree} A hyperedge $S \in \universe$ appears in one query per each vertex $v \in S$. Hence $\Dparam = |\lambda|$.

\subparagraphit{Conflict bound} Suppose that $C_1,C_2$ are two partial inputs, which we think of as hypermatchings. If $C_1 \setminus C_2$ and $C_2 \setminus C_1$ mention different vertices (that is, different elements of $P$) then the union $C_1 \cup C_2$ is also a hypermatching. It is not hard to check that every hypermatching can be extended to a perfect hypermatching, and so $C_1,C_2$ do not conflict. This means that if $C_1,C_2$ do conflict, then there must exist a vertex $v \in P$ which belongs to different hyperedges in $C_1$ and $C_2$. Hence the two partial inputs can be separated using the question ``which hyperedge does vertex $v$ participate in?''. This shows that $\Cparam = n$. 

\subparagraphit{Block sensitivity ratio} This is the only nontrivial part in the analysis of perfect matching domains. Our bound on the block sensitivity ratio takes inspiration from the case of the symmetric group, but the argument looks rather different.

Consider two perfect hypermatchings $A,B \in \domain$. Recall that our goal is to find as many $C_i$ as possible (in terms of $|A \setminus B|$) such that $A \setminus C_i \subsetneq A \setminus B$, and the sets $B \setminus C_i$ are disjoint.

The idea is to pick a hyperedge $S \in A \setminus B$, and form another perfect hypermatching $C_S$ by modifying $B$ so that it contains $S$. Later on we will show that we can choose many different hyperedges for which the corresponding perfect hypermatchings are disjoint.

Let $T_1,\ldots,T_r$ be the hyperedges in $B \setminus A$ which share vertices with $S$; note that $r \geq 2$, since otherwise $S \in A \cap B$. To form $C_S$ from $B$, we start by replacing $T_1$ with $S$. This means that we are no longer covering vertices in $T_1 \setminus S$, and vertices in $S \setminus T_1$ are covered twice; the two sets contain an equal number of vertices in each part. The vertices covered twice all appear in the sets $T_2,\ldots,T_r$. We modify $T_2,\ldots,T_r$ by replacing the vertices appearing twice with the vertices $T_1 \setminus S$, replacing each vertex by another vertex in the same part. Thus $B \setminus C_S = \{T_1,\ldots,T_r\}$.

To obtain $C_S$ from $B$, we have modified only hyperedges which conflict with $S$, and in particular do not belong to $A$. This shows that $A \setminus C_S \subseteq A \setminus B$. Since $S \in C_S$ by construction, in fact $A \setminus C_S \subsetneq A \setminus B$.

Let us illustrate this process using the case of the symmetric group. We start with two permutations $A,B$. We choose some edge $S = \{(1,i),(2,j)\} \in A \setminus B$ (encoding that $A(i) = j$). The vertices $(1,i),(2,j)$ appear in two edges of $B$, say $T_1 = \{(1,i),(2,J)\}$ and $T_2 = \{(1,I),(2,j)\}$. The permutation $C_S$ is given by
\[
 C_S = B \setminus \bigl\{\{(1,i),(2,J)\},\{(1,I),(2,j)\}\bigr\} \cup \bigl\{\{(1,i),(2,j)\},\{(1,I),(2,J)\}\bigr\}.
\]

The sets $B \setminus C_S$ are not necessarily disjoint. In order for two sets $B \setminus C_{S_1}$ and $B \setminus C_{S_2}$ to intersect, there needs to be a set $T \in B \setminus A$ which intersects both $S_1$ and $S_2$. Since each set $S \in A \setminus B$ intersects at most $|\lambda|$ sets $T \in B \setminus A$ and vice versa, we see that each set $B \setminus C_{S_1}$ intersects at most $|\lambda|(|\lambda|-1)$ sets $B \setminus C_{S_2}$. We immediately obtain $\Bparam \geq 1/(|\lambda|(|\lambda|-1)+1)$.

\subparagraphit{Sensitivity ratio} When forming $C_S$ from $B$, we remove $r \leq |\lambda|$ sets from $B$. Hence if $|\lambda| \leq 2$, then $|B \setminus C_S| \leq \chunk$, implying that we get a bound on $\Bcparam$, namely $\Bcparam \geq 1/3$. This holds for both the symmetric group and the perfect matching scheme.

\paragraph{Main result} Applying \Cref{thm:main}, we deduce the following corollary:

\begin{corollary} \label{cor:main-pm}
All complexity measures considered in the paper, except for sensitivity, are polynomially related for all functions on $P(n;\lambda)$. Furthermore, the polynomial relations do not depend on the value of $n$.
\end{corollary}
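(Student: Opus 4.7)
The plan is to deduce the corollary directly from \Cref{thm:main} applied to $P(n;\lambda)$, using the parameter computations already carried out above. Since composability has been established via \Cref{lem:composability} (the domain $\domain$ is cut out of $\bits^\universe$ by the linear constraints saying that each query-vertex sum equals~$1$), the hypothesis of \Cref{thm:main} is satisfied, and it remains only to plug in the values of $\Dparam$, $\Cparam$, and $\Bparam$.

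First, I recall the parameter bounds derived above: the maximum degree is $\Dparam = |\lambda|$ because each hyperedge $S \in \universe$ lies in exactly $|\lambda|$ queries (one per vertex of $S$); the conflict bound is $\Cparam = n$ because any two partial hypermatchings not separated by a query-at-a-vertex are themselves a single hypermatching, hence extendible to a perfect hypermatching; and the block sensitivity ratio satisfies $\Bparam \geq 1/(|\lambda|(|\lambda|-1)+1)$ via the ``swap-in-$S$'' construction. Since $|\lambda|$ is a constant (the corollary is stated for a fixed $\lambda$ and varying $n$), all three quantities $\Dparam$, $\Bparam^{-1}$, and $n/\Cparam$ are bounded by constants depending only on $|\lambda|$.

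Second, I invoke \Cref{thm:main}. All the simple inequalities in the first four displayed lines hold unconditionally, so the only thing needing the admissibility parameters is the bound $D(f) \leq \Bparam^{-2}\max(\Dparam,n/\Cparam)\,\bs(f)^4$. With $n/\Cparam = 1$ and $\Dparam = |\lambda|$, this becomes $D(f) \leq (|\lambda|(|\lambda|-1)+1)^2 \cdot |\lambda| \cdot \bs(f)^4$, a bound whose constant depends only on $|\lambda|$ and not on $n$. Chaining this with the other inequalities of \Cref{thm:main} yields polynomial relations, with $n$-independent constants, between $D,R,R_0,Q,Q_E,\deg,\adeg,C,U,\bs,\fbs$.

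There is essentially no obstacle: all the conceptual content lies in the parameter computations that have already been done in the body of the section, and the corollary is a turnkey application of \Cref{thm:main}. The one small point to mention in the write-up is that sensitivity $s(f)$ is excluded from the statement because \Cref{thm:main} does not address it; a polynomial relation for $s(f)$ on $P(n;\lambda)$ is deferred to \Cref{sec:sensitivity}.
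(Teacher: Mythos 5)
Your proposal is correct and matches the paper's intended argument exactly: the corollary is a direct application of \Cref{thm:main}, using the parameter values $\Dparam = |\lambda|$, $\Cparam = n$, $\Bparam \geq 1/(|\lambda|(|\lambda|-1)+1)$, and composability, all established in \Cref{sec:domains-pm}. The observation that these quantities depend only on $|\lambda|$ (and not $n$), so the implied constants are $n$-independent, is precisely what the ``Furthermore'' clause requires.
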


When $|\lambda| = 2$, sensitivity is also polynomially related to the other measures, as we show in \Cref{sec:sensitivity-sn} and \Cref{sec:sensitivity-pm}.

\subsection{Multislices} \label{sec:domains-ms}

The \emph{slice}, or \emph{Johnson scheme}, consists of all vectors in the Boolean cube $\bits^n$ with fixed Hamming weight. We consider a multicolored generalization of the slice, known as the \emph{multislice}.

Let $\lambda = \lambda_1,\ldots,\lambda_m$ be a sequence of positive integers summing to $n$; to avoid trivialities, we assume that $m \geq 2$. The multislice $M(\lambda)$ is the subset of $\{1,\ldots,m\}^n$ consisting of all vectors having exactly $\lambda_i$ coordinates labeled~$i$. The set of all functions on the multislice is known in the representation theory of the symmetric group as the \emph{permutation module} $M^\lambda$.

\paragraph{Formal definition and composability} The universe is
\[
 \universe = \{ 1,\ldots, n\} \times \{1, \ldots, m\}.
\]

The sets in the domain are
\[
 \domain = \{ S \subseteq \universe : \#\{ c \colon (i,c) \in S \} = 1, \#\{ i \colon (i,c) \in S \} = \lambda_c \}.
\]
This definition shows that the domain is composable, by \Cref{lem:composability}.

We sometimes think of the domain as the set of vectors $x=(x_1,\ldots,x_n)$, where $x_i \in \{1,\ldots,m\}$, with $\lambda_c$ many elements ``colored'' $c$.

The chunk size is $\chunk = 2$. Indeed, if we ``uncolor'' an element, there is only one way to color it, so $\chunk > 1$. In contrast, we can switch the colors of two elements, and so $\chunk = 2$.

\subparagraphit{Queries} We allow queries of the form ``$x_i = ?$'', that is, ``what is the color of $x_i$?''. Formally:
\[
 \queries = \{ \{i\} \times \{1,\ldots,m\} : 1 \leq i \leq n \}.
\]

\paragraph{Parameters} We now bound the maximum degree, conflict bound, and block sensitivity ratio.

\subparagraphit{Maximum degree} Every element $(i,j)$ appears in exactly one query ``$x_i = ?$'', and so $\Dparam = 1$.

\subparagraphit{Conflict bound} In contrast to the previous domains we consider, the conflict bound in this case is smaller than~$n$. If two conflicting partial inputs $a,b$ disagree on the color of some coordinate $i$, then they can be separated by the query ``$x_i = ?$'', but this need not be the case in general: another way for two partial inputs to conflict is if in total, they specify more than $\lambda_c$ coordinates of color $c$; in all other cases, it is not hard to see that $a,b$ do not conflict. The only way to guarantee that two partial inputs do not specify more than $\lambda_c$ coordinates of some color $c$ is to limit them to size $\Cparam = \lfloor \min(\lambda)/2 \rfloor$, where $\min(\lambda) = \min(\lambda_1,\ldots,\lambda_m)$.

\subparagraphit{Sensitivity ratio} Let us start with the case of the slice, that is $m = 2$. In this case, we are able to get slightly better bounds.

Suppose that $x,y \in \domain$ are at distance $d = |x \setminus y|$. This means that there are $d$ indices whose color is different. Since the total number of indices of each color is the same in both vectors, there must be $d/2$ indices $i_1,\ldots,i_{d/2}$ such that $x_{i_t}=0$ and $y_{i_t}=1$, and $d/2$ indices $j_1,\ldots,j_{d/2}$ where the opposite happens.

We define $z^t$, for $t \in \{1,\ldots,d/2\}$, to result from $y$ by switching the colors of indices $i_t$ and $j_t$. Thus $x \setminus z^t = (y \setminus z^t) \setminus \{(i_t,0),(j_t,1)\}$ and $y \setminus z^t = \{(i_t,1),(j_t,0)\}$, showing that the sets $y \setminus z^t$ are disjoint. This shows that $\Bcparam = 1/2$ when $m = 2$.

\smallskip

For general multislices, we will show that $\Bcparam \geq 1/3$ by reduction to the sensitivity ratio of the symmetric group. Suppose that $x,y \in \domain$ are at distance $d = |x \setminus y|$. Let $\pi \in \Sym$ be a permutation such that $y_i = x_{\pi(i)}$, and furthermore $\pi(i) = i$ if $y_i = x_i$.

Let $\id$ be the identity permutation. By construction, $|\pi \setminus \id| = d$. We have shown in \Cref{sec:domains-pm} that $\Bcparam \geq 1/3$ for $\Sym$, and so applying the sensitivity ratio property to the pair $\id,\pi$ we obtain $\ell \geq d/3$ permutations $\sigma_1,\ldots,\sigma_\ell$ such that
$\id \setminus \sigma_t \subsetneq \id \setminus \pi$, the sets $\pi \setminus \sigma_t$ are disjoint, and $|\pi \setminus \sigma_t| = 2$.

Define $z^t \in \domain$ by $z^t_i = x_{\sigma_t(i)}$, so that
\begin{align*}
&x \setminus y = \{(i,x_i) : x_i \neq x_{\pi(i)}\}, \\
&x \setminus z^t = \{(i,x_i) : x_i \neq x_{\sigma_t(i)}\}, \\
&y \setminus z^t = \{(i,x_{\pi(i)}) : x_{\pi(i)} \neq x_{\sigma_t(i)}\}.
\end{align*}
We have to prove the following properties: $x \setminus z^t \subsetneq x \setminus y$, the sets $y \setminus z^t$ are disjoint, and $|y \setminus z^t| = 2$.

If $x_i \neq x_{\sigma_t(i)}$ then certainly $i \neq \sigma_t(i)$, and so $(i,i) \in \id \setminus \sigma_t$. Therefore $(i,i) \in \id \setminus \pi$, and so $\pi(i) \neq i$. By construction, this implies that $x_i \neq x_{\pi(i)}$. This shows that $x \setminus z^t \subset x \setminus y$.

By assumption, $\id \setminus \sigma_t \subsetneq \id \setminus \pi$, and so there exists $i$ such that $\pi(i) \neq i$ but $\sigma_t(i) = i$. Since $\pi(i) \neq i$, by construction $x_i \neq x_{\pi(i)}$. On the other hand, clearly $x_i = x_{\sigma_t(i)}$, showing that $x \setminus z^t \subsetneq x \setminus y$, proving the first property.

If $(i,x_{\pi(i)}) \in y \setminus z^t$ then $\pi(i) \neq \sigma_t(i)$, and so $(i,\pi(i)) \in \pi \setminus \sigma_t$. Since the sets $\pi \setminus \sigma_t$ are disjoint, so are the sets $y \setminus z^t$, proving the second property. Since $|\pi \setminus \sigma_t| = 2$, also $|y \setminus z^t| = 2$ (since we already know that $y \neq z^t$ and $2$ is the minimal distance), proving the third property.

\paragraph{Main result} Applying \Cref{thm:main}, we deduce the following corollary:

\begin{corollary} \label{cor:main-ms-balanced}
Suppose that $\lambda = \lambda_1,\ldots,\lambda_m$ is a sequence of positive integers summing to $n$, and let $\lambda_{\min} = \min(\lambda_1,\ldots,\lambda_m)$.

All complexity measures considered in the paper, except for sensitivity, are polynomially related for all functions on $M(\lambda)$ whose degree is at most $O_m(\lambda_{\min}^{1/4})$. 

Furthermore, for any constant $c > 0$, if $\lambda_{\min} \geq cn$ then all complexity measures considered in the paper, except for sensitivity, are polynomially related for all functions on $M(\lambda)$ (without any constraint on the degree). Moreover, the polynomial relations do not depend on the value of $n$ (but do depend on $m$ and $c$).
\end{corollary}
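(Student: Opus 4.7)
The plan is to apply \Cref{thm:main} using the parameters computed earlier in this subsection for the multislice domain: $\Dparam = 1$, $\Cparam = \lfloor \lambda_{\min}/2 \rfloor$, and $\Bparam \geq 1/3$. Everything else is bookkeeping: the multislice is composable by \Cref{lem:composability} (already observed), so the hypotheses of \Cref{thm:main} are in force.

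For the first statement, the strategy is to verify condition \eqref{item:bound-adeg} of \Cref{thm:main}, which asks that $\adeg(f) \leq \sqrt[4]{\Bparam\Cparam/36}$. Since $\Bparam \geq 1/3$ and $\Cparam = \lfloor \lambda_{\min}/2 \rfloor$, there is an absolute constant $c_0 > 0$ such that $\deg(f) \leq c_0 \lambda_{\min}^{1/4}$ implies $\adeg(f) \leq \deg(f) \leq \sqrt[4]{\Bparam \Cparam/36}$. Under this condition, \Cref{thm:main} yields $D(f) \leq \Bparam^{-2} \Dparam \bs(f)^4 \leq 9\bs(f)^4$, and the chains of inequalities in \Cref{thm:main} then polynomially relate every complexity measure in the list (except sensitivity) to $\bs(f)$ and hence to each other, with constants independent of $n$, $m$, and $\lambda$.

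For the second statement, when $\lambda_{\min} \geq cn$ we simply have $n/\Cparam \leq 2/c + o(1) = O_c(1)$, so the basic inequality
\[
 D(f) \leq \Bparam^{-2} \max\!\left(\Dparam, \tfrac{n}{\Cparam}\right) \bs(f)^4
\]
from \eqref{eq:main-relation} becomes $D(f) = O_c(\bs(f)^4)$, with the hidden constant depending only on $c$ (and trivially on $m$ via $n \geq m$). No constraint on $\deg(f)$ is needed, and combining with the other lines of \Cref{thm:main} one again obtains polynomial relations among all measures, with constants that do not depend on $n$.

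There is essentially no obstacle here, since all the conceptual work has already been done: the composability of $M(\lambda)$, the values of $\Dparam$ and $\Bparam$, and the computation of $\Cparam = \lfloor \lambda_{\min}/2 \rfloor$ were all established in the paragraphs just preceding the corollary, and the main theorem is now a black box. The only point one must be a little careful about is distinguishing the two regimes: when $\lambda_{\min} = o(n)$, one genuinely needs the degree assumption to invoke condition \eqref{item:bound-adeg} (because otherwise $n/\Cparam$ blows up and the bound \eqref{eq:main-relation} degenerates), whereas when $\lambda_{\min} = \Omega(n)$ one can afford to use the weaker, unconditional bound.
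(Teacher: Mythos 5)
Your proposal is correct and follows the paper's implicit argument: the parameters $\Dparam = 1$, $\Cparam = \lfloor \lambda_{\min}/2 \rfloor$, $\Bparam \geq \Bcparam \geq 1/3$ (computed in the paragraphs immediately preceding the corollary) are plugged into \Cref{thm:main}, invoking condition~\eqref{item:bound-adeg} for the degree-restricted case and the unconditional bound~\eqref{eq:main-relation} for the balanced case $\lambda_{\min} \geq cn$. Your observation that the degree threshold has no genuine dependence on $m$ (so the constant in $O_m(\lambda_{\min}^{1/4})$ can be taken absolute) is also accurate; the paper is simply being conservative in its quantifier.
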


In fact, sensitivity can also be added to the list of polynomially related measures, as we show in \Cref{sec:sensitivity-ms}.

\subsection{Spectral notions of degree} \label{sec:domains-degree}

Analysis of Boolean functions~\cite{ODonnell} studies functions on the Boolean cube $\bits^n$ from a spectral perspective. The starting point is the Fourier expansion of a function $f\colon \bits^n \to \RR$:
\[
 f = \sum_{S \subseteq [n]} \hat{f}(S) \chi_S, \text{ where } \chi_S(x_1,\ldots,x_n) = \prod_{i \in S} (-1)^{x_i}.
\]
It is natural to partition the Fourier characters $\chi_S$ into levels according to their size, and to define
\[
 f^{=d} = \sum_{|S|=d} \hat{f}(S) \chi_S. 
\]
The degree of $f \neq 0$ is defined to be the maximal $d$ such that $f^{=d} \neq 0$. This spectral notion of degree coincides with the spatial notion of degree appearing in \Cref{sec:setup-measures}.

Stated differently, we can decompose the space of real-valued functions on $\bits^n$ into $n+1$ subspaces $V_0,\ldots,V_n$, where $V_d$ is spanned by the characters $\chi_S$ for $|S|=d$. The degree of a function $f$ is the maximal $d$ such that the projection of $f$ to $V_d$ is non-zero. Furthermore, the subspaces $V_0,\ldots,V_n$ are orthogonal.

Going in the other direction, given the spatial notion of degree appearing in \Cref{sec:setup-definition}, we can define the space $V_{\leq d}$ of all functions of degree at most $d$, and can then recover $V_d$ as the orthogonal complement of $V_{\leq d-1}$ inside $V_{\leq d}$.

\smallskip

As we explain in \Cref{sec:sensitivity-sn,sec:sensitivity-pm,sec:sensitivity-ms}, there are natural spectral notions of degree for the symmetric group, the perfect matching scheme, and multislices. In all of these cases, the space of real-valued functions on the domain decomposes into isotypic components with respect to the action of the symmetric group. These isotypic components are indexed by partitions, and they can be grouped into levels according to the maximal part. These levels are the analogs of the subspaces $V_d$ described above.

The decomposition $V_0,\ldots,V_n$ for real-valued functions on the Boolean cube can be obtained in a similar way by considering the action of the hyperoctahedral group, which is the symmetry group of the Boolean cube (as a graph or polytope); see for example \cite[\S2.8.1]{Bachoc}.

The theory can likely be extended to other homogeneous product domains, that is, product domains of the form $[m]^n$, by considering the action of the wreath product $\Sym[m] \wr \Sym[n]$.

A different way of recovering the level decomposition in some cases is using the theory of distance regular graphs~\cite{BCN} and cometric association schemes. For example, homogeneous product domains correspond to Hamming schemes, and slices correspond to Johnson schemes.

\smallskip

The partition of the Fourier characters of the Boolean cube into levels mirrors the partition of the Boolean cube itself into levels according to Hamming weight. Moreover, there is a bijection between Fourier characters and points of the Boolean cube, given by $\chi_S \leftrightarrow 1_S$. We explore similar phenomena in \Cref{apx:rsk}, using the framework of generalized permutations and the Robinson--Schensted--Knuth correspondence.

\section{Sensitivity theorems} \label{sec:sensitivity}

\Cref{thm:main} shows that many different complexity measures are all polynomially related. Sensitivity is conspicuously missing. It had long been conjectured that sensitivity can also be added to the mix, but only recently Huang~\cite{Huang} managed to prove this.

\begin{theorem}[Huang's sensitivity theorem] \label{thm:huang}
If $f$ is a function on the Boolean cube then $s(f) \geq \sqrt{\deg(f)}$.	
\end{theorem}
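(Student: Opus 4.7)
The plan is to prove Huang's sensitivity theorem by combining Huang's spectral argument with the classical reduction of Gotsman and Linial. First I would reduce to the \emph{full-degree case}: if $\deg(f) = d$, then $f$ has a nonzero Fourier coefficient at some set $T$ of size $d$, so by fixing the coordinates outside $T$ to a suitable assignment one obtains a restriction $f' \colon \bits^d \to \bits$ with $\deg(f') = d$. Since restricting the input cannot increase sensitivity, $s(f) \geq s(f')$, and so it suffices to prove $s(f') \geq \sqrt{d}$ whenever $f'$ has full degree.

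Next, I would invoke the Gotsman--Linial reduction. Given a full-degree $f \colon \bits^n \to \bits$, set $g(x) = f(x) \oplus x_1 \oplus \cdots \oplus x_n$. Writing $F = (-1)^f$, the fact that $\hat{F}([n]) \neq 0$ forces $|g^{-1}(0)| \neq 2^{n-1}$, so after possibly replacing $g$ by $1 - g$ we may assume $S := g^{-1}(0)$ satisfies $|S| > 2^{n-1}$. The crucial observation is that for any edge $(x,x')$ of the hypercube $Q_n$ with both endpoints in $S$, the parity flips between $x$ and $x'$, forcing $f(x) \neq f(x')$. Consequently the degree of $x$ in the induced subgraph $Q_n[S]$ equals $s(f,x)$, and it suffices to exhibit a vertex of $Q_n[S]$ with at least $\sqrt{n}$ neighbors in $S$.

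The heart of the argument is \emph{Huang's lemma}: every induced subgraph of $Q_n$ on more than $2^{n-1}$ vertices has maximum degree at least $\sqrt{n}$. To prove it, define a signed adjacency matrix of $Q_n$ recursively by
\[
 A_1 = \begin{pmatrix} 0 & 1 \\ 1 & 0 \end{pmatrix}, \qquad A_n = \begin{pmatrix} A_{n-1} & I \\ I & -A_{n-1} \end{pmatrix}.
\]
A direct induction using $A_{n-1}^2 = (n-1)I$ shows that $A_n^2 = nI$, so the eigenvalues of $A_n$ are $\pm\sqrt{n}$, each with multiplicity $2^{n-1}$. For any $S \subseteq V(Q_n)$ with $|S| > 2^{n-1}$, Cauchy's interlacing theorem applied to the principal submatrix $A_n[S]$ produces an eigenvalue at least $\sqrt{n}$. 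On the other hand, the spectral radius of $A_n[S]$ is bounded by its maximum absolute row sum, and the $x$-th row sum of $A_n[S]$ equals the number of $Q_n$-neighbors of $x$ that lie in $S$. Hence some vertex in $S$ has at least $\sqrt{n}$ neighbors in $S$.

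The main obstacle is the construction of the signed matrix $A_n$ with the striking property $A_n^2 = nI$: this is precisely Huang's insight, which had resisted decades of attempts on the sensitivity conjecture. Once this matrix is in hand, the Cauchy interlacing step, the Gotsman--Linial reduction, and the restriction to a subcube are all routine.
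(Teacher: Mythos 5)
Your argument is correct and is exactly the standard proof of Huang's theorem: the Gotsman--Linial reduction to the statement that any induced subgraph of $Q_n$ on more than $2^{n-1}$ vertices has a vertex of degree at least $\sqrt{n}$, followed by Huang's signed adjacency matrix $A_n$ with $A_n^2 = nI$ and Cauchy interlacing. The paper itself does not prove this theorem -- it is cited from~\cite{Huang} and used as a black box in \Cref{sec:sensitivity} -- so there is no in-paper proof to compare against; your reconstruction faithfully reproduces the argument in that reference. One small slip in wording: in the last step you should bound the spectral radius of $A_n[S]$ by the maximum \emph{absolute} row sum $\sum_j |a_{xj}|$ (which equals the number of $S$-neighbors of $x$ since entries are in $\{-1,0,1\}$), not the signed row sum; you state the correct bound in the preceding clause, so this is only a matter of phrasing.
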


In this section we show that similar sensitivity theorems hold for many of the domains considered in \Cref{sec:domains}: all product domains, the symmetric group, the perfect matching scheme, and multislices. We do so by reducing the sensitivity theorem on these domains to Huang's sensitivity theorem, using basic facts from representation theory. Our technique is unable to capture hypergraphical perfect matching domains, since the representation theory of these domains is not well-understood (this is related to the notorious Schur plethysm problem, see~\cite[Theorem A2.8]{Stanley2}; see also~\cite[\S7.8]{Kerber}).

We introduce our method in \Cref{sec:sensitivity-method}, and prove the various sensitivity theorems in the subsequent subsections.

\subsection{The method} \label{sec:sensitivity-method}

The basic idea of our method is that if a function on some domain has degree~$d$, then this is often witnessed by some pseudo-character $\chi\colon \domain \to \{-1,0,1\}$, which decomposes the domain into parts which behave like a Boolean cube of some possibly smaller dimension $\dprime$.

Given a domain $\domain$, we will say that a pseudo-character $\chi\colon \domain \to \{-1,0,1\}$ is \emph{$\bits^{\dprime}$-inducing} if the support of $\chi$ can be partitioned into subsets $C_i$ of size $2^{\dprime}$, each of them accompanied with a bijection $\phi_i\colon \bits^{\dprime} \to C_i$ satisfying:
\begin{enumerate}[(a)]
\item If $a \in \bits^{\dprime}$ and $b_1,\ldots,b_{\dprime}$ are its neighbors (at Hamming distance~$1$) then $|\phi_i(a) \setminus \phi_i(b_j)| = \chunk$, and furthermore the sets $\phi_i(a) \setminus \phi_i(b_j)$ are disjoint; we say that $\phi_i$ \emph{maps neighbors to disjoint neighbors}. \label{item:neighbors}
\item If $a,b \in \bits^{\dprime}$ are neighbors then $\chi(\phi_i(a)) = - \chi(\phi_i(b))$, that is, the restriction of $\chi$ to $C_i$ behaves like the parity character. \label{item:parity}
\end{enumerate}

If a function $f\colon \domain \to \bits$ has nontrivial correlation with a $\bits^{\dprime}$-inducing pseudo-character, then we can apply Huang's sensitivity theorem to obtain a lower bound on the sensitivity of $f$.

\begin{lemma} \label{lem:sensitivity-chi}
Let $\domain$ be a domain with chunk size $\chunk$. Suppose that a function $f\colon \domain \to \bits$ has nontrivial correlation with a $\bits^{\dprime}$-inducing pseudo-character $\chi$, that is,
\[
 \sum_{x \in \domain} f(x) \chi(x) \neq 0.
\]
Then $s(f) \geq \sqrt{\dprime}$.
\end{lemma}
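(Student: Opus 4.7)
The plan is to localize $f$ to one of the copies of the Boolean cube $\bits^{\dprime}$ embedded by the pseudo-character $\chi$, reduce to Huang's theorem on that cube, and transfer the resulting sensitivity bound back to $\domain$.

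First, because the supports $C_i$ partition $\operatorname{supp}(\chi)$, the hypothesis $\sum_{x \in \domain} f(x) \chi(x) \neq 0$ forces some index $i$ with $\sum_{x \in C_i} f(x)\chi(x) \neq 0$. Define $g\colon \bits^{\dprime} \to \bits$ by $g(a) = f(\phi_i(a))$. By property~\eqref{item:parity}, the restriction of $\chi$ to $C_i$ is $\pm 1$ times the parity character, so after composing with $\phi_i$ we get
\[
 \sum_{a \in \bits^{\dprime}} g(a)(-1)^{|a|} = \pm \sum_{x \in C_i} f(x)\chi(x) \neq 0.
\]
This is (up to sign and normalization) the top Fourier coefficient $\hat{g}([\dprime])$, so $\deg(g) = \dprime$. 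By Huang's sensitivity theorem (\Cref{thm:huang}), $s(g) \geq \sqrt{\dprime}$.

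Next I would transfer sensitivity from $g$ back to $f$. Fix $a \in \bits^{\dprime}$ witnessing $s(g)$, and let $b_1, \dots, b_k \in \bits^{\dprime}$ be the Hamming neighbors of $a$ with $g(b_j) \neq g(a)$, so $k = s(g, a)$. Setting $x = \phi_i(a)$ and $y_j = \phi_i(b_j)$, we have $f(y_j) \neq f(x)$ by the definition of $g$. By property~\eqref{item:neighbors}, $|x \setminus y_j| = \chunk$ and the sets $x \setminus y_j$ are pairwise disjoint; therefore $y_1, \dots, y_k$ witness $s(f, x) \geq k$. Consequently $s(f) \geq s(f, x) \geq s(g, a) = s(g) \geq \sqrt{\dprime}$.

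There is no real obstacle here beyond unpacking the definitions: property~\eqref{item:parity} is precisely what makes the correlation computation identify a top Fourier coefficient, and property~\eqref{item:neighbors} is precisely what guarantees that a sensitivity witness on the Boolean cube pulls back to a sensitivity witness on $\domain$ (both in terms of distance being exactly $\chunk$ and in terms of the disjointness required by the definition of sensitivity). The only mildly subtle point is the first step of choosing a single $C_i$ whose correlation is nonzero, which follows from the fact that $\chi$ vanishes outside $\bigcup_i C_i$.
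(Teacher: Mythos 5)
Your proof is correct and follows essentially the same route as the paper's: locate a single block $C_i$ carrying the nonzero correlation, pull back to $g = f\circ\phi_i$ on $\bits^{\dprime}$, note that property~\eqref{item:parity} makes the correlation a (signed) top Fourier coefficient so $\deg(g)=\dprime$, invoke Huang, and transfer the sensitive point back via property~\eqref{item:neighbors}. The only cosmetic difference is that you spell out the $\pm$ sign and the Fourier-coefficient identification slightly more explicitly than the paper does.
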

\begin{proof}
Let $C_i$ be the partition of the support of $\chi$ promised by the definition of $\bits^{\dprime}$-inducing, and let $\phi_i$ be the corresponding embedding functions. Since
\[
 \sum_{x \in \domain} f(x) \chi(x) = \sum_i \sum_{x \in C_i} f(x) \chi(x),
\]	
there must exist some part $C_i$ such that
\[
 \sum_{x \in C_i} f(x) \chi(x) \neq 0.
\]
Define a function $g\colon \bits^{\dprime} \to \bits$ by $g = f \circ \phi_i$. Since $\chi \circ \phi_i$ is the Fourier character $\chi_{\{1,\ldots,\dprime\}}$ and $\phi_i$ is a bijection, it follows that $\deg(g) = \dprime$. Huang's sensitivity theorem shows that $s(g) \geq \sqrt{\dprime}$, and in particular, there is a point $y \in \bits^{\dprime}$ which has at least $s(g)$ neighbors $z_1,\ldots,z_{s(g)} \in \bits^{\dprime}$ satisfying $g(y) \neq g(z_j)$. Since $|\phi_i(y) \setminus \phi_i(z_j)| = \chunk$ and the sets $\phi_i(y) \setminus \phi_i(z_j)$ are disjoint, this shows that $s(f) \geq s(g) \geq \sqrt{\dprime}$.
\end{proof}

In order to apply this lemma, we will need our domains to possess pseudo-characters witnessing the degrees of all functions on the domain. Specifically, we will say that a domain is \emph{$\Sparam$-witnessing} if for every $d \leq n$, the space of functions of degree at most $d$ is spanned by a collection $\chars_d$ of pseudo-characters, and every degree $d$ pseudo-character in $\chars_d$ is $\bits^{\dprime}$-inducing for some $\dprime \geq \Sparam d$.

\begin{theorem} \label{thm:sensitivity}
If a domain $\domain$ is $\Sparam$-witnessing, then every function $f\colon \domain \to \bits$ satisfies
\[
 s(f) \geq \sqrt{\Sparam \deg(f)}.
\]
\end{theorem}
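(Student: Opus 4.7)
The strategy is to find, given $f\colon\domain\to\bits$ of degree $d=\deg(f)$, a pseudo-character $\chi\in\chars_d$ with $\deg(\chi)=d$ such that $\langle f,\chi\rangle\neq 0$, and then invoke \Cref{lem:sensitivity-chi}. Because such a $\chi$ is $\bits^{\dprime}$-inducing with $\dprime\geq\Sparam d$, the lemma yields $s(f)\geq\sqrt{\dprime}\geq\sqrt{\Sparam d}$, which is exactly what is claimed.

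To locate this $\chi$, denote by $V_{\leq k}$ the space of functions of degree at most $k$. By the $\Sparam$-witnessing hypothesis, $V_{\leq d}$ is spanned by $\chars_d$ and $V_{\leq d-1}$ is spanned by $\chars_{d-1}$. Since $\deg(f)=d$, we have $f\in V_{\leq d}\setminus V_{\leq d-1}$. Expand $f=\sum_{\chi\in\chars_d}c_\chi\chi=f_{\mathrm{high}}+f_{\mathrm{low}}$, where $f_{\mathrm{high}}$ collects the terms with $\deg\chi=d$ and $f_{\mathrm{low}}$ collects the terms with $\deg\chi<d$. Each $\chi$ entering $f_{\mathrm{low}}$ has degree at most $d-1$, so $f_{\mathrm{low}}\in V_{\leq d-1}$, and therefore $f_{\mathrm{high}}\neq 0$.

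The decisive step is to extract one $\chi$ with $\langle f,\chi\rangle\neq 0$. Since $f_{\mathrm{high}}$ is a nonzero linear combination of the degree-$d$ pseudo-characters in $\chars_d$, it cannot be orthogonal to all of them (otherwise it would be orthogonal to its own span, forcing $f_{\mathrm{high}}=0$), so some such $\chi$ satisfies $\langle f_{\mathrm{high}},\chi\rangle\neq 0$. To convert this into $\langle f,\chi\rangle\neq 0$ I use that, in all domains we analyze, the degree-$d$ pseudo-characters are chosen so as to lie in the top level of the isotypic decomposition described in \Cref{sec:domains-degree}, and hence are orthogonal to $V_{\leq d-1}$. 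This kills the stray term $\langle f_{\mathrm{low}},\chi\rangle$ and yields $\langle f,\chi\rangle=\langle f_{\mathrm{high}},\chi\rangle\neq 0$, as desired.

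The main obstacle is precisely this last orthogonality of the degree-$d$ members of $\chars_d$ to $V_{\leq d-1}$: the bare definition of $\Sparam$-witnessing does not force it, but it will be ensured by the concrete constructions of pseudo-characters in \Cref{sec:sensitivity-sn}, \Cref{sec:sensitivity-pm}, and \Cref{sec:sensitivity-ms}, where they are built from the top isotypic component and so are automatically orthogonal to everything of strictly smaller degree. Once this mild structural condition is in place, the theorem is essentially a one-line consequence of Huang's sensitivity theorem packaged via \Cref{lem:sensitivity-chi}.
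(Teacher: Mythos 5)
Your proposal is correct and takes essentially the same route as the paper's own proof: find a degree-$d$ pseudo-character $\chi\in\chars_d$ with $\langle f,\chi\rangle\neq 0$ and then invoke \Cref{lem:sensitivity-chi}. The only difference is in how carefully the ``find $\chi$'' step is justified. The paper's proof is terse: it asserts that since $\langle f,f\rangle\neq 0$, $f$ is a linear combination of elements of $\chars_d$, and $\deg f=d$, some \emph{degree-$d$} $\chi$ must satisfy $\langle f,\chi\rangle\neq 0$. As stated this inference is not forced by the bare definition of $\Sparam$-witnessing: one certainly gets some $\chi\in\chars_d$ with $\langle f,\chi\rangle\neq 0$, but nothing in the abstract definition alone prevents that $\chi$ from having degree strictly less than $d$. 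Your split $f=f_{\mathrm{high}}+f_{\mathrm{low}}$, the observation that $f_{\mathrm{high}}\neq 0$ and cannot be orthogonal to all degree-$d$ elements of $\chars_d$, and the appeal to orthogonality of those elements against $V_{\leq d-1}$ are exactly what is needed to close this; you are also right that the orthogonality is not built into the definition of $\Sparam$-witnessing but does hold in every concrete instantiation (\Cref{sec:sensitivity-sn}, \Cref{sec:sensitivity-pm}, \Cref{sec:sensitivity-ms}), since there the degree-$d$ pseudo-characters sit inside the top isotypic components. In short: same approach, but your version is more explicit about the one genuinely delicate point that the paper glosses over, and you correctly flag that point as an implicit hypothesis rather than pretending it is automatic.
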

\begin{proof}
Suppose that $\deg(f) = d$. If $f = 0$ then $d = 0$, and so there is nothing to prove. Otherwise, $\sum_{x \in \domain} f(x)^2 \neq 0$, and so, since $f(x)$ is a linear combination of pseudo-characters in $\chars_d$ and has degree $d$, there must be some degree~$d$ pseudo-character $\chi \in \chars_d$ such that $\sum_{x \in \domain} f(x) \chi(x) \neq 0$. By assumption, $\chi$ is $\bits^{\dprime}$-inducing for some $\dprime \geq \Sparam d$, and so \Cref{lem:sensitivity-chi} immediately shows that $s(f) \geq \sqrt{\dprime} \geq \sqrt{\Sparam d}$.
\end{proof}

\subsection{Product domains} \label{sec:sensitivity-product}

Let us recall the definition of product domains:
\[
 H(m_1,\ldots,m_n) = \{1,\ldots,m_1\} \times \cdots \times \{1,\ldots,m_n\}.
\]
We can assume without loss of generality that $m_1,\ldots,m_n \geq 2$. We think of each element of $H(m_1,\ldots,m_n)$ as an $n$-dimensional vector $v$ whose $i$'th entry satisfies $v_i \in \{1,\ldots,m_i\}$.

By definition, a function has degree~$d$ if it is a linear combination of functions of the form
\[
 \llbracket x_{i_1} = j_1 \rrbracket \cdots \llbracket x_{i_e} = j_e \rrbracket,
\]
where $e \leq d$ and we can assume, without loss of generality, that the indices $i_t$ are all different. (Recall that $\llbracket E \rrbracket$ is the indicator variable of the condition $E$.)

We will use the identity
\[
 \llbracket x_i = j \rrbracket = \frac{1}{m_i} \sum_{j'=1}^{m_i} (\llbracket x_i = j \rrbracket - \llbracket x_i = j' \rrbracket) + \frac{1}{m_i} = \frac{1}{m_i} \sum_{j' \neq j} (\llbracket x_i = j \rrbracket - \llbracket x_i = j' \rrbracket) + \frac{1}{m_i}.
\]
This identity shows that every degree~$d$ function is a linear combination of functions of the form
\[
 (\llbracket x_{i_1} = j_1 \rrbracket - \llbracket x_{i_1} = j'_1 \rrbracket) \cdots (\llbracket x_{i_e} = j_e \rrbracket - \llbracket x_{i_e} = j'_e \rrbracket),
\]
where $e \leq d$, all indices $i_t$ are different, and $j_t \neq j'_t$. All such functions are pseudo-characters, that is, are $\{-1,0,1\}$-valued, and they form the collection $\chars_d$. Such a pseudo-character has degree~$d$ if $e = d$ (this requires a short proof, and follows from standard facts about Fourier expansion in Abelian groups).

We claim that the pseudo-character
\[
 \chi =  (\llbracket x_{i_1} = j_1 \rrbracket - \llbracket x_{i_1} = j'_1 \rrbracket) \cdots (\llbracket x_{i_d} = j_d \rrbracket - \llbracket x_{i_d} = j'_d \rrbracket)
\]
is $\bits^d$-inducing. First notice that the support of $\chi$ consists of all vectors in which $x_{i_t} \in \{j_t,j'_t\}$. The support naturally breaks into subcubes $C_y$, where the index $y$ specifies the values on all coordinates other than $i_1,\ldots,i_d$:
\[
 C_y = \{ x \in \domain : x_{i_t} \in \{j_t,j'_t\}, x_i = y_i \text{ for all other } i\}.
\]
For each $C_y$, we have to construct a bijection $\phi_y\colon \bits^d \to C_y$ that maps neighbors to disjoint neighbors and such that $\chi \circ \phi_y$ is the sign character. The bijection is very simple: for $z \in \bits^d$, $\phi_y(z)$ is the vector given by
\[
 \phi_y(z)_i =
 \begin{cases}
 	j_t & \text{if $i = i_t$ and $z_t = 0$}, \\
 	j'_t & \text{if $i = i_t$ and $z_t = 1$}, \\
 	y_i & \text{otherwise}.
 \end{cases}
\]
It is easy to check that $\phi_i$ maps neighbors to disjoint neighbors. For the other property, notice that if $z,w$ are neighbors differing in the $t$'th coordinate then $\chi(\phi_y(w))$ results from $\chi(\phi_y(z))$ by flipping the sign of the factor $\llbracket x_{i_t} = j_t \rrbracket - \llbracket x_{i_t} = j'_t \rrbracket$, and so $\chi(\phi_z(w)) = -\chi(\phi_y(w))$, as needed.

\smallskip

The foregoing shows that $\domain$ is $1$-witnessing. Applying \Cref{thm:sensitivity}, we obtain a generalization of Huang's sensitivity theorem to all product domains:

\begin{corollary} \label{cor:sensitivity-product}
If $\domain$ is a product domain and $f\colon \domain \to \bits$ then
\[
 s(f) \geq \sqrt{\deg(f)}.
\]	
\end{corollary}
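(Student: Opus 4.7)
The plan is to invoke Theorem~\ref{thm:sensitivity}: it suffices to show that any product domain $H(m_1,\ldots,m_n)$ is $1$-witnessing, after which $s(f) \geq \sqrt{1 \cdot \deg(f)} = \sqrt{\deg(f)}$ falls out immediately.

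For the spanning set $\chars_d$, I would take all pseudo-characters of the form
\[
\chi_{I,J,J'}(x) = \prod_{t=1}^{e} \bigl(\llbracket x_{i_t}=j_t\rrbracket - \llbracket x_{i_t}=j'_t\rrbracket\bigr),
\]
with distinct coordinates $i_1,\ldots,i_e$, values $j_t \neq j'_t$, and $e \leq d$. These are manifestly $\{-1,0,1\}$-valued. To check that they span all degree-$\leq d$ functions, I would rewrite each junta generator via $\llbracket x_i=j\rrbracket = \tfrac{1}{m_i} + \tfrac{1}{m_i}\sum_{j' \neq j}(\llbracket x_i=j\rrbracket - \llbracket x_i=j'\rrbracket)$ and expand the standard junta basis coordinate by coordinate.

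For the $\bits^d$-inducing property, fix a degree-$d$ pseudo-character $\chi = \chi_{I,J,J'}$ (so $e = d$). Its support consists of the points with $x_{i_t} \in \{j_t,j'_t\}$ for all $t$, and breaks into subcubes $C_y$ indexed by the restriction $y$ to coordinates outside $I$. On each $C_y$ I define $\phi_y \colon \bits^d \to C_y$ by $\phi_y(z)_{i_t} = j_t$ if $z_t = 0$, $\phi_y(z)_{i_t} = j'_t$ if $z_t = 1$, and $\phi_y(z)_i = y_i$ otherwise. Flipping a single bit $z_t$ toggles exactly one coordinate of $\phi_y(z)$, so neighbors in $\bits^d$ map to points at distance $\chunk = 1$ whose singleton symmetric differences are disjoint across $t$; and the same flip negates exactly one factor of $\chi$, so $\chi \circ \phi_y$ is the parity character.

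The step requiring the most care is confirming that a product pseudo-character with $e = d$ genuinely has degree $d$ and does not collapse to something smaller, so that it sits in $\chars_d$ at the correct level. This follows from standard Fourier analysis on the abelian group $\bigtimes_i \mathbb{Z}/m_i\mathbb{Z}$: nontrivial tensor products of characters supported on distinct coordinates remain nontrivial and split cleanly into levels according to the size of support. With that in hand, the hypothesis of Theorem~\ref{thm:sensitivity} is met with $\Sparam = 1$, and the corollary follows.
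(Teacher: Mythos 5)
Your proof is correct and follows the paper's argument essentially step for step: the same collection of tensor-product pseudo-characters $\chars_d$ obtained via the identity $\llbracket x_i=j\rrbracket = \tfrac{1}{m_i} + \tfrac{1}{m_i}\sum_{j'\neq j}(\llbracket x_i=j\rrbracket - \llbracket x_i=j'\rrbracket)$, the same subcube decomposition $C_y$, the same bijection $\phi_y$, and the same verification that neighbors map to disjoint neighbors and that $\chi \circ \phi_y$ is the parity character, yielding $1$-witnessing and hence $s(f) \geq \sqrt{\deg(f)}$ via \Cref{thm:sensitivity}. No substantive differences from the paper's proof.
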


Together with \Cref{cor:main-product}, this shows that all complexity measures considered in this paper are polynomially related for functions on product domains.

\subsection{Symmetric group} \label{sec:sensitivity-sn}

Our treatment of the symmetric group will involve basic representation theory, from a slightly unusual perspective. The representation theory of the symmetric group gives an orthogonal decomposition of the space of all real-valued functions on the symmetric group:
\[
 \mathbb{R}[\Sym] = \bigoplus_{\lambda \vdash n} V^\lambda,
\]
where $\lambda \vdash n$ means that $\lambda$ is a partition of $n$, that is, a nonincreasing sequence of positive integers summing to~$n$. The subspaces $V^\lambda$ are known as \emph{isotypic components}. Before describing how these isotypic components look like, let us mention a theorem of Ellis, Friedgut and Pilpel~\cite[Theorem 7]{EFP} relating degree and the isotypic decomposition:

\begin{theorem} \label{thm:sn-spectral}
The degree of a function $f\colon \Sym \to \mathbb{R}$ is the maximum $d$ such that $f$ has a nonzero component in an isotypic component $V^\lambda$ with $\lambda_1 = n-d$.	
\end{theorem}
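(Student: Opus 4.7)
The plan is to prove the equality $U_d = W_d$, where $U_d = \{f\colon \Sym \to \RR : \deg(f) \leq d\}$ is the space of spatial-degree-$\leq d$ functions and $W_d = \bigoplus_{\lambda \vdash n,\, \lambda_1 \geq n-d} V^\lambda$ is its spectral counterpart. The theorem is the equivalent statement that $\deg(f) = \min\{d : f \in W_d\}$.

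Both $U_d$ and $W_d$ are invariant under the two-sided action of $\Sym \times \Sym$ on $\RR[\Sym]$ defined by $((g,h) \cdot f)(\pi) = f(g^{-1} \pi h)$: for $W_d$ this is immediate from the definition of isotypic component, and for $U_d$ it holds because left- and right-multiplication by a permutation merely relabels the variables $x_{i,j} = \llbracket \pi(i)=j \rrbracket$ without changing polynomial degree. Under this two-sided action each isotypic component $V^\lambda$ is irreducible (isomorphic to $S^\lambda \boxtimes S^\lambda$), so both $U_d$ and $W_d$ are direct sums of whole isotypic components; it therefore suffices to compare their left-$\Sym$ constituents.

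For the inclusion $U_d \subseteq W_d$: the space $U_d$ is spanned by the $\Sym \times \Sym$-orbit of the single $d$-junta $f_0(\pi) = \llbracket \pi(1)=1, \ldots, \pi(d)=d \rrbracket$, since any monomial $\prod_{t=1}^{e} x_{i_t,j_t}$ with $e \leq d$ and distinct $i_t$, $j_t$ can be reached by a left-right translation of $f_0$ (after padding $e$ up to $d$ using $\sum_j x_{i,j} = 1$). Now $f_0$ is invariant under right-multiplication by the Young subgroup $\Sym[n-d] \leq \Sym$ fixing $\{1,\ldots,d\}$ pointwise, so the linear span of its left $\Sym$-orbit is isomorphic, as a left $\Sym$-module, to the permutation module $M^{(n-d, 1^d)}$. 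By Young's rule, $M^{(n-d,1^d)} = \bigoplus_\mu K_{\mu,(n-d,1^d)}\, S^\mu$, and the Kostka number $K_{\mu,(n-d,1^d)}$ is nonzero only when the $n-d$ copies of $1$ prescribed by the content can fit into the first row of an SSYT of shape $\mu$, i.e., only when $\mu_1 \geq n-d$. Hence every left constituent of $U_d$ satisfies $\mu_1 \geq n-d$, giving $U_d \subseteq W_d$.

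For the reverse inclusion $W_d \subseteq U_d$, it suffices to show that $K_{\lambda,(n-d,1^d)} \geq 1$ for every $\lambda \vdash n$ with $\lambda_1 \geq n-d$, since then $S^\lambda$ already appears in $M^{(n-d,1^d)} \subseteq U_d$, and irreducibility of $V^\lambda$ under the two-sided action forces the whole isotypic component $V^\lambda$ to lie in $U_d$. Positivity of the Kostka number reduces to the dominance $\lambda \trianglerighteq (n-d,1^d)$, which I would verify by a short partial-sum comparison: since $\sum_{i \geq 2} \lambda_i \leq d$, the partition $\lambda$ has at most $d+1$ nonzero parts, each of size $\geq 1$, and a case split on whether $\lambda$ has $<k+1$ or $\geq k+1$ parts gives the required bound $\sum_{i>k} \lambda_i \leq d-k+1$ on each partial-sum tail. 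The main obstacle is the first bookkeeping step — correctly identifying the $d$-junta subspace with the permutation module $M^{(n-d,1^d)}$ and reducing the problem to Young's rule — after which both inclusions fall out of a single combinatorial description, with the dominance check being elementary.
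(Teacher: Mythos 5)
Your proof is correct. Note, however, that the paper does not prove \Cref{thm:sn-spectral} itself; it cites it from Ellis, Friedgut and Pilpel (their Theorem~7), so there is no in-paper proof to compare against. Your route — treating $\RR[\Sym]$ as a $\Sym\times\Sym$-bimodule so that both $U_d$ and $W_d$ are sums of whole isotypic components, identifying the left orbit span of the $d$-junta $f_0$ with the permutation module $M^{(n-d,1^d)}$, and then invoking Young's rule together with the dominance criterion for positivity of Kostka numbers — is a clean and standard way to establish the cited fact, and is close in spirit to the argument in EFP.

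Two small points worth making explicit. First, the natural $\Sym$-map $M^{(n-d,1^d)} \to \RR[\Sym]$ sending a tabloid to the corresponding cylinder indicator is a priori only a homomorphism; to claim the left orbit span is \emph{isomorphic} to $M^{(n-d,1^d)}$ (which you use in the direction $W_d \subseteq U_d$, where you need $S^\lambda$ to actually appear) one should observe that the translates $\llbracket \pi(1)=j_1,\ldots,\pi(d)=j_d\rrbracket$ are linearly independent — immediate, since these events partition $\Sym$. For the direction $U_d\subseteq W_d$ the weaker fact that the left orbit span embeds in the right $\Sym[n-d]$-invariants already suffices. Second, the step from ``every left constituent of the left orbit span has $\mu_1\geq n-d$'' to ``$U_d\subseteq W_d$'' deserves one more sentence: since $f_0$ projects trivially onto any $V^\lambda$ with $\lambda_1<n-d$, the bimodule generated by $f_0$ (which is $U_d$) already lies in $W_d$. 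Neither point is a genuine gap; the argument goes through as you intend.
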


We now describe a spanning set for $V^\lambda$, which can be decoded, for example, from the treatment in Sagan~\cite[Chapter 2]{Sagan}. Let $\lambda = (\lambda_1,\ldots,\lambda_m)$. A \emph{tabloid of shape $\lambda$} consists of the numbers $1,\ldots,n$ arranged in $m$ lines, the $i$'th line containing $\lambda_i$ numbers. For example, here is a pair of tabloids of shape $(3,2)$:
\ytableausetup{tabloids}
\begin{align*}
&\begin{ytableau}
1 & 2 & 3 \\
4 & 5	
\end{ytableau}	
&\begin{ytableau}
4 & 5 & 3 \\
1 & 2	
\end{ytableau}	
\end{align*}
A pair of tabloids defines a Boolean function on $\Sym$, which indicates that the each row in the first tabloid is mapped by the input permutation to the corresponding row in the second tabloid. For example, the function corresponding to the pair of tabloids above equals $1$ whenever the input permutation $\pi$ satisfies $\pi(\{1,2,3\}) = \{4,5,3\}$ and $\pi(\{4,5\}) = \{1,2\}$; we stress that the order of numbers in each row doesn't affect the definition of the function. Let us denote the function defined in this way for a pair of tabloids $A,B$ by $e_{A,B}$.

For each pair of tabloids $A,B$ of shape $\lambda$ we define a pseudo-character $\chi_{A,B}$, as follows. We fix $A$, and consider all ways to permute the columns of $B$, forming various tabloids $B^\sigma$. We sum the functions $e_{A,B^\sigma}$, twisted by the sign of $\sigma$:
\[
 \chi_{A,B} = \sum_\sigma (-1)^\sigma e_{A,B^\sigma},
\]
where the sum is over all ways to permute the columns of $B$, and $(-1)^\sigma$ is the sign of $\sigma$. As an illustration, in our running example, representing $e_{A,B^\sigma}$ by the tabloid $B^\sigma$, the pseudo-character $\chi_{A,B}$ is equal to
\[
\begin{ytableau}
4 & 5 & 3 \\
1 & 2
\end{ytableau}
-
\begin{ytableau}
1 & 5 & 3 \\
4 & 2
\end{ytableau}
-
\begin{ytableau}
4 & 2 & 3 \\
1 & 5
\end{ytableau}
+
\begin{ytableau}
1 & 2 & 3 \\
4 & 5
\end{ytableau}
\]

The punch line is that $V^\lambda$ is spanned by the pseudo-characters $\chi_{A,B}$, where $A,B$ go over all tabloids of shape $\lambda$. Moreover, if we restrict $A,B$ to \emph{standard} tabloids, in which the numbers increase in each row and each column, then we get a basis for $V^\lambda$; our argument does not use this fact.

Accordingly, we define $\chars_d$ as follows:
\[
 \chars_d = \{ \chi_{A,B} : \text{$A,B$ are of shape $\lambda$, where $\lambda \vdash n$ and $\lambda_1 \geq n-d$} \}.
\]
\Cref{thm:sn-spectral} shows that $\deg(\chi_{A,B})=d$ if $A,B$ have shape $\lambda$, where $\lambda_1 = n-d$. Let us consider such a pseudo-character $\chi = \chi_{A,B}$. We will show that $\chi$ is $\bits^{\dprime}$-inducing for some $\dprime \geq d/2$.

\smallskip

Consider an empty tabloid of shape $\lambda$. Go over all columns, and partition each column into pairs and, if the column has odd length, a singleton. For a column of length $\ell+1$, this results in $\lceil \ell/2 \rceil \geq \ell/2$ pairs. The total number of positions beyond the first row is $d$, so in total this gives $\dprime \geq d/2$ pairs. We identify each pair with a transposition: if the corresponding entries in $A$ are $x,y$, then the transposition switches $\pi(x)$ and $\pi(y)$, an operation that we denote by $\pi^{(x\;y)}$. By construction, the transpositions are disjoint (affect disjoint sets of elements). We denote the set consisting of these transposition by $T$.

The main observation driving our approach is that if $\tau \in T$ then $\chi(\pi) = -\chi(\pi^\tau)$. This follows directly from the definition of $\chi_{A,B}$, since applying a transposition changes the sign of the column permutation. This suggests looking at the subgroup $\langle T \rangle$ generated by the transposition in $T$, and considering the cosets
\[
 C_\pi = \{ \pi^{\sigma} : \sigma \in \langle T \rangle \}.
\]
(Note that every coset is described by $2^{\dprime}$ different permutations $\pi$.) Each such coset is either entirely in the support of $\chi$, or entirely outside its support. The cosets $C_\pi$ in the support thus partition the support of $\chi$.

Let us consider some coset $C_\pi$ in the support of $T$, and write $T = \{\tau_1,\ldots,\tau_{\dprime}\}$. We define a bijection $\phi_\pi\colon \bits^{\dprime} \to C_\pi$ as follows: $\phi_\pi(x) = \pi^\sigma$, where $\sigma$ is the product of $\tau_i$ for all $i$ such that $x_i = 1$. If $x,y \in \bits^{\dprime}$ are neighbors then $\phi_\pi(x),\phi_\pi(y)$ differ by a transposition $\tau \in T$, so $\phi_\pi$ maps neighbors to neighbors. Since the transpositions in $T$ are disjoint, $\phi_\pi$ actually maps neighbors to disjoint neighbors. As already noted above, if $x,y \in \bits^{\dprime}$ are neighbors then also $\phi_\pi(x) = -\phi_\pi(y)$. This completes the proof that $\chi$ is $\bits^{\dprime}$-inducing, for some $\dprime \geq d/2$.

\smallskip

We have shown that the symmetric group is $1/2$-witnessing. Applying \Cref{thm:sensitivity}, we generalize Huang's sensitivity theorem to the symmetric group:

\begin{corollary} \label{cor:sensitivity-sn}
If $f\colon \Sym \to \bits$ then
\[
 s(f) \geq \sqrt{\deg(f)/2}.
\]
\end{corollary}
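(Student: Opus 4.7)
The plan is to apply Theorem~\ref{thm:sensitivity} with $\Sparam = 1/2$; essentially all the work has already been done in the preceding paragraphs, which show that $\Sym$ is $1/2$-witnessing. I will simply assemble the two ingredients. First, the family $\chars_d = \{\chi_{A,B} : A,B \text{ of shape } \lambda \vdash n,\ \lambda_1 \geq n-d\}$ spans the space of functions of degree at most $d$: this follows from the isotypic decomposition $\mathbb{R}[\Sym] = \bigoplus_{\lambda \vdash n} V^\lambda$ together with Theorem~\ref{thm:sn-spectral}, and from the fact that $V^\lambda$ is spanned by the pseudo-characters $\chi_{A,B}$ associated to pairs of tabloids of shape $\lambda$. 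Second, for each $\chi = \chi_{A,B}$ of degree exactly $d$, the column-pairing construction produces a set $T = \{\tau_1,\dots,\tau_{\dprime}\}$ of pairwise-disjoint transpositions with $\dprime \geq d/2$, coming from pairing entries within each column of $A$ (a column of length $\ell+1$ yields $\lceil \ell/2 \rceil \geq \ell/2$ pairs, summing to at least $d/2$ overall).

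The second ingredient certifies that $\chi$ is $\bits^{\dprime}$-inducing: the cosets of $\langle T \rangle$ restricted to the support of $\chi$ partition that support; for any base point $\pi$ in such a coset, the map $\phi_\pi(x) = \pi \cdot \prod_{i : x_i = 1} \tau_i$ is a bijection $\bits^{\dprime} \to C_\pi$; since the $\tau_i$ are disjoint transpositions, neighbors in $\bits^{\dprime}$ land at chunk-distance~$\chunk = 2$ with disjoint symmetric differences; and because applying any $\tau \in T$ to $\pi$ negates $\chi$ (this is the key algebraic observation, coming directly from the alternating sum over column permutations in the definition of $\chi_{A,B}$), the restriction $\chi \circ \phi_\pi$ equals the parity character on $\bits^{\dprime}$.

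Combining these two facts gives that $\Sym$ is $\Sparam$-witnessing with $\Sparam = 1/2$, so Theorem~\ref{thm:sensitivity} immediately yields $s(f) \geq \sqrt{\Sparam \deg(f)} = \sqrt{\deg(f)/2}$. There is no real obstacle remaining at this point; the corollary is a packaging step. The only place where one might worry is the sign-alternation claim for $\chi$ along elements of $T$, but this reduces to the observation that left-multiplication by $\tau$ permutes the column-permuted tabloids $B^\sigma$ in a sign-reversing way, which is the defining property of the pseudo-character.
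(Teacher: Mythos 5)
Your proposal is correct and reproduces the paper's argument exactly: you establish that $\Sym$ is $1/2$-witnessing via the pseudo-characters $\chi_{A,B}$, the column-pairing construction yielding at least $d/2$ disjoint transpositions, and the sign-alternation property, and then invoke \Cref{thm:sensitivity}. This is precisely how \Cref{sec:sensitivity-sn} derives the corollary, so there is nothing to compare beyond noting the match.
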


Together with \Cref{cor:main-pm}, this shows that all complexity measures considered in this paper are polynomially related for functions on the symmetric group.

\medskip

The main insight behind the proof of the sensitivity theorem for the symmetric group is that the pseudo-characters $\chi_{A,B}$ witness the degree of a function. As another illustration of this point of view, we prove the following lemma in \Cref{apx:snstar}.
In this lemma, a \emph{$t$-star} is a set of the form
\[
 \{ \pi \in \Sym : \pi(i_1) = j_1, \ldots, \pi(i_t) = j_t \}.
\]

\begin{restatable}{lemma}{snstar} \label{lem:sn-star}
Suppose that $\mathcal{F}$ is a subset of $\Sym$ contained in a $t$-star, and its characteristic function has degree at most $t$. Then either $\mathcal{F}$ is empty, or it is equal to the $t$-star.
\end{restatable}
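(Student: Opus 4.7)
The plan is downward induction on $t$. The base case $t = n-1$ is trivial since $\Gamma_{n-1} = \{\mathrm{id}\}$ has a single element. For the inductive step, assume the statement holds at $t+1$; let $f = 1_\mathcal{F}$ with $\mathcal{F} \subseteq \Gamma_t$ and $\deg f \leq t$, and after relabeling take $\Gamma_t = \{\pi : \pi(k) = k \text{ for } k \leq t\}$.

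Fix a new index $i^* = t+1$, and for every $j^* \in \{t+1, \ldots, n\}$ define the fiber $\mathcal{F}_{j^*} = \{\pi \in \mathcal{F} : \pi(i^*) = j^*\}$. Each fiber lies in a $(t+1)$-star, and its indicator is $f \cdot x_{i^*, j^*}$, a function of degree at most $t+1$. The induction hypothesis therefore forces $|\mathcal{F}_{j^*}| \in \{0, (n-t-1)!\}$ for every $j^*$.

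The crux is to show that all fibers have the same size, using a pseudo-character of degree $t+1$. Take tabloids $A, B$ of shape $\lambda = (n-t-1, 1^{t+1})$ in which $A$'s singleton rows are $\{1\}, \{2\}, \ldots, \{t\}, \{i^*\}$ and $B$'s column $1$ reads $(j^*_2, 1, 2, \ldots, t, j^*_1)$ from top to bottom, for chosen distinct $j^*_1, j^*_2 \in \{t+1, \ldots, n\}$. Unpacking $\chi_{A,B} = \sum_{\sigma \in S_{t+2}} (-1)^\sigma e_{A, B^\sigma}$ as a Laplace expansion identifies $\chi_{A,B}(\pi)$ with the determinant of the $(t+2) \times (t+2)$ matrix whose top row is $(1, \ldots, 1)$ and whose remaining rows collect the entries $x_{a_k, b_j}$. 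A short case analysis shows that, for $\pi \in \Gamma_t$, this determinant equals $+1$ when $\pi(i^*) = j^*_1$, equals $-1$ when $\pi(i^*) = j^*_2$, and vanishes otherwise.

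Because $\chi_{A,B} \in V^\lambda$ with $\lambda_1 = n - t - 1 < n - t$, \Cref{thm:sn-spectral} implies that $\chi_{A,B}$ is orthogonal to $V_{\leq t}$, so $\sum_\pi f(\pi) \chi_{A,B}(\pi) = 0$; using that $f$ is supported in $\Gamma_t$, this identity simplifies to $|\mathcal{F}_{j^*_1}| = |\mathcal{F}_{j^*_2}|$. Varying $j^*_1, j^*_2$ shows the common value is a single $c \in \{0, (n-t-1)!\}$, and since $\mathcal{F} = \bigsqcup_{j^* > t} \mathcal{F}_{j^*}$ the two possibilities for $c$ correspond precisely to $\mathcal{F} = \emptyset$ and $\mathcal{F} = \Gamma_t$. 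The main technical obstacle is the explicit determinantal evaluation of $\chi_{A,B}$ on $\Gamma_t$, which demands careful sign bookkeeping in the column-permutation sum.
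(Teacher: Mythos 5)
Your proof is correct, and it takes a genuinely different route from the one in the paper. The paper works top-down: it looks at the restrictions $f_s$ of $f$ to the nested chain of $s$-stars and shows, via a ``box-adding'' lift of pseudo-characters, that each restriction strictly drops the degree, so $\deg(f_t)\le \deg(f)-t\le 0$ and $f_t$ is a Boolean constant. You instead induct downward on $t$, peel off one coordinate $i^*$ to decompose $\mathcal F$ into fibers $\mathcal F_{j^*}$ (each contained in a $(t+1)$-star, with indicator $f\cdot x_{i^*,j^*}$ of degree $\le t+1$), invoke the inductive hypothesis on each fiber, and then prove the fiber sizes are equal by exhibiting a hook-shape pseudo-character $\chi_{A,B}$ whose restriction to $\Gamma_t$ is $\llbracket\pi(i^*)=j^*_1\rrbracket-\llbracket\pi(i^*)=j^*_2\rrbracket$ and using orthogonality of $V^\lambda$ ($\lambda_1=n-t-1<n-t$) to $\deg\le t$. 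Both arguments lean on the same machinery (\Cref{thm:sn-spectral} plus the spanning pseudo-characters $\chi_{A,B}$), but they are genuinely different in structure. Your approach proves a bit more along the way — that for any index $i^*>t$ the marginal sizes $|\mathcal F_{j^*}|$ are all equal, i.e.\ the distribution of $\pi(i^*)$ under $\mathcal F$ is uniform — which is a pleasant statement in its own right; the paper's approach is shorter and avoids any explicit sign bookkeeping in the column sum. I verified your evaluation of $\chi_{A,B}$ on $\Gamma_t$: for $\pi\in\Gamma_t$ the only column permutations $\sigma\in S_{t+2}$ that contribute to $\sum_\sigma(-1)^\sigma e_{A,B^\sigma}(\pi)$ are the identity and the transposition swapping row $0$ with row $t+1$, and these indeed give the claimed $\pm 1$ signature. (The determinant picture you use is a reasonable mnemonic for hook shapes, but the case analysis you sketch is what actually does the work, and it is correct.)
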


\subsection{Perfect matching scheme} \label{sec:sensitivity-pm}

Our treatment of the perfect matching scheme also involves representation theory. We will follow the exposition in Lindzey's Ph.D. thesis~\cite{LindzeyThesis}, using the notation $\pms$ for the collection of all perfect matchings in $K_{2n}$; see also Ceccherini-Silberstein et al.~\cite[Chapter 11]{CSST}. As in the case of the symmetric group, there is an orthogonal decomposition
\[
 \mathbb{R}[\pms] = \bigoplus_{\lambda \vdash n} V^{2\lambda},
\]
where $2\lambda$ is the partition obtained by doubling each part in $\lambda$.
(Note that $\lambda$ is a partition of $n$ rather than of $2n$.) Lindzey~\cite[Theorem 5.1.1]{LindzeyThesis} related degree to this decomposition:

\begin{theorem} \label{thm:pms-spectral}
The degree of a function $f\colon \pms \to \mathbb{R}$ is the maximum $d$ such that $f$ has a nonzero component in an isotypic component $V^{2\lambda}$ with $\lambda_1 = n-d$.
\end{theorem}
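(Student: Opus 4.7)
The plan is to mirror the strategy used for the symmetric group in \Cref{thm:sn-spectral}. Let $V_{\leq d} \subseteq \RR[\pms]$ denote the span of the indicators $\llbracket M \supseteq S \rrbracket$ as $S$ ranges over partial matchings of $K_{2n}$ with $|S| \leq d$; in our setup, the degree of $f$ is exactly the minimum $d$ with $f \in V_{\leq d}$, so the goal is to identify $V_{\leq d}$ with $\bigoplus_{\lambda \vdash n,\, \lambda_1 \geq n - d} V^{2\lambda}$.

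First I would check that $V_{\leq 0} \subseteq V_{\leq 1} \subseteq \cdots \subseteq V_{\leq n}$ is an $\Sym[2n]$-invariant filtration. For $|S| < n$ and any $v \notin V(S)$, the identity
\[
 \llbracket M \supseteq S \rrbracket = \sum_{u \notin V(S) \cup \{v\}} \llbracket M \supseteq S \cup \{uv\} \rrbracket
\]
holds because every perfect matching extending $S$ matches $v$ to some unmatched vertex; this gives $V_{\leq d-1} \subseteq V_{\leq d}$. The natural $\Sym[2n]$-action permutes partial matchings of each size, so each $V_{\leq d}$ is a submodule, and by complete reducibility $V_{\leq d} = \bigoplus_{\mu \in \Lambda_d} V^{2\mu}$ for some subset $\Lambda_d$ of the partitions of $n$. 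What remains is to show $\Lambda_d = \{\mu \vdash n : \mu_1 \geq n - d\}$.

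Next I would compute $\Lambda_d$ using the permutation module
\[
 M_d := \mathrm{Ind}_{\Sym[2n-2d] \times (\Sym[2] \wr \Sym[d])}^{\Sym[2n]} \mathbf{1},
\]
whose basis is indexed by partial matchings of size exactly $d$; the equivariant map $M_d \to \RR[\pms]$ sending $S$ to $\llbracket M \supseteq S \rrbracket$ has image $V_{\leq d}$ (by the displayed identity above it already contains $V_{\leq d-1}$). A character computation against the well-known decomposition $\RR[\pms] = \bigoplus_{\lambda \vdash n} V^{2\lambda}$ (coming from the Gelfand pair $(\Sym[2n], \Sym[2] \wr \Sym[n])$) should show the irreducible constituents of $M_d$ are exactly the $V^{2\nu}$ with $\nu_1 \geq n - d$, giving the inclusion $V_{\leq d} \subseteq \bigoplus_{\lambda_1 \geq n - d} V^{2\lambda}$. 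For the reverse inclusion, for each $\lambda$ with $\lambda_1 = n - d$ I would exhibit a nonzero element of $V^{2\lambda}$ lying in $V_{\leq d}$ by column-antisymmetrizing an appropriate tabloid indicator on partial matchings, analogous to the construction of $\chi_{A,B}$ in \Cref{sec:sensitivity-sn} but adapted to the doubled shape $2\lambda$.

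The main obstacle is the character / branching computation for $M_d$: this is a Pieri-type rule specific to the perfect matching scheme and takes some combinatorial bookkeeping to carry out. The cleanest route is to cite Lindzey's thesis~\cite{LindzeyThesis}, where this statement appears as Theorem 5.1.1 and is proved via zonal spherical functions. A self-contained inductive argument on $d$ is also possible: assuming the identity for $d-1$, the quotient $V_{\leq d}/V_{\leq d-1}$ is an $\Sym[2n]$-module whose character can be read off from $\Sym[2n]$-orbits of partial $d$-matchings modulo those already counted in $V_{\leq d-1}$, and matching it to $\sum_{\lambda_1 = n - d} \dim V^{2\lambda}$ closes the induction.
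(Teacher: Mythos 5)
The paper does not actually prove \Cref{thm:pms-spectral}: it is quoted as an external result, citing Lindzey's thesis \cite[Theorem 5.1.1]{LindzeyThesis}. Your proposal ends by identifying that citation as ``the cleanest route,'' which is exactly what the paper does, so in that sense your plan matches the paper's treatment.

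Your supplementary sketch is a reasonable outline of how one \emph{would} prove it, and the filtration $V_{\leq 0} \subseteq \cdots \subseteq V_{\leq n}$ together with the identity $\llbracket M \supseteq S \rrbracket = \sum_u \llbracket M \supseteq S \cup \{uv\} \rrbracket$ is the right starting point. One imprecision worth flagging: the statement ``the irreducible constituents of $M_d$ are exactly the $V^{2\nu}$ with $\nu_1 \geq n-d$'' is not correct as written, since $M_d$ has many constituents not of the form $S^{2\nu}$ at all (already $M_1 \cong M^{(2n-2,2)}$ contains $S^{(2n-1,1)}$). What you actually want is two weaker facts: (i) $S^{2\nu}$ occurs in $M_d$ only when $\nu_1 \geq n-d$ --- this follows since $M_d$ is a quotient of $M^{(2n-2d,2^d)}$ and Kostka positivity forces $2\nu_1 \geq 2n-2d$; and (ii) the image of $M_d \to \RR[\pms]$ is not killed on any $V^{2\lambda}$ with $\lambda_1 \geq n-d$. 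For (ii), the column-antisymmetrized tabloid functions $\chi_A$ of \Cref{sec:sensitivity-pm} are the right witnesses, but their nonvanishing is itself a nontrivial fact (it is Lindzey \cite[Theorem 5.2.6]{LindzeyThesis}), so this part of the sketch is circular unless you independently establish it. Given all this, deferring to Lindzey, as the paper does, is the sensible choice.
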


Lindzey~\cite[Theorem 5.2.6]{LindzeyThesis} gave a spanning set for $V^{2\lambda}$, in terms of tabloids of shape $2\lambda$. Given a \emph{single} tabloid $A$ of shape $2\lambda$, the function $e_A$ is the indicator of all perfect matchings in which the two vertices in each edge lie in the same row. For example, here is a tabloid of shape $2(3,2)$:
\[
 \begin{ytableau}
 1 & 2 & 3 & 4 & 5 & 6 \\ 7 & 8 & 9 & 10
 \end{ytableau}
\]
The corresponding function $e_A$ equals $1$ whenever the perfect matching consists of a perfect matching over the vertices $\{1,\ldots,6\}$ together with a perfect matching over the vertices $\{7,\ldots,10\}$.

For each tabloid $A$ of shape $2\lambda$ we define a pseudo-character $\chi_A$ by considering all ways of permuting the columns of $A$, and summing them according to the sign of $\sigma$:
\[
 \chi_A = \sum_\sigma (-1)^\sigma e_{A^\sigma},
\]
where $\sigma$ goes over all ways of permuting the columns of $A$. We define
\[
 \chars_d = \{ \chi_A : \text{$A$ is of shape $2\lambda$, where $\lambda \vdash n$ and $\lambda_1 \geq n-d$} \}.
\]
\Cref{thm:pms-spectral} shows that $\chi_A$ has degree $d$ if $A$ is of shape $2\lambda$, where $\lambda_1 = n-d$. We will show that each such pseudo-character is $\bits^{\dprime}$-inducing for some $\dprime \geq d$.

Go over the columns of $A$, and partition each column into pairs and, possible, a singleton. For a column of length $\ell+1$, this results in at least $\ell/2$ pairs. Since there are $2d$ elements beyond the first row, this results in $\dprime \geq d$ pairs. We think of each pair $(a\;b)$ as a transposition, which acts on a perfect matching by switching the mates of vertices $a$ and $b$ (if $a$ and $b$ are an edge, this has no effect). We denote the set consisting of these transpositions by $T$. By construction, if $m$ is a matching and $\tau \in T$ then $\chi_A(m^\tau) = -\chi_A(m)$. 

We can decompose $\pms$ according to the subgroup $\langle T \rangle$ generated by $T$:
\[
 C_m = \{ m^\sigma : \sigma \in \langle T \rangle \}.
\]
Each $C_m$ is either wholly contained in the support of $\chi$, or wholly outside. For every $C_m$ inside the support, we can construct a bijection $\phi_m \bits^{\dprime} \to C_m$ just as in the case of the symmetric group, concluding that $\chi_A$ is $\bits^{\dprime}$ inducing for some $\dprime \geq d$.

\smallskip

We have shown that the perfect matching scheme is $1$-witnessing. Applying \Cref{thm:sensitivity}, we generalize Huang's sensitivity theorem to the perfect matching scheme:

\begin{corollary} \label{cor:sensitivity-pm}
If $f\colon \pms \to \bits$ then
\[
 s(f) \geq \sqrt{\deg(f)}.
\]
\end{corollary}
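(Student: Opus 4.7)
The plan is to observe that the corollary follows immediately from \Cref{thm:sensitivity} once we verify that the perfect matching scheme $\pms$ is $1$-witnessing, and all the ingredients for this verification have essentially been assembled in the preceding discussion. So the main work is to stitch them together and double-check the counting and sign-flipping claims.

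First I would confirm that the collection $\chars_d = \{\chi_A : A \text{ of shape } 2\lambda,\ \lambda \vdash n,\ \lambda_1 \geq n-d\}$ spans the space of functions of degree at most $d$. By \Cref{thm:pms-spectral}, this space equals $\bigoplus_{\lambda_1 \geq n-d} V^{2\lambda}$, and each $V^{2\lambda}$ is spanned by the $\chi_A$ of shape $2\lambda$ by Lindzey's construction. The degree of $\chi_A$ equals $n - \lambda_1$ when $A$ has shape $2\lambda$, so a degree-$d$ pseudo-character in $\chars_d$ lives on a shape $2\lambda$ with $\lambda_1 = n-d$, leaving exactly $2d$ cells strictly below row~$1$.

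Next I would show that each such $\chi_A$ is $\bits^{\dprime}$-inducing for some $\dprime \geq d$. The construction is to walk through the columns of $A$ and pair cells within each column, yielding $\lfloor c_j/2 \rfloor$ pairs from a column of length $c_j$. Because the shape is $2\lambda$, the columns come in equal-length pairs, and a short calculation gives $\sum_j \lfloor c_j/2 \rfloor \geq d$. Each pair defines a transposition that swaps the two vertices' mates in a perfect matching, and all such transpositions are pairwise disjoint. This gives $\dprime \geq d$ commuting disjoint transpositions forming a set $T$, whose generated subgroup $\langle T \rangle \cong \bits^{\dprime}$ acts freely on $\pms$. The orbits $C_m = \{m^\sigma : \sigma \in \langle T\rangle\}$ partition $\pms$ into cosets of size $2^{\dprime}$, and the map $\phi_m\colon \bits^{\dprime} \to C_m$ sending $x$ to $m$ acted on by $\prod_{i: x_i=1} \tau_i$ is a bijection that maps Hamming neighbors in $\bits^{\dprime}$ to neighbors at distance exactly $\chunk = 2$ with disjoint symmetric differences (because the $\tau_i$ are disjoint transpositions).

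The step that needs the most care is verifying the sign-flipping property, namely $\chi_A(m^\tau) = -\chi_A(m)$ for every $\tau \in T$. The idea is to use the substitution $\sigma \mapsto \sigma\tau$ in the sum $\chi_A = \sum_\sigma (-1)^\sigma e_{A^\sigma}$ and observe that applying the within-column swap $\tau$ simultaneously to the matching $m$ and to the column-ordering $\sigma$ preserves the incidence relation $e_{A^{\sigma\tau}}(m^\tau) = e_{A^\sigma}(m)$, while $(-1)^{\sigma \tau} = -(-1)^\sigma$ because $\tau$ is a single transposition. Once these pieces are in place, every degree-$d$ element of $\chars_d$ is $\bits^{\dprime}$-inducing with $\dprime \geq d$, so $\pms$ is $1$-witnessing and \Cref{thm:sensitivity} yields $s(f) \geq \sqrt{\deg(f)}$, completing the proof.
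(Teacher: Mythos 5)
Your proposal follows the paper's proof step for step: verify that $\pms$ is $1$-witnessing using Lindzey's decomposition and \Cref{thm:pms-spectral}, construct the pseudo-characters $\chi_A$ from tabloids of shape $2\lambda$, obtain $\dprime \geq d$ disjoint within-column transpositions, check the sign-flipping identity $\chi_A(m^\tau) = -\chi_A(m)$, and invoke \Cref{thm:sensitivity}. The counting (via $\lfloor c_j/2 \rfloor \geq (c_j-1)/2$, so $\sum_j \lfloor c_j/2 \rfloor \geq d$) and the change-of-variables argument for the sign flip match the paper's; the observation that columns come in equal-length pairs is a harmless extra.

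One claim in your sketch is not correct as stated: you assert that $\langle T \rangle \cong \bits^{\dprime}$ ``acts freely on $\pms$'' and that the orbits $C_m$ all have size $2^{\dprime}$. The action is not free on $\pms$: as the paper itself notes parenthetically, the transposition $(a\,b) \in T$ fixes any matching containing the edge $\{a,b\}$, and products of an even number of elements of $T$ can stabilize still other matchings. The paper is more careful here — it only asserts a bijection $\phi_m \colon \bits^{\dprime} \to C_m$ for orbits $C_m$ lying \emph{inside the support of $\chi_A$} (for a single $\tau \in T$, a fixed point of $\tau$ in the support would force $\chi_A = -\chi_A = 0$ there). You should restrict the free-action claim to those orbits, matching the paper; unlike the symmetric-group case, where the transpositions act by right multiplication and the action is genuinely free everywhere, here the action is by conjugation and freeness only holds on the relevant orbits.
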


Together with \Cref{cor:main-pm}, this shows that all complexity measures considered in this paper are polynomially related for functions on the perfect matching scheme.

\subsection{Multislices} \label{sec:sensitivity-ms}

The final domain we consider is the multislice $M(\mu)$, where $\mu$ is a sequence of $m$ positive integers summing to $n$. This is the collection of all vectors in $\{1,\ldots,m\}^n$ in which exactly $\mu_i$ coordinates have the value $i$.

When $\mu$ consists of $n$ many $1$s, then $M(\mu)$ is just the symmetric group. In other words, the symmetric group is a multislice. Conversely, a mild generalization of the argument for the symmetric group applies to general multislices.

The representation theory of the symmetric group refers to multislices as \emph{permutation modules} $M^\mu$. The isotypic decomposition generalizes:
\[
 \mathbb{R}[M^\mu] = \bigoplus_{\mu \trianglelefteq \lambda} V^\lambda,
\]
where $\mu \trianglelefteq \lambda$ means that $\mu$ is dominated by $\lambda$, a partial order whose exact definition is immaterial here. We stress that the isotypic components $V^\lambda$ depend on $\mu$. The relation between degree and this decomposition was worked out by Filmus, O'Donnell and Wu~\cite[Claim 27]{FODW}:

\begin{theorem} \label{thm:ms-spectral}
The degree of a function $f\colon M^\mu \to \mathbb{R}$ is the maximum $d$ such that $f$ has a nonzero component in an isotypic component $V^\lambda$ with $\lambda = n-d$.	
\end{theorem}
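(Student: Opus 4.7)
The plan is to reduce to Theorem~\ref{thm:sn-spectral} via lifting to the symmetric group. Fix a reference element $x_0 \in M^\mu$ and let $p \colon \Sym \to M^\mu$ be the surjection $p(\sigma) = \sigma \cdot x_0$; its fibers are the right cosets of the Young subgroup $H = S_{\mu_1} \times \cdots \times S_{\mu_m}$ stabilizing $x_0$. Pullback yields an $S_n$-equivariant injection $p^* \colon \mathbb{R}[M^\mu] \hookrightarrow \mathbb{R}[\Sym]$ whose image is the space of $H$-right-invariant functions. I will prove two claims about $p^*$: it preserves the isotypic decomposition and it preserves the degree. Combining these with Theorem~\ref{thm:sn-spectral} immediately yields that $\deg_{M^\mu}(f) \leq d$ if and only if $f \in \bigoplus_{\lambda_1 \geq n-d} V^\lambda_{M^\mu}$, which is equivalent to the statement of Theorem~\ref{thm:ms-spectral}.

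Isotypic preservation is formal: $S_n$-equivariance of $p^*$ forces $p^*(V^\lambda_{M^\mu}) \subseteq V^\lambda_{\Sym}$, so applying $p^*$ to the isotypic decomposition $f = \sum_\lambda f_\lambda$ gives the isotypic decomposition of $p^*(f)$ by uniqueness, and injectivity of $p^*$ then implies $f_\lambda \neq 0$ iff $p^*(f_\lambda) \neq 0$. The easy direction of degree preservation is equally direct: the multislice indicator $\llbracket x_i = c \rrbracket$ pulls back to $\sum_{j \in x_0^{-1}(c)} \llbracket \sigma(j) = i \rrbracket$, so any degree-$d$ polynomial in the multislice variables $y_{ic}$ pulls back to a degree-$d$ polynomial in the permutation-matrix variables $x_{ji}$.

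The substantive step is the reverse inequality: from $\deg_{\Sym}(p^*(f)) \leq d$ deduce $\deg_{M^\mu}(f) \leq d$. Given a polynomial $Q(x)$ of degree at most $d$ representing $p^*(f)$, average over the right $H$-action to get $\tilde Q = |H|^{-1} \sum_{h \in H} Q \circ R_h$; averaging preserves degree, and $\tilde Q = p^*(f)$ since $p^*(f)$ is already $H$-invariant. Any nonzero multilinear monomial $\prod_k x_{j_k i_k}$ must have pairwise-distinct $j_k$'s and pairwise-distinct $i_k$'s, else it vanishes identically on $\Sym$. Its $H$-orbit-sum equals $\prod_k y_{i_k c_k}$ as a function, where $c_k = x_0(j_k)$: expanding $\prod_k y_{i_k c_k} = \sum \prod_k x_{j'_k i_k}$ over all $j'_k \in x_0^{-1}(c_k)$, terms with repeated $j'_k$'s vanish because the $i_k$'s are distinct, leaving exactly the orbit-sum. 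Rewriting $\tilde Q$ in the basis of $H$-orbit-sums thus expresses it as a degree-$\leq d$ polynomial in the $y_{ic}$'s, representing $f$ on $M^\mu$.

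The main obstacle is the descent just sketched: checking that $H$-invariant polynomials in the $x_{ji}$'s rewrite as polynomials in the $y_{ic}$'s without inflating the degree. The orbit-sum calculation handles multilinear monomials, and the permutation-matrix relations ($x_{ji} x_{j'i} = 0$ for $j \neq j'$ and $x_{ji} x_{ji'} = 0$ for $i \neq i'$) let us reduce arbitrary monomials to multilinear ones without cost. With this descent in hand, the three ingredients chain together to give the theorem.
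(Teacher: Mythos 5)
The paper does not prove this statement; it cites it from Filmus, O'Donnell and Wu~\cite[Claim 27]{FODW}, just as it cites the analogous facts for $\Sym$ (\Cref{thm:sn-spectral}) from~\cite{EFP} and for $\pms$ (\Cref{thm:pms-spectral}) from~\cite{LindzeyThesis}. So there is no internal proof to compare against, and I evaluate your argument on its own terms.

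Your reduction to the $\Sym$ case via the projection $p(\sigma)=\sigma\cdot x_0$ and the Young subgroup $H=S_{\mu_1}\times\cdots\times S_{\mu_m}$ is correct. Isotypic preservation is exactly the formal equivariance argument you give: an $S_n$-equivariant injection sends $\lambda$-isotypic to $\lambda$-isotypic, so $f_\lambda\neq 0$ iff $(p^*f)_\lambda\neq 0$, and the level decompositions on both sides are indexed by $\lambda_1$. The substantive step is the degree descent, and your orbit-sum calculation handles it correctly. After $H$-averaging, $\tilde Q$ is an $H$-invariant polynomial that still represents $p^*f$; reduce it to multilinear form using $x_{ji}^2=x_{ji}$, $x_{ji}x_{j'i}=0$, $x_{ji}x_{ji'}=0$, which forces every surviving monomial $\prod_k x_{j_k i_k}$ to have pairwise-distinct $j_k$ and pairwise-distinct $i_k$. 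Such a monomial's $H$-orbit-sum $\sum_{h\in H}\prod_k x_{h(j_k),i_k}$ equals a \emph{positive} multiple of $\prod_k y_{i_k,x_0(j_k)}$ as functions on $\Sym$, because expanding the product of the $y$'s produces exactly the tuples with $j'_k\in x_0^{-1}(c_k)$, and the non-distinct ones vanish on $\Sym$ since the $i_k$ are distinct. Two points you leave implicit but should state: the stabilizer count is what makes the scalar nonzero, so the rewriting is genuine; and the multilinear reduction commutes with the $H$-action, so the reduced $\tilde Q$ is still $H$-invariant as a polynomial expression and hence a linear combination of orbit-sums. With those, $\deg_{M^\mu}(f)=\deg_{\Sym}(p^*f)$, and the theorem follows from \Cref{thm:sn-spectral}.

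Compared with citing~\cite{FODW} directly, your route buys a uniform mechanism: it derives the multislice level-degree correspondence as a formal consequence of the $\Sym$ one via a quotient map and $H$-averaging, a pattern that would transfer to any homogeneous space $\Sym/H$ once the degree notion is set up compatibly. It is, however, still logically parasitic on \Cref{thm:sn-spectral}, which the paper also cites without proof, so the dependency graph is not shortened.
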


The spanning set described in Sagan~\cite[Chapter 2]{Sagan} and mentioned in \Cref{sec:sensitivity-sn} actually applies to arbitrary multislices. In the case of the symmetric group, we had a pair of tabloids, each filled with the numbers $1$ to $n$. In this case, the first tabloid is the same, and the second one has \emph{content} $\mu$, that is, it contains $\mu_i$ copies of the number $i$. For example, if $\mu = (3,3)$ and $\lambda = (4,2)$, then one possible pair of tabloids is
\begin{align*}
&\begin{ytableau}
1&2&3&4\\5&6	
\end{ytableau}
&\begin{ytableau}
1&1&1&2\\2&2	
\end{ytableau}
\end{align*}
The function $e_{A,B}$ corresponding to such a pair is the indicator of all vectors in which the multiset of values in the indices appearing in each row of the first tabloid is the contents of the same row in the second tabloid. For example, the pair indicated above corresponds to all vectors $x \in M(3,3)$ satisfying $\llangle x_1,x_2,x_3,x_4 \rrangle = \llangle 1,1,1,2 \rrangle$ and $\llangle x_5,x_6 \rrangle = \llangle 2,2 \rrangle$, where $\llangle \cdot \rrangle$ defines a multiset. 

We can now define $\chi_{A,B}$ just as in the case of the symmetric group:
\[
 \chi_{A,B} = \sum_\sigma (-1)^\sigma e_{A,B^\sigma},
\]
where $\sigma$ goes over all ways to permute the columns. We can define $\chars_d$ in exactly the same way:
\[
 \chars_d = \{ \chi_{A,B} : \text{$A,B$ are of shape $\lambda$, where $\lambda \vdash n$ and $\lambda_1 \geq n-d$} \}
\]
Following the exact same argument as for the symmetric group, we deduce that the multislice is $1/2$-witnessing. Applying \Cref{thm:sensitivity}, we obtain a sensitivity theorem for multislices:

\begin{corollary} \label{cor:sensitivity-ms}
If $f\colon M(\mu) \to \bits$ then
\[
 s(f) \geq \sqrt{\deg(f)/2}.
\]	
\end{corollary}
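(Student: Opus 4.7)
The plan is to invoke Theorem~\ref{thm:sensitivity} with $\Sparam = 1/2$, so it suffices to show that the multislice $M(\mu)$ is $1/2$-witnessing with the family $\chars_d = \{\chi_{A,B} : A, B \text{ of shape } \lambda, \lambda_1 \geq n-d\}$ defined just above. That $\chars_d$ spans all functions of degree at most $d$ follows from Theorem~\ref{thm:ms-spectral} together with the spanning-set description for each isotypic component $V^\lambda \subseteq \RR[M^\mu]$ recalled above. What remains is, for each $\chi = \chi_{A,B}$ of degree exactly $d$ (that is, with $\lambda_1 = n-d$), to exhibit a $\bits^{\dprime}$-inducing structure for some $\dprime \geq d/2$.

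First I would extract from the columns of $A$ a set $T$ of pairwise-disjoint transpositions on $[n]$. Since $\lambda_1 = n - d$, the cells outside the first row number $d$; partitioning each column of height $\ell+1$ into $\lceil \ell/2 \rceil$ pairs (possibly with one leftover singleton) then yields $\dprime \geq d/2$ pairs in total. I identify each pair $\{a,b\}$ with the transposition $\tau_{a,b}$ that swaps $x_a$ and $x_b$ in a vector $x \in M(\mu)$. Since distinct pairs occupy disjoint cells of $A$, these transpositions commute and generate an abelian group $\langle T \rangle \cong \bits^{\dprime}$.

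Next I would exploit the column antisymmetry of $\chi_{A,B}$: swapping two entries in the same column of $A$ has the same effect on $e_{A,B^\sigma}$ as swapping that same column of $B^\sigma$, which multiplies the defining alternating sum by $-1$. Hence $\chi(x^\tau) = -\chi(x)$ for every $\tau \in T$. Restricted to the support of $\chi$, this forces $x_a \neq x_b$ for every $(a\;b) \in T$ (otherwise $x^\tau = x$ would give $\chi(x) = 0$), so $\langle T \rangle$ acts freely there and cuts the support into orbits $C_x$ of size $2^{\dprime}$. For each orbit I set $\phi_x(y) = x^{\prod_{i \colon y_i = 1} \tau_i}$; neighbors $y, y' \in \bits^{\dprime}$ then differ by a single $\tau_i = (a\;b)$ with $x_a \neq x_b$, yielding $|\phi_x(y) \setminus \phi_x(y')| = \chunk = 2$, and the symmetric differences for different $i$ are disjoint because the transpositions in $T$ are supported on disjoint indices. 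The sign property makes $\chi \circ \phi_x$ the parity character, so $\chi$ is $\bits^{\dprime}$-inducing, and Theorem~\ref{thm:sensitivity} delivers $s(f) \geq \sqrt{\deg(f)/2}$.

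The subtle step is the free-action claim. Unlike in the symmetric group, a transposition of coordinates can fix a multislice vector whose swapped entries share a color, and such a fixed point would wreck the $\bits^{\dprime}$-orbit structure if it sat in the support of $\chi$. The sign property of $\chi_{A,B}$ automatically zeros out any such vector, but for this rescue to apply each $\tau \in T$ must be built from a pair within a column of $A$ (never from a pair within a row), which constrains the construction of $T$.
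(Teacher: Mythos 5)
Your argument follows the paper's strategy—pseudo-characters from pairs of tabloids, sign-flip under a set $T$ of disjoint column transpositions, a $\bits^{\dprime}$-inducing partition of the support, then \Cref{thm:sensitivity}—and, importantly, you make explicit the one step that the paper's phrase ``following the exact same argument as for the symmetric group'' conceals. On $\Sym$, right-translation by $\langle T\rangle$ acts freely, so the orbits automatically have size $2^{\dprime}$; on $M(\mu)$, a coordinate transposition $\tau_{a,b}$ fixes any $x$ with $x_a = x_b$, and without some further argument the maps $\phi_x$ need not be bijections onto $\bits^{\dprime}$. Your observation that the sign relation $\chi(x^\tau) = -\chi(x)$ forces any such fixed point out of the support, so that the stabilizer of any $x\in\operatorname{supp}(\chi)$ in $\langle T\rangle$ is trivial and the orbits there really do have size $2^{\dprime}$, is precisely the missing check.

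However, the justification you give for the sign-flip itself does not hold up. The statement that ``swapping two entries in the same column of $A$ has the same effect on $e_{A,B^\sigma}$ as swapping that same column of $B^\sigma$'' is not true for individual $e$'s: one has $e_{A',B}(x^{\tau_{a,b}}) = e_{(A')^{\tau}, B}(x)$, where $(A')^{\tau}$ is $A'$ with the entries $a,b$ transposed, and this is in general not equal to $e_{A', B'}(x)$ for any column transposition $B'$ of $B$. Indeed, with the definition $\chi_{A,B} = \sum_\sigma (-1)^\sigma e_{A,B^\sigma}$ antisymmetrized over column permutations of $B$, the identity $\chi(x^\tau) = -\chi(x)$ fails even on $\Sym[3]$ for $A = B$ the unique tabloid of shape $(2,1)$ and $\tau = (1\;3)$ from the first column of $A$. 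The correct version antisymmetrizes over column permutations of $A$: setting $\chi_{A,B} = \sum_\sigma (-1)^\sigma e_{A^\sigma, B}$, the relation $e_{A',B}(x^{\tau_{a,b}}) = e_{(A')^{\tau},B}(x)$ combined with $(A^\sigma)^\tau = A^{\sigma\cdot(p_a\,p_b)}$ (where $(p_a\,p_b)$ is the cell transposition within $A$'s column) re-indexes the sum by a single transposition and yields $\chi(x^\tau) = -\chi(x)$. This convention is also the natural one on $M(\mu)$, since $A$ has distinct entries and the $A^\sigma$ are therefore distinct tabloids. You have inherited this mismatch from \Cref{sec:sensitivity-sn} of the paper, which builds $T$ from columns of $A$ but antisymmetrizes $\chi_{A,B}$ in $B$; once the convention is made consistent, both your argument and the paper's go through.
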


Together with \Cref{cor:main-ms-balanced}, this shows that all complexity measures considered in this paper are polynomially related for functions on balanced multislices, and for low complexity functions on arbitrary multislices.

\section{Degree 1 functions} \label{sec:degree1}

\Cref{thm:main} shows that in a composable domain with parameters $\Dparam,\Cparam,\Bparam$ such that $\Cparam \geq 1$, a Boolean degree~$1$ function can be described by a decision tree of depth at most $\Bparam^{-2} \Dparam$ (the constant~$6$ can be removed by replacing \Cref{lem:bs-adeg} with \Cref{lem:bs-deg-1}). In many cases, we can actually say more. For example, it is a classical result that a Boolean degree~$1$ function on the Boolean cube is a \emph{dictator}, that is, depends on at most one coordinate. Ellis, Friedgut and Pilpel~\cite[Corollary 2]{EFP} extended this to the symmetric group:

\begin{restatable}{theorem}{sndictator} \label{thm:sn-dictator}
If $f\colon \Sym \to \bits$ has degree at most $1$ then either $f$ depends only on some $\pi(i)$, or it depends only on some $\pi^{-1}(j)$, where $\pi$ denotes the input permutation.	
\end{restatable}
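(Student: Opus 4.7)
The plan is to follow the two-step strategy outlined in the introduction: first characterize all nonnegative degree-$1$ functions on $\Sym$ using the Birkhoff polytope, then specialize to Boolean functions using the bound $\bs(f) \leq 1$ from \Cref{lem:bs-deg-1}. For the first step, every degree-$1$ function $g\colon \Sym \to \RR$ extends linearly to the Birkhoff polytope $B_n$, whose vertices are the permutation matrices and which is cut out, inside the affine subspace defined by the row- and column-sum equalities $\sum_j x_{ij} = \sum_i x_{ij} = 1$, by the facet inequalities $x_{ij} \geq 0$. If $g \geq 0$ on $\Sym$, convexity gives $g \geq 0$ on $B_n$, and Farkas' lemma then yields a representation $g(\pi) = \alpha + \sum_{i,j} \beta_{ij} \llbracket \pi(i) = j\rrbracket$ with $\alpha, \beta_{ij} \geq 0$.

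I would then apply this to both $f$ and $1-f$, which are nonnegative degree-$1$ functions. If $f$ is non-constant, then in any nonnegative representation of $f$ we must have $\alpha = 0$, for otherwise $f \geq \alpha > 0$ pointwise, forcing $f \equiv 1$. Writing $f(\pi) = M(\pi) := \sum_i \beta_{i,\pi(i)}$ with all $\beta_{ij} \geq 0$, the range of $M$ is thus $\{0,1\}$. Since $f$ is non-constant, some $\pi \in f^{-1}(1)$ admits a sensitive transposition, and because left- and right-multiplying the argument of $f$ by fixed permutations preserves degree, we may assume $\pi = \id$, so $\sum_i \beta_{ii} = 1$. Directly expanding $M((j\,k)\cdot\id)$ and using $\beta \geq 0$ together with $M \in \{0,1\}$ shows that a sensitive transposition $(j\,k)$ at $\id$ forces $\beta_{jj} + \beta_{kk} = 1$, $\beta_{ii} = 0$ for $i \notin \{j,k\}$, and $\beta_{jk} = \beta_{kj} = 0$.

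Now I invoke $\bs(f) \leq 1$ from \Cref{lem:bs-deg-1}: the sensitive transpositions at $\id$ form a pairwise-intersecting family of $2$-element sets, so they are either contained in a star (sharing a common element) or exactly a triangle. The triangle case is immediately ruled out because the three triangle transpositions would jointly force every $\beta_{ii}$ to vanish, contradicting $\sum_i \beta_{ii} = 1$. Hence all sensitive transpositions share a common element, which after a further relabelling of the codomain I may assume is $1$; this pins down $\beta_{11}=1$, $\beta_{ii}=0$ for $i > 1$, and applying the same expansion to the now-non-sensitive transpositions $(j\,k)$ with $j,k > 1$ also forces $\beta_{jk} = 0$ for all such $j,k$. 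Setting $a_j = \beta_{1j}$ and $b_i = \beta_{i1}$ (for $i,j > 1$), evaluating $M(\pi) \in \{0,1\}$ on permutations with $\pi(1) \neq 1$ gives $a_j + b_i \in \{0,1\}$ for every $i, j > 1$; a short case analysis shows that the simultaneous presence of a positive $a_j$ and a positive $b_i$ would force these values to be constant with $a+b = 1$, making $f \equiv 1$ contrary to non-constancy, so either all $a_j$ vanish (giving a function of $\pi^{-1}(1)$) or all $b_i$ vanish (giving a function of $\pi(1)$).

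The main obstacle is the polyhedral Step~1: one must verify carefully that, despite the redundancy among the row- and column-sum equalities defining $B_n$, every nonnegative degree-$1$ function genuinely admits a representation as a nonnegative combination of the constant and the star indicators $\llbracket \pi(i) = j\rrbracket$. Once this foundation is secured, the remaining analysis is a direct bookkeeping propelled by $\bs(f) \leq 1$ and the rigid form of the sensitive-transposition constraints.
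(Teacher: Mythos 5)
Your high-level strategy matches the paper's for the symmetric group: first show that a nonnegative degree-$1$ function on $\Sym$ admits a representation as a nonnegative combination of the indicators $x_{ij}$ (the paper proves this as \Cref{thm:sn-nonneg} via a face-based induction on the Birkhoff polytope; your Farkas'-lemma route to the same statement is also fine), and then specialize to Boolean $f$. The divergence is in the second step.

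The paper's second step is considerably simpler and does not invoke $\bs(f)\leq 1$ at all: from $f=\sum_{i,j}c_{ij}x_{ij}$ with $c_{ij}\geq 0$ and $f$ Boolean, any $c_{ij}>0$ forces $f\equiv 1$ on the star $\{x_{ij}=1\}$, so $f-x_{ij}$ is again a Boolean degree-$1$ function; by induction on the support, $f=\sum_{(i,j)\in S}x_{ij}$. The pairs in $S$ must pairwise conflict (else $f$ would exceed $1$), i.e.\ the corresponding edges of $K_{n,n}$ pairwise intersect; in a bipartite graph this immediately forces a star, giving the dictator conclusion. In the paper, \Cref{lem:bs-deg-1} is only needed for the perfect-matching scheme, to rule out the high-degree cut constraints $d_S$ with $|S|\geq 5$.

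Your second step has a genuine gap. After concluding that the sensitive transpositions at $\id$ all share a common element, relabeled as $1$, you assert that ``this pins down $\beta_{11}=1$, $\beta_{ii}=0$ for $i>1$.'' This does not follow when there is exactly one sensitive transposition, say $(1\;2)$: the expansion of $M((1\;2))=0$ only gives $\beta_{jj}=0$ for $j\geq 3$, $\beta_{12}=\beta_{21}=0$, and $\beta_{11}+\beta_{22}=1$, leaving $\beta_{11},\beta_{22}$ undetermined. For a concrete counterexample, take $f(\pi)=\llbracket\pi(1)\neq 2\rrbracket$ with the perfectly legitimate nonnegative representation $f=\tfrac12\sum_{j\neq 2}x_{1j}+\tfrac12\sum_{i\neq 1}x_{i2}$, where $\beta_{11}=\beta_{22}=1/2$. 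Then your next step, expanding $M((j\;k))$ for $j,k>1$ to get $\beta_{jk}=0$, also fails (here $\beta_{32}=1/2\neq 0$), and the final bookkeeping with $a_j=\beta_{1j}$, $b_i=\beta_{i1}$ no longer covers all the nonzero coefficients. The function in this example is still a dictator, so the theorem is of course true, but the chain of inferences as written does not establish it; the representation is not unique and the structural constraints you derived do not single out the one you assumed. The paper's argument sidesteps all of this by working directly with the observation that the coefficients are forced to be $0$ or $1$ once one peels off the stars one at a time.
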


Similarly, Filmus and Ihringer~\cite{FI1} extended the dictator result to many domains, including product domains and multislices.

Our main goal in this section is to prove a counterpart of \Cref{thm:sn-dictator} for the perfect matching scheme $\pms$:

\begin{restatable}{theorem}{pmdictator} \label{thm:pm-dictator}
If $f\colon \pms \to \bits$ has degree at most $1$ then $f$ either depends only on which vertex gets matched to some vertex $i$, or on whether the perfect matching intersects some triangle $\{i,j\},\{j,k\},\{k,i\}$.	
\end{restatable}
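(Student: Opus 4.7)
The plan is polyhedral, paralleling Ellis--Friedgut--Pilpel's treatment of the symmetric group with Edmonds' perfect matching polytope replacing the Birkhoff polytope. First I would invoke \Cref{lem:bs-deg-1} to get $\bs(f) \leq 1$. If $f$ is constant we are done; otherwise both $f$ and $1-f$ are nonzero, nonnegative, degree-1 functions on $\pms$, and it suffices to classify such functions subject to the Boolean constraint.

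Edmonds' theorem describes the perfect matching polytope $P \subseteq \RR^E$ of $K_{2n}$ by the constraints $x_e \geq 0$, the vertex-degree equalities $\sum_{e \ni v} x_e = 1$, and the odd-cut inequalities $\sum_{e \in \delta(S)} x_e \geq 1$ for every odd $S$ with $3 \leq |S| \leq 2n-3$. Since the degree-1 function $f$ extends to an affine functional nonnegative on all of $P$ (every vertex of $P$ lies in $\pms$), Farkas' lemma yields a decomposition
\[
 f = \alpha + \sum_v \alpha_v \Bigl(\sum_{e \ni v} x_e - 1\Bigr) + \sum_e \mu_e x_e + \sum_S \nu_S \Bigl(\sum_{e \in \delta(S)} x_e - 1\Bigr),
\]
with $\alpha,\mu_e,\nu_S \geq 0$ and $\alpha_v \in \RR$. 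The vertex-degree summands vanish on $\pms$ and can be dropped. Taking any matching $m_0$ with $f(m_0)=0$ forces $\alpha=0$, $\mu_e=0$ for every $e \in m_0$, and $m_0$ to cross each $S$ with $\nu_S > 0$ in exactly one edge.

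The main obstacle is ruling out contributions from odd cuts of size $\geq 5$. For such an $S^*$, perfect matchings may cross $\delta(S^*)$ in $1$, $3$, or $5$ edges, contributing $0$, $2\nu_{S^*}$, or $4\nu_{S^*}$ to $f$. The plan is to exhibit, for any hypothetical $S^*$ with $\nu_{S^*} > 0$ and $|S^*| \geq 5$, three perfect matchings achieving crossing counts $1$, $3$, $5$ of $\delta(S^*)$ that simultaneously avoid every $e$ with $\mu_e>0$ and cross every other $S$ with $\nu_S > 0$ in exactly one edge, so that all other terms in the decomposition contribute $0$ at these matchings. The Boolean constraint then forces $2\nu_{S^*} \in \{0,1\}$ and $4\nu_{S^*} \in \{0,1\}$ simultaneously, which is impossible since $\nu_{S^*}>0$. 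Constructing these three matchings, exploiting the flexibility of perfect matchings in $K_{2n}$ and handling small $n$ separately, is the technical heart of the argument.

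Once only triangle cuts remain, the term $\tfrac{1}{2}\bigl(\sum_{e \in \delta(T)} x_e - 1\bigr)$ is itself the Boolean function $\llbracket m \cap T = \emptyset \rrbracket$, which already falls under the ``depends on triangle-intersection'' family of the conclusion. It then suffices to classify functions of the form $f = \sum_e \mu_e x_e$ with $\mu_e \geq 0$. Evaluating on a matching containing a single edge of $E^+ := \{e : \mu_e > 0\}$ forces $\mu_e \in \{0,1\}$ for $e \in E^+$, and $f \in \bits$ rules out two vertex-disjoint edges in $E^+$ (otherwise a matching containing both would give $f \geq 2$). By the classical lemma that a simple graph with no two disjoint edges is either a star or a triangle, $E^+$ is a star at some vertex $i$ (giving the ``depends on the match of $i$'' form) or a triangle (giving the triangle-intersection form). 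Running the same analysis on $1-f$ yields the complements and completes the classification.
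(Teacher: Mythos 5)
Your overall strategy matches the paper's: reduce to a nonnegative degree-1 decomposition in terms of Edmonds' constraints, then use $\bs(f)\le 1$ (\Cref{lem:bs-deg-1}) to eliminate large odd cuts. Deriving the decomposition by Farkas/LP duality rather than the paper's induction-on-support argument (\Cref{thm:pm-nonneg}) is a perfectly valid alternative.

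However, there is a genuine gap in the step that eliminates odd cuts of size $\geq 5$. You propose to exhibit matchings on which $\delta(S^*)$ has crossing number $1,3,5$ while \emph{all other} terms in the decomposition vanish — every $\mu_e>0$ avoided and every other $S$ with $\nu_S>0$ crossed exactly once. This is over-constrained and can be impossible. Concretely, if $\nu_{S^*}>0$ and also $\nu_{S'}>0$ for some odd $S'\subsetneq S^*$ with $|S'|=3$, then any matching crossing $\delta(S^*)$ in $5$ edges matches every vertex of $S^*$ outside $S^*$, so in particular matches every vertex of $S'$ outside $S'$, forcing $\delta(S')$ to be crossed $3$ times, not once. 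The paper avoids this entirely: because \emph{all} coefficients in the decomposition are nonnegative, it suffices to make the single term $c_{S^*} d_{S^*}$ positive — then $f>0$, hence $f=1$ by Booleanity, with no need to control the other terms. Starting from a zero $y$ of $f$ (which automatically has $d_{S^*}(y)=0$), the paper finds two matchings $Z,W$ differing from $Y$ on disjoint pairs of edges with $d_{S^*}(Z)=d_{S^*}(W)=1$, giving $f(Z)=f(W)=1$ and hence $\bs(f,Y)\geq 2$, a contradiction. You should adopt this lighter-weight argument. Additionally, your closing classification treats the ``only edge terms'' and ``only triangle terms'' cases but does not address the mixed case $f=\sum_e \mu_e x_e + \sum_{|T|=3}\nu_T d_T$ with both sums nonempty; the paper handles this with a short case analysis of when a triangle event and an edge event (or two triangle events) can co-occur, including the exceptional case $2n=6$, which you will need as well.
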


Since a perfect matching intersects a triangle at most once, Boolean functions depending on whether the input intersects a triangle indeed have degree~$1$.

One can capture both theorems using a single formulation: if $f$ is a Boolean degree~$1$ function on $\Sym$ or on $\pms$, then there is a collection of mutually intersecting edges $E$ in the corresponding graph ($K_{n,n}$ for $\Sym$, $K_{2n}$ for $\pms$) such that $f$ depends only on whether the input perfect matching intersects $E$. Conversely, all such functions have degree~$1$.

\smallskip

Our general approach follows that of Ellis, Friedgut and Pilpel: we first describe all \emph{nonnegative} degree~$1$ functions on the perfect matching scheme, and from this deduce the characterization of Boolean degree~$1$ function.

Describing all nonnegative degree~$1$ functions on a domain is essentially the same as determining the \emph{H-representation} of the polytope whose vertices are the characteristic vectors of sets in the domain; an $H$-representation is simply a set of linear inequalities whose common solution in the polytope.

For the case of the symmetric group, the relevant polytope is the Birkhoff polytope, whose H-representation is implicitly used by Ellis et al. For the case of the perfect matching scheme, the relevant polytope is the perfect matching polytope of the complete graph, first described by Edmonds~\cite{Edmonds}.

We illustrate our method by reproving \Cref{thm:sn-dictator} in \Cref{sec:degree1-sn}. The more complicated proof of \Cref{thm:pm-dictator} appears in \Cref{sec:degree1-pm}.

\subsection{Symmetric group} \label{sec:degree1-sn}

We defined the symmetric group $\Sym$ in \Cref{sec:domains-pm} as a collection of sets over the universe $\{\{(1,i),(2,i)\} : 1 \leq i,j \leq n\}$. It will be more convenient to change the universe to the set of all pairs $(i,j)$, where $1 \leq i,j \leq n$. A permutation is then a collection of pairs which contains exactly one pair of the form $(i,\cdot)$ for each $i$, and exactly one pair of the form $(\cdot,j)$ for each $j$. We will identify such a set with its characteristic vector $x$, indexed by pairs $i,j$.

Suppose that $f\colon \Sym \to \RR$ has degree at most $1$. By definition, this means that it is a linear combination of the functions $1,x_{i,j}$. Since $1 = x_{1,1} + \cdots + x_{1,n}$, we can eliminate $1$, and deduce that $f$ can be written in the form
\[
 f = \sum_{i,j=1}^n c_{i,j} x_{i,j}.
\]
This representation is not unique: indeed, the space of all functions of degree at most~$1$ has dimension $(n-1)^2+1$, whereas here we have $n^2$ parameters. Ellis, Friedgut and Pilpel~\cite[Theorem 28]{EFP} show that if $f$ is nonnegative, then we can choose a representation in which $c_{ij} \geq 0$.\footnote{Ellis, Friedgut and Pilpel also claim a similar result for larger degrees (their Theorem 27). However, the theorem is false, as explained in~\cite{EFP-comment}.} We will show this using a (superficially) different argument, and deduce \Cref{thm:sn-dictator}.

The \emph{Birkhoff polytope} $B_n$ is the convex hull of the characteristic vectors of all permutations in $\Sym$. The Birkhoff--von~Neumann theorem shows that the polytope has the following H-representation:
\begin{align*}
&x_{i,j} \geq 0 && \text{for all } 1 \leq i,j \leq n \\
&\sum_{j=1}^n x_{i,j} = 1 && \text{for all } 1 \leq i \leq n \\
&\sum_{i=1}^n x_{i,j} = 1 && \text{for all } 1 \leq j \leq n
\end{align*}

If $f \geq 0$ on $\Sym$, then since $f$ is linear, $f \geq 0$ on the entire polytope $B_n$. Therefore $f^{-1}(0)$ is a (possibly empty) face of $B_n$. In contrast, every face of $B_n$ is specified by a set of tight inequalities. This allows us to describe all nonnegative functions on $\Sym$.

\begin{theorem} \label{thm:sn-nonneg}
If $f\colon \Sym \to \mathbb{R}$ is nonnegative then $f$ is a nonnegative linear combination of the functions $x_{ij}$.	
\end{theorem}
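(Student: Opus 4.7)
The plan is to exploit the H-representation of the Birkhoff polytope $B_n$ together with linear programming duality. A degree-$1$ function $f$ on $\Sym$ extends linearly to a function $\tilde f$ on $\RR^{n \times n}$; fix any representation $\tilde f(x) = \sum_{i,j} c_{ij} x_{ij}$. By Birkhoff--von Neumann, the permutations are exactly the vertices of $B_n$, so the assumption $f \geq 0$ on $\Sym$ together with linearity yields $\tilde f \geq 0$ throughout $B_n$.

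Now consider the LP that minimizes $\tilde f$ over $B_n$. Using the H-representation given above, the dual LP is
\[
\max \sum_i u_i + \sum_j v_j \quad \text{subject to} \quad u_i + v_j \leq c_{ij} \text{ for all } i,j,
\]
and by strong duality its optimal value equals $\min_{x \in B_n} \tilde f(x) \geq 0$. Let $(u^*,v^*)$ be an optimal dual solution and set $\lambda_{ij} := c_{ij} - u_i^* - v_j^*$, which is nonnegative by dual feasibility. Exploiting the row and column sum equalities $\sum_j x_{ij} = 1$ and $\sum_i x_{ij} = 1$ that hold on $B_n$, we rewrite, for $x \in B_n$,
\[
f(x) = \sum_{ij} c_{ij} x_{ij} = \sum_i u_i^* \sum_j x_{ij} + \sum_j v_j^* \sum_i x_{ij} + \sum_{ij} \lambda_{ij} x_{ij} = \Bigl(\sum_i u_i^* + \sum_j v_j^*\Bigr) + \sum_{ij} \lambda_{ij} x_{ij}.
\]

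To finish, I absorb the constant term into the sum. Since $\sum_{ij} x_{ij} = n$ on $B_n$, the constant $C := \sum_i u_i^* + \sum_j v_j^* \geq 0$ can be rewritten as $\frac{C}{n}\sum_{ij} x_{ij}$, so $f = \sum_{ij}\bigl(\lambda_{ij} + \tfrac{C}{n}\bigr) x_{ij}$ on $B_n$, a nonnegative linear combination of the $x_{ij}$, as claimed. There is no real obstacle: the one step requiring care is invoking strong LP duality, but this is immediate because $B_n$ is a nonempty bounded polytope, so both primal and dual optima are attained and equal.
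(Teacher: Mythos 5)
Your proof is correct, and it takes a genuinely different route from the paper's. The paper argues by induction on the support of $f$: it observes that $F = \{x \in B_n : f(x) = 0\}$ is a face of the Birkhoff polytope, hence cut out by some equations $x_{i_1,j_1} = \cdots = x_{i_m,j_m} = 0$; it then peels off a single term $m\,x_{i_1,j_1}$ (with $m$ the minimum of $f$ over $\{x_{i_1,j_1}=1\}$), reducing the support and recursing. You instead write the assignment-problem LP $\min_{x\in B_n}\tilde f(x)$, take its classical dual $\max \sum_i u_i + \sum_j v_j$ s.t.\ $u_i + v_j \le c_{ij}$, and read off the nonnegative decomposition $f = \sum_{ij}(\lambda_{ij} + C/n)\,x_{ij}$ directly from the optimal dual variables. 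Both proofs lean on the H-representation of $B_n$, but yours extracts the entire decomposition in one shot from strong duality, whereas the paper's inductive argument is more elementary in the sense of not invoking LP duality as a black box, and it is the same template the paper reuses for the perfect matching polytope (\Cref{thm:pm-nonneg}), where the extra odd-set constraints $d_S \ge 0$ fit the inductive peeling without modification. Your duality-based argument is in fact closer in spirit to the original proof of Ellis, Friedgut and Pilpel that the paper explicitly says it is giving a ``(superficially) different'' alternative to. One small stylistic note: rather than absorbing the constant $C$ via $\sum_{ij}x_{ij}=n$, you could use $1 = \sum_j x_{1j}$ (as the paper does), which keeps all the coefficients you add supported on a single row; either choice yields a valid nonnegative combination.
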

\begin{proof}
We prove the theorem by induction on the support of $f$. The base case, $f = 0$, is trivial.

Let $\min(f) = \min_{x \in \Sym} f(x)$ be the minimum value of $f$.
If $\min(f) > 0$, then we can represent $f - \min(f)$ as a nonnegative linear combination of the functions $x_{ij}$. Since $1 = x_{11} + \cdots + x_{1n}$, it follows that $f$ is also a nonnegative linear combination of the functions $x_{ij}$.

Now suppose that $\min(f) = 0$ but $f \neq 0$. Note first that since the minimum of $f$ over $B_n$ is attained at a vertex, $f$ is nonnegative on the entire polytope $B_n$. Consider the set $F = \{ x \in B_n : f(x) = 0 \}$. Since $F$ is the set of points at which the linear function $f$ is minimized, $F$ is a face of $B_n$, that is,
\[ F = B_n \cap \{ x_{i_1,j_1} = \cdots = x_{i_m,j_m} = 0 \}, \]
for some set of pairs $(i_1,j_1),\ldots,(i_m,j_m)$; note that $m \geq 1$ since $f \neq 0$.

Let $(i,j) = (i_1,j_1)$. The definition of $F$ implies that if $x_{i,j} = 1$ then $f(x) \neq 0$. Let $m$ be the minimal value of $f$ among all points satisfying $x_{i,j} = 1$ (it doesn't matter if we take the minimum over $\Sym$ or over $B_n$, since the minimum is the same). Then $f - mx_{i,j}$ is nonnegative and has smaller support, and so we can represent it as a linear combination of the functions $x_{ij}$. It follows that $f$ can be represented in the same way.
\end{proof}

We deduce \Cref{thm:sn-dictator} following the argument of Ellis, Friedgut and Pilpel.

\sndictator*

\begin{proof}
Every Boolean function is a fortiori nonnegative, and so \Cref{thm:sn-nonneg} shows that we can write
\[
 f = \sum_{i,j} c_{i,j} x_{i,j},
\]
where $c_{i,j} \geq 0$. If $c_{i,j} > 0$ then $f(x) > 0$ whenever $x_{i,j} = 1$. Since $f$ is Boolean, this shows that in fact $f(x) = 1$ whenever $x_{i,j} = 1$, and so the function $f - x_{i,j}$ is also a Boolean degree~$1$ function. A simple induction on the support thus shows that we can write
\[
 f = \sum_{(i,j) \in S} x_{i,j},
\]
for some subset $S$ of pairs. Any two pairs in $S$ must conflict, that is, cannot belong to the same permutation. As edges of $K_{n,n}$, this means that they intersect at a vertex. It is easy to check that a set of edges are pairwise intersecting if and only if they are either all of the form $(i,\cdot)$ or all of the form $(\cdot,j)$. In the former case, $f$ depends only on $\pi(i)$, and in the latter case, $f$ depends only on $\pi^{-1}(j)$.
\end{proof}

In this argument, we were lucky that the functions $x_{ij}$ are all Boolean. This won't be the case for the perfect matching scheme, a difficulty which will require an additional argument to address.

\subsection{Perfect matching scheme} \label{sec:degree1-pm}

The perfect matching scheme $\pms$, as defined in \Cref{sec:domains-pm}, is the collection of all perfect matchings of $K_{2n}$, considered as sets of edges. We will identify a perfect matching with its characteristic vector $x$, indexed by \emph{unordered} pairs $i,j$. As in the case of the symmetric group, using the identity $1 = x_{1,2} + \cdots + x_{1,2n}$ we can represent each degree~$1$ function $f$ (non-uniquely) as a linear combination
\[
 \sum_{i,j} c_{i,j} x_{i,j}.
\]

The \emph{perfect matching polytope} $P_n$ is the convex hull of the characteristic vectors of all perfect matchings in $\pms$. Edmonds~\cite{Edmonds} determined the H-representation of $P_n$ in terms of the cut functions
\[
 \delta_S(x) = \sum_{i \in S} \sum_{j \notin S} x_{i,j}.
\]
Edmonds' theorem~\cite[Chapter 25]{Schrijver} states that the H-representation of $P_n$ is:
\begin{align*}
&x_{i,j} \geq 0 && \text{for all } 1 \leq i \neq j \leq 2n \\
&\delta_{\{i\}}(x) = 1 && \text{for all } 1 \leq i \leq 2n \\
&\delta_S(x) \geq 1 && \text{for all $S \subseteq \{1,\ldots,2n\}$ such that $3 \leq |S| \leq n$ is odd}
\end{align*}
It will be more useful to replace the constraint $\delta_S(x) \geq 1$ with the equivalent constraint $d_S(x) \geq 0$, where
\[
 d_S(x) = \frac{\delta_S(x) - 1}{2}.
\]
If $x$ is a vertex of $P_n$ then $d_S(x)$ is always an integer; indeed, $|S| - \delta_S(x)$ is the number of vertices in $S$ matched inside $S$, which is always even.

We can repeat the argument of \Cref{thm:sn-nonneg}, extending it to the perfect matching scheme.

\begin{theorem} \label{thm:pm-nonneg}
If $f\colon \pms \to \mathbb{R}$ is nonnegative then $f$ is a nonnegative linear combinations of the functions $x_{i,j}$ and $d_S(x)$, where $3 \leq |S| \leq n$ is odd.	
\end{theorem}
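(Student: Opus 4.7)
The plan is to mirror the inductive proof of \Cref{thm:sn-nonneg}, replacing the Birkhoff--von~Neumann description of the Birkhoff polytope by Edmonds' H-representation of the perfect matching polytope $P_n$. I would induct on the size of the support $\{x \in \pms : f(x) > 0\}$.

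For the base case $f \equiv 0$ there is nothing to prove. If $\min_{\pms}(f) > 0$ then, using the identity $1 = x_{1,2} + x_{1,3} + \cdots + x_{1,2n}$, the function $f - \min_{\pms}(f)$ is nonnegative on $\pms$ with strictly smaller support, and induction gives the result.

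The main step is when $f \not\equiv 0$ but $\min_{\pms}(f) = 0$. By linearity, $f \geq 0$ on all of $P_n$, so $F := \{x \in P_n : f(x) = 0\}$ is the face of $P_n$ on which the linear functional $f$ is minimized. Since $f \not\equiv 0$, the face $F$ is proper, hence lies in the zero set of at least one of Edmonds' defining inequalities: either $F \subseteq \{x_{i,j} = 0\}$ for some pair $\{i,j\}$, or $F \subseteq \{d_S = 0\}$ for some odd $S$ with $3 \leq |S| \leq n$. In the first case, set $m := \min\{f(x) : x \in \pms,\, x_{i,j} = 1\} > 0$; then $f - m\, x_{i,j}$ is nonnegative on $\pms$ (checked by casing on $x_{i,j} \in \{0,1\}$), and its zero set strictly contains that of $f$. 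In the second case, $d_S$ is integer-valued and nonnegative on $\pms$, because $\delta_S(x)$ has the same parity as the odd number $|S|$, so any $x \in \pms$ with $f(x) > 0$ must satisfy $d_S(x) \geq 1$; setting $m := \min\{f(x)/d_S(x) : x \in \pms,\, d_S(x) \geq 1\} > 0$, the function $f - m\, d_S$ is nonnegative on $\pms$ with strictly larger zero set. In either case, induction applied to $f - m\, x_{i,j}$ or $f - m\, d_S$ yields the claimed nonnegative decomposition of $f$.

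The main obstacle is justifying that some defining inequality of Edmonds' system is tight throughout the proper face $F$. This is standard polyhedral geometry once Edmonds' theorem is invoked: a face of a polytope presented by an explicit H-representation is cut out by the defining affine equalities together with some nonempty subset of the defining inequalities forced to be tight, as soon as the face is proper. The other ingredient specific to $P_n$ is the integer nonnegativity of $d_S$ on vertices, which powers the ``$d_S \geq 1$'' rescaling step and follows from the parity of $\delta_S$ for odd $|S|$. Unlike the symmetric group case, the witness functions $d_S$ are not $\{0,1\}$-valued, which is why this theorem only characterizes nonnegative (rather than Boolean) degree~$1$ functions; the passage to Boolean functions \Cref{thm:pm-dictator} will require the additional input \Cref{lem:bs-deg-1}.
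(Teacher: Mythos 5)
Your proof is correct and follows essentially the same route as the paper's: induct on the size of the support, identify the face $F = \{x \in P_n : f(x) = 0\}$ of the perfect matching polytope, observe that a proper face must lie in the zero set of a defining inequality in Edmonds' H-representation, and subtract an appropriately rescaled $x_{i,j}$ or $d_S$ to strictly enlarge the zero set. One small slip in the $d_S$ case: you wrote "any $x \in \pms$ with $f(x) > 0$ must satisfy $d_S(x) \geq 1$," but that is the converse of what you need to guarantee $m > 0$; the needed implication $d_S(x) \geq 1 \Rightarrow f(x) > 0$ follows from $F \subseteq \{d_S = 0\}$, not from integrality, while integrality of $d_S$ on $\pms$ serves only to equate the condition $d_S(x) \neq 0$ with $d_S(x) \geq 1$.
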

\begin{proof}
The proof is by induction on the support of $f$. The base case, $f = 0$, is trivial. If $\min(f) > 0$, then the result follows by inducting on $f - \min(f)$, using the formula $1 = x_{1,2} + \cdots + x_{1,2n}$.

Suppose now that $\min(f) = 0$ and $f \neq 0$, and let $F = \{x \in P_n : f(x) = 0\}$. Since $F$ is the set of points minimizing the linear function $f$ over $P_n$, we see that $F$ is a face of $P_n$. Therefore $F$ is the intersection of $P_n$ with equations of the form $x_{i,j} = 0$ and $d_S(x) = 0$, the latter for odd $3 \leq |S| \leq n$. Since $f \neq 0$, there must be at least one such equation.

If the defining equations of $F$ include an equation $x_{i,j} = 0$, then this means that whenever $x_{i,j} = 1$, we must have $f(x) > 0$. Let $m$ be the minimum value of $f$ on all points at which $x_{i,j} = 1$. Then $f - mx_{i,j}$ is nonnegative and has smaller support, so can be represented inductively as a linear combination of the required form, and the same holds for $f$.

Similarly, if the defining equations of $F$ include an equations $d_S(x) = 0$, then this means that whenever $d_S(x) \geq 1$, we must have $f(x) > 0$. Let $m$ be the minimum value of $f(x)/d_S(x)$ over all points at which $d_S(x) \neq 0$. Then $f - md_S$ is nonnegative and has smaller support, so can be represented inductively in the required form, and the same holds for $f$.
\end{proof}

In order to deduce \Cref{thm:pm-dictator}, we would like to use an argument similar to that of \Cref{thm:sn-dictator}. However, direct replication of the argument fails, since the function $d_S(x)$ is only Boolean if $|S| = 3$. In order to rule out the appearance of $d_S(x)$ for larger $S$, we appeal to \Cref{lem:bs-deg-1}, which states that $\deg(f) \leq 1$ implies $\bs(f) \leq 1$.

\pmdictator*

\begin{proof}
Since the function $f$ is nonnegative, it can be written as
\[
 f = \sum_{i,j} c_{i,j} x_{i,j} + \sum_S c_S d_S,
\]
where $c_{i,j},c_S \geq 0$, and the second sum is over all sets $S$ whose size is odd and satisfies $3 \leq |S| \leq n$.

If $f = 1$ then the theorem is trivial. We will show that if $f \neq 1$ then $c_S = 0$ whenever $|S| \geq 5$.

Suppose, to the contrary, that $c_S > 0$ for $S = \{1,\ldots,2k+1\}$, where $k \geq 2$. Since $f \neq 1$, there is some perfect matching $y$ at which $f(y) = 0$, and so $d_S(y) = 0$. This means that $y$ matches $2k$ of the vertices of $S$, say $\{1,2\},\ldots,\{2k-1,2k\} \in Y$, where $Y$ is the perfect matching whose characteristic vector is $y$. The vertex $2k+1$ is matched to some other vertex, say $2k+2$. Since $2n \geq 2|S| \geq 4k+2 \geq 2k+6$, the perfect matching $Y$ must contain at least two more edges, say $\{2k+3,2k+4\}$ and $\{2k+5,2k+6\}$. So
\[
 \{1,2\},\ldots,\{2k+5,2k+6\} \in Y.
\]
We construct two new perfect matchings:
\begin{align*}
Z &= Y \setminus \{\{1,2\},\{2k+3,2k+4\}\} \cup \{\{1,2k+3\},\{2,2k+4\}\}, \\
W &= Y \setminus \{\{3,4\},\{2k+5,2k+6\}\} \cup \{\{3,2k+5\},\{4,2k+6\}\}.	
\end{align*}
The characteristic vectors $y,z$ of these perfect matchings satisfy $d_S(z) = d_S(w) = 1$, and so $f(z), f(w) > 0$, implying that $f(z) = f(w) = 1$. This shows that $s(f,y) \geq 2$, contradicting the bound $\bs(f) \leq 1$ given by \Cref{lem:bs-deg-1}.

\smallskip

It follows that $f$ is a nonnegative linear combination
\[
 f = \sum_{i,j} c_{i,j} x_{i,j} + \sum_{|S|=3} c_S d_S.
\]
If $c_{i,j} > 0$ then $f(x) > 0$ and so $f(x) = 1$ whenever $x_{i,j} = 1$, implying that $f - x_{i,j}$ is still a Boolean degree~$1$ function. The same holds for $c_S$, since $d_S$ is Boolean when $|S|=3$. A simple induction shows that we can write $f$ as a sum of $x_{i,j}$ and $d_S$. Since $f$ is Boolean, in this sum no two terms can equal~$1$ at the same time.

The term $x_{i,j}$ equals $1$ if the perfect matching contains the edge $\{i,j\}$; we will denote this ``edge event'' by $E_{i,j}$. The term $c_{i,j,k}$ equals $1$ if the perfect matching contains none of the edges $\{i,j\},\{j,k\},\{k,i\}$; we will denote this ``triangle event'' by $\Delta_{i,j,k}$. To complete the proof, we need to consider when two such events, of either type, cannot co-occur. Note that triangle events are only relevant when $2n \geq 6$ (since $|S| \leq n$).

\smallskip

We start by considering two triangle events $\Delta_{S_1},\Delta_{S_2}$. There are three cases to consider, depending on the size of the intersection $S_1 \cap S_2$:
\begin{description}
\item[$S_1,S_2$ disjoint:] Say $S_1 = \{1,2,3\}$ and $S_2 = \{4,5,6\}$. There is a perfect matching containing the edges $\{1,4\},\{2,5\},\{3,6\}$, in which both events occur.
\item[$S_1,S_2$ share an edge:] Say $S_1 = \{1,3,4\}$ and $S_2 = \{2,3,4\}$. There is a perfect matching containing the edges $\{1,2\},\{3,5\},\{4,6\}$, in which both events occur.
\item[$S_1,S_2$ share a vertex:] Say $S_1 = \{1,2,5\}$ and $S_2 = \{3,4,5\}$. There is a perfect matching containing the edges $\{1,3\},\{2,4\},\{5,6\}$, in which both events occur.
\end{description}

This means that the sum representing $f$ contains at most one triangle event.

We move on to consider a triangle event $\Delta_S$ and an edge event $E_T$. Again there are three cases to consider, depending on the size of the intersection $S \cap T$:
\begin{description}
\item[$S,T$ disjoint:] Say $S = \{1,2,3\}$ and $T = \{4,5\}$. Suppose first that $2n \geq 8$. In this case there is a perfect matching containing the edges $\{1,6\},\{2,7\},\{3,8\},\{4,5\}$, in which both events occur. \\
When $2n = 6$, the two events cannot co-occur. A simple case analysis shows that $\Delta_{1,2,3} \lor E_{4,5}$ is equivalent to $E_{1,6} \lor E_{2,6} \lor E_{3,6}$.
\item[$S,T$ share a vertex:] Say $S = \{1,2,3\}$ and $T = \{3,4\}$. There is a perfect matching containing the edges $\{1,5\},\{2,6\},\{3,4\}$, in which both events occur.
\item[$S,T$ share an edge:] Say $S = \{1,2,3\}$ and $T = \{1,2\}$. In this case the events are clearly mutually exclusive. If both of them \emph{do not occur} then the perfect matching doesn't contain $\{1,2\}$ but contains either $\{1,3\}$ or $\{2,3\}$, and vice versa: if the perfect matching contains either $\{1,3\}$ or $\{2,3\}$ then both events do not occur. Therefore $\Delta_{1,2,3} \lor E_{1,2}$ is equivalent to $E_{3,4} \lor \cdots \lor E_{3,2n}$.
\end{description}

This means that the sum representing $f$ either consists of a single triangle event, or otherwise is equivalent to another sum consisting only of edge events. In the latter case, the edges must pairwise intersect, and so are either all adjacent to a single vertex, or else form a triangle.
\end{proof}

\section{Intersecting families} \label{sec:intersecting}

Consider some domain $\domain$. A subset $\mathcal{F} \subseteq \domain$ is \emph{$t$-intersecting} if any two sets $S_1,S_2 \in \mathcal{F}$ have at least $t$ elements in common: $|S_1 \cap S_2| \geq t$. If $t = 1$, then we call $\mathcal{F}$ an \emph{intersecting} family. There is also a bipartite version: two subsets $\mathcal{F}_1,\mathcal{F}_2 \subseteq \domain$ are \emph{cross-$t$-intersecting} (or \emph{cross-intersecting} when $t=1$) if any $S_1 \in \mathcal{F}_1$ and $S_2 \in \mathcal{F}_2$ have at least $t$ elements in common.

Erd\H{o}s, Ko and Rado~\cite{EKR} determined the maximum size of an intersecting family for the domain $\binom{[n]}{k}$, which consists of all subsets of $\{1,\ldots,n\}$ of size $k$. This domain is similar to, but not identical with, the multislice $M(k,n-k)$ (the difference is that we consider $\overline{S_1 \triangle S_2}$ instead of $S_1 \cap S_2$). Ahlswede and Khachatrian~\cite{AK1,AK2} extended this to arbitrary $t$ (for the same domain), and also considered~\cite{AK3} \emph{$t$-agreeing} families in $\{1,\ldots,m\}^n$, which are just $t$-intersecting families in $H(m,\ldots,m)$ ($n$ many copies).

Ellis, Friedgut and Pilpel~\cite{EFP} studied $t$-intersecting families in the symmetric group, following earlier work on the case $t=1$~\cite{DezaFrankl,CameronKu,LaroseMalvenuto}. They showed that for every fixed $t$ and large enough $n$ (large enough in terms of $t$), the maximum size of a $t$-intersecting family is $(n-t)!$, matching the size of \emph{$t$-stars}, which are families of the form
\[
 \{ \pi \in \Sym: \pi(i_1) = j_1, \ldots, \pi(i_t) = j_t \}.
\]
Furthermore, they claimed that for large enough $n$, the $t$-stars are the only $t$-intersecting families of size $(n-t)!$. Unfortunately, their proof is wrong when $t \geq 2$, as pointed out in~\cite{EFP-comment}, although the result does follow from subsequent work of Ellis~\cite{Ellis2}. The argument of Ellis actually proves a much stronger structural result, and is quite complicated. Our goal in this section is to present an alternative proof of this property, known as \emph{uniqueness}, using the connection between degree and certificate complexity.

\smallskip

Ellis et al.\ also studied cross-$t$-intersecting families in the symmetric group, showing that for every fixed $t$ and large enough $n$, the maximum product of sizes of two cross-$t$-intersecting families is $(n-t)!^2$. They claimed that the only extremal examples are when the two families are the same $t$-star. Again the argument is flawed, but the result follows from the work of Ellis. Our technique also applies to this result.

Our arguments apply to other domains as well, such as the perfect matching scheme, simplifying the characterization of maximum size $t$-intersecting families of perfect matchings due to Lindzey~\cite{lindzey2018intersecting,LindzeyThesis}.

\subsection{Main result} \label{sec:intersecting-result}

\paragraph{Setup} Our result is stated in terms of two parameters of a domain $(\domain,\universe,n)$.

\subparagraph{Maximum size of links} For any $S \subseteq \universe$, the \emph{link} of $S$ is
\[
 \domain_S = \{ T : T \in \domain, T \supseteq S \}. 
\]
(We typically consider only non-empty links.) A link $\domain_S$ is a \emph{$t$-link} if $|S|=t$.

We define $\Lparam_t$ to be the maximum size of a $t$-link.

\subparagraph{Intersection bound} Suppose that $x \in \domain$ is a set that $t$-intersects all $y \in \domain$ containing some partial input $C$ (in our case, $C$ will be a $1$-certificate of a $t$-intersecting family). One way in which this can happen is if $C$ itself $t$-intersects $x$, but this can fail if $C$ is very large. For example, in the symmetric group, the identity permutation intersects the single permutation containing $\{1,2\},\{2,3\},\ldots,\{n-1,1\}$. Intersection bounds are bounds on the size of $C$ which guarantee that this strange situation does not happen.

For any integer $t \geq 1$, the \emph{intersection bound} $\Iparam_t$ is the maximal value such that for any $x \in \domain$ and any partial input $C$ of size at most $\Iparam_t$, if $x$ $t$-intersects all total inputs extending $C$ then $|x \cap C| \geq t$.

\medskip

We can now state the main result.

\begin{theorem} \label{thm:intersecting}
Consider a domain $(\domain,\universe,n)$ with parameters $\Lparam_t,\Iparam_t$.

If $\mathcal{F}_1,\mathcal{F}_2 \subseteq \domain$ are cross-$t$-intersecting, and the characteristic function $f_1\colon \domain \to \bits$ of $\mathcal{F}_1$ satisfies $C(f_1) \leq \Iparam_t$, then either $\mathcal{F}_1$ is contained in a $t$-link, or the size of $\mathcal{F}_2$ can be bounded:
\[
 |\mathcal{F}_2| \leq \binom{C(f_1)}{t} C(f_1) \Lparam_{t+1}.
\]
\end{theorem}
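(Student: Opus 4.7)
The plan is to exploit certificates of elements of $\mathcal{F}_1$ to push every element of $\mathcal{F}_2$ into a small link, and then count. I would assume that $\mathcal{F}_1$ is not contained in any $t$-link, since otherwise the conclusion is trivial.

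First, pick any $x \in \mathcal{F}_1$ together with a $1$-certificate $C_x \subseteq x$ of size $|C_x| \leq C(f_1) \leq \Iparam_t$. Since every $z \in \domain$ extending $C_x$ automatically lies in $\mathcal{F}_1$, every $y \in \mathcal{F}_2$ must $t$-intersect every such extension. The definition of $\Iparam_t$ then forces $|y \cap C_x| \geq t$, so $y \supseteq S$ for some $t$-subset $S \subseteq C_x$. Choosing one such $S = S(y)$ for each $y$ partitions $\mathcal{F}_2 = \bigsqcup_{S} \mathcal{F}_2^S$ into at most $\binom{|C_x|}{t} \leq \binom{C(f_1)}{t}$ parts indexed by $t$-subsets of $C_x$.

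Second, for each such $S$, the hypothesis that $\mathcal{F}_1 \not\subseteq \domain_S$ supplies some $x' \in \mathcal{F}_1$ with $S \not\subseteq x'$, and in particular $|S \cap x'| \leq t-1$. I would reapply the certificate argument to $x'$: fix a $1$-certificate $C_{x'} \subseteq x'$ of size $\leq C(f_1) \leq \Iparam_t$ and deduce $|y \cap C_{x'}| \geq t$ for every $y \in \mathcal{F}_2$. Since $|S \cap C_{x'}| \leq |S \cap x'| \leq t-1$, each $y \in \mathcal{F}_2^S$ must contain some element $b_y \in C_{x'} \setminus S$, so $y \in \domain_{S \cup \{b_y\}}$, a $(t+1)$-link of size at most $\Lparam_{t+1}$. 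Counting over the at most $|C_{x'}| \leq C(f_1)$ possible values of $b_y$ gives $|\mathcal{F}_2^S| \leq C(f_1)\,\Lparam_{t+1}$, and summing over $S$ yields the claimed bound.

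The main subtlety is simply keeping straight the two uses of the intersection bound: one applied with the certificate $C_x$ of an arbitrary $x \in \mathcal{F}_1$, to obtain a $t$-subset of $C_x$ inside each $y$; the other applied with certificates $C_{x'}$ of carefully chosen witnesses $x'$ to $\mathcal{F}_1 \not\subseteq \domain_S$, to pin down an extra coordinate $b_y$ in $y$ beyond $S$. Everything else is bookkeeping, and the product $\binom{C(f_1)}{t} \cdot C(f_1) \cdot \Lparam_{t+1}$ drops out cleanly.
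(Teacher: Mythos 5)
Your proof is correct and takes essentially the same approach as the paper's: both fix a $1$-certificate of some element of $\mathcal{F}_1$ and use the intersection bound to pin a $t$-subset $S$ of that certificate inside each $y\in\mathcal{F}_2$, then use the non-containment in $\domain_S$ to produce a second certificate $C_{x'}$ and an extra element of $C_{x'}\setminus S$ in $y$, covering $\mathcal{F}_2$ by at most $\binom{C(f_1)}{t}C(f_1)$ many $(t+1)$-links. The only cosmetic difference is that you derive the extra element by the cardinality count $|y\cap C_{x'}|\geq t > t-1 \geq |S\cap C_{x'}|$, whereas the paper picks a $t$-subset $T\subseteq y\cap C_{x'}$ and notes $T\neq S$; these are the same observation.
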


Before proving this result, let us indicate why it is useful in the case of the symmetric group (the case of the perfect matching scheme is similar), concentrating on the case $\mathcal{F}_1 = \mathcal{F}_2 = \mathcal{F}$. Using a spectral argument, Ellis et al.~\cite{EFP} showed that if $|\mathcal{F}| = (n-t)!$ and $n$ is large enough, then $\deg(f) \leq t$, which by \Cref{cor:main-pm} implies that $C(f) = O_t(1)$. Therefore either $\mathcal{F}$ is a $t$-star, or it has size $O((n-t-1)!)$, which for large enough $n$ is smaller than $(n-t)!$.

\begin{proof}
We can assume that $\mathcal{F}_2$ is non-empty, since otherwise the result is trivial.

Suppose that $\mathcal{F}_1$ is not contained in any $t$-link. In particular, it is not empty. Let $A$ be an arbitrary $1$-certificate of $f$. Any set $x \in \mathcal{F}_2$ $t$-intersects all total inputs extending $A$, and so, since $|A| \leq C(f_1) \leq \Iparam_t$, we see that $|x \cap A| \geq t$, and in particular $|A| \geq t$.

Let $S \subseteq A$ be an arbitrary subset of $A$ of size $t$. Since $\mathcal{F}_1 \not\subseteq \domain_S$ by assumption, there must be some $x_S \in \mathcal{F}_1$ which doesn't contain $S$. Let $B_S$ be a $1$-certificate for $x_S$. Since $x_S \not\supseteq S$ but $x \supseteq B_S$, necessarily $B_S \not\supseteq S$.

Now consider an arbitrary $x \in \mathcal{F}_2$. Since $x$ $t$-intersects all total inputs extending $A$ and $|A| \leq C(f_1) \leq \Iparam_t$, $x$ must contain some subset $S$ of $A$ of size $t$. Similarly, $x$ must contain some subset $T$ of $B_S$ of size $t$. Since $B_S$ doesn't contain $S$, the subset $T$ must contain some element $e \notin S$. By construction, $x$ belongs to the $(t+1)$-link $\domain_{S \cup \{e\}}$.

There are $\binom{|A|}{t} \leq \binom{C(f_1)}{t}$ choices for $S$ and $|B_S| \leq C(f_1)$ choices for $e$, and so $\mathcal{F}_2$ is covered by $\binom{C(f_1)}{t} C(f_1)$ many $(t+1)$-links, of total size at most $\binom{C(f_1)}{t} C(f_1) \Lparam_{t+1}$.
\end{proof}


In \Cref{sec:intersecting-ib}, we calculate the intersection bounds of the various domains considered in \Cref{sec:domains}. In \Cref{sec:intersecting-spectral} we describe the spectral method used by Ellis, Friedgut and Pilpel~\cite{EFP} and by Lindzey~\cite{lindzey2018intersecting,LindzeyThesis}, and show how to apply \Cref{thm:intersecting} in this setting.

\subsection{Intersection bounds} \label{sec:intersecting-ib}

In this section, we compute or bound the intersection bounds for all domains considered in \Cref{sec:domains}.

\paragraph{Product domains} Recall that the product domain $H(m_1,\ldots,m_n)$ consists of all sets of the form
\[
 \{ (1,j_1), \ldots, (n,j_n) \}, \text{ where } j_i \in \{1,\ldots,m_i\}.
\]
To avoid trivialities, we assume that $m_1,\ldots,m_n \geq 2$.

\begin{lemma} \label{lem:intersection-bounds-product}
The intersection bounds of $H(m_1,\ldots,m_n)$ are $\Iparam_t = \infty$ for all $t$.	
\end{lemma}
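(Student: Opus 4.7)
The plan is to unpack the definition of intersection bound and show that the required implication holds in the product domain \emph{regardless} of the size of $C$, so that no upper bound on $|C|$ is needed.

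First I would set up notation. A total input $x \in H(m_1,\ldots,m_n)$ has the form $x = \{(i,x_i) : i \in [n]\}$ with $x_i \in \{1,\ldots,m_i\}$. Because every query of the product domain is of the form $\{(i,1),\ldots,(i,m_i)\}$ and intersects each total input exactly once, any partial input $C$ contains at most one pair per coordinate. Hence we may write $C = \{(i,c_i) : i \in I\}$ for some $I \subseteq [n]$ and values $c_i \in \{1,\ldots,m_i\}$.

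Next I would establish the contrapositive of the statement to be proved. Fix $x \in \domain$ and suppose that $|x \cap C| < t$, i.e.\ $|\{ i \in I : x_i = c_i \}| < t$. I will construct a total input $y \supseteq C$ with $|x \cap y| < t$, which contradicts the assumption that $x$ $t$-intersects all total inputs extending $C$. Define $y = \{(i,y_i) : i \in [n]\}$ by
\[
 y_i = \begin{cases} c_i & \text{if } i \in I, \\ \text{any element of } \{1,\ldots,m_i\} \setminus \{x_i\} & \text{if } i \notin I. \end{cases}
\]
Since $m_i \geq 2$ for every $i$, such a choice exists for $i \notin I$, so $y \in \domain$ and $y \supseteq C$. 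By construction $y_i = x_i$ can happen only when $i \in I$ and $c_i = x_i$, so $|x \cap y| = |\{i \in I : x_i = c_i\}| = |x \cap C| < t$, contradicting the hypothesis.

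Therefore the implication ``$x$ $t$-intersects all extensions of $C$ $\Rightarrow$ $|x \cap C| \geq t$'' holds for partial inputs $C$ of \emph{arbitrary} size, so the intersection bound $\Iparam_t$ may be taken arbitrarily large. There is no real obstacle here; the only thing to watch is that the construction of $y$ uses the assumption $m_i \geq 2$ (which is made at the start of the discussion of product domains).
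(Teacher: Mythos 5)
Your proof is correct and follows essentially the same construction as the paper: extend $C$ to a total input $y$ that disagrees with $x$ on every coordinate not constrained by $C$ (possible since each $m_i \geq 2$), which forces $|x \cap y| = |x \cap C|$. You merely phrase it as a contrapositive rather than directly, but the key idea is identical.
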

\begin{proof}
Suppose that $C$ is a partial input, and that $x$ is an input that $t$-intersects all extensions of $C$. We can think of $x$ as a vector $x_1,\ldots,x_n$, where $x_i \in \{1,\ldots,m_i\}$. Similarly, $C$ is a partial vector. 	We extend $C$ to an input $y$ according to the following rule: if $C_i$ is undefined, we choose $y_i$ to be some element different from $x_i$; this is possible since $m_i \geq 2$. This guarantees that $|x \cap C| = |x \cap y| \geq t$.
\end{proof}

\paragraph{Perfect matching domains} Recall that a perfect matching domain $P(n; \lambda)$ is given by an integer $n \geq 2$ and a sequence $\lambda = \lambda_1,\dots,\lambda_m$ of positive integers summing to $|\lambda| \geq 2$. The domain consists of all $|\lambda|$-uniform hypermatchings in the complete ``$\lambda$-partite'' hypergraph on $|\lambda|n$ vertices. In more detail, the vertices are partitioned into $m$ parts $P_1,\ldots,P_m$, the $i$'th part containing $\lambda_i n$ vertices $(i,1),\ldots,(i,\lambda_i n)$, and the hyperedges consist of a choice of $\lambda_i$ elements from $P_i$ for each $i \in \{1,\ldots,m\}$.

\begin{lemma} \label{lem:intersection-bounds-pm}
The intersection bounds of $P(n; \lambda)$ are $\Iparam_t = n-2$ for all $t \leq n-2$.
\end{lemma}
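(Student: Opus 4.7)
The plan is to prove the two inequalities $\Iparam_t \geq n-2$ and $\Iparam_t \leq n-2$ separately. Throughout, I will think of a partial input $C$ as a sub-hypermatching, i.e., a collection of $|C|$ vertex-disjoint hyperedges, and write $U$ for the set of vertices of the hypergraph not covered by $C$.

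For the lower bound $\Iparam_t \geq n-2$, I fix $x \in \domain$ and a partial input $C$ with $|C| \leq n-2$ such that $x$ $t$-intersects every extension of $C$, and aim to build an extension $y \supseteq C$ with $y \cap x = C \cap x$; once such a $y$ exists, the hypothesis gives $t \leq |y \cap x| = |C \cap x|$. To produce $y$, I take $y^*$ to be an extension of $C$ minimizing $|y^* \cap x|$ and argue by contradiction: if some $e \in (y^* \cap x) \setminus C$ exists, then because $|y^* \setminus C| = n - |C| \geq 2$, there is a second edge $e' \in y^* \setminus C$ with $e' \neq e$. Pick any part $P_i$, any vertex $v \in e \cap P_i$ and any $v' \in e' \cap P_i$, and swap them: set $\tilde e = (e \setminus \{v\}) \cup \{v'\}$, $\tilde e' = (e' \setminus \{v'\}) \cup \{v\}$, and $y' = (y^* \setminus \{e, e'\}) \cup \{\tilde e, \tilde e'\}$. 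Then $y'$ is again a perfect hypermatching extending $C$, and neither $\tilde e$ nor $\tilde e'$ can be in $x$: any $x$-edge other than $e$ is vertex-disjoint from $e$, whereas $\tilde e$ still shares $|\lambda|-1 \geq 1$ vertices with $e$, and $\tilde e'$ contains the $e$-vertex $v$. Hence $|y' \cap x| < |y^* \cap x|$, contradicting minimality.

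For the upper bound $\Iparam_t \leq n-2$, I exhibit a witness at $|C| = n-1$. Any such $C = \{e_1,\ldots,e_{n-1}\}$ leaves exactly $|\lambda|$ uncovered vertices, one part-worth each, and so has a unique extension $y = C \cup \{e_U\}$, where $e_U$ is the hyperedge on the uncovered vertex set. I build $x$ by keeping $e_1,\ldots,e_{t-1}$ and $e_U$, and replacing $e_t,\ldots,e_{n-1}$ with a cyclically shifted alternative. Concretely, pick any $v_i \in e_i \cap P_1$ for $i = t,\ldots,n-1$, set $v_n := v_t$, and let $e'_i = (e_i \setminus \{v_i\}) \cup \{v_{i+1}\}$. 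The edges $e'_t,\ldots,e'_{n-1}$ partition the same vertex set as $e_t,\ldots,e_{n-1}$, so $x = \{e_1,\ldots,e_{t-1},e_U,e'_t,\ldots,e'_{n-1}\}$ is a valid perfect hypermatching. Each $e'_i$ differs from $e_j$ for every $j$: from $e_i$ at the vertex $v_i$, from $e_{i+1}$ at the $|\lambda|-1$ vertices shared with $e_i$, and from any other $e_j$ because $e_j$ is vertex-disjoint from both $e_i$ and $e_{i+1}$. Therefore $|x \cap C| = t-1 < t$, while $|x \cap y| = t$ because $e_U \in x$; since $y$ is the only extension of $C$, this shows the intersection property fails at $|C| = n-1$. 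The cyclic shift needs at least two edges to permute nontrivially, which is precisely the reason the construction works only when $n - t \geq 2$, matching the hypothesis $t \leq n-2$.

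The main obstacle is the verification in the lower bound that a single-vertex swap between two edges of $y^*$ does not inadvertently create a new edge of $x$. This is where the $\lambda$-partite structure and the assumption $|\lambda| \geq 2$ really enter: the swap leaves $|\lambda|-1 \geq 1$ vertices of each original edge intact, so the swapped edges collide with the corresponding $x$-edge and cannot themselves belong to $x$. Once this monotonicity of the swap is in hand, the rest (the minimality argument, the uniqueness of the extension when $|C| = n-1$, and checking that the cyclic shift produces edges distinct from every $e_j$) is essentially bookkeeping.
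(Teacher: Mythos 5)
Your proof is correct, and the lower bound takes a genuinely different route from the paper's. The paper constructs the desired extension of $C$ directly: it completes the part of $x$ living on the vertices $\overline V$ not covered by $C$ to a perfect hypermatching $z$ of $\overline V$, then exhibits a \emph{generalized derangement} $w$ --- another perfect hypermatching of $\overline V$ edge-disjoint from $z$ --- so that $C\cup w$ is an extension of $C$ meeting $x$ only in $C\cap x$. You instead minimize $|y^*\cap x|$ over extensions $y^*\supseteq C$ and show a minimizer picks up no new $x$-edges via a single-vertex swap between two free edges; the need for a second free edge $e'$ is exactly where $|C|\le n-2$ enters, mirroring the role of $r = n-|C|\ge 2$ in the paper's derangement construction. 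Both arguments lean on $|\lambda|\ge 2$ (yours to guarantee the swapped edges still intersect the original edges and so cannot lie in $x$). The paper's version is more explicitly constructive; your exchange argument is arguably shorter to verify and requires less bookkeeping about the structure of $\overline V$. For the upper bound, the two proofs are essentially the same construction in different clothing: both produce $x,y$ with $|x\cap y|=t$ and take $C=y$ minus one shared edge, using that $y$ is the unique extension of $C$. The paper cites generalized derangements abstractly to get such an $x$; you instantiate the derangement explicitly as a cyclic shift of one vertex per edge in $P_1$, which is a concrete and correct realization of the same idea.
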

\begin{proof}
We start by showing that $\Iparam_t \geq n-2$. Let $C$ be a partial input of size at most $n-2$, and let $x$ be an input which $t$-intersects all total inputs extending $C$.

Denote by $V$ the set of vertices mentioned by hyperedges in $C$; thus $C$ is a perfect hypermatching of $V$. Let $x|_{\overline{V}}$ be the set of hyperedges in $x$ only involving vertices outside of $V$. We can complete $x|_{\overline{V}}$ to a perfect hypermatching $z$ of $\overline{V}$.

Without loss of generality, suppose that $\overline{V} \cap P_i$ consists of the vertices $(i,1),\ldots,(i,\lambda_i r)$, where $r = n - |C| \geq 2$, and that $z$ consists of the hyperedges
\[
 \{ (i,1), \ldots, (i,\lambda_i) : 1 \leq i \leq m\},
 \{ (i,\lambda_i+1), \ldots, (i,2\lambda_i) : 1 \leq i \leq m\}, \dots
\]
Let $w$ consist of the hyperedges
\[
 \{ (i, \lambda_ir), (i,1), \ldots, (i,\lambda_i-1) : 1 \leq i \leq m\},
 \{ (i,\lambda_i), \ldots, (i,2\lambda_i-1) : 1 \leq i \leq m\}, \dots
\]
Notice that $z$ and $w$ are disjoint; we can think of $w$ as a generalized derangement.\footnote{Recall that a \emph{derangement} is a permutation in $\Sym$ without fixed points. In our setting, we can think of a derangement as a permutation in $\Sym$ which is disjoint from the identity permutation $\mathsf{id} \in \Sym$. More generally, given a domain $\domain$, we can define $a \in \domain$ to be a derangement with respect to $b \in \domain$ if $a \cap b = \emptyset$. Under this definition (and working inside an appropriate link of $P(n;\lambda)$), $w$ is a derangement with respect to $z$.}

Since $C \cup w$ is a total input extending $C$, we have $|C \cap x| = |(C \cup w) \cap x| \geq t$.

\smallskip

To complete the proof, we show that $\Iparam_t \leq n-2$. Let $x,y$ be two inputs such that $|x \cap y| = t$; such inputs can be constructed using generalized derangements. Let $C$ be obtained from $y$ by removing one of the hyperedges in $x \cap y$. By construction, $|x \cap C| = t-1$. On the other hand, $y$ is the only total input extending $C$, and so $x$ $t$-intersects all total inputs extending $C$.
\end{proof}

\paragraph{Multislices} Recall that a multislice $M(\lambda)$ is specified by a sequence $\lambda_1,\ldots,\lambda_m$ of positive integers summing to $n$, where $m \geq 2$. The multislice consists of all vectors $x \in \{1,\ldots,m\}^n$ (encoded as sets $\{ (i,x_i) : 1 \leq i \leq n \}$) having exactly $\lambda_c$ coordinates equal to $c$.

\begin{lemma} \label{lem:intersection-bounds-ms}
The intersection bounds of $M(\lambda)$ satisfy $\Iparam_t \geq n - 2\max(\lambda_1,\ldots,\lambda_m)$.	
\end{lemma}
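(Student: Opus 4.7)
The plan is to prove the contrapositive: assuming $|x \cap C| < t$ with $|C| \leq n - 2M$, where $M = \max_c \lambda_c$, I will construct a total input $y$ extending $C$ with $|x \cap y| < t$, which witnesses that $x$ fails to $t$-intersect all extensions of $C$. Let $U \subseteq [n]$ be the set of coordinates on which $C$ imposes no value; by hypothesis $|U| \geq 2M$. For each color $c \in [m]$, let $r_c$ be the number of occurrences of $c$ still needed to complete $C$ to an element of $M(\lambda)$, and let $u_c = \#\{i \in U : x_i = c\}$. Since $x \in M(\lambda)$ we have $u_c \leq \lambda_c \leq M$, and trivially $r_c \leq \lambda_c \leq M$, so $r_c + u_c \leq 2M \leq |U|$ for every color $c$. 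This inequality is the whole engine of the argument.

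The core step is a Hall-type bipartite matching. I would form a bipartite graph with vertex set $U$ on one side and, on the other side, a multiset of ``color slots'' containing $r_c$ slots of color $c$ for each $c$; an edge joins $i \in U$ to a slot of color $c$ iff $x_i \neq c$. A perfect matching of all the slots corresponds exactly to an assignment $y|_U$ with the correct color multiplicities and with $y_i \neq x_i$ everywhere on $U$. To verify Hall's condition for an arbitrary set $T$ of slots: if $T$ uses at least two distinct colors, every $i \in U$ is adjacent to some slot of $T$ (since $x_i$ equals at most one color), so $|N(T)| = |U| \geq \sum_c r_c \geq |T|$; and if $T$ uses only color $c$ then $|N(T)| = |U| - u_c \geq r_c \geq |T|$ by the bound from the previous paragraph. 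Hall's theorem then produces the desired $y|_U$.

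Setting $y := C \cup y|_U$, we get an element of $M(\lambda)$ extending $C$ with $y_i \neq x_i$ for all $i \in U$, so $|x \cap y| = |x \cap C| < t$, contradicting the hypothesis that $x$ $t$-intersects every extension of $C$. The only substantive step is the Hall argument, and the main thing to get right is the framing: one must set up the bipartite graph so that the relevant Hall inequality reads exactly $r_c + u_c \leq |U|$, which is in turn what the hypothesis $|C| \leq n - 2M$ purchases. Everything else is bookkeeping, and I expect no other obstacle.
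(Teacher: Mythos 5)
Your proof is correct and takes essentially the same approach as the paper: build a bipartite graph between the unfilled coordinates $U$ and the color slots still to be filled, connect $i$ to a slot of color $c$ iff $x_i \neq c$, and invoke Hall's theorem to derange the colors, which reduces to showing each side's color multiplicities are at most $|U|/2$, exactly what $|C| \leq n - 2\max_c\lambda_c$ gives. If anything your writeup is slightly more careful than the paper's: you explicitly distinguish the left-side color counts $u_c$ (induced by $x$ on $U$) from the right-side residual counts $r_c$, whereas the paper colors both sides by the residual multiset $\mu$, glossing over the fact that the coordinate side is really colored by $x\vert_U$; both quantities are bounded by $\max_c\lambda_c$, so the arithmetic is the same either way.
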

\begin{proof}
Let $C$ be a partial input of size at most $n - 2\max(\lambda_1,\ldots,\lambda_m)$, and let $x$ be an input which $t$-intersects all total inputs extending $C$.

The link of $C$ is another multislice $M(\mu)$, where $\mu_1 + \cdots + \mu_m = n - |C|$, and
\[
 \max(\mu_1,\dots,\mu_m) \leq \max(\lambda_1,\dots,\lambda_m) \leq \frac{n-|C|}{2}.
\]

Consider a bipartite graph with $n - |C|$ vertices on both sides. We color the first $\mu_1$ vertices on each side by color~$1$, the following $\mu_2$ vertices by color~$2$, and so on. We connect a vertex of color~$i$ on the left to all vertices on the right of colors different from~$i$.

We claim that any set $S$ on the left has at least $|S|$ neighbors on the right, and so the graph has a perfect matching by Hall's criterion. Indeed, if $S$ contains vertices of more than one color, then its neighborhood consists of all vertices on the right. If all vertices in $S$ are colored~$i$ then $|S| \leq \mu_i$ and $S$ has exactly $n - |C| - \mu_i$ neighbors, which is at least $|S|$ since $\mu_i \leq (n-|C|)/2$.

Using the perfect matching whose existence is promised by Hall's criterion, we can extend $C$ to a total input $y$ which disagrees with $x$ on all indices outside of those mentioned by $C$. It follows that $|x \cap C| = |x \cap y| \geq t$.
\end{proof}

When $\lambda_1 = \cdots = \lambda_m = 1$, this bound coincides with the bound for the symmetric group implied by \Cref{lem:intersection-bounds-pm}.


\subsection{Spectral technique} \label{sec:intersecting-spectral}

The results of Ellis, Friedgut and Pilpel~\cite{EFP} and of Lindzey~\cite{lindzey2018intersecting,LindzeyThesis} are proved using a spectral technique known as the \emph{weighted Hoffman bound}, pioneered in this context by Wilson~\cite{Wilson} and Frankl--Wilson~\cite{FranklWilson}, and later cast in a different form by Friedgut~\cite{Friedgut}; see also the monograph of Godsil and Meagher~\cite{GodsilMeagher}.

In a nutshell, in order to prove a $t$-intersecting theorem for a domain $\domain$, the idea is to construct a $\domain \times \domain$ matrix $A$, supported on pairs on non-$t$-intersecting elements, satisfying certain spectral properties.

\begin{restatable}{definition}{defad} \label{def:adjacency}
A real $\domain \times \domain$ matrix $A$ is \emph{$t$-good} if the following properties hold:
\begin{enumerate}
\item $A$ is symmetric.
\item If $|x \cap y| \geq t$ then $A(x,y) = 0$.
\item $A \mathbf{1} = \mathbf{1}$, where $\mathbf{1}$ is the constant~$1$ vector.
\item If $\deg(f) \leq t$ and $\EE[f] = 0$, where $\EE[f] = \sum_{x \in \domain} f(x)/|\domain|$, then $Af = -\omega f$, where
\[
 \omega = \frac{\Lparam_t}{|\domain| - \Lparam_t}.
\]
\item If $Af = \lambda f$ and $\deg(f) > t$ then $|\lambda| < \omega$.
\end{enumerate}
\end{restatable}

We remark that if we replace $|\lambda| < \omega$ with $\lambda > -\omega$ in the final property, then we can still recover all results about $t$-intersecting families; the stronger guarantee $|\lambda| < \omega$ is only needed to handle cross-$t$-intersecting families.

Ellis, Friedgut and Pilpel~\cite[Theorem 26]{EFP} constructed a $t$-good matrix for $\domain = \Sym$ for large enough $n$ (as a function of $t$). Similarly, Lindzey~\cite{lindzey2018intersecting,LindzeyThesis} constructed a $t$-good matrix for $\domain = \pms$ for large enough $n$ (as a function of $t$).

Wilson~\cite{Wilson} constructed a matrix satisfying a similar condition for $\domain = M(n-k,k)$; however, his definition of $t$-intersecting is different from ours. Frankl and Wilson~\cite{FranklWilson} constructed such a matrix when $\domain$ is the Grassmann scheme, a domain which doesn't fall into our framework (see \Cref{sec:open-questions}). Friedgut~\cite{Friedgut} constructed such a matrix for the Boolean cube $\bits^n$ under a biased measure, also using a different definition of $t$-intersecting.

It is well-known that the existence of a $t$-good operator implies that a $t$-intersecting family contains at most $\Lparam_t$ elements. For completeness, we reproduce the proof in \Cref{apx:spectral}.

\begin{restatable}{proposition}{proib} \label{pro:intersecting-bound}
Suppose that there exists a $t$-good matrix for $\domain$. Then a $t$-intersecting family contains at most $\Lparam_t$ points, and furthermore, the characteristic function of a $t$-intersecting family of size $\Lparam_t$ has degree at most~$t$.
\end{restatable}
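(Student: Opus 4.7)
The plan is to run the standard weighted Hoffman bound argument, interpreting the characteristic function $f$ of a $t$-intersecting family $\mathcal{F}$ as a vector in $\mathbb{R}^{\domain}$ and extracting the bound from the quadratic form $\langle f, Af\rangle$ with respect to the standard inner product. First I would observe that since $\mathcal{F}$ is $t$-intersecting and $A$ is supported on pairs $(x,y)$ with $|x\cap y|<t$, every nonzero entry $A(x,y)$ is multiplied by $f(x)f(y)=0$, giving $\langle f, Af\rangle = 0$.

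Next I would decompose $f$ orthogonally according to the spectral structure of $A$ guaranteed by \Cref{def:adjacency}. Write $\alpha = |\mathcal{F}|/|\domain| = \mathbb{E}[f]$ and split
\[
 f = \alpha\mathbf{1} + f_1 + f_2,
\]
where $f_1$ is the component in the space of degree $\leq t$ functions of mean zero and $f_2$ is the orthogonal complement (spanned by eigenvectors with $|\lambda|<\omega$). Using $A\mathbf{1}=\mathbf{1}$, $Af_1 = -\omega f_1$, and the eigenvalue bound on $f_2$, the orthogonality of the decomposition yields
\[
 0 = \langle f,Af\rangle = \alpha^2|\domain| - \omega\|f_1\|^2 + \langle f_2, Af_2\rangle.
\]
Since $f$ is Boolean, $\|f\|^2 = \alpha|\domain|$, and so $\|f_1\|^2+\|f_2\|^2 = \alpha(1-\alpha)|\domain|$. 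Combining this with $|\langle f_2,Af_2\rangle|\leq \omega\|f_2\|^2$ gives
\[
 \alpha^2|\domain| \leq \omega\|f_1\|^2 + \omega\|f_2\|^2 = \omega\alpha(1-\alpha)|\domain|,
\]
which rearranges to $\alpha \leq \omega/(1+\omega) = \Lparam_t/|\domain|$, i.e., $|\mathcal{F}|\leq \Lparam_t$.

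For the ``furthermore'' part I would revisit when equality can occur. The only slack in the chain of inequalities comes from bounding $\langle f_2,Af_2\rangle$ by $\omega\|f_2\|^2$; since \Cref{def:adjacency} guarantees that every eigenvalue of $A$ on the higher-degree subspace is \emph{strictly} less than $\omega$ in absolute value, we have $|\langle f_2,Af_2\rangle|<\omega\|f_2\|^2$ whenever $f_2\neq 0$. Thus equality $|\mathcal{F}|=\Lparam_t$ forces $f_2=0$, which is precisely the statement that $f$ has degree at most $t$. I do not expect any real obstacle here: the argument is linear-algebraic and the only subtlety is keeping track of the constant component separately from the degree-$\leq t$ mean-zero part so that the eigenvalue $-\omega$ is applied only to $f_1$, not to the whole projection onto degree $\leq t$.
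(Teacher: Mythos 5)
Your proof is correct and is essentially the same weighted Hoffman argument the paper uses: you form $\langle f, Af\rangle = 0$, split $f$ into its constant part, its mean-zero degree-$\leq t$ part, and the rest, use the three eigenvalue properties of a $t$-good matrix, and extract $\alpha \leq \omega/(1+\omega)$. The only cosmetic difference is that the paper works with the normalized inner product $\langle g,h\rangle = \EE[gh]$ (so $\|f\|^2 = \mu$ rather than $\alpha|\domain|$), and keeps the higher eigenspaces $V_2,\ldots,V_m$ separate rather than lumping them into a single $f_2$; the equality analysis in the ``furthermore'' part is identical.
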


Using \Cref{thm:intersecting}, we can show that for large enough $n$, the bound $\Lparam_t$ is achieved only by $t$-links.

\begin{theorem} \label{thm:intersecting-uniqueness}
Suppose that there exists a $t$-good matrix for $\domain$, and every function of degree at most $t$ has certificate complexity at most $\Cbound_t \leq \Iparam_t$. If $\Cbound_t \binom{\Cbound_t}{t} \Lparam_{t+1} < \Lparam_t$ then any $t$-intersecting family of size $\Lparam_t$ is a $t$-link.
\end{theorem}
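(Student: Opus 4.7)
The plan is to combine Proposition \ref{pro:intersecting-bound} with Theorem \ref{thm:intersecting} applied in the diagonal case $\mathcal{F}_1 = \mathcal{F}_2 = \mathcal{F}$. Let $\mathcal{F}$ be a $t$-intersecting family of size $\Lparam_t$, and let $f\colon \domain \to \bits$ denote its characteristic function. First, Proposition \ref{pro:intersecting-bound} tells us that $\deg(f) \leq t$, so by the hypothesis of the theorem we have $C(f) \leq \Cbound_t \leq \Iparam_t$. This places us squarely in the regime where Theorem \ref{thm:intersecting} applies.

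Now I would apply Theorem \ref{thm:intersecting} with $\mathcal{F}_1 = \mathcal{F}_2 = \mathcal{F}$: since $\mathcal{F}$ is $t$-intersecting with itself and $C(f) \leq \Iparam_t$, the theorem forces one of two alternatives. Either $\mathcal{F}$ is contained in some $t$-link $\domain_S$, or
\[
 |\mathcal{F}| \leq \binom{C(f)}{t} C(f) \Lparam_{t+1} \leq \binom{\Cbound_t}{t} \Cbound_t \Lparam_{t+1},
\]
where the second inequality uses $C(f) \leq \Cbound_t$ together with monotonicity of $\binom{\cdot}{t}$ in its upper argument. But by the numerical assumption $\Cbound_t \binom{\Cbound_t}{t} \Lparam_{t+1} < \Lparam_t$, the right-hand side is strictly less than $\Lparam_t = |\mathcal{F}|$, which is a contradiction. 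Hence the first alternative must hold: $\mathcal{F} \subseteq \domain_S$ for some $t$-element set $S \subseteq \universe$.

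To finish, observe that $|\domain_S| \leq \Lparam_t$ by definition of $\Lparam_t$, while $|\mathcal{F}| = \Lparam_t$ by hypothesis. Combined with the inclusion $\mathcal{F} \subseteq \domain_S$, this forces $\mathcal{F} = \domain_S$, so $\mathcal{F}$ is exactly a $t$-link, as required.

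There is essentially no obstacle here beyond correctly assembling the machinery: Proposition \ref{pro:intersecting-bound} supplies the degree bound needed to convert the extremality hypothesis $|\mathcal{F}| = \Lparam_t$ into the combinatorial hypothesis $C(f) \leq \Cbound_t$ of Theorem \ref{thm:intersecting}, and the numerical condition $\Cbound_t \binom{\Cbound_t}{t} \Lparam_{t+1} < \Lparam_t$ is precisely what rules out the ``$\mathcal{F}$ is small'' alternative. The only mild care is to keep track of the fact that Theorem \ref{thm:intersecting} is stated for cross-$t$-intersecting pairs, so it applies to the diagonal case with the same $f_1 = f_2 = f$; no adjustment is needed.
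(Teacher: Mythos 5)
Your proof is correct and follows exactly the paper's argument: invoke \Cref{pro:intersecting-bound} to get $\deg(f)\leq t$, hence $C(f)\leq \Cbound_t\leq \Iparam_t$, apply \Cref{thm:intersecting} in the diagonal case $\mathcal{F}_1=\mathcal{F}_2=\mathcal{F}$, use the numerical hypothesis to rule out the small-family alternative, and conclude by the maximality of $\Lparam_t$ among link sizes. The only (welcome) addition is your explicit note that $\binom{C(f)}{t}\leq\binom{\Cbound_t}{t}$ by monotonicity, which the paper leaves implicit.
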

\begin{proof}
Let $\mathcal{F}$ be a $t$-intersecting family of size $\Lparam_t$, and let $f$ be its characteristic function. According to \Cref{pro:intersecting-bound}, $\deg(f) \leq t$.	 By assumption, $C(f) \leq \Iparam_t$, and so \Cref{thm:intersecting} shows that either $\mathcal{F}$ is contained in a $t$-link, or $|\mathcal{F}| \leq \Cbound_t \binom{\Cbound_t}{t} \Lparam_{t+1} < \Lparam_t$. The second case cannot happen by assumption. We conclude that $\mathcal{F}$ is contained in a $t$-link. Since $|\mathcal{F}| = \Lparam_t$, it has to be the complete $t$-link.
\end{proof}

We can extend this result to cross-$t$-intersecting families. We start with a cross-intersecting version of \Cref{pro:intersecting-bound}, which is also well-known. For completeness, we include the proof in \Cref{apx:spectral}.

\begin{restatable}{proposition}{proibcross} \label{pro:intersecting-bound-cross}
Suppose that there exists a $t$-good matrix for $\domain$.
If $\mathcal{F},\mathcal{G}$ are two cross-$t$-intersecting families then $\sqrt{|\mathcal{F}| \cdot |\mathcal{G}|} \leq \Lparam_t$. Furthermore, if equality holds then $\mathcal{F} = \mathcal{G}$, and the common characteristic function has degree at most $t$.
\end{restatable}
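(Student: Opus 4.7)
The plan is to extend the single-family argument behind Proposition~\ref{pro:intersecting-bound} to the bipartite setting, now using the full strength of $t$-goodness---in particular the strict bound $|\lambda|<\omega$ in property~(5) of Definition~\ref{def:adjacency}. Let $f,g\in\bits^\domain$ be the characteristic vectors of $\mathcal{F},\mathcal{G}$, set $N=|\domain|$, $\alpha=|\mathcal{F}|/N$, $\beta=|\mathcal{G}|/N$, and use the unnormalized inner product on $\mathbb{R}^\domain$. The starting point is that every pair $(x,y)\in\mathcal{F}\times\mathcal{G}$ satisfies $|x\cap y|\ge t$, so property~(2) gives $A(x,y)=0$ on $\mathcal{F}\times\mathcal{G}$, whence $\langle f,Ag\rangle=0$.

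Next, I would decompose $\mathbb{R}^\domain=V_0\oplus V_1\oplus V_2$, where $V_0=\operatorname{span}(\mathbf{1})$, $V_1$ is the mean-zero part of the degree-at-most-$t$ subspace, and $V_2$ is its orthogonal complement. Symmetry of $A$ together with properties~(3)--(5) makes all three subspaces $A$-invariant with $A|_{V_0}=1$, $A|_{V_1}=-\omega$, and $A|_{V_2}$ of spectral radius at most some $\omega'<\omega$. Writing $f=\alpha\mathbf{1}+f_1+f_2$ and $g=\beta\mathbf{1}+g_1+g_2$, the identity $\langle f,Ag\rangle=0$ becomes
\[
 \alpha\beta N \;=\; \omega\,\langle f_1,g_1\rangle \;-\; \langle f_2,Ag_2\rangle.
\]
Cauchy--Schwarz on each summand (using the spectral bound on $V_2$), together with a further Cauchy--Schwarz on the pairs $(\|f_1\|,\|f_2\|)$ and $(\|g_1\|,\|g_2\|)$, yields
\[
 \alpha\beta N \;\le\; \omega\sqrt{(\|f_1\|^2+\|f_2\|^2)(\|g_1\|^2+\|g_2\|^2)} \;=\; \omega\sqrt{|\mathcal{F}|(1-\alpha)\,|\mathcal{G}|(1-\beta)},
\]
where the last identity uses $\|f\|^2=|\mathcal{F}|=\alpha^2N+\|f_1\|^2+\|f_2\|^2$ and its analogue for $g$. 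Substituting $\omega=\Lparam_t/(N-\Lparam_t)$, squaring, and invoking the elementary inequality $(1-\alpha)(1-\beta)\le(1-\sqrt{\alpha\beta})^2$ (which is just $(\sqrt{\alpha}-\sqrt{\beta})^2\ge 0$) reduces this to $\sqrt{\alpha\beta}\le\Lparam_t/N$, i.e.\ $\sqrt{|\mathcal{F}|\cdot|\mathcal{G}|}\le\Lparam_t$.

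The main obstacle is the rigidity statement at equality. Tracing the chain above, equality forces (i) $\|f_2\|\|g_2\|=0$ since $\omega'<\omega$ strictly; (ii) the $2$-dimensional Cauchy--Schwarz being tight then forces both $f_2$ and $g_2$ to vanish, unless one of $f,g$ is constant---but the latter edge case is incompatible with $|\mathcal{F}|\cdot|\mathcal{G}|=\Lparam_t^2>0$ and the cross-$t$-intersecting condition, so both $f$ and $g$ have degree at most $t$; (iii) the AM-GM step forces $\alpha=\beta$; and (iv) equality in Cauchy--Schwarz on $\langle f_1,g_1\rangle$, combined with $\langle f_1,g_1\rangle=\alpha\beta N/\omega>0$, forces $f_1=c\,g_1$ for some $c>0$. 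Assembling (iii) and (iv) gives $f=cg+\alpha(1-c)\mathbf{1}$; evaluating at some $x$ with $g(x)=0$ and some $y$ with $g(y)=1$ produces two values of $f$ in $\bits$ that differ by exactly $c$, forcing $c=1$, and hence $f=g$ and $\mathcal{F}=\mathcal{G}$.
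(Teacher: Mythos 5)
Your proof is correct and follows essentially the same route as the paper: decompose $f,g$ by eigenspaces, use $\langle f,Ag\rangle=0$, apply Cauchy--Schwarz twice and then an AM--GM step to get $\sqrt{|\mathcal{F}||\mathcal{G}|}\le\Lparam_t$, and trace equality to force degree $\le t$ and $f=g$. The only differences are cosmetic (lumping the high-degree eigenspaces into a single $V_2$ with an operator-norm bound, and concluding $c=1$ by evaluating $f=cg+\alpha(1-c)\mathbf{1}$ at two points rather than observing that $f_1$ and $g_1$ are both $\{-\mu,1-\mu\}$-valued).
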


As a simple corollary, we can generalize \Cref{thm:intersecting-uniqueness} to cross-$t$-intersecting families.

\begin{theorem} \label{thm:intersecting-uniqueness-cross}
Suppose that there exists a $t$-good matrix for $\domain$, and every function of degree at most $t$ has certificate complexity at most $\Cbound_t \leq \Iparam_t$. Assume that $\Cbound_t \binom{\Cbound_t}{t} \Lparam_{t+1} < \Lparam_t$. If $\mathcal{F}$ and $\mathcal{G}$ are cross-$t$-intersecting families and $\sqrt{|\mathcal{F}| \cdot |\mathcal{G}|} = \Lparam_t$ then $\mathcal{F} = \mathcal{G}$ is a $t$-link.
\end{theorem}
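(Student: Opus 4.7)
The proof should follow almost immediately by chaining together the cross-intersecting version of the Hoffman bound (Proposition \ref{pro:intersecting-bound-cross}) with the already-proved uniqueness result for $t$-intersecting families (Theorem \ref{thm:intersecting-uniqueness}). The strategy is to first use the spectral bound to collapse the cross-$t$-intersecting situation to the symmetric one, and then invoke the non-cross uniqueness theorem to identify the family as a $t$-link.

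More concretely, the plan is as follows. First I would apply Proposition \ref{pro:intersecting-bound-cross} to the pair $(\mathcal{F},\mathcal{G})$. Since a $t$-good matrix exists by hypothesis and $\sqrt{|\mathcal{F}| \cdot |\mathcal{G}|} = \Lparam_t$ saturates the bound, the ``furthermore'' clause of Proposition \ref{pro:intersecting-bound-cross} gives $\mathcal{F} = \mathcal{G}$, and their common characteristic function $f\colon \domain \to \bits$ has $\deg(f) \leq t$. In particular, $|\mathcal{F}| = |\mathcal{G}| = \Lparam_t$, and because $\mathcal{F} = \mathcal{G}$ is cross-$t$-intersecting with itself, $\mathcal{F}$ is a $t$-intersecting family.

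At this point all the hypotheses of Theorem \ref{thm:intersecting-uniqueness} are in force: the same $t$-good matrix exists, every degree-$t$ function has certificate complexity at most $\Cbound_t \leq \Iparam_t$, and the quantitative condition $\Cbound_t \binom{\Cbound_t}{t} \Lparam_{t+1} < \Lparam_t$ holds. Applying that theorem to $\mathcal{F}$ yields that $\mathcal{F}$ is a $t$-link, which (combined with $\mathcal{F} = \mathcal{G}$) is exactly the conclusion we want.

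There is essentially no hard step here; the whole content lies in Proposition \ref{pro:intersecting-bound-cross}, whose equality analysis is doing the work of reducing the cross case to the single-family case. The only point one needs to be slightly careful about is that the spectral argument really does give the \emph{equality} $\mathcal{F} = \mathcal{G}$ (not merely equal sizes), which is why we insist in Definition \ref{def:adjacency} on the stronger eigenvalue bound $|\lambda| < \omega$ rather than only $\lambda > -\omega$; this is precisely what prevents two distinct families from jointly saturating the cross bound, and it is the reason the proposition feeds cleanly into Theorem \ref{thm:intersecting-uniqueness} without any additional argument.
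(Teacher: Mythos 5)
Your proposal is correct and matches the paper's proof exactly: apply Proposition \ref{pro:intersecting-bound-cross} to conclude $\mathcal{F} = \mathcal{G}$ with $\deg f \leq t$, observe $\mathcal{F}$ is then a $t$-intersecting family of size $\Lparam_t$, and invoke Theorem \ref{thm:intersecting-uniqueness}. Your remark about why the two-sided eigenvalue bound $|\lambda| < \omega$ is needed for the equality analysis is a correct observation about where the work really lives, though it is not required for this particular deduction.
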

\begin{proof}
\Cref{pro:intersecting-bound-cross} implies that $\mathcal{F} = \mathcal{G}$, and so $\mathcal{F}$ is a $t$-intersecting family of size $\Lparam_t$. The result now follows from \Cref{thm:intersecting-uniqueness}.	
\end{proof}

If $\domain = \Sym$ or $\domain = \pms$, then $\Cbound_t = O(t^8)$ by \Cref{thm:main}. Since $\Lparam_t/\Lparam_{t+1} = \Theta(n)$ and $\Iparam_t = n-2$ in both cases, the premises of \Cref{thm:intersecting-uniqueness} and \Cref{thm:intersecting-uniqueness-cross} other than the existence of a $t$-good matrix hold as long as $n \geq K t^{8(t+1)}$, for some absolute constant $K$. Ellis, Friedgut and Pilpel~\cite{EFP} and Lindzey~\cite{lindzey2018intersecting,LindzeyThesis} showed that a $t$-good matrix holds for large enough~$n$, and so we conclude the following corollary.

\begin{theorem} \label{thm:intersecting-application}
Let $\domain = \Sym$ or $\domain = \pms$. If $n$ is large enough (as a function of $t$), then every $t$-intersecting family has size at most $(n-t)!$ (if $\domain = \Sym$) or $(2n-2t-1)!! = (2n-2t-1)(2n-2t-3)\cdots(1)$	 (if $\domain = \pms$). Furthermore, this bound is attained only by $t$-links.

An identical bound holds on the geometric mean of the sizes of two cross-$t$-intersecting families. The bound is attained only if both families are the same $t$-link.
\end{theorem}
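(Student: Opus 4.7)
The plan is to deduce the theorem directly from the uniqueness results \Cref{thm:intersecting-uniqueness} and \Cref{thm:intersecting-uniqueness-cross} by verifying their three hypotheses for $\domain \in \{\Sym, \pms\}$: the existence of a $t$-good matrix, the certificate-complexity bound $\Cbound_t \leq \Iparam_t$ for degree-$t$ functions, and the growth inequality $\Cbound_t \binom{\Cbound_t}{t} \Lparam_{t+1} < \Lparam_t$.

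First I would invoke the existence of a $t$-good matrix as a black box. This is not proved here; I would cite Ellis--Friedgut--Pilpel~\cite[Theorem 26]{EFP} in the case of $\Sym$ and Lindzey~\cite{lindzey2018intersecting,LindzeyThesis} in the case of $\pms$. Both constructions supply such a matrix provided $n$ exceeds a threshold depending on~$t$, and combined with \Cref{pro:intersecting-bound} and \Cref{pro:intersecting-bound-cross} this already delivers the size bounds $(n-t)!$ and $(2n-2t-1)!!$ in the respective cases (as well as for the geometric mean in the cross version), together with the fact that the characteristic function of any extremal family has degree at most $t$.

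Next I would pin down the three parameters. A $t$-link in $\Sym$ is precisely a $t$-star, so $\Lparam_t = (n-t)!$; a $t$-link in $\pms$ fixes $t$ disjoint edges, leaving $(2n-2t-1)!!$ completions. In both cases $\Lparam_t/\Lparam_{t+1} = \Theta(n)$. Since $\Sym$ and $\pms$ are perfect matching domains, \Cref{lem:intersection-bounds-pm} gives $\Iparam_t = n - 2$. Their remaining parameters, as computed in \Cref{sec:domains-pm}, satisfy $\Dparam = O(1)$, $\Cparam = n$, and $\Bparam = \Omega(1)$. Hence \Cref{thm:main}, applied in the strengthened form under condition (c) (which holds automatically once $t \leq \Theta(n^{1/4})$, since a Boolean degree-$t$ function has approximate degree at most $t$), yields $C(f) \leq D(f) \leq O(t^8)$ for any Boolean $f$ of degree at most $t$, so I may take $\Cbound_t = O(t^8)$.

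Finally, taking $n$ large compared to $t$ (explicitly, $n \gtrsim t^{8(t+1)}$ suffices) forces both $\Cbound_t \leq n-2 = \Iparam_t$ and $\Cbound_t \binom{\Cbound_t}{t} \Lparam_{t+1} < \Lparam_t$, because the ratio $\Lparam_t/\Lparam_{t+1}$ grows linearly in $n$ while $\Cbound_t \binom{\Cbound_t}{t}$ depends only on $t$. All hypotheses of \Cref{thm:intersecting-uniqueness} and \Cref{thm:intersecting-uniqueness-cross} are then satisfied, giving the stated size bounds and the uniqueness of the extremal family as a $t$-link in the intersecting case, and of the extremal pair as a common $t$-link in the cross-intersecting case. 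No serious obstacle is expected: the only place requiring care is selecting the correct clause of \Cref{thm:main} so as to avoid paying the factor $n/\Cparam$ that appears in the default bound \eqref{eq:main-relation}, and clause~(c) handles this cleanly.
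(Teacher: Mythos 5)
Your proposal is correct and follows essentially the same route as the paper: invoke the $t$-good matrices of Ellis--Friedgut--Pilpel and Lindzey, bound $\Cbound_t = O(t^8)$ via \Cref{thm:main} (using the strengthened clause to avoid the $n/\Cparam$ factor), compute $\Iparam_t = n-2$ and $\Lparam_t/\Lparam_{t+1} = \Theta(n)$, and feed everything into \Cref{thm:intersecting-uniqueness} and \Cref{thm:intersecting-uniqueness-cross}. Your writeup actually supplies more of the parameter bookkeeping than the paper's own one-paragraph derivation, but the argument is identical.
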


\section{Circuits} \label{sec:circuit}

Boolean circuits typically compute functions on the Boolean cube $\bits^n$. In this section, we consider Boolean circuits on arbitrary composable domains $(\domain,\universe,n)$. The input to such a circuit is the characteristic vector of some $x \in \domain$, encoded as a Boolean vector in $\bits^\universe$.
In particular cases, other input encodings are possible: for example, we can encode a permutation $\pi \in \Sym$ as a list of values $\pi(1),\ldots,\pi(n)$ encoded in binary. However, in order to keep the discussion as generic as possible, we only consider the ``unary'' encoding described above.

Among the various possible notions of Boolean circuits, we will consider circuits over the AND, OR, NOT basis with unbounded fan-in (``circuits'') and formulas over the same basis with bounded fan-in (``formulas'').

If a function has low decision tree complexity, then it can be computed by a shallow formula, by emulating the decision tree.

\begin{theorem} \label{thm:D-formula}
Every function $f\colon \domain \to \bits$ on a composable domain $(\domain,\universe,n)$ can be computed using a formula of depth $O(D(f)\log \mparam)$, where $\mparam = \max_{Q \in \queries} |Q|$ is the maximum number of answers to a query.
\end{theorem}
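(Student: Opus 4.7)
The plan is to simulate the decision tree of depth $D(f)$ by a formula, level by level, paying $O(\log \mparam)$ additional depth per level. First I would fix a minimum-depth decision tree $T$ for $f$; by definition its depth is $D(f)$. I will define a formula $F_v$ recursively for each node $v$ of $T$ that, on any input $x \in \domain$ that reaches $v$, outputs the value computed by the subtree rooted at $v$.

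For the recursion, if $v$ is a leaf labeled $b \in \bits$, then $F_v$ is the constant $b$. If $v$ is an internal node with query $Q \in \queries$ and feasible children $\{v_a : a \in Q \text{ feasible at } v\}$, then I set
\[
 F_v \;=\; \bigvee_{a \text{ feasible at } v} \bigl( x_a \wedge F_{v_a} \bigr),
\]
where $x_a$ is the input coordinate indicating whether $a$ belongs to the input set. Correctness follows because for any $x \in \domain$ reaching $v$, the query $Q$ has a unique answer $a^* \in Q$, which is feasible; then $x_{a^*} = 1$, $x_a = 0$ for all other feasible $a \in Q$ (since $Q$ meets $x$ at exactly one element), and so $F_v(x) = F_{v_{a^*}}(x)$, which equals $f(x)$ by induction. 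Applying this to the root, $F_{\text{root}}$ computes $f$ on all of $\domain$.

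For the depth, I would implement each OR in the recursion as a balanced binary tree of two-input OR gates. Since $v$ has at most $|Q| \leq \mparam$ feasible children, this tree has depth at most $\lceil \log_2 \mparam \rceil$. The AND with $x_a$ adds one more gate, so each level of the decision tree increases formula depth by at most $1 + \lceil \log_2 \mparam \rceil = O(\log \mparam)$. Since $T$ has depth $D(f)$, the final formula has depth $O(D(f) \log \mparam)$, as required.

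There is no real obstacle here; the one point that requires care is the handling of infeasible branches, which is resolved by the observation above that for inputs reaching $v$ the only $x_a$ that can be~$1$ among $a \in Q$ is the feasible answer $a^*$, so omitting infeasible branches from the disjunction does not change the value on any $x \in \domain$ reaching $v$ (and the behavior of $F_v$ on inputs that do not reach $v$ is irrelevant, since such inputs are screened out by ancestor gates).
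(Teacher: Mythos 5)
Your proof is correct and follows essentially the same route as the paper's: both recursively unfold the decision tree via the identity $f(x) = \bigvee_{a \in Q} (x_a \wedge f_a(x))$ and balance the fan-in-$|Q|$ disjunction as a binary tree of depth $O(\log \mparam)$. You just spell out the feasibility bookkeeping more explicitly than the paper does.
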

\begin{proof}
Consider a decision tree for $f$, and let $Q$ be the question asked at the root. For every answer $a \in Q$, let $f_a$ be the corresponding restriction of $f$. We construct the desired formula recursively using the identity
\[
 f(x) = \bigvee_{a \in Q} (x_a \land f_a(x)). \qedhere
\]
\end{proof}

This shows that a function $f$ on the Boolean cube can be computed by a formula of depth $O(D(f))$, and a function $f$ on $\Sym$ can be computed by a formula of depth $O(D(f)\log n)$.

\smallskip

A different construction, due to Gopalan et al.~\cite{GNSTW}, employs the ball property (\Cref{thm:ball-property}) to compute low sensitivity functions using small circuits and low depth formulas. We generalize the circuit construction, leaving the formula construction for future work.

We start by introducing two useful pieces of notations. The \emph{distance} between two elements $x,y \in \domain$ is $d(x,y) = |x \setminus y| = |y \setminus x|$. The \emph{ball} $B(x,r)$ at radius $r$ around $x$ consists of all points whose distance from $x$ is at most $r$.

Recall that the ball property states that a function $f\colon \domain \to \bits$ can be recovered from its values on a ball of radius $\Bcparam^{-1}(2s(f)+1)$ around an arbitrary point $x$. The idea is that given an arbitrary point $y$ outside the ball, using the sensitivity ratio property we can find $\ell = 2s(f)+1$ neighbors $z_1,\ldots,z_\ell$ of $x$ (differing by disjoint chunks) which are closer to $x$. Computing $f(z_1),\ldots,f(z_\ell)$ recursively, we can recover $f(y)$ by taking a majority vote.

The first step in making this approach algorithmic is to find the points $z_1,\ldots,z_\ell$. For specific domains such as the symmetric group, this can be done by closely following the sensitivity ratio argument. For arbitrary domains, if we are willing to pay slightly in the value of $\ell$, we can find these points using a \emph{membership oracle} $\oracle$, which checks whether a given vector $w \in \bits^\universe$ belongs to $\domain$. All domains considered in \Cref{sec:domains} have such membership oracles which can be computed using polynomial size circuits. For example, in the case of $\Sym$ we can take
\[
 \oracle(w) = \bigwedge_{i=1}^n \bigvee_{j=1}^n w_{(i,j)} \land \bigwedge_{j=1}^n \bigvee_{i=1}^n w_{(i,j)} \land
 \bigwedge_{i=1}^n \bigwedge_{1 \leq j_1 < j_2 \leq n} (\lnot w_{(i,j_1)} \lor \lnot w_{(i,j_2)}).
\]


\begin{lemma} \label{lem:Bcparam-alg}
Let $(\domain,\universe,n)$ be a composable domain with chunk size $\chunk$ and sensitivity ratio $\Bcparam$. Suppose that the membership oracle $\oracle$ can be implemented using a circuit of size $\oraclec$.

If $x,y \in \domain$ are two points at distance $d = |x \setminus y|$, then using a circuit of size $|\universe|^{O(\chunk)} \oraclec$
we can find $\ell \geq (\Bcparam/\chunk)d$ points $z_1,\ldots,z_\ell$ such that $d(x,z_i) < d$, the sets $y \setminus z_i$ are disjoint, and $|y \setminus z_i| = \chunk$.
\end{lemma}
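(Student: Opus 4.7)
The plan is to build the circuit in two stages: enumeration of candidate updates, followed by greedy packing.

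First, I would observe that any point $z$ meeting the three required conditions arises from $y$ by removing a set $R$ of size $\chunk$ with $R \subseteq y$ and $R \cap x = \emptyset$, and adding a set $A$ of size $\chunk$ with $A \cap y = \emptyset$ and $A \cap x \neq \emptyset$. A short calculation shows $x \setminus z = (x \setminus y) \setminus A$, so these conditions are exactly right to ensure $d(x,z) < d$ together with $|y \setminus z| = \chunk$. There are at most $|\universe|^{2\chunk}$ such pairs $(R,A)$. For each pair I would construct a gadget that tests the constraints on $y$ (and on $x$, if $x$ is also an input) using $O(\chunk)$ AND/OR gates, forms $z_{R,A} := (y \setminus R) \cup A$ with $O(|\universe|)$ further gates, and invokes a copy of $\oracle$ to test $z_{R,A} \in \domain$. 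This produces a validity bit $v_{R,A}$ at a total cost of $|\universe|^{O(\chunk)} \oraclec$ gates.

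Second, I would fix an arbitrary ordering on the candidate labels and compute selection bits
\[
 s_{R,A} = v_{R,A} \land \bigwedge_{(R',A') < (R,A),\ R' \cap R \neq \emptyset} \lnot s_{R',A'},
\]
where the condition $R' \cap R \neq \emptyset$ is wired in as a hard-coded constant. This greedy pass adds $|\universe|^{O(\chunk)}$ further gates and guarantees that the selected candidates (those with $s_{R,A} = 1$) have pairwise disjoint removal parts $R = y \setminus z_{R,A}$, each of size exactly $\chunk$. The circuit outputs these selection bits together with the encoded candidates $z_{R,A}$.

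For the lower bound on $\ell$, I would invoke the sensitivity ratio property to obtain a set $Z^*$ of at least $\Bcparam d$ valid candidates with pairwise disjoint removal parts. For each $z^* \in Z^*$, when greedy considered it, either $z^*$ was selected or its removal part $R_{z^*}$ already met some previously selected part, so in either case $R_{z^*}$ intersects $U := \bigcup_{z \in Z_g} R_z$, where $Z_g$ denotes the selected set. Since the $R_{z^*}$ are pairwise disjoint and $|U| = \chunk |Z_g|$, a counting argument gives $|Z^*| \leq \chunk |Z_g|$, hence $\ell = |Z_g| \geq \Bcparam d / \chunk$, matching the claim. The main conceptual subtlety, and the reason the bound is $\Bcparam/\chunk$ rather than $\Bcparam$, is precisely this factor-of-$\chunk$ loss from greedy failing to find an optimal packing; everything else is a routine translation of the sensitivity ratio argument into circuit form.
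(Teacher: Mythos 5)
Your proof is correct and takes essentially the same approach as the paper: enumerate the $|\universe|^{O(\chunk)}$ candidate modifications of $y$, filter with the membership oracle, greedily select a subfamily with pairwise disjoint removal parts, and observe that each selected part can spoil at most $\chunk$ of the sensitivity-ratio witnesses. One minor slip: the opening claim that \emph{every} valid $z$ must have $R \cap x = \emptyset$ and $A \cap x \neq \emptyset$ is a sufficient condition but not a necessary one---since $|x \setminus z| = |x \setminus y| - |(x \setminus y) \cap A| + |x \cap R|$, one can have $|x \cap R| \geq 1$ and still $d(x,z) < d$ whenever $|(x \setminus y) \cap A| > |x \cap R|$, which is possible once $\chunk \geq 2$. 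This is harmless: the sensitivity-ratio witnesses satisfy the stronger containment $x \setminus z \subsetneq x \setminus y$, which does force both of your conditions, so they all lie in your restricted candidate pool and the counting bound $\ell \geq \Bcparam d/\chunk$ is unaffected.
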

\begin{proof}
There are $N \leq \binom{n}{\chunk}\binom{|\universe|-n}{\chunk} \leq |\universe|^{O(\chunk)}$ many sets of size $n$ at distance $\chunk$ from $y$. Using the membership oracle $\oracle$, we construct a list $Z$ of elements $z \in \domain$ satisfying $d(x,z) < d(x,y)$ and $d(y,z) = \chunk$. We then construct the list $z_1,\ldots,z_\ell$ by going over the elements in $Z$, adding each element $z$ such that $y \setminus z$ is disjoint from $y \setminus z_i$ for all $z_i$ already on the list.

Since the domain has sensitivity ratio $\Bcparam$, we know that there are $r \geq \Bcparam d$ elements $w_1,\ldots,w_r \in Z$ such that the sets $y \setminus w_i$ are disjoint. When our greedy algorithm encounters each $w_i$, it either adds it, or cannot do so since $y \setminus w_i$ intersects some $y \setminus z_j$. Since $|y \setminus z_j| = \chunk$ and the sets $y \setminus w_i$ are disjoint, each $z_j$ can ``spoil'' at most $\chunk$ many $w_i$. It follows that $\ell \geq r/\chunk$.
\end{proof}

Given this tool, we can construct the desired circuit.

\begin{theorem} \label{thm:s-circuit}
Let $(\domain,\universe,n)$ be a composable domain with chunk size $\chunk$ and sensitivity ratio $\Bcparam$. Suppose that the membership oracle $\oracle$ can be implemented using a circuit of size $\oraclec$.

Every function $f\colon \domain \to \bits$ of sensitivity $s = s(f)$ can be computed using a circuit of size $|\universe|^{O(\Bcparam^{-1}\chunk s)} \oraclec$.
\end{theorem}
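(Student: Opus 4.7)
The approach I would take is to adapt the construction of Gopalan et al.\ to the composable-domain setting using Lemma~\ref{lem:Bcparam-alg} as the algorithmic engine that drives the ball property. Fix an arbitrary reference point $x_0 \in \domain$ and set $r = \chunk\Bcparam^{-1}(2s+1)$. By Theorem~\ref{thm:ball-property}, $f$ is already determined by $f|_{B(x_0,r)}$, and by Lemma~\ref{lem:Bcparam-alg}, for any $y$ with $d(x_0,y) > r$ we can extract, using a sub-circuit of size $|\universe|^{O(\chunk)}\oraclec$, at least $2s+1$ descent points $z_1,\dots,z_{2s+1}$ with $d(x_0,z_i) < d(x_0,y)$; the proof of Theorem~\ref{thm:ball-property} then gives $f(y) = \mathrm{MAJ}(f(z_1),\dots,f(z_{2s+1}))$.

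The circuit I would assemble has two components. The base component hardcodes $f|_{B(x_0,r)}$: for each of the at most $|\universe|^{O(r)}$ points $y' \in B(x_0,r)$, include an $O(|\universe|)$-size conjunction testing whether the input equals $y'$, and take a disjunction of these gated by the hardcoded value $f(y')$. This produces $f(y)$ correctly on $B(x_0,r)$ at total cost $|\universe|^{O(r)} = |\universe|^{O(\Bcparam^{-1}\chunk s)}$, which is already the dominant term in the target bound. The reduction component handles inputs outside the ball by applying Lemma~\ref{lem:Bcparam-alg} to produce the $z_i$'s, invoking the same circuit on each $z_i$, and emitting the majority of the results; since each descent strictly shrinks the distance from $x_0$ by at least one and the sensitivity-ratio property is algorithmically available, the reduction eventually bottoms out at the base lookup.

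The main obstacle I foresee is bounding the total circuit size by $|\universe|^{O(\Bcparam^{-1}\chunk s)}\oraclec$ rather than the $(2s+1)^{\Theta(n)}$ that a literal unrolling of the recursion would produce. Following Gopalan et al., the device is to view the descent not as a recursive formula but as a polynomial-time procedure with oracle access to the hardcoded base table, and then to compile this procedure into a circuit via the standard simulation whose cost is (polynomial running time) plus the $|\universe|^{O(r)}$ oracle-table size. Making the compilation go through---specifically, sharing the descent sub-circuit as a single DAG component across the $O(n)$ levels of the reduction, rather than duplicating it $(2s+1)$ times at every level---is the step that absorbs essentially all of the technical work, and it rests on the polynomial size of each Lemma~\ref{lem:Bcparam-alg} invocation together with the strict distance decrease that bounds the depth of the descent.
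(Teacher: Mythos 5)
The skeleton you set up --- fixing $x_0$, taking $r = \Bcparam^{-1}\chunk(2s+1)$, hardcoding $f|_{B(x_0,r)}$, and computing $f$ elsewhere by descending via \Cref{lem:Bcparam-alg} and taking a majority as in \Cref{thm:ball-property} --- matches the paper's, and you correctly identify the central obstacle: a literal recursion branching $2s+1$ ways over up to $\Theta(n)$ levels costs $(2s+1)^{\Theta(n)}$. But the fix you propose, sharing the descent sub-circuit as a single DAG, does not resolve it, because the blowup is not in the number of copies of the descent \emph{procedure} but in the number of distinct \emph{points} at which $f$ must be evaluated. Already on the Boolean cube with $x_0 = 0^n$, a descent step from a point $y$ of Hamming weight $k$ toward $x_0$ can land at any of the $k$ points of weight $k-1$ obtained by flipping a $1$ to a $0$; iterating, the set of points transitively reachable from an input $w$ of weight close to $n$ is essentially all of $\bits^n$, so even a fully memoized DAG has $2^{\Theta(n)}$ distinct nodes. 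A DAG cannot compress a sub-problem set of that size, and ``strict distance decrease'' only bounds the depth, not the width.

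The paper's actual mechanism, which you do not mention, is a sliding-window argument. The circuit first uses the membership oracle to compute a path $x_1,\ldots,x_t = w$ with $t \le n$, $x_1 \in B(x_0,r)$, $d(x_0,x_i)$ strictly increasing in $i$, and $d(x_i,x_{i+1}) = \chunk$. It then computes $f$ on $B(x_1,r)$ from the hardcoded table, extends this to $B(x_1,r+\chunk)$ by $\chunk$ successive applications of \Cref{lem:Bcparam-alg} with the \emph{current} center $x_1$, observes that $B(x_2,r) \subseteq B(x_1,r+\chunk)$ by the triangle inequality, and repeats. The crucial point is that the descent from $y \in B(x_i,r+\delta+1) \setminus B(x_i,r+\delta)$ is taken toward the moving center $x_i$ rather than toward the fixed $x_0$, so the resulting $z_j$ satisfy $d(x_i,z_j) \le r+\delta$ and therefore lie in a region where $f$ has just been computed; and since $d(x_i,y) \ge r$ throughout, \Cref{lem:Bcparam-alg} still yields $\ell \ge 2s+1$ points, keeping the majority vote valid. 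This caps the total number of $f$-evaluations at $t \cdot |\universe|^{O(r+\chunk)} = |\universe|^{O(\Bcparam^{-1}\chunk s)}$, which multiplied by the per-point cost $|\universe|^{O(\chunk)}\oraclec$ of \Cref{lem:Bcparam-alg} gives the stated bound. Without this device --- replacing the descent tree rooted at $w$ with a union of $n$ balls of radius $r+\chunk$ along a computed path --- the size bound does not close, so this is a genuine gap rather than a matter of exposition.
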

\begin{proof}
Fix an arbitrary origin $x_0$, and let $r = \Bcparam^{-1}\chunk(2s+1)$. Hardcode the values of $f$ on $B(x,r)$, whose size is at most $|\universe|^{O(r)}$.

Given an input $w \notin B(x,r)$, we can find a point $z \in \domain$ such that $d(x,z) < d(x,w)$ and $d(z,w) = \chunk$ by trying all $|\universe|^{O(\chunk)}$ possible vectors at distance $\chunk$ from $w$ and using the membership oracle. In this way we can compute a path $x_1, \ldots, x_t = w$, where $x_1 \in B(x,r)$, $d(x,x_i) < d(x,x_{i+1})$, and $d(x_i,x_{i+1}) = \chunk$. Since $d(x,w) \leq n$, the path has length $t \leq n$.

We will compute $f(w)$ as follows. For each $i$, we compute $f$ on $B(x_i,r+\chunk)$ given its values on $B(x_i,r)$, in $\chunk$ steps. Since $d(x_i,x_{i+1}) = \chunk$, the triangle inequality shows that $B(x_{i+1},r) \subseteq B(x_i,r+\chunk)$, and so this gives us $f$ on $B(x_{i+1},r)$. In particular, once we have computed $f$ on $B(x_t,r)$, we know $f(w)$.

We compute $f$ on $B(x_i,r+\delta+1)$ given its values on $B(x_i,r+\delta)$ using \Cref{lem:Bcparam-alg}. Let $y \in B(x_i,r+\delta+1) \setminus B(x_i,r+\delta)$, so that $d(x_i,y) = r+\delta+1 \geq r$. Applying \Cref{lem:Bcparam-alg}, we find $\ell \geq 2s+1$ points $z_1,\ldots,z_\ell$ such that $z_j \in B(x_i,r+\delta)$, the sets $y \setminus z_i$ are disjoint, and $|y \setminus z_i| = \chunk$. Taking the majority of $f(z_1),\ldots,f(z_{2s+1})$, we deduce $f(y)$, as in the proof of \Cref{thm:ball-property}.

Altogether, the number of points in the balls $B(x_i,r+\chunk)$ is at most $t \cdot |\universe|^{O(r+\chunk)} = |\universe|^{O(\Bcparam^{-1}\chunk s)}$. The dominant term in the circuit size is the application of \Cref{lem:Bcparam-alg} for each of these points, and so we are led to the stated complexity.
\end{proof}

This shows that a function $f$ on the Boolean cube $\bits^n$ or on the symmetric group $\Sym$ can be computed by a circuit of size $n^{O(s(f))}$. \Cref{thm:D-formula}, in contrast, gives a circuit of size $2^{O(D(f))}$ in the former case, and of size $n^{O(D(f))}$ in the latter case. Since $s(f) \leq D(f)$ by \Cref{thm:main}, \Cref{thm:s-circuit} potentially improves on \Cref{thm:D-formula}.

\section{Open questions} \label{sec:open-questions}

\paragraph{Robust relations} The most interesting open direction, in our view, is proving a robust version of \Cref{thm:main}. Concretely, \Cref{thm:main} describes the structure of Boolean low degree functions: they correspond to shallow decision trees. What can we say about Boolean functions which are ``almost'' low degree, that is, are close to some (not necessarily Boolean) low-degree function?

To make this question precise, let us say that a function $f\colon \domain \to \bits$ is \emph{$\epsilon$-close to degree~$d$} if there exists a degree~$d$ function $g\colon \domain \to \mathbb{R}$ such that $\mathbb{E}[(f-g)^2] \leq \epsilon$, where the expectation is with respect to the uniform distribution over $\domain$. Does this imply that $f$ is close to a function computed by a decision tree of depth $\operatorname{poly}(d)$? More ambitiously, is $f$ close to a \emph{Boolean} degree $d$ function?

In the case of the Boolean cube, this has been answered in the affirmative by Kindler and Safra~\cite{KindlerS2002,Kindler2003}, who showed that for every \emph{fixed} $d$, if a Boolean function $f$ is $\epsilon$-close to degree $d$, then $f$ is $O(\epsilon)$-close to a Boolean degree~$d$ function; this was extended to slices (multislices of the form $M(k,n-k)$) by Keller and Klein~\cite{KellerKlein}. In ongoing work with Dor Minzer, the result of Kindler and Safra is extended to arbitrary $d$ by making the proof of \Cref{thm:main} robust.

Ellis, Filmus and Friedgut~\cite{EFF1,EFF2,EFF3} considered this question for the case of the symmetric group, showing (among else) that if a Boolean function is close to degree~$1$ then it is close to a Boolean degree~$1$ function. They also proved initial results for higher-degree functions, but these apply only to very sparse functions.

We conjecture that if a Boolean function on the symmetric group is close to degree~$d$, then it is close to a function computable by a decision tree of depth $\operatorname{poly}(d)$; and perhaps even close to a Boolean degree~$d$ function. Furthermore, we expect the same to hold for the perfect matching scheme.

\paragraph{$q$-analogs} Another interesting open direction is generalizing our framework to \emph{$q$-analogs}, that is, domains such as the Grassmann scheme (consisting of all $k$-dimensional subspaces of an $n$-dimensional vector space over $\mathbb{F}_q$), the bilinear scheme (all $n \times m$ matrices over $\mathbb{F}_q$), and the general linear group (all $n \times n$ invertible matrices over $\mathbb{F}_q$); all of these come with natural notions of degree.

What separates the domains considered in our paper and $q$-analogs in the natural symmetries they possess. All domains considered in this paper have a natural action of the symmetric group on them. In contrast, in the case of $q$-analogs the role of the symmetric group is played by the general linear group.
This is related to the fact that the Grassmann scheme is not a simplicial complex.

Filmus and Ihringer~\cite{FI1} have initiated the study of $q$-analogs from this perspective. They classified the Boolean degree~$1$ functions on the Grassmann scheme for $q=2,3,4,5$ (whenever $k,n-k$ are larger than a small constant), and proposed a conjectured classification of Boolean degree~$1$ functions on the Grassmann and bilinear schemes for all $q$. 

At present we can neither extend \Cref{thm:main} to $q$-analogs nor find nontrivial counterexamples.

\paragraph{Circuit complexity} We have briefly touched upon circuit complexity in \Cref{sec:circuit}, leaving the construction of shallow formulas for low sensitivity functions for future work.

We would like to highlight one particular question in this vein which we find intriguing: what is the complexity of calculating the sign of a permutation? The sign function is analogous to the parity function: both are the unique characters of maximum degree.

The sign of a permutation is the parity of the number of inversions. Using this, one can construct a circuit of size $O(n^2)$ for the sign function (which is optimal up to a constant factor), and a formula of size $O(n^5 \log n)$. However, the best formula lower bound we could come up with is only $\Omega(n^3)$, obtained using Khrapchenko's technique. It would be interesting to close this gap.

\paragraph{More relations} The literature contains many other relations between complexity measures. For example, Midrijanis~\cite{Midrijanis} proved that $D(f) \leq 2\deg(f)^3$, and Kulkarni and Tal~\cite{KulkarniTal} used the same method to prove that $R_0(f) = O(R(f)^2 \log R(f)^2)$. So far we have been unable to extend these arguments, which rely on \emph{maxonomials}, to our setting.

\bibliographystyle{alpha}
\bibliography{biblio}

\appendix

\section{Application of pseudo-characters} \label{apx:snstar}

In this section, we illustrate the utility of pseudo-characters by proving the following lemma. Before stating the lemma, let us recall that a $t$-star is a family of the form
\[
 \{ \pi \in \Sym : \pi(i_1) = j_1, \ldots, \pi(i_t) = j_t \}.
\]
(This coincides with the definition of \emph{$t$-link} in \Cref{sec:intersecting}.)

\snstar*

\begin{proof}
Suppose for concreteness that the $t$-star consists of all permutations containing $(1,1),\ldots,(t,t)$.
Let $f$ be the characteristic function of $\mathcal{F}$. Denote by $f_s$ the restriction of $f$ to the $s$-star $\mathcal{S}_s$ consisting of all permutations containing $(1,1),\ldots,(s,s)$.

We claim that if $f_1 \neq 0$ then $\deg(f_1) < \deg(f)$. Indeed, let $\deg(f_1) = d$. Our work in \Cref{sec:sensitivity-sn} shows that $f_1$ has nonzero correlation with some $\chi_{A,B}$ over $\mathcal{S}_1$, where $A,B$ is a pair of tabloids of shape $\lambda$, with $\lambda_1 = n-1-d$. Form new tabloids $A',B'$ by adding one more box filled with $1$ at the very bottom. This is a pair of tabloids of shape $\mu = \lambda,1$, which satisfies $\mu_1 = n-1-d = n-(d+1)$. Since $f$ vanishes outside of $\mathcal{S}_1$, we have
\[
 \sum_{x \in \Sym} f(x) \chi_{A',B'}(x) =
 \sum_\sigma (-1)^{\sigma} \sum_{x \in \mathcal{S}_1} f_1(x) e_{A',(B')^\sigma}(x),
\]
where $\sigma$ varies over all permutations of the columns of $B'$. If $\sigma$ moves the bottom box of $B'$ then $e_{A',(B')^\sigma}(x) = 0$ for all $x \in \mathcal{S}_1$, so we need not consider such $\sigma$. All remaining $\sigma$ permute the columns of $B$ and satisfy $e_{A',(B')^\sigma} = e_{A,B}$ over $\mathcal{S}_1$, and so
\[
 \sum_{x \in \Sym} f(x) \chi_{A',B'}(x) = \sum_{x \in \mathcal{S}_1} f(x) \chi_{A,B}(x) \neq 0.
\]
This shows that $\deg(f) \geq d+1$.

The same argument shows that $\deg(f_t) < \cdots < \deg(f)$ unless $f_t = 0$, and so $f_t$ is constant. Since $f_t$ is Boolean, either $f_t = 0$ or $f_t = 1$.
\end{proof}

We comment that the result can be proven, for large enough $n$ (as a function of $t$), using the methods outlined in \Cref{sec:intersecting-spectral}. Indeed, suppose that a $t$-good matrix $A$ exists for $\Sym$ (Ellis, Friedgut and Pilpel~\cite{EFP} showed that this holds for large enough $n$, as a function of $t$). Let $\mathcal{F}$ be a subset of a $t$-star whose characteristic function $f$ has degree at most~$t$. Since $\mathcal{F}$ is a subset of a $t$-star, it is automatically $t$-intersecting. The proof of \Cref{pro:intersecting-bound} shows that
\[
 0 = \langle f,Af \rangle = \mu^2 - \omega(\mu - \mu^2),
\]
where $\omega = (n-t)!/(n! - (n-t)!)$. Thus either $\mu = 0$ (and so $\mathcal{F} = \emptyset$) or $\mu = \omega/(1+\omega) = (n-t)!/n!$ (and so $\mathcal{F}$ is a $t$-star).

\section{More on the spectral technique} \label{apx:spectral}

\Cref{sec:intersecting-spectral} outlines a spectral approach for bounding the size of $t$-intersecting families and for characterizing $t$-intersecting families of maximum size.
The input to the spectral approach is a matrix satisfying the properties given in the following definition.

\defad*

(In this definition, $\Lparam_t$ is the maximum size of a \emph{$t$-link}, as defined in \Cref{sec:intersecting-result}.)

In this brief section, we explain how the existence of a $t$-good matrix implies an upper bound on the size of $t$-intersecting and cross-$t$-intersecting families. The proofs below are standard and appear in many papers, and are reproduced here for the sake of completeness.

\smallskip

We start by showing that a $t$-good matrix implies a bound on the size of $t$-intersecting families. The argument is known as the \emph{weighted Hoffman bound}, which is closely related to the Lov\'asz $\theta$ function.

\proib*
\begin{proof}
Let $A$ be a $t$-good matrix, let $\mathcal{F}$ be a $t$-intersecting family, and let $f$ be its characteristic vector. Define $\mu = \EE[f] = |\mathcal{F}|/|\domain|$.

Let $V_0$ consist of all constant functions, and let $V_1$ consist of all functions of degree at most~$t$ orthogonal to $V_0$. Both of these are eigenspaces of $A$. Denote by $V_2,\ldots,V_m$ its remaining eigenspaces. Since $A$ is symmetric, these eigenspaces are orthogonal.
Denote the corresponding eigenvalues by $\lambda_0,\ldots,\lambda_m$. Thus $\lambda_0 = 1$, $\lambda_1 = -\omega$, and $|\lambda_i| < \omega$ for $i \geq 2$.

We will use the following notations, for functions $g,h$ on $\domain$: $\langle g,h \rangle = \EE[gh]$, and $\|g\|^2 = \langle g,g \rangle$.

Let $f_0,\ldots,f_m$ be the projections of $f$ into $V_0,\ldots,V_m$, respectively. Since
\[
 \langle \mu \mathbf{1}, f - \mu \mathbf{1} \rangle = \mu \EE[f] - \mu^2 \EE[\mathbf{1}] = 0,
\]
we see that $f_0 = \mu \mathbf{1}$. Also, orthogonality guarantees that
\[
 \|f_0\|^2 + \cdots + \|f_m\|^2 = \|f\|^2 = \EE[f^2] = \EE[f] = \mu.
\]

Since $\mathcal{F}$ is $t$-intersecting,
\[
 \langle f, Af \rangle = \frac{1}{|\domain|} \sum_{x,y \in \domain} A(x,y) f(x) f(y) = 0.
\]
On the other hand, since $Af_i = \lambda_i f_i$ and the eigenspaces are orthogonal,
\[
 0 = \langle f, Af \rangle = \sum_{i=0}^m \langle f_i, \lambda_i f_i \rangle = \sum_{i=0}^m \lambda_i \|f_i\|^2 \geq \mu^2 - \omega \sum_{i=1}^m \|f_i\|^2 = \mu^2 - \omega (\mu - \mu^2),
\]
with equality if and only if $f_i = 0$ for all $i \geq 2$, that is, $\deg f \leq t$.

We thus obtain the inequality $(1+\omega) \mu \leq \omega$, hence
\[
 \mu \leq \frac{\omega}{1+\omega} = \frac{\Lparam_t}{|\domain|}.
\]
In other words, $|\mathcal{F}| \leq \Lparam_t$. Furthermore, if equality holds then $\deg f \leq t$.
\end{proof}

The argument extends to the cross-$t$-intersecting case, by throwing in several applications of the Cauchy--Schwartz inequality.

\proibcross*
\begin{proof}
We will use the same notation as in the proof of \Cref{pro:intersecting-bound}.	Let $\mu = \EE[f]$ and $\nu = \EE[g]$. Since $\mathcal{F}$ and $\mathcal{G}$ are cross-$t$-intersecting, we have
\[
 0 = \langle f, Ag \rangle = \sum_{i=0}^m \lambda_i \langle f_i,g_i \rangle = \mu \nu + \sum_{i=1}^m \lambda_i \langle f_i,g_i \rangle.
\]
The Cauchy--Schwartz inequality shows that $|\langle f_i,g_i \rangle| \leq \|f_i\| \cdot \|g_i\|$, and so
\begin{equation} \tag{$\ast$} \label{eq:tight}
 0 = \mu \nu + \sum_{i=1}^m \lambda_i \langle f_i,g_i \rangle \geq
 \mu \nu - \omega \sum_{i=1}^m \|f_i\| \cdot \|g_i\|.
\end{equation}
Applying the Cauchy--Schwartz inequality to the sum gives
\[
 0 \geq \mu \nu - \omega \sqrt{\sum_{i=1}^m \|f_i\|^2} \sqrt{\sum_{i=1}^m \|g_i\|^2} = \mu \nu - \omega \sqrt{\mu-\mu^2} \sqrt{\nu-\nu^2} \geq \mu \nu - \omega (\sqrt{\mu\nu} - \mu\nu),
\]
using the following consequence of the AM-GM inequality (applied twice):
\[
 \sqrt{(\mu - \mu^2)(\nu - \nu^2)} =
 \sqrt{\mu\nu} \sqrt{(1-\mu)(1-\nu)} \leq
 \sqrt{\mu\nu} \left(1 - \frac{\mu+\nu}{2}\right) \leq
 \sqrt{\mu\nu} (1 - \sqrt{\mu\nu}).
\]
As in the proof of \Cref{pro:intersecting-bound}, this shows that $\sqrt{\mu\nu} \leq \Lparam_t/|\domain|$, and so $\sqrt{|\mathcal{F}| \cdot |\mathcal{G}|} \leq \Lparam_t$.

If $\sqrt{|\mathcal{F}| \cdot |\mathcal{G}|} = \Lparam_t$ then the AM-GM inequality is tight, implying that $\mu = \nu$. Furthermore, the inequalities in \eqref{eq:tight} are tight, implying that $\deg(f),\deg(g) \leq t$ (since $|\lambda_i| < \omega$ for $i \geq 2$) and that $f_1,g_1$ are parallel (since the Cauchy--Schwartz inequality $\langle f_1,g_1 \rangle \leq \|f_1\| \cdot \|g_1\|$ is tight). Since $f_1 = f - \mu$ and $g_1 = g - \nu = g - \mu$ are both $\{-\mu,1-\mu\}$-valued, necessarily $f_1 = g_1$, and so $f = g$. (Note that $f_1 = -g_1$ is impossible even when $\mu = 1/2$, since then $\langle f_1,g_1 \rangle = -\|f_1\| \cdot \|g_1\|$.)
\end{proof}

\section{Generalized permutations and the RSK correspondence} \label{apx:rsk}

\paragraph{Boolean cube} Every function $f\colon \bits^n \to \RR$ has a unique representation as a linear combination of Fourier characters:
\[
 f = \sum_{S \subseteq [n]} \hat{f}(S) \chi_S, \text{ where } \chi_S(x_1,\ldots,x_n) = \prod_{i \in S} x_i.
\]
There is a natural correspondence between Fourier characters and the Boolean cube itself, given by $\chi_S \leftrightarrow 1_S$. More generally, if $G$ is a finite Abelian group (in this case, $\bits^n$ under bitwise XOR), then $\hat{G}$ (the \emph{dual group}, the group of characters under multiplication) is isomorphic to $G$.

Let us say that a function on the Boolean cube has \emph{pure degree $d$} if it has degree~$d$ and is orthogonal to all functions of smaller degree. It turns out that the space of pure degree~$d$ functions is spanned by the Fourier characters $\chi_S$, where $S$ goes over all sets of size $d$. We say that these Fourier characters belong to \emph{level $d$}.

The correspondence between Fourier characters and the Boolean cube induces a correspondence between the decomposition of the Fourier characters into levels and the decomposition of the Boolean cube according to Hamming weight.

In this appendix we show how to extend this correspondence to domains which can be described by \emph{generalized permutations}, such as the symmetric group $\Sym$, via the RSK correspondence (and its special case, the Robinson--Schensted correspondence).

\paragraph{Symmetric group} In \Cref{sec:sensitivity-sn} we described the analog of the Fourier expansion on the symmetric group:
\[
 \RR[\Sym] = \bigoplus_{\lambda \vdash n} V^\lambda,
\]
where $\lambda$ goes over all partitions of $n$ (non-increasing sequences of positive integers summing to $n$), and the subspaces $V^\lambda$ are \emph{isotypic components}, which we described explicitly in \Cref{sec:sensitivity-sn}. In contrast to the Fourier decomposition of the Boolean cube, the subspaces $V^\lambda$ are generally not one-dimensional (unless $\lambda = n$ or $\lambda = 1^n$).

By combining isotypic components according to $\lambda_1$, we get the level decomposition of the Fourier expansion:
\[
 \RR[\Sym] = \bigoplus_{d=0}^{n-1} \bigoplus_{\substack{\lambda \vdash n \\ \lambda_1 = n-d}} V^\lambda.
\]
The $d$'th summand is the space of pure degree~$d$ functions.

Going the other way, we can decompose each $V^\lambda$ into an  orthogonal basis known as the Gelfand--Tsetlin (GZ) basis, which is canonical given an ordering of the coordinates $1,\ldots,n$. We refer the reader to~\cite{CSST2} for more details on the GZ basis.

Our goal here is to find a decomposition of the symmetric group into parts that mirror the Fourier levels. In fact, our decomposition will be finer: it will correspond to the decomposition into isotypic components. It can be further refined to obtain a bijection between the GZ basis and the symmetric group, but we do not pursue this here.

We construct the decomposition using the Robinson--Schensted correspondence, which is explained in full detail in~\cite{Sagan}. In order to describe the correspondence, we need to define standard Young tableaux. For a partition $\lambda = \lambda_1,\dots,\lambda_m \vdash n$, a \emph{Young tableau} of shape $\lambda$ consists of left-justified rows of lengths $\lambda_1,\dots,\lambda_m$ filled with the numbers $1,\ldots,n$. For example, here is a Young tableau of shape $3,2,1$:
\[
\ytableausetup{notabloids}
\begin{ytableau}
1&2&3 \\ 4&5 \\ 6
\end{ytableau}
\]
A Young tableau is \emph{standard} if the numbers increase along rows and columns.

\begin{proposition}[Robinson--Schensted, Schensted, Greene] \label{pro:rsk-sn}
There is a bijection between permutations $\pi \in \Sym$ and pairs of standard Young tableaux of the same shape $\lambda$, where $\lambda$ goes over all partitions of $n$.

Furthermore, $\lambda_1$ is the length of a longest increasing subsequence of $\pi$, and more generally, $\lambda_1 + \cdots + \lambda_k$ is the maximal length of the union of $k$ increasing subsequences in $\pi$.
\end{proposition}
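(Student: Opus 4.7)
The plan is to construct the bijection explicitly via the Schensted row-insertion algorithm and then prove the Greene-type length formula by analyzing how insertions interact with increasing subsequences.

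First I would define Schensted row-insertion: to insert a value $v$ into a standard Young tableau $T$, scan the first row; if $v$ exceeds all entries, append it and stop; otherwise replace the leftmost entry larger than $v$ by $v$ and recursively insert the bumped entry into the next row. Given $\pi \in \Sym$, read the word $\pi(1),\ldots,\pi(n)$ and insert the entries one by one to build an insertion tableau $P(\pi)$; simultaneously record, in a tableau $Q(\pi)$ of the same shape, the step at which each new cell was created. A routine induction shows that $P(\pi)$ is always a standard Young tableau (rows and columns strictly increasing follow from the choice of ``leftmost entry larger than $v$''), and $Q(\pi)$ is standard because cells are created in order of time. To prove bijectivity I would exhibit the reverse procedure: given $(P,Q)$ of common shape, locate the cell of $Q$ with the largest label, remove the corresponding entry $v'$ from $P$, and ``reverse-bump'' it up through the rows, each time replacing the rightmost entry smaller than the current bumped value. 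This clearly inverts insertion step by step, so $\pi \mapsto (P(\pi),Q(\pi))$ is a bijection onto pairs of SYT of common shape $\lambda \vdash n$.

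Next I would handle Schensted's theorem that $\lambda_1$ equals the length $L_1(\pi)$ of the longest increasing subsequence. For the inequality $\lambda_1 \ge L_1(\pi)$, I would show by induction that the first row of $P$ after inserting $\pi(1),\ldots,\pi(i)$ contains (in its $j$-th position) the smallest possible last entry of an increasing subsequence of length $j$ among $\pi(1),\ldots,\pi(i)$; in particular the first row has length at least $L_1(\pi)$. For $\lambda_1 \le L_1(\pi)$, observe that the first row of $P$ is itself an increasing subsequence of length $\lambda_1$, provided one tracks back through bumps: each entry currently in the first row was inserted at some time, and the entries appearing in the first row at time of insertion form an increasing sequence of positions in $\pi$.

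The main obstacle is Greene's theorem, which generalizes this: $\lambda_1+\cdots+\lambda_k$ is the maximum size of a union of $k$ increasing subsequences in $\pi$. I would prove it via Knuth equivalence. Define elementary Knuth moves on words (the two local rewrite rules $bac\leftrightarrow bca$ with $a<b\le c$ and $acb\leftrightarrow cab$ with $a\le b<c$) and verify, by case analysis on row insertion, that any single Knuth move leaves $P(\pi)$ unchanged, hence leaves the shape $\lambda$ unchanged. Next verify that Knuth moves preserve the quantity $I_k(\pi)$, the maximum size of a union of $k$ increasing subsequences: this is a short combinatorial check on three-letter patterns. It therefore suffices to verify the identity $\lambda_1+\cdots+\lambda_k = I_k(\pi)$ on a single canonical representative of each Knuth class, for which one may take the row-reading word of $P(\pi)$ itself, namely the concatenation of rows from bottom to top. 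On that word, the first $k$ rows (read from bottom to top) form $k$ disjoint increasing subsequences of total length $\lambda_1+\cdots+\lambda_k$, giving the lower bound $I_k \ge \lambda_1+\cdots+\lambda_k$. The matching upper bound follows from a pigeonhole/antichain argument: by the column-strict property of $P$, any increasing subsequence of the row-reading word uses at most one entry from each column, so any union of $k$ increasing subsequences uses at most $\min(k,\lambda'_j)$ entries from column $j$, which sums to $\lambda_1+\cdots+\lambda_k$ by conjugation of partitions. Since both the RSK shape and $I_k$ are Knuth invariants agreeing on canonical representatives, they agree on all permutations, completing the proof.
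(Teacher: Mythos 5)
The paper does not prove \Cref{pro:rsk-sn}; it is cited as a classical result (with a pointer to Sagan for the correspondence), so there is no ``paper proof'' to compare against. Your proposal is the standard textbook route: Schensted row-insertion for the bijection, the first-row ``smallest tail'' lemma for the $k=1$ case, and Knuth equivalence for Greene's theorem. Two points deserve attention.

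First, a small imprecision: your argument for $\lambda_1\le L_1(\pi)$ begins ``the first row of $P$ is itself an increasing subsequence,'' which is false as literally stated, since the first-row entries need not appear in $\pi$ in left-to-right positional order. You then hedge with ``provided one tracks back through bumps,'' which is the right idea, but in fact the lemma you prove for the other direction (that the $j$-th entry of the first row is the least possible tail of a length-$j$ increasing subsequence of the prefix so far) already yields both inequalities at once, so this extra argument can simply be dropped. Second, and more substantively, your Knuth-equivalence reduction for Greene's theorem relies on the fact that $\pi$ is Knuth-equivalent to the row-reading word of $P(\pi)$. You prove that Knuth moves preserve $P$, which shows Knuth equivalence is a refinement of ``same $P$-tableau,'' but that alone does not place the row-reading word in the Knuth class of $\pi$. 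The missing lemma is that row-inserting a letter $v$ into a tableau $T$ produces a tableau whose reading word is Knuth-equivalent to $(\text{reading word of }T)\cdot v$; this is a nontrivial (though standard) induction on the bumping path, and without it the ``canonical representative'' step is unjustified. With that lemma supplied, the plan is correct and complete.
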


Let $d_\lambda$ be the number of standard Young tableaux of shape $\lambda$. It is known that $\dim(V^\lambda) = d_\lambda^2$, and furthermore there is a canonical way to convert a pair of standard Young tableaux of shape $\lambda$ into a GZ basis vector inside $V^\lambda$.

We conclude that the number of permutations whose longest increasing subsequence has length $n-d$ coincides with the dimension of pure degree~$d$ functions, and this gives us a decomposition of the symmetric group that mirrors the decomposition into Fourier levels. We can obtain a more refined decomposition by considering the rest of $\lambda$, which also has an interpretation in terms of increasing subsequences. As an example, we give the decomposition of $\Sym[4]$ in \Cref{fig:S4-decomposition}.

\begin{figure}
\[
\begin{array}{|c|c|lr|} \hline
\text{Level} & \text{Shape} & \multicolumn{2}{c|}{\text{Permutations}} \\\hline
0 & 4 & \multicolumn{2}{c|}{1234} \\\hline
\multirow{5}{*}{1} & \multirow{5}{*}{3,1} &
1243 & 2134 \\
& & 1324 & 2314 \\
& & 1342 & 2341 \\
& & 1423 & 3124 \\
& & & 4123 \\\hline
\multirow{7}{*}{2} & \multirow{2}{*}{2,2} &
2143 & 3124 \\
& & 2413 & 3412 \\\cline{2-4}
& \multirow{5}{*}{2,1,1} &
1432 & 4132 \\
& & 2431 & 4213 \\
& & 3214 & 4231 \\
& & 3241 & 4312 \\
& & 3421 & \\\hline
3 & 1,1,1,1 & \multicolumn{2}{c|}{4321} \\\hline
\end{array}
\]
\caption{Decomposition of $\Sym[4]$ according to the Robinson--Schensted correspondence}
\label{fig:S4-decomposition}
\end{figure}

\paragraph{Perfect matching scheme} In \Cref{sec:sensitivity-pm}, we described the analog of the Fourier expansion on the perfect matching scheme, which we present according to the level decomposition:
\[
 \RR[\pms] = \bigoplus_{d=0}^{n-1} \bigoplus_{\substack{\lambda \vdash n \\ \lambda_1 = n-d}} V^{2\lambda},
\]
where $2\lambda$ is the partition obtained by doubling each part of $\lambda$. 
As in the case of the symmetric group, the $d$'th summand is the space of pure degree~$d$ functions.

We stress that the isotypic components here differ from the isotypic components in the decomposition of the symmetric group. In particular, $V^{2\lambda}$ has dimension $d_{2\lambda}$ rather than $d_{2\lambda}^2$. The isotypic components are described explicitly in \Cref{sec:sensitivity-pm}.

We can represent a perfect matching $m \in \pms$ as a permutation in $\Sym[2n]$ which is a product of $n$ many $2$-cycles, corresponding to the edges in $m$. We call this the \emph{permutation representation} of $m$. It turns out that if we apply the Robinson--Schensted correspondence to such a permutation, we get a pair of equal tableaux, and moreover, their shape after \emph{transposition} (reflection along the main diagonal)\footnote{This ``reversed situation'' is an artifact of the representation-theoretic fact that each of the one-dimensional irreducibles of the hyperoctahedral group induces to a multiplicity-free representation of the symmetric group, in particular, the sign representation of the hyperoctahedral group.} is always of the form $2\lambda$.

\begin{proposition}
\label{pro:rsk-pms}
There is a bijection between perfect matchings $m \in \pms$ and  standard Young tableaux of shape $2\lambda$, where $\lambda$ goes over all partitions of $n$.

Furthermore, $2\lambda_1$ is the length of a longest decreasing subsequence of the permutation representation of $m$, and more generally, $2(\lambda_1 + \cdots + \lambda_k)$ is the maximal length of the union of $k$ decreasing subsequences in the permutation representation of $m$.
\end{proposition}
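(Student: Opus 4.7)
The plan is to reduce the statement to classical results about the Robinson–Schensted correspondence applied to involutions. First, I identify each perfect matching $m \in \pms$ with its \emph{permutation representation} $\pi_m \in \Sym[2n]$, the fixed-point-free involution obtained by declaring $\pi_m(i) = j$ and $\pi_m(j) = i$ for every edge $\{i,j\} \in m$. This is a bijection between $\pms$ and the set of fixed-point-free involutions in $\Sym[2n]$, so it suffices to produce a bijection between fixed-point-free involutions and standard Young tableaux of shape $2\lambda$, for $\lambda \vdash n$.

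Next I apply the Robinson–Schensted correspondence to $\pi_m$, obtaining a pair of SYT $(P,Q)$ of some common shape $\mu \vdash 2n$. Two classical theorems of Schützenberger take care of the structural part: (i) $\pi$ is an involution if and only if $P(\pi) = Q(\pi)$, so a fixed-point-free involution is recorded by a single tableau $P = Q$; and (ii) the number of fixed points of an involution equals the number of odd-length columns of its RS shape, so fixed-point-freeness is equivalent to all columns of $\mu$ having even length. The latter condition means precisely that the conjugate partition $\mu^T$ has all parts even, i.e.\ $\mu^T = 2\lambda$ for a unique $\lambda \vdash n$. Composing the RS correspondence with tableau transposition $P \mapsto P^T$ thus yields a bijection between fixed-point-free involutions and SYT of shape $2\lambda$, as $\lambda$ ranges over partitions of $n$. (As a sanity check, the dimension count $(2n-1)!! = |\pms| = \sum_{\lambda \vdash n} d_{2\lambda}$ is consistent with the isotypic decomposition $\RR[\pms] = \bigoplus_{\lambda \vdash n} V^{2\lambda}$ from \Cref{sec:sensitivity-pm}.)

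Finally, for the subsequence statement I invoke Greene's theorem: for any permutation $\pi \in \Sym[N]$ with RS shape $\mu$, the quantity $\mu^T_1 + \cdots + \mu^T_k$ equals the maximum size of a union of $k$ decreasing subsequences of $\pi$. Applying this to $\pi_m$, whose RS shape is $\mu$ with $\mu^T = 2\lambda$, gives
\[
 2(\lambda_1 + \cdots + \lambda_k) = \mu^T_1 + \cdots + \mu^T_k = \max\{|D_1 \cup \cdots \cup D_k| : D_i \text{ decreasing in } \pi_m\},
\]
which is precisely the second assertion of the proposition.

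There is no substantive obstacle here beyond bookkeeping: the one point that requires care is the direction of the transposition, since we need the recorded tableau to have shape $2\lambda$ rather than its conjugate, and since Greene's theorem relates \emph{increasing} subsequences to rows and \emph{decreasing} subsequences to columns — after transposition, the columns of $\mu$ become the rows of $2\lambda$, which is why the correct statement involves decreasing (rather than increasing) subsequences of $\pi_m$. All the input theorems (RS, Schützenberger's two results, Greene) are standard and may be cited from \cite{Sagan}.
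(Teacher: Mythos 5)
Your proof is correct and follows essentially the same route the paper implicitly takes: represent the perfect matching as a fixed-point-free involution, apply the Robinson--Schensted correspondence, use Sch\"utzenberger's results (an involution is recorded by a single tableau, and fixed-point-freeness is equivalent to all columns being even, i.e.\ conjugate shape $2\lambda$), and finish the subsequence claim with Greene's theorem in its column form. The paper states the result after ``it turns out that'' without spelling out the citations, and its footnote merely offers a parallel representation-theoretic explanation; you have supplied exactly the combinatorial justification that was intended.
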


Just as in the case of the symmetric group, this results in a decomposition of the perfect matching scheme that mimics the level decomposition. We demonstrate this for $\pms[6]$ in \Cref{fig:M6-decomposition}.

\begin{figure}
\[
\begin{array}{|c|c|lr|} \hline
\text{Level} & \text{Transposed Shape} & \multicolumn{2}{c|}{\text{Perfect Matchings}} \\\hline
0 & 6 & \multicolumn{2}{c|}{[16][25][34]} \\\hline
\multirow{5}{*}{1} & \multirow{5}{*}{4,2} &
[12][36][45] & [15][23][46] \\
& & [13][26][45] & [15][24][36] \\
& & [14][23][56] & [15][26][34] \\
& & [14][26][35] & [16][23][45] \\
& & & [16][24][35] \\\hline
\multirow{3}{*}{1} & \multirow{3}{*}{2,2,2} &
[12][34][56] & [13][24][56] \\
& & [12][35][46] & [13][25][46] \\
& & & [14][25][36] \\\hline
\end{array}
\]
\caption{Decomposition of $\pms[6]$ according to the Robinson--Schensted correspondence}
\label{fig:M6-decomposition}
\end{figure}

\paragraph{Multislices} Recall that for a partition $\mu = \mu_1,\ldots,\mu_m \vdash n$, the multislice $M(\mu)$ consists of all vectors in $\{1,\ldots,m\}^n$ with exactly $\mu_i$ coordinates equal to $i$. If $m = 2$, then we also use the name \emph{slice}.

In \Cref{sec:sensitivity-ms} we described the analog of the Fourier expansion on multislices, which we present according to the level decomposition:
\[
 \RR[M(\mu)] = \bigoplus_{d=0}^{n-\lambda_1} \bigoplus_{\substack{\lambda \trianglerighteq \mu \\ \lambda_1 = n-d}} V^d,
\]
where $\lambda \trianglerighteq \mu$ (read: $\lambda$ dominates $\mu$) if $\lambda_1 + \cdots + \lambda_r \geq \mu_1 + \cdots + \mu_r$ for all $r$, extending the partitions with infinitely many zeroes. In particular, $\lambda$ has at most as many parts as $\mu$. As in the preceding cases, the $d$'th summand is the space of pure degree $d$ functions. The isotypic components $V^\lambda$ are described explicitly in \Cref{sec:sensitivity-ms}.

We can represent an element of the multislice as a permutation in $\Sym$ which lists all coordinates colored $1$ in increasing order, then all coordinates colored $2$ in increasing order, and so on. For example, $(1,2,2,1,1) \in M(3,2)$ corresponds to the permutation $14523$.

When we run the Robinson--Schensted correspondence, we do not get all possible pairs of tableaux. In order to fix that, we treat the elements of the multislice as a generalized permutation (defined below) with a fixed top row, and use the Robinson--Schensted--Knuth (RSK) correspondence, a version of the Robinson--Schensted correspondence for generalized permutations.

A \emph{generalized permutation} is a $2 \times n$ array $\begin{pmatrix} i_1 & \cdots & i_n \\ j_1 & \cdots & j_n \end{pmatrix}$ such that $i_1 \leq i_2 \leq \cdots \leq i_n$, and whenever $i_r = i_{r+1}$ then $j_r \leq j_{r+1}$.

We view an element of the multislice as a generalized permutation by fixing the top row to $1^{\lambda_1} 2^{\lambda_2} \cdots m^{\lambda_m}$, and by using the permutation representation as the second row. Continuing our previous example, $(1,2,2,1,1) \in M(3,2)$ is encoded as the generalized permutation
\[
 \begin{pmatrix}
 	1 & 1 & 1 & 2 & 2 \\
 	1 & 4 & 5 & 2 & 3
 \end{pmatrix}
\]

Running the RSK correspondence still produces a pair of tableaux of the same shape, but only the second one is standard. The first one is \emph{semistandard}, meaning that it is nondecreasing along rows and increasing along columns, and has \emph{content} $\mu$, that is, exactly $\mu_r$ entries are equal to $r$.


\begin{proposition}
\label{pro:rsk-ms}
There is a bijection between elements of the multislice $M(\mu)$ and pairs of Young tableaux $(P,Q)$ of the same shape $\lambda$, where $P$ is a semistandard tableau having content $\mu$, $Q$ is a standard tableau, and $\lambda$ goes over all partitions of $n$ dominating $\mu$.

Furthermore, $\lambda_1$ is the length of a longest increasing subsequence of the permutation representation of the element, and more generally, $\lambda_1 + \cdots + \lambda_k$ is the maximal length of the union of $k$ increasing subsequences in the permutation representation of the element.	
\end{proposition}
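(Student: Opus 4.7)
The plan is to realize the bijection as a direct application of the Robinson--Schensted--Knuth (RSK) correspondence to the generalized permutation encoding described just before the proposition, and to deduce the subsequence characterization from Greene's theorem for biwords.

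First I would verify that the ``list colors in order'' rule gives a bijection between $M(\mu)$ and the set of biwords with top row $1^{\mu_1} 2^{\mu_2} \cdots m^{\mu_m}$. Given $x \in M(\mu)$, writing the indices of color $r$ in increasing order beneath the block of $r$'s produces a valid generalized permutation, since within each block of equal top entries the bottom entries are strictly increasing; conversely, the bottom row of such a biword determines $x$ by reading off which color sits above each index. This step is purely bookkeeping.

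Next I would invoke RSK, which bijects generalized permutations with pairs of SSYT of a common shape $\lambda$. Under the usual insertion convention, the insertion tableau has content equal to the multiset of entries in the bottom row and the recording tableau has content equal to the multiset of entries in the top row. In our case the bottom row is a permutation of $[n]$, so the insertion tableau is standard (content $1^n$); the top row has content $\mu$, so the recording tableau is semistandard with content $\mu$. Relabeling the two tableaux matches the statement of the proposition. The set of shapes $\lambda$ that actually arise is exactly $\{\lambda \vdash n : K_{\lambda\mu} > 0\}$, where $K_{\lambda\mu}$ is the Kostka number, and it is a classical fact that $K_{\lambda\mu} > 0$ iff $\lambda \trianglerighteq \mu$. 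This gives the first half of the proposition.

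For the Greene-theoretic ``furthermore'' clause, I would apply Greene's theorem for biwords: $\lambda_1 + \cdots + \lambda_k$ equals the maximum total size of a union of $k$ weakly increasing subsequences $(i_{t_1},j_{t_1}),\ldots,(i_{t_\ell},j_{t_\ell})$, meaning $i_{t_1} \leq \cdots \leq i_{t_\ell}$ and $j_{t_1} \leq \cdots \leq j_{t_\ell}$. In our biword the top row is globally non-decreasing, so the top inequality is automatic for any index sequence $t_1 < \cdots < t_\ell$, while the bottom row is a permutation, so $\leq$ collapses to $<$. Hence the weakly increasing subsequences of the biword are in bijection with the (strictly) increasing subsequences of the permutation representation, and the Greene identities transfer without change.

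The only real obstacle is keeping conventions straight, namely matching the insertion/recording convention to the $(P,Q)$ labeling in the statement and verifying that the translation of ``increasing subsequence'' from biwords to the permutation representation is lossless; both are routine once the right forms of RSK and Greene's theorem are in hand, and they can be cross-checked against the worked example of $(1,2,2,1,1) \in M(3,2)$ given just before the proposition.
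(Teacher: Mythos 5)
Your proposal is correct and takes essentially the same route as the paper: encode the element of $M(\mu)$ as a generalized permutation with top row $1^{\mu_1}\cdots m^{\mu_m}$, run RSK to obtain a (standard, semistandard-of-content-$\mu$) pair, observe that the nonempty fibers are exactly the shapes with $K_{\lambda\mu}>0$, i.e.\ $\lambda\trianglerighteq\mu$, and invoke Greene's theorem for the subsequence characterization. The paper states the proposition without a detailed proof, relying on exactly this setup; your write-up fills in the bookkeeping (and one could shorten the last step slightly by noting that the common shape is determined by the insertion tableau, hence by the bottom row alone, so Greene's theorem for permutations applies directly without passing through the biword version).
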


As in the preceding cases, the number of elements corresponding to pairs of tableaux of shape $\lambda$ is exactly $\dim(V^\lambda)$. We illustrate the resulting decomposition of the multislice in the case of the slice $M(2,3)$ in \Cref{fig:M23-decomposition}.

\begin{figure}
\[
\begin{array}{|c|c|lr|} \hline
\text{Level} & \text{Shape} & \multicolumn{2}{c|}{\text{Elements}} \\\hline
0 & 5 & \multicolumn{2}{c|}{\{1,2\}} \\\hline
\multirow{2}{*}{1} & \multirow{2}{*}{4,1} &
\{1,3\} & \{1,4\} \\
& & \{2,3\} & \{1,5\} \\\hline
\multirow{3}{*}{2} & \multirow{3}{*}{3,2} &
\{2,4\} & \{3,4\} \\
& & \{2,5\} & \{3,5\} \\
& & & \{4,5\} \\\hline
\end{array}
\]
\caption{Decomposition of the slice $M(2,3)$ according to the RSK correspondence, where each element is represented as the set of elements colored~$1$}
\label{fig:M23-decomposition}
\end{figure}

\paragraph{Product domains} The product domain $[m]^n = H(m,\ldots,m)$ (where $m$ appears $n$ times) decomposes under the action of the symmetric group $\Sym$ into a direct sum of multislices. Each multislices further decomposes as we have shown above. This gives rise to the following decomposition:
\[
 \mathbb{R}[H(\overbrace{m,\ldots,m}^{n \text{ times}})] =
 \bigoplus_{\mu \vdash n} \bigoplus_{\lambda \trianglerighteq \mu} V^{\mu,\lambda}.
\]
Unfortunately, this decomposition doesn't quite correspond to the degree decomposition of $[m]^n$. This is because we should really be considering the action of $\Sym[m] \wr \Sym[n]$ rather than that of $\Sym[n]$. Nevertheless, it gives rise to an interesting decomposition of $[m]^n$.

In order to obtain the decomposition, we encode each element of $[m]^n$ as a generalized permutation in which the top row is $1,\ldots,n$ and the bottom row is an arbitrary word in $[m]^n$, and run the RSK correspondence on it.

\begin{proposition}
\label{pro:rsk-ms}
There is a bijection between elements of $[m]^n$ and pairs of Young tableaux $(P,Q)$ of the same shape $\lambda$, where $P$ is a semistandard tableau containing numbers in $[m]$, $Q$ is a standard tableau, and $\lambda$ goes over all partitions of $n$.

Furthermore, $\lambda_1$ is the length of a longest nondecreasing subsequence of the element, and more generally, $\lambda_1 + \cdots + \lambda_k$ is the maximal length of the union of $k$ nondecreasing subsequences in the element.	
\end{proposition}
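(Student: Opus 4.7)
The plan is to invoke the classical Robinson--Schensted--Knuth correspondence for generalized permutations (see, e.g., Sagan~\cite{Sagan} or Stanley~\cite{Stanley2}). Given $x = (x_1, \ldots, x_n) \in [m]^n$, I would encode it as the generalized permutation
\[
 \begin{pmatrix} 1 & 2 & \cdots & n \\ x_1 & x_2 & \cdots & x_n \end{pmatrix}.
\]
Because the top row is strictly increasing, its columns are already sorted in the lexicographic order required by RSK (there are never two columns with the same top entry, so no secondary tie-breaking is needed). Feeding this into the RSK insertion algorithm produces a pair $(P, Q)$ of semistandard Young tableaux of the same shape $\lambda \vdash n$.

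The standard properties of RSK then hand us exactly the required bijection. The content of the insertion tableau $P$ equals the multiset of entries in the bottom row, which is a multiset drawn from $[m]$; so $P$ is semistandard with entries in $[m]$. The content of the recording tableau $Q$ equals the multiset of entries in the top row, namely $\{1, 2, \ldots, n\}$ with each element appearing exactly once; so $Q$ is a standard Young tableau on $[n]$. Conversely, any such pair $(P,Q)$ can be fed into inverse RSK to recover a generalized permutation whose top row is forced to be $1, 2, \ldots, n$ (since $Q$ is standard), and whose bottom row is then a word in $[m]^n$. Thus the map $x \mapsto (P,Q)$ is a bijection, and $\lambda$ ranges over all partitions of $n$.

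For the Greene-type statement, I would quote Greene's theorem for RSK: if RSK sends a generalized permutation to tableaux of shape $\lambda$, then $\lambda_1 + \cdots + \lambda_k$ equals the maximum total size of a union of $k$ pairwise-disjoint weakly increasing subsequences of the bottom row, ordered according to the columns. Since our top row is strictly increasing, the columns are ordered by their position in the word $x$, and a weakly increasing subsequence of columns is precisely a nondecreasing subsequence of $x_1, \ldots, x_n$. This gives the claim for $\lambda_1 + \cdots + \lambda_k$, and specializes at $k=1$ to the statement that $\lambda_1$ is the length of a longest nondecreasing subsequence of $x$.

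The only real ``obstacle'' is selecting the right off-the-shelf formulations: RSK for generalized permutations (to handle semistandard insertion tableaux with repeated bottom-row entries) together with Greene's theorem in its weakly-increasing-subsequence form. Once these are cited, both halves of the proposition follow immediately from the encoding, with no additional computation required.
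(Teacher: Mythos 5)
Your proposal is correct and takes essentially the same route as the paper: encode $x \in [m]^n$ as the generalized permutation with top row $1,\ldots,n$ and bottom row $x$, apply RSK to obtain a semistandard $P$ over $[m]$ and (because the top row has distinct entries) a standard $Q$, and invoke Greene's theorem for words to read off $\lambda_1+\cdots+\lambda_k$ from nondecreasing subsequences. The paper only sketches this encoding and treats the proposition as a direct consequence of RSK, so your argument fills in exactly the citations it leaves implicit.
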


We illustrate the resulting decomposition of $H(3,3,3)$ in \Cref{fig:H333-decomposition}.

\begin{figure}
\[
\begin{array}{|c|c|lccr|} \hline
\text{Level} & \text{Shape} & \multicolumn{4}{c|}{\text{Elements}} \\\hline
\multirow{3}{*}{0} & \multirow{3}{*}{3} &
111 & 122 & 222 & 333 \\
& & 112 & 123 & 223 & \\
& & 113 & 133 & 233 & \\\hline
\multirow{4}{*}{1} & \multirow{4}{*}{2,1} &
121 & 212 & 232 & 322 \\
& & 131 & 213 & 311 & 323 \\
& & 132 & 221 & 312 & 331 \\
& & 211 & 231 & 313 & 332 \\\hline
2 & 1,1,1 & \multicolumn{4}{c|}{321} \\\hline
\end{array}
\]
\caption{Decomposition of the slice $H(3,3,3)$ according to the RSK correspondence}
\label{fig:H333-decomposition}
\end{figure}

\end{document}